\definecolor{daniel}{rgb}{.8,.5,.3}
\definecolor{ruben}{rgb}{.3,.5,.8}
\newcommand{\cS}{\mathcal{S}}
\newcommand{\cH}{\mathcal{H}}
\newcommand{\cD}{\mathcal{D}}
\newcommand{\cC}{\mathcal{C}}
\newcommand{\cX}{\mathcal{X}}
\newcommand{\cE}{\mathcal{E}}
\newcommand{\cU}{\mathcal{U}}
\newcommand{\cK}{\mathcal{K}}
\newcommand{\nuHS}{\nu_{\textrm{HS}}}
\newcommand{\nudr}{\nu_{d,r}}
\newcommand{\bE}{\mathbb{E}}
\newcommand{\bEx}[1]{\underset{#1}{\mathbb{E}}}
\newcommand{\esssup}{\mathop{\mathrm{ess\,sup}}}
\newtheorem{theorem}{Theorem}
\newtheorem{lemma}[theorem]{Lemma}
\newtheorem{proposition}[theorem]{Proposition}
\newtheorem{definition}[theorem]{Definition}
\newtheorem{corollary}[theorem]{Corollary}
\newtheorem{example}[theorem]{Example}
\newtheorem{remark}[theorem]{Remark}
\newtheorem{fact}[theorem]{Fact}
\title{Average Contraction Coefficients of Quantum Channels}
\author[1,2]{Ruben Ibarrondo\thanks{ruben.ibarrondo@ehu.eus}}
\author[3,4]{Daniel Stilck Fran\c{c}a\thanks{dsfranca@math.ku.dk}}
\affil[1]{Department of Physical Chemistry, University of the Basque Country UPV/EHU, Apartado 644, 48080 Bilbao, Spain}
\affil[2]{EHU Quantum Center, University of the Basque Country UPV/EHU, Apartado 644, 48080 Bilbao, Spain}
\affil[3]{Department of Mathematical Sciences, University of Copenhagen, Universitetsparken 5, 2100 Copenhagen, Denmark}
\affil[4]{Inria, ENS Lyon, UCBL, LIP, F-69342 Lyon Cedex 07, France}
\date{\today}
\begin{document}

\maketitle
\begin{abstract}
The data-processing inequality ensures quantum channels reduce state distinguishability, with contraction coefficients quantifying optimal bounds. However, these can be overly optimistic and not representative of the usual behavior. We study how noise contracts distinguishability of “typical” states, beyond the worst–case. To that end, we introduce and study a family of moments of contraction for quantum divergences, which interpolate between the worst-case contraction coefficient of a channel and its average behavior under a chosen ensemble of input states. We establish general properties of these moments, relate moments for different divergences, and derive bounds in terms of channel parameters like the entropy or purity of its Choi state.

Focusing on the trace distance, we obtain upper and lower bounds on its average contraction under tensor-product noise channels, and prove that—depending on the local noise strength—there is a phase transition in the limit of many channel uses: below a critical error rate the average contraction remains near unity, whereas above it decays exponentially with system size.  We extend these phase-transition phenomena to random quantum circuits with unital noise, showing that constant-depth noisy circuits do not shrink the trace distance on average, even when given highly entangled states as input. In contrast, even at $\log \log(n)$ depth, the average trace distance can become superpolynomially small.

Finally, we explore moments of contraction for $f$-divergences and discuss applications to local differential privacy, demonstrating that noise regimes ensuring privacy can render outputs essentially indistinguishable on average.  Thus, our results provide a fine-grained framework to quantify typical-case channel noise in quantum information and computation and unveil new phenomena in contraction coefficients, such as phase transitions for average contraction.

\end{abstract}

\section{Introduction}
Since the foundation of information theory~\cite{Shannon1948}, the study of random processes has been at its core. One of the fundamental achievements of (quantum) information theory is the data processing inequality~\cite{Shannon1948,Lieb1973,Umegaki1962,Lindblad1975}, which formalizes the intuition that two quantities cannot become more distinguishable when we apply noise to them. More formally, for quantum states $\rho$ and $\sigma$ we can quantify their distinguishability in terms of a divergence $D(\rho\|\sigma)$, such as the trace distance or relative entropy. Then the data processing inequality $D$ states that for any quantum channel $T$ we have that:
\begin{equation}\label{eq:dpi}
    D(T(\rho)\| T(\sigma))\leq D(\rho\| \sigma),
\end{equation}
i.e. the distinguishability, as measured by $D$, always decreases if we apply $T$ to the states. 
However, for many noisy processes we expect that the distinguishability of two states will actually decrease strictly, that is, we have a strict inequality in Eq.~\eqref{eq:dpi}. One way to make this statement quantitative is to consider contraction coefficients of the channel and the divergence~\cite{Hiai15,Hirche2022,Berta2023,Hirche2024}. The contraction coefficient is the smallest possible $\eta\in[0,1]$ such that 
\begin{align}\label{equ:dpicc}
D(T(\rho)\| T(\sigma))\leq \eta D(\rho\| \sigma)
\end{align}
for any pair of states $\rho$ and $\sigma$. Another variation of contraction coefficients that have received considerable attention in the literature~\cite{Hiai15,MllerHermes2016,Temme10} are those where we restrict the state $\sigma$ to be a fixed-point of the channel, i.e. $T(\sigma)=\sigma$, as these provide a natural way of quantifying to what extent all states are converging to it. Contraction coefficients have been widely studied and they have found applications in quantum information \cite{Hirche23, Hiai15, Temme10}, quantum machine learning~\cite{Angrisani2023DPamplificationquantum,Angrisani2023QDP}, the analysis of runtimes of quantum algorithms and mixing times~\cite{Temme10,gibbs_samplers}, and limitations of quantum computation~\cite{StilckFrana2021,aharonov_depo}, to name a few. Furthermore, several works have computed or bounded the value of $\eta$ for various divergences and channels and related the contraction between different choices of $D$~\cite{Hiai15,Hirche23,MllerHermes2016,MllerHermes2018}.

However, one should note that Eq.~\eqref{equ:dpicc} needs to hold for \emph{any} pair of states. Worst–case bounds can be overly optimistic for high–dimensional or random inputs. For instance, it is known~\cite{MllerHermes2016} that if we let $D$ be the relative entropy and we consider the global depolarizing channel with parameter $p$ on $n$ qudits, it will have approximately the same contraction coefficient as the tensor power of $n$ local depolarizing channels. But for the global channel, we expect $p$ errors per application of the channel, whereas for the local one we expect $np$ and it should be ``noisier" for ``typical" states. This high-level intuition has been confirmed in some cases. For instance, it has been shown that in random circuits with noise, after a certain depth the contraction for typical circuits is exponentially worse than in the best-case scenario~\cite{Quek2024, Gonzalez-Garcia2022} and the use of average properties of the noise was crucial to obtain convergence rates that depend on the number of qubits, not only on the depth of the noisy circuit.

The main motivation of our work is to investigate the difference between worst-case contraction and average case for quantum channels, while also providing a framework to study this question. We generalize the study of contraction coefficients by introducing the notion of moments of contraction for divergences. These are quantities that interpolate between the standard worst-case notion of contraction coefficients and their average behavior w.r.t. a distribution of input states, formalizing the notion of how noisy a channel is on a ``typical" state. 

Besides introducing the notion of moment of contraction or average contraction in Sec.~\ref{sec:moments_contra_defi}, we derive various basic properties of these quantities and start to investigate how they behave for various choices of divergences, quantum channels and distributions over input states. 

In Sec.~\ref{sec:av_contract_tr_dist} we investigate closely the case of the trace distance, as it is arguably one of the most fundamental distinguishability measures between two states. Furthermore, although we state most results for arbitrary channels, we often specialize to tensor products of quantum channels, as these are relevant toy models for noise both in quantum Shannon theory and quantum computing. We then derive various upper and lower bounds for the average contraction of the trace distance under such channels that depend on properties of the Choi matrix and the image of the maximally mixed state. Some of these bounds confirm the intuition that, in some cases, the gap between the average and worst-case contraction can be exponentially large. 

However, interestingly, we uncover the existence of various phase transitions for the average contraction coefficient in terms of the (local) noise strength in Sec.~\ref{sec:lower_bounds_trace}. We show that, depending on the local noise strength $p$, for $p<p_1$ the average contraction goes to $1$ (no distinguishability loss), and for $p>p_0$ it converges exponentially fast to $0$ (indistinguishable). Two canonical examples are presented in Fig.~\ref{fig:phase_transitions}. The main technical tools required to show these results are various concentration inequalities for random quantum states~\cite{Puchala16}, combined with tools from the theory of unitary or state designs to compute expectation values. Furthermore, we often discuss to what extent our results depend on the particular choice of distribution over random inputs and the operational motivation between them.

In addition to results on tensor product channels, we extend such phase transition results to the more general case of random quantum circuits under unital noise in Sec.~\ref{sec:quantum_circuits}. We strengthen and extend the results of~\cite{Deshpande2021TightBO} on lower bounds for distinguishability and show that constant depth noisy quantum circuits do not contract the trace distance asymptotically, i.e. the average contraction converges to $1$. In contrast, in~\cite{Deshpande2021TightBO} the authors showed a constant lower bound that decayed exponentially with the depth. In addition, their bound required the initial state to be product, whereas our bound only requires that the initial states come from a state $1$-design. Considering that also highly entangled states can form a $1$-design (i.e. random quantum states are a $1$-design), our finding shows that constant depth, noisy quantum circuits cannot affect the distinguishability of even highly entangled states. Our results neatly complement those of e.g.~\cite{Quek2024, Gonzalez-Garcia2022}, which show that at slightly superconstant depths, for instance $\textrm{poly}\log \log(n)$ in the case of~\cite{Quek2024}, there are examples of noisy random circuits that have a contraction coefficient that is superpolynomially small in $n$. These results suggest a more universal picture of phase transitions in the study of average contraction coefficients for various ensembles, as can be seen in Table~\ref{tab:your_label}. 
\begin{table}[h!]
\centering
\begin{tabular}{|c|c|c|}
\hline
Work & Bound & Setup \\
\hline
 \cite{Deshpande2021TightBO}& $\Omega(p^{cD})$ &  Random brickwork circuit, product input.\\
 \hline
 \cite{Quek2024}& $O(n^{-c\log(n)})$ & $D=\Omega(\log \log (n))$, tailored family of random circuits \\
 \hline
 Our work & $1-O(e^{-cn})$ &  Depth $D<1/S(\tau_1)$, $1$-design layers of unitaries, $1$-design of input states.   \\
\hline
\end{tabular}
\caption{Bound on the expected value of the trace distance to maximally mixed for depth $D$ circuits under local depolarizing noise with depolarizing parameter $p$ and $S(\tau_1)$ the von Neumann entropy of its Choi state. We see that our bound exponentially improves upon that of \cite{Deshpande2021TightBO} and works in a more general setting. Taken together with \cite{Quek2024}, these results also show phase transitions for such setups.}
\label{tab:your_label}
\end{table}

\begin{figure}
	\subcaptionsetup[figure]{skip=-10pt,slc=off,margin={0pt,0pt}}
		\subcaptionbox{\label{fig:pt_depol}}
		{\includegraphics[width=.45\textwidth]{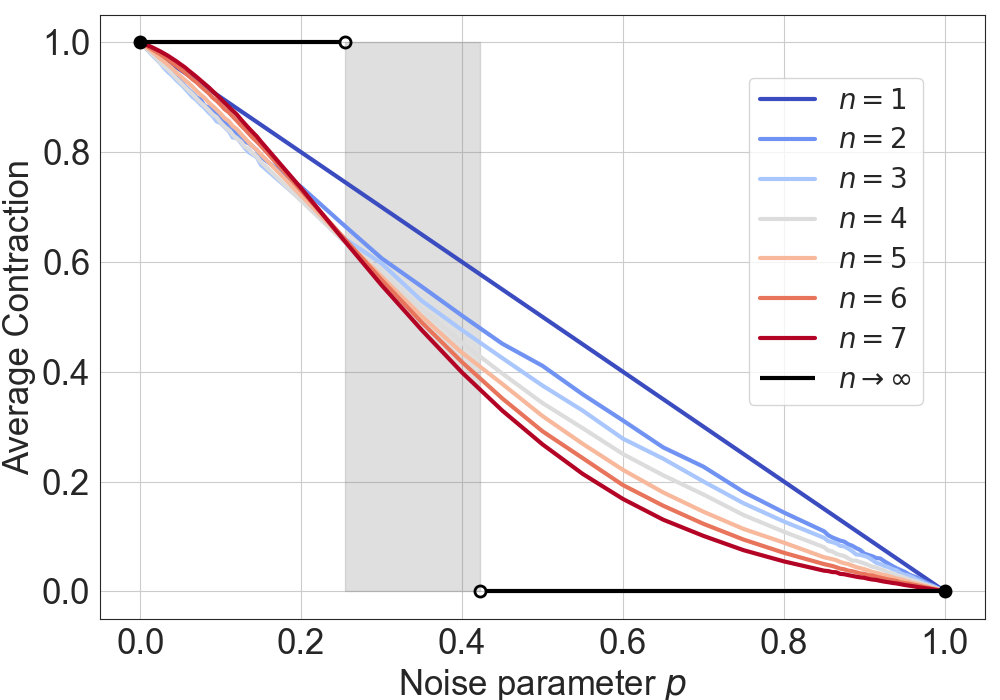}}
	\hfill
	\subcaptionsetup[figure]{skip=-10pt,slc=off,margin={0pt,0pt}}
		\subcaptionbox{\label{fig:pt_discard}}
		{\includegraphics[width=.45\textwidth]{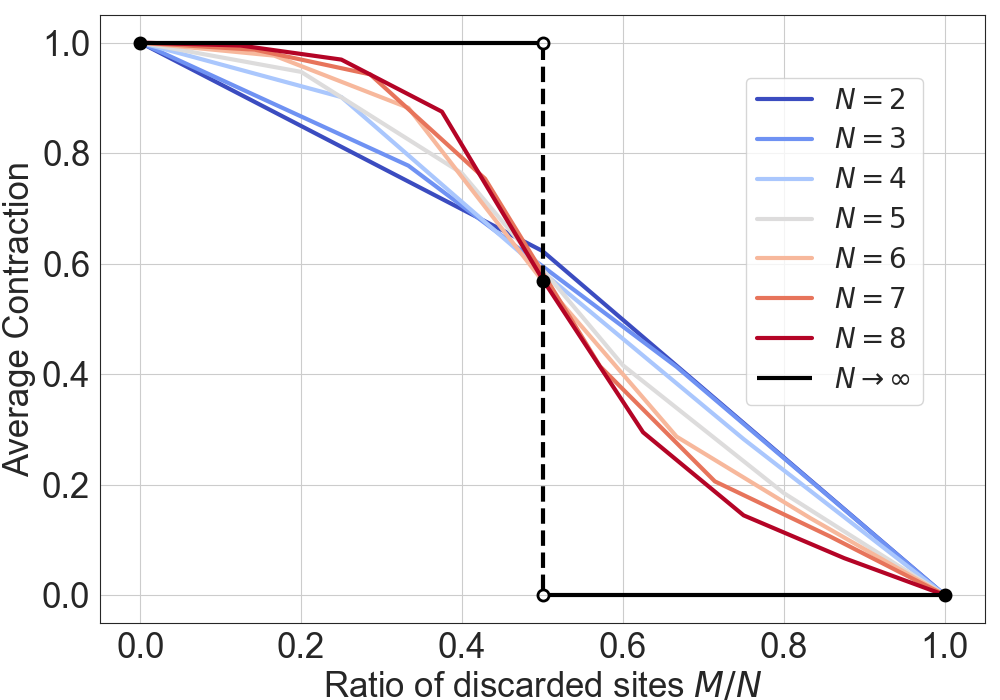}}
	\caption{
	Average contraction with state pairs with independent Haar distributions for two different families of quantum channels and different system sizes. (\subref{fig:pt_depol}) shows the $n$-fold product of a single-qubit depolarizing channel, where the horizontal axis is the depolarizing probability $p$ with $p_0\approx 0.25$ and $p_1\approx 0.42$; (\subref{fig:pt_discard}) shows the partial trace channel, starting with $N$ qubits and discarding $M$ of them, the horizontal axis is the ratio $M/N$ of discarded qubits. The thick black lines indicate the asymptotic predictions, in agreement with Prop.~\ref{prop:avc_local_depol} and~\ref{prop:avc_partial_trace_asymp_lim}. The shaded area depicts a regime where none of our results can predict the asymptotic behavior.}
	\label{fig:phase_transitions}
\end{figure}

We also provide results for divergences beyond the trace distance in Sec.~\ref{sec:bound_for_dfs}. In the case of the usual contraction coefficient, some of the most celebrated results are various relations between contraction coefficients. For instance, it is known for $f$-divergences that the trace distance has the largest contraction coefficient, and the $\chi^2$ the smallest~\cite{Hiai15,Hirche23}. Although we do not obtain the same inequality for the average case contraction, we establish various criteria when an average contraction w.r.t. one divergence, say the trace distance, implies the average contraction of another $f$ divergence. These results allow us to show that the average contraction coefficients converge to $0$ for a large class of quantum divergences.

Finally, in Sec.~\ref{sec:ldp} we use the results on contraction for more general divergences to discuss how notions of average contraction can shed light on local differential privacy~\cite{Angrisani2025}. In a nutshell, differential privacy is a framework to ensure that outputs of algorithms (say a classifier) do not convey information about their inputs and, in this way, guarantee the privacy of the users. The idea of differential privacy is that inputs that are ``similar" should lead to indistinguishable outputs, so that the algorithm cannot be used to infer information about the input. Such privacy guarantees are naturally formulated in terms of divergences~\cite{Hirche2023QuantumDP} and various recent works have extended notions of differential privacy to the quantum setting~\cite{Hirche2023QuantumDP,Angrisani2023DPamplificationquantum,Nuradha25,hirche_privateht,Nuradha2024}. 
A typical mechanism to ensure differential privacy is to add noise to the output of an algorithm~\cite{Du2021,Angrisani2025}. However, although adding more noise improves the privacy guarantees, it also reduces the overall utility. Furthermore, the guarantees are often stated by considering the worst-case over possible inputs, in a similar spirit to the contraction coefficient. Thus, in the spirit of this work, it is natural to investigate if the noise acts in a much more destructive way on average inputs and gives outputs that are of little value. We give some examples where this is the case, i.e. we give examples where the noise necessary to achieve differential privacy leads to average outputs to be essentially indistinguishable. However, we only show this for the case of the uniform distribution over mixed states. This distribution is of questionable operational significance, as it is mostly supported on highly mixed states, and we leave it as an open problem to derive similar results for more operationally justified models.  We then attempt to deliver more operationally motivated results by considering classifiers and show how differential privacy can limit their performance.

Thus, our work establishes the moment of contractions and average contraction as a useful tool to quantify how noisy a quantum channel is, going beyond the worst-case performance and allowing for a more fine-grained analysis. In addition, we believe that the phase transitions of average contraction are interesting in their own right and justify further investigation. There is still room to extend the relations between moments of contraction for different divergences, generalizing our results to more families of divergences and other distributions. Finally, it would be interesting to see if the idea of average contraction can be used to obtain better mixing times for the preparation of quantum Gibbs states~\cite{gibbs_samplers}, which bound the worst-case contraction coefficient of the trace distance.

\section{Preliminaries and notation}
\label{sec:preliminaries}
We will now introduce the basic concepts and notations we use from quantum information theory and probability theory. We refer the reader to~\cite{Heinosaari2008} for a more thorough overview.

\paragraph{Quantum information theory:} We let $\cH_d$ be the $d$-dimensional complex Hilbert space associated with a quantum system. The set of quantum states $\cS(\cH_d)$ corresponds to positive semidefinite operators of unit-trace. 
For $p\geq1$ the Schatten $p$-norm of $X$ a linear operator on $\cH_d$ is defined through its singular values $s_i(X)$:
\begin{equation}
    \norm{X}_p = \qty(\sum_{i=1}^d s_i(X)^p)^{1/p},
\end{equation}
and we let $\norm{X}_\infty$ be given by the largest singular value.
The case $p=1$ gives rise to the trace distance:
\begin{equation}\label{eq:tr_dist}
    D_{\Tr}(\rho, \sigma) = \frac{1}{2}\norm{\rho-\sigma}_1.
\end{equation}
This quantity is of fundamental importance in quantum information theory, as it is directly related to the optimal probability of distinguishing two quantum states~\cite{MWolf2012Guidedtour}.
The $2$-norm, also known as the Frobenius norm or Hilbert-Schmidt norm, admits a quadratic form $\norm{X}_2^2 = \langle X, X\rangle_{\text{HS}}=\Tr X^{\dag}X$, making it particularly suitable for analytic manipulation. These norms are decreasing in $p$, such that if $p<p'$ then $\norm{X}_{p}\geq \norm{X}_{p'}$, particularly $\norm{X}_{\infty}\leq \norm{X}_{2}\leq \norm{X}_{1}$. Furthermore, they satisfy H\"older's inequality
\begin{equation}
    \Tr\left(X^\dagger Y\right)\leq \norm{X^\dagger Y}_1 \leq \norm{X}_p \norm{Y}_q
    \quad \text{for} \quad
    \frac{1}{p}+\frac{1}{q} = 1.
\end{equation}
In particular, if $X$ is of rank $r$ and we take $Y$ to be the orthogonal projector into its support, we obtain the bounds $\norm{X}_1\leq \sqrt{r}\norm{X}_2 \leq r \norm{X}_{\infty}$. Readers interested in a more detailed treatment of matrix norms may consult \cite{Bhatia1996}.

Quantum channels are linear maps $T:\cS(\cH_d)\rightarrow\cS(\cH_{d'})$ that are completely positive and trace preserving.
Let us summarize some properties and representations of quantum channels and refer the reader to \cite{MWolf2012Guidedtour} for more details. Given a quantum channel $T:\cS(\cH_d)\rightarrow\cS(\cH_{d'})$ its \emph{Choi matrix} or \emph{Choi state} $\tau (T)\in\cS(\cH_{d'}\otimes\cH_d)$ is 
\begin{equation}
    \tau(T) =(T\otimes \text{id})(\ketbra{\Omega}) 
    \quad \text{with} \quad \ket{\Omega}=\frac{1}{\sqrt{d}}\sum_{i=1}^d \ket{i\, i}.
\end{equation}
That is, $\tau(T)$ is the state resulting from applying the channel to a part of a bipartite system that is initially maximally entangled. If the channel $T$ is clear from context we will denote it by $\tau$. By the Choi-Jamiolkowski isomorphism, $T$ can be recovered from $\tau$ by $\Tr[Y T(X)] = d \Tr[\tau\, Y \otimes X^{\top}]$ for any operators $X$ on $\cH_d$ and $Y$ on $\cH_{d'}$. This allows a suitable description of properties of the channel in terms of the Choi state. For instance, as we discuss in App.~\ref{sec:tau_and_pi}, the rank of the Choi state and its purity are indicators of the noise level in the channel. The Choi-rank $K=\rank (\tau)$ is the minimum number of terms $\qty{\ket{\psi_k}}_{k=1}^{K}$ required in the spectral decomposition
\begin{equation}
    \tau = \sum_{k=1}^K \omega_k \ketbra{\psi_k},
\end{equation}
which also gives the minimum number of operators $E_k:\cH_d\rightarrow\cH_{d'}$ with $k=1,..., K$ required in the Kraus representation of the channel $T(X)=\sum_{k=1}^K E_k X E_k^{\dag}$. The purity of the Choi state $\Tr\tau^2=\sum_{k=1}^K \omega_k^2$ is directly related to the mean of the squared singular values of the channel $\Tr\tau^2=\frac{1}{d^2}\sum_{i=1}^{d^2} s_i(T)^2$ (see Fact~\ref{fact:purity_to_singvals}).

\paragraph{Quantum divergences:} Part of this work will involve studying the contraction of general functionals that satisfy the data processing inequality under quantum channels in the sense of Eq.~\eqref{eq:dpi}. Such functionals are usually referred to as divergences. In particular, we will consider a quantum extension of the classical $f$-divergences, which form a wide family comprising many common divergences as specific instances. Among the various non-equivalent extensions to quantum states \cite{Wilde2018,Hiai2010,Hirche23}, here, we consider quantum $f$-divergences defined by the integral formulation \cite{Hirche23}. This choice is because the integral formulation has proven useful in deriving bounds between different divergences, which will be necessary for generalizing our results beyond the trace distance. The fundamental building block is the hockey-stick divergence $E_{\gamma}(\rho\|\sigma)$ defined for some real parameter $\gamma>0$:
\begin{equation}
    E_{\gamma}(\rho\|\sigma) = \Tr (\rho-\gamma\sigma)_+,
\end{equation}
where $X_{+}$ denotes the positive part of a Hermitian operator. Then, given a twice differentiable convex function $f:(0,\infty)\rightarrow \mathbb{R}$ with $f(1)=0$ the corresponding $f$-divergence is
\begin{equation}
D_f(\rho\Vert \sigma) = \int_1^{\infty} f''(\gamma)E_{\gamma}(\rho\Vert \sigma)+\gamma^{-3}f''(\gamma^{-1})E_{\gamma}(\sigma\Vert \rho) \dd \gamma,
\end{equation}
whenever the integral is finite and $D_f(\rho\Vert \sigma)=+\infty$ otherwise.  The relative entropy is a special case with $f(x)=x\log x$, and takes the form
\begin{equation}
    D_{\text{RE}}(\rho\|\sigma) = \Tr(\rho(\log\rho-\log\sigma)),
\end{equation}
when $\operatorname{supp}(\rho)\subset\operatorname{supp}(\sigma)$ and $D_{\text{RE}}(\rho\|\sigma) =\infty$ otherwise. These divergences satisfy the data processing inequality in Eq.~\eqref{eq:dpi}, motivating the definition of the \emph{contraction coefficient}
\begin{equation}\label{equ:contraction_Coeff}
   \eta(T, D) = \sup_{\rho, \sigma} \frac{D(T(\rho)\|T(\sigma))}{D(\rho\|\sigma)},
\end{equation}
where the optimization is over states satisfying $0<D(\rho\|\sigma)<\infty$. The contraction coefficient finds various applications ranging from bounding capacities~\cite{Hirche2022} to the rigorous study of the runtime of Monte Carlo algorithms~\cite{Temme10}. The dependence of $\eta(T, D)$ on the divergence and their relations has been extensively studied~\cite{Hiai15,Hirche23} and, for instance, it is known that
\begin{equation}
    \eta(T, E_{\gamma})\leq \eta(T, \norm{\cdot}_1) \quad \text{and} \quad \eta(T, D_f)\leq \eta(T, \norm{\cdot}_1).
\end{equation}
In Sec.~\ref{sec:bound_for_dfs}, we extend these inequalities to the average contraction. However, many properties of such coefficients remain widely open, such as their behavior when taking tensor products of quantum channels. As mentioned before, one of the main topics of interest in this work is to develop a more fine-grained understanding of contraction coefficients compared to the worst-case analysis of~\eqref{equ:contraction_Coeff}.

\paragraph{Probability measures on quantum states:} In our work, we will also consider various probability distributions over the sets of density matrices, and we briefly review some of them and their properties here, and refer the reader to~\cite{Aubrun2017} for more details.  The Haar measure $\mu_H$ is the uniform measure on unitaries $U(d)$ in the sense that it is invariant under unitary operations, i.e. for any measurable set of unitary matrices $\mathcal{A}$ and any unitary $U$ we have $\mu_H(U \mathcal{A})=\mu_H(\mathcal{A}U)=\mu_H(\mathcal{A})$. When a random variable $X$ has a probability measure $\nu$ we shall denote $X\sim\nu$. The Haar measure naturally induces a uniform measure on pure states $\ket{\psi}\in \cH_d$ by letting $U\sim\mu_H$ act on a fixed arbitrary state $U\ket{\phi_0}$ and we denote $\ket{\psi}\sim\mu_H$. There is no unique `natural' measure on mixed states, in the sense of it being invariant under unitary conjugation. Imposing that the measure to be invariant under unitary conjugations $\rho\rightarrow U \rho U^{\dag}$ implies that the distribution of the states can always be described with a product distribution $\rho=U\rho'U^{\dag}$ where $U\sim\mu_H$ and $\rho'$ has an independent distribution on density matrices~\cite[Prop. 10.4]{Aubrun2017}. The distribution of $\rho'$ determines the distribution of the spectrum and there are different reasonable choices for it. A common choice is the family of distributions induced by pure states $\ket{\psi}\sim\mu_H$ on bipartite systems $\cH_d\otimes\cH_r$ and taking the partial trace on the second subsystem $\rho = \Tr_r \ketbra{\psi}$. If $r\leq d$, the states have rank $r$ almost surely, whereas if $r\gg d$ they concentrate near the maximally mixed state \cite[Chapter 10]{Aubrun2017}. The special case $r=d$ corresponds to the Hilbert-Schmidt measure, equivalent to the measure induced by the Hilbert-Schmidt norm. Other choices include Dirac distributions with a unique state $\delta_{\sigma}(\mathcal{A})=1$ if $\sigma\in\mathcal{A}$ and $0$ elsewhere, discrete ensemble distributions $\delta_S(\mathcal{A})=\sum_i p_i \delta_{\sigma_i}(\mathcal{A})$ with $S=\qty{(p_i, \sigma_i)}$ and distributions induced by substituting the Haar measure by $t$-designs \cite{Ambainis2007}. We review the definition and properties of $t$-designs in App.\ref{sec:expvals_unit_invar}, where we also collect various expectation values related to random inputs to quantum channels and random states.

\section{Moments of contraction}\label{sec:moments_contra_defi}
We are now ready to define the main object of study in this work, namely, a generalization of the contraction coefficient of a quantum channel with respect to a divergence:
\begin{definition}[Moments of contraction]
Consider a quantum channel $T:\cS(\cH_d)\rightarrow\cS(\cH_{d'})$, a divergence $D$, and a probability measure on pairs of density matrices $\nu$ with essential support on $\{(\rho, \sigma)\in \cS(\cH_d)\times\cS(\cH_d)\; :\; 0 < D(\rho\Vert\sigma) <\infty \}$. For $p\geq1$, the $p$-th moment of contraction for a measure $\nu$ on pairs of states, a channel $T$, and divergence $D$ is
\begin{equation}
\eta_{p} (T, D, \nu) = \qty( \mathbb{E}_{(\rho,\sigma) \sim\nu} \left[\qty(\frac{D(T(\rho)\Vert T(\sigma))}{D(\rho\Vert \sigma)})^p\right])^{\frac{1}{p}}.\\
\end{equation}
\end{definition}

Let us note that the moments are always finite, as by the DPI the ratio $D(T(\rho)\Vert T(\sigma)) / D(\rho\Vert \sigma)$ lies in $[0,1]$ and hence $\eta_p(T, D,\nu)$ are the $p$-th root of the moments of a bounded random variable. Furthermore, we show in Prop.~\ref{prop:basic_prop} below that $\eta_{p} (T, D, \nu)$ monotonically increases with $p$ and converges to the essential supremum as $p\rightarrow\infty$. Thus, $p$ interpolates between $p=1$, corresponding to the average contraction of the divergence, and $p\rightarrow\infty$, corresponding to the standard contraction coefficient defined in Eq.~\eqref{equ:contraction_Coeff} if the measure is supported everywhere. This provides a tool to quantify the contraction induced by the channel in a refined way. More precisely, the case $p=1$ allows us to quantify how noisy the quantum channel is (measured by $D$) for average inputs coming from $\nu$, which is why we will also refer to it as the \emph{average contraction}. In contrast, as $p$ goes to infinity, it converges to how noisy the channel is in the worst case. The moments $1\leq p$ can then be used to obtain concentration inequalities for the contraction using standard tools.

We believe that, like the usual contraction coefficient, this quantity is of independent interest in quantum information theory, but its definition is also motivated by some applications, which parallel some of the applications of contraction coefficients. To name a few, contraction coefficients were extensively used to understand limitations of noisy quantum computers~\cite{StilckFrana2021}. In~\cite{Quek2024}, the authors already indirectly resorted to average case contraction to show how for certain random circuits, the average contraction was exponentially worse than the worst-case at a certain depth, imposing limitations on how far we can push techniques like error mitigation~\cite{Cai2023}. This observation invites a more systematic study of average contraction, as discussed here. Another application of contraction coefficients is to (quantum) differential privacy~\cite{Zhou2017}. In a nutshell, the idea of differential privacy is to inject noise into an algorithm to protect the privacy of the data used as input. Differential privacy guarantees can be formulated in terms of divergences~\cite{Hirche2023QuantumDP}, which rely on worst-case guarantees. It is usually the case that the more noise we inject, the more privacy we can guarantee. However, this will also reduce the reliability of the output of the algorithm. In particular, it is natural to ask how the injected noise will affect the output of average inputs. As we will argue, ideas from average contraction can be used to study this question. 

The definition of the average contraction coefficient depends on an underlying probability measure $\nu$. We will mainly focus on product measures $\nu=\nu_1\times\nu_2$ such that the two states have independent distributions with $\rho\sim\nu_1$ and $\sigma\sim\nu_2$. However, one could be interested in scenarios where $\rho$ and $\sigma$ are correlated through a common source, such as being the partial states of a bipartite pure state, and the choice should ultimately be informed by the application in mind. 

We will now collect some basic properties of the moments of contraction:
\begin{proposition}\label{prop:basic_prop}
\begin{enumerate}
    \item \textbf{Convexity:} If $D$ is jointly convex, i.e. $D(\lambda\rho_1+(1-\lambda)\rho_2\|\lambda \sigma_1+(1-\lambda)\sigma_2)\leq \lambda D(\rho_1\|\sigma_1)+(1-\lambda) D(\rho_2\|\sigma_2)$ for $0\leq \lambda\leq 1$, then
   \begin{equation}
        \eta_p(\lambda T_1 + (1-\lambda) T_2, D, \nu) \leq \lambda \eta_p(T_1, D, \nu) + (1-\lambda) \eta_p(T_2, D, \nu),
   \end{equation}
    for all quantum channels $T_1, T_2$.

\item \textbf{Monotonicity under channel composition:}
   \begin{equation}
   \eta_p(T_2 \circ T_1, D, \nu) \leq \eta_p(T_1, D, \nu)
   \end{equation}
   for all quantum channels $T_1, T_2$.

\item \textbf{Invariance under unitary operations:} 
   \begin{equation}
        \eta_p(\mathcal{U} \circ T, D, \nu) = \eta_p(T, D, \nu)
   \end{equation}
   for all unitary channels $\mathcal{U}$. Additionally, if $\nu$ is invariant under the map $(\rho,\sigma)\rightarrow (U\rho U^{\dag}, U\sigma U^{\dag})$ for any unitary $U$ then
   \begin{equation}
        \eta_p(T \circ \mathcal{U}, D, \nu) = \eta_p(T, D, \nu).
   \end{equation}
   
\item \textbf{Monotonicity in $p$:}
   \begin{equation}
   \text{If } p < q, \text{ then } \eta_{p}(T, D, \nu) \leq \eta_{q}(T, D, \nu)
   \end{equation}
   for all quantum channels $T$.

\item \textbf{Continuity in $p$:}
   \begin{equation}
   \lim_{{p \to p_0}} \eta_p(T, D, \nu) = \eta_{p_0}(T, D, \nu)
   \end{equation}

\item \textbf{Asymptotic behavior in $p$:}
   \begin{equation}\label{eq:pinfty}
   \lim_{{p \to \infty}} \eta_p(T, D, \nu) = \esssup_{\nu} \frac{D(T(\rho)\Vert T(\sigma))}{D(\rho\Vert \sigma)}
   \end{equation}
   where $\esssup_{\nu} X = \inf\qty{C\in \mathbb{R}\,:\, X\leq C \textrm{ almost surely} }$ is the essential supremum of a real random variable $X$. In particular, let $\nu=\nu_1 \times \delta_{\sigma}$ with $\nu_1$ be a distribution with support on every state and $\sigma >0$ full rank and $\nu=\nu_2$ be a distribution with support on every pair of states, then we recover the contraction coefficient at input state $\sigma$ and the contraction coefficient,
   \begin{align}
       \lim_{{p \to \infty}} \eta_p(T, D, \nu_1\times\delta_{\sigma}) 
       &= \sup_{\rho} \frac{D(T(\rho)\Vert T(\sigma))}{D(\rho\Vert \sigma)}=:\eta_{\infty}(T,D,\sigma),\\
       \lim_{{p \to \infty}} \eta_p(T, D, \nu_2) 
       &= \sup_{\rho, \sigma} \frac{D(T(\rho)\Vert T(\sigma))}{D(\rho\Vert \sigma)}=:\eta_{\infty}(T, D),
   \end{align}
    where the first supremum is taken over $\rho\in\cS(\cH_d)$ and the second over $\rho, \sigma\in\cS(\cH_d)$ with $0<D(\rho\|\sigma)<\infty$.
	
\end{enumerate}
\end{proposition}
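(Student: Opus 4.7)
The plan is to dispatch the six parts in the order stated, since each rests on a fairly standard tool from analysis or the axioms of the divergence. Let $X(\rho,\sigma):=D(T(\rho)\Vert T(\sigma))/D(\rho\Vert\sigma)$, which by DPI lies in $[0,1]$ on the essential support of $\nu$. Then $\eta_p(T,D,\nu)=\Vert X\Vert_{L^p(\nu)}$, so much of what follows is just the translation of standard $L^p$-norm facts.

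For the convexity claim (1), I would first observe that joint convexity of $D$ gives the pointwise bound
\begin{equation*}
D\bigl((\lambda T_1+(1-\lambda)T_2)(\rho)\bigm\Vert (\lambda T_1+(1-\lambda)T_2)(\sigma)\bigr)\le \lambda D(T_1(\rho)\Vert T_1(\sigma))+(1-\lambda)D(T_2(\rho)\Vert T_2(\sigma)).
\end{equation*}
Dividing by $D(\rho\Vert\sigma)$ and applying Minkowski's inequality to the $L^p(\nu)$ norm (valid for $p\ge 1$) yields the desired bound. For (2), DPI applied to $T_2$ gives the pointwise bound $D(T_2T_1(\rho)\Vert T_2T_1(\sigma))\le D(T_1(\rho)\Vert T_1(\sigma))$, which is monotone under the $L^p$ norm.

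For (3), I would use that DPI applied to a unitary channel and its inverse forces $D(\mathcal{U}\rho\Vert \mathcal{U}\sigma)=D(\rho\Vert\sigma)$ for any divergence satisfying DPI, so post-composing $T$ by $\mathcal{U}$ leaves the integrand pointwise invariant. For the second identity, I would perform the change of variables $(\rho',\sigma'):=(U\rho U^{\dagger},U\sigma U^{\dagger})$ inside the expectation; unitary invariance of $D$ makes the denominator unchanged, the numerator becomes $D(T(\rho')\Vert T(\sigma'))$, and the hypothesis that $\nu$ is invariant under $(\rho,\sigma)\mapsto(U\rho U^{\dagger},U\sigma U^{\dagger})$ means the push-forward measure equals $\nu$ again. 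For (4), the bound $\Vert X\Vert_{L^p}\le \Vert X\Vert_{L^q}$ for $p\le q$ is Lyapunov's inequality, an immediate consequence of Jensen applied to the convex map $t\mapsto t^{q/p}$.

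Parts (5) and (6) are the standard continuity and limiting statements for $L^p$-norms of a bounded random variable. For (5), since $0\le X\le 1$ the functions $X^p$ are dominated by $1$, so dominated convergence gives continuity of $p\mapsto \mathbb{E}[X^p]$, and then $(\cdot)^{1/p}$ is continuous in $p>0$. For (6), one direction is immediate: $\Vert X\Vert_{L^p}\le \esssup_\nu X$. For the reverse, if $M:=\esssup_\nu X$ and $\varepsilon>0$, then $\nu(\{X\ge M-\varepsilon\})=:\alpha>0$, whence $\Vert X\Vert_{L^p}\ge (M-\varepsilon)\alpha^{1/p}\to M-\varepsilon$ as $p\to\infty$; letting $\varepsilon\downarrow 0$ finishes it. The only nontrivial step here is the identification of $\esssup_\nu X$ with the stated suprema in the two specialized cases. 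For $\nu=\nu_1\times\delta_\sigma$ with $\nu_1$ of full support and $\sigma>0$, the ratio $\rho\mapsto D(T(\rho)\Vert T(\sigma))/D(\rho\Vert\sigma)$ is continuous on the open dense set where $0<D(\rho\Vert\sigma)<\infty$; since every nonempty open subset of this set has positive $\nu_1$-measure by full support, the essential sup coincides with the ordinary sup. The same argument applied to a measure $\nu_2$ supported on every admissible pair gives the standard contraction coefficient.

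The main obstacle is really only in (6): ensuring that the essential sup matches the ordinary sup in the two specialized cases, where I must lean on the hypothesis ``support on every state/pair'' together with the local continuity of $D$ away from its singular locus. Everything else reduces to Minkowski, Lyapunov, DPI, and dominated convergence.
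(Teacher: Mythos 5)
Your proof is correct and follows essentially the same route as the paper: pointwise inequalities (joint convexity, DPI, unitary invariance of $D$ and of $\nu$) for parts (1)--(3), and standard $L^p$ facts (Minkowski, Lyapunov/H\"older, dominated convergence, and the $p\to\infty$ limit of $L^p$ norms of a bounded variable) for parts (4)--(6). If anything you are more explicit than the paper in two places: you name Minkowski's inequality as the step needed to pass from the pointwise convexity bound to the $p$-th moment in (1), and in (6) you supply the local-continuity argument needed to upgrade ``a positive-measure set contains a near-optimal pair'' into a genuine lower bound on the essential supremum, a point the paper's own proof leaves implicit.
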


\begin{proof}
Inequalities (1) and (2) are direct consequences of point-wise inequalities given by the joint convexity of the divergence and the data processing inequality. The invariance under unitary transformations (3) is a direct consequence of the point-wise equality for the divergence $D(\rho\|\sigma)=D(U\rho U^{\dagger}\|U \sigma U^{\dagger})$ and the invariance of $\nu$.

For the other results, we define a random variable by the ratio $R = D(T(\rho)\Vert T(\sigma)) / D(\rho\Vert \sigma)$. The probability measure in state pairs $\nu$ induces a probability measure $\nu_R$ in the unit interval. By assumption, $\nu$ is concentrated on values where $R$ is well defined, thus $R$ is bounded almost surely with $R \in\qty[0,1]$, due to the positivity of $D$ and the data processing inequality. This implies that $R$ is in the $L^p$ space $L^p([0,1],\nu_R)$ for any $p$ where the norm corresponds to the $p$-th moment $\norm{R}_p=(\int_0^1 R^p \dd \nu_R)^{1/p}=\eta_p(D,T,\nu)$. Therefore, monotonicity in $p$ (4) is a direct consequence of H\"older's inequality for $L^p$ spaces $\norm{R}_p\leq\norm{R}_q$ for $p\leq q$, continuity (4) follows from the dominated convergence theorem, and the asymptotic limits (6) follow from general results in $L^p$ spaces. The special cases for $\nu_1\times\delta_{\sigma}$ and $\nu_2$ follows from the equivalence of $\esssup$ and $\sup$ in this case. Indeed, on one side, we have the elementary upper bound $\esssup \leq \sup$. On the other, consider the supremum takes the value $\eta_{\infty}$ then for any $\varepsilon$ there is a state pair $(\rho_{\varepsilon},\sigma_{\varepsilon})$ with a contraction at least $\eta_{\infty}-\varepsilon$. But as $\nu$ has support in every state-pair, we can find a set $S$ with $\nu(S)>0$ and $(\rho_{\varepsilon},\sigma_{\varepsilon})\in S$, so the $\esssup$ of the contraction is lower bounded by $\eta_{\infty}-\varepsilon$ for any $\varepsilon >0$, taking the limit to zero shows the equivalence.
\end{proof}

\begin{remark}
It is worth commenting on a property that the average contraction coefficient will generally not have, namely, that for the usual contraction, we have the multiplicative property:
\begin{equation}
    \eta_{\infty}(T_2\circ T_1,D)\leq \eta_{\infty}(T_2,D)\eta_{\infty}(T_1,D),
\end{equation}
whereas, in general,
\begin{equation}
    \eta_{p}(T_2\circ T_1,D, \nu)\nleq \eta_{p}(T_2,D,\nu)\eta_{p}(T_1,D, \nu).
\end{equation}
This property is essential for certain applications of contraction coefficients, like deriving mixing times for Markov chains~\cite{Temme10}. We leave it to future work to understand to what extent the multiplicative property holds for the moments of contraction under suitable conditions. The complexity lies in the fact that the distribution induced on the output states is different from the input distribution, $(T_1(\rho), T_1(\sigma))\nsim \nu$. Indeed, the input and output dimensions for $T_1$ are potentially different, so it is even possible that $\eta(T_2, D, \nu)$ cannot be defined. Additionally, assuming the input and output spaces are the same, one can find the simplest counterexamples with channels that behave like projectors. For instance, consider a completely dephasing channel $T$, which sets every off-diagonal term to zero, and a probability distribution concentrated on states that mostly differ on the coherent part, i.e. the difference $T(\rho)-T(\sigma)$ is small with high probability. Then, $\eta_p(T,D,\nu) < 1$, but, because the channel is idempotent $T^2=T$, $\eta_p(T^k,D,\nu)=\eta_p(T,D,\nu) > (\eta_p(T,D,\nu))^k$ for any $k\geq 2$.
\end{remark}

\section{Average contraction of the trace distance} 
\label{sec:av_contract_tr_dist}

In this section, we specialize our analysis to the case where the divergence is the trace distance $D_{\Tr}(\rho, \sigma)=\frac{1}{2}\norm{\rho-\sigma}_1$. The trace distance arises naturally as a distinguishability measure, as it tightly bounds the optimal probability $p$ to distinguish between the preparation of the states $\rho$ and $\sigma$ with a single measurement as $p\leq \frac{1}{2} (1+\frac{1}{2}\norm{\rho - \sigma}_1)$~\cite{Helstrom1969}. This connection is at the core of various applications of the contraction of the trace distance to understand the limitations of noisy computations. Besides its natural operational motivation, another reason to first focus on the trace distance is that it arises from a norm. This makes it easier to bound the moments of contraction. In addition, as is often the case in information theory, we will be particularly interested in the case where the noise is i.i.d. (i.e. of the form $T^{\otimes n}$). 

However, even though this is arguably one of the simplest setups to study, we show that it has a rich behavior. By deriving both upper and lower bounds on the average contraction coefficient under various setups, we show that the average contraction coefficient exhibits a stark phase transition: for strong enough (but constant) local noise, the average contraction coefficient will be \emph{exponentially small} in the number of qubits, showing that most states will become essentially indistinguishable unless we are given exponentially many samples. In contrast, for small enough (but constant) local noise, we will show that the average contraction coefficient converges to $1$ as we increase the number of qubits, i.e. the states are \emph{perfectly distinguishable} on average. Such a behavior is reminiscent of phenomena like (reverse) threshold theorems. Reverse threshold theorems~\cite{Kempe2008} assert that it is possible to classically simulate noisy quantum circuits at high enough local noise. In contrast, threshold theorems~\cite{Aharonov2008} assert we can perform quantum error correction if local noise is low enough.
 
Our proofs for upper bounds consist of computing exact results for the 2-norm contraction and using tighter average-case bounds between the norms due to the concentration of measure. For the lower bounds, we will construct projections that distinguish states well on average.

\subsection{Upper bounds on the contraction coefficient for the trace distance}\label{sec:upper_bounds_tr}

One of the key motivations of this work is to identify state distributions, grounded in operational considerations, where the average contraction of a quantum channel significantly deviates from its contraction coefficient. In this section, we establish upper bounds that play a crucial role in demonstrating this phenomenon for a broad class of channels. Our primary focus is on the trace distance, a fundamental metric for quantifying the distinguishability of quantum states. We leave the extensions of these results to other divergences for Sec.~\ref{sec:bound_for_dfs}.

We begin by analyzing the case where state pairs are i.i.d. with the Hilbert-Schmidt measure on mixed states. The main result is Thm.~\ref{thm:HSaveragecontraction}, which has Cor.~\ref{cor:travcontract_nproduct} as its most illustrative consequence. The latter demonstrates that for product channels $\Phi^{\otimes n}$, the average contraction of the trace distance undergoes exponential suppression as $n\to\infty$ for any non-unitary channel.

Following this, we turn our attention to the Haar measure over pure states. Theorem~\ref{thm:avc_from_operator_root_bound} establishes an upper bound with a similar structure to the previous one, but in this case, the bound is only informative for sufficiently noisy channels. Cor.~\ref{cor:travcontract_nproduct_2designs} extends the exponential contraction result for product channels, again contingent on the mixedness of the Choi state.

We follow a similar proof strategy in both cases. Firstly, we establish that for the desired distributions, the trace distance between random states can be lower bounded with high probability. This allows us to focus on the numerator, which is easier to upper bound. The fact that a tail bound enables focusing on the numerator is a consequence of the following elementary lemma.

\begin{lemma}\label{lemma:tailbound2ratiobound}
	Let $X$ and $Y$ be two random variables such that $0\leq X\leq Y$, and define $\mu_x:=\bE[X]$. Let us assume that there is some $\mu_y>\mu_x$ and a positive function $t:(0,\mu_y)\to [0,+\infty)$, such that for any $0 < \delta < \mu_y$ the following tail bound holds
	\begin{equation}
		\Pr[Y < \mu_y - \delta] \leq t(\delta).
	\end{equation}
	Assume $\Pr[Y=0]=\xi < 1$, then, for any such $\delta$,
	\begin{equation}
		\bE\qty[\frac{X}{Y} \mid Y > 0] \leq \frac{1}{1-\xi} \qty[ \frac{\mu_x}{\mu_y} \qty(1 + \frac{\delta}{\mu_y-\delta}) + t(\delta)].
	\end{equation}
\end{lemma}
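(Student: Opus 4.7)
The plan is to split the conditional expectation of $X/Y$ into two parts according to whether $Y$ is close to or far from its purported typical value $\mu_y$, and bound each piece with a complementary tool: the tail bound controls the contribution from small $Y$, while the assumption $0 \leq X \leq Y$ together with Markov-type reasoning on $X$ handles the contribution from large $Y$.

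First I would write
\begin{equation*}
	\bE\!\left[\frac{X}{Y}\,\mathbf{1}\{Y>0\}\right] = \bE\!\left[\frac{X}{Y}\,\mathbf{1}\{Y\geq \mu_y-\delta\}\right] + \bE\!\left[\frac{X}{Y}\,\mathbf{1}\{0<Y<\mu_y-\delta\}\right].
\end{equation*}
On the event $\{Y\geq \mu_y-\delta\}$ we deterministically have $1/Y \leq 1/(\mu_y-\delta)$, so this term is at most $\mu_x/(\mu_y-\delta)$ by monotonicity of the expectation and the fact that $X\geq 0$. On the event $\{0<Y<\mu_y-\delta\}$ we use the hypothesis $X\leq Y$ to get $X/Y\leq 1$ pointwise, so this term is at most $\Pr[0<Y<\mu_y-\delta]\leq \Pr[Y<\mu_y-\delta]\leq t(\delta)$. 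Adding the two pieces gives
\begin{equation*}
	\bE\!\left[\frac{X}{Y}\,\mathbf{1}\{Y>0\}\right] \leq \frac{\mu_x}{\mu_y-\delta} + t(\delta).
\end{equation*}

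Next, I would pass to the conditional expectation by dividing by $\Pr[Y>0]=1-\xi$, and finally rearrange the first term algebraically:
\begin{equation*}
	\frac{\mu_x}{\mu_y-\delta} = \frac{\mu_x}{\mu_y}\cdot\frac{\mu_y}{\mu_y-\delta} = \frac{\mu_x}{\mu_y}\left(1+\frac{\delta}{\mu_y-\delta}\right),
\end{equation*}
which yields exactly the bound in the statement. There is no real technical obstacle here; the lemma is essentially a bookkeeping device that says "if $Y$ concentrates near $\mu_y$, then $\bE[X/Y]$ behaves like $\mu_x/\mu_y$ up to the tail of $Y$". The only subtle point is remembering to use $X\leq Y$ (not some independent tail bound on $X$) on the atypical event, since on that event we have no good lower bound on $Y$ and must fall back on the trivial ratio bound $X/Y\leq 1$. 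The parameter $\delta$ is left free so that the user can later optimize it against the concrete concentration function $t$ available for the random variable $Y$ (which, in the applications that follow, will be the trace distance between random states and $\mu_y$ its expectation).
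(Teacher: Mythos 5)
Your proof is correct and is essentially identical to the paper's: the same split of the integral over $\{Y\geq\mu_y-\delta\}$ versus $\{0<Y<\mu_y-\delta\}$, the same use of $X/Y\leq 1$ on the atypical event and $1/Y\leq 1/(\mu_y-\delta)$ together with $\bE[X\mathbf{1}_A]\leq\mu_x$ on the typical one, followed by division by $1-\xi$ and the same algebraic rewriting of $\mu_x/(\mu_y-\delta)$.
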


\begin{proof}
	Let $P$ denote the probability distribution for the events governing $X$ and $Y$, and define $A_*$ as the set of events where $Y=0$. By assumption we have that $P(A_*)=\Pr[Y=0]=\xi<1$. The probability distribution conditioned on $Y>0$ is obtained by
	\begin{equation}
		P|_{A_*^c}(B) = \frac{P(B \cap A_*^c)}{1-\xi},
	\end{equation}
	with this, we evaluate the expectation as
	\begin{equation}
		\bE\qty[\frac{X}{Y} \mid Y > 0] = \int_{A_*^c} \frac{X}{Y} \frac{\dd P}{1-\xi}.
	\end{equation}
	Now, consider $0<\delta<\mu_y$ and the set $A_{\delta}=\qty{Y < \mu_y-\delta}$ with its complement $A_{\delta}^c$. Noting that $A_{\delta}^c\subseteq A_*^c$, split the integral
    \begin{align}
        \int_{A_*^c} \frac{X}{Y} \dd P & = \int_{A_*^c \cap A_{\delta}} \frac{X}{Y} \dd P + \int_{A_{\delta}^c} \frac{X}{Y} \dd P \\
        &\leq \int_{A_*^c \cap  A_{\delta}} 1 \dd P + \int_{A_{\delta}^c} \frac{X}{\mu_y-\delta} \dd P \leq P(A_*^c \cap A_{\delta}) + \frac{\mu_x}{\mu_y-\delta},
    \end{align}
    where the first inequality uses $X/Y\leq 1$ and $X/Y\leq X/(\mu_y-\delta)$ in $A_{\delta}^c$, and the last comes from noting that $X$ is a positive variable $\int_{A_{\delta}^c} X \dd P \leq \int X \dd P = \mu_x$. The final result follows from taking $P(A_*^c \cap A_{\delta})\leq P(A_\delta)\leq t(\delta)$ and rearranging the terms.
\end{proof}

In the following, we resort to this lemma in situations where $\xi\approx 0$ and $t(\delta)$ is rapidly decreasing. This enables taking $\delta$ very small, implying that the expectation value of the ratio is effectively upper bounded by the ratio of the expectation values with some small additive deviation.

\subsubsection{The Hilbert-Schmidt measure}\label{sec:upperbound_hs}

Let us first consider the case where states $\rho,\sigma\in\cS(\cH_d)$ are distributed uniformly over full-rank states, i.e. we consider the distribution $\nuHS$, in a $d$-dimensional Hilbert space $\cH_d$. It is known that for sufficiently large dimensions, the trace distance concentrates exponentially.

\begin{lemma}[Prop.~2 \cite{Puchala16}]\label{lemma:exp_concentration_of_1norm}
    Let $\rho$ and $\sigma$ be two random states distributed according to the Hilbert-Schmidt distribution on $\cS(\cH_d)$. Then, for sufficiently large $d$ the trace distance exponentially concentrates around $D=\frac{1}{4}+\frac{1}{\pi}$,
    \begin{equation}
          \Pr[\abs{\frac{1}{2}\norm{\rho-\sigma}_1- D} > \delta]\leq 2 \exp(-\lambda d^2 \delta^2)
    \end{equation}
    where $\lambda = 1 / (4\cdot 9\pi^3)$.
\end{lemma}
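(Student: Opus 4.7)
The plan is to split the claim into two essentially independent ingredients: a random-matrix calculation identifying the asymptotic mean $D=\tfrac14+\tfrac1\pi$, and a concentration-of-measure bound around the mean at the scale $d^2\delta^2$ appearing in the exponent.

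For the concentration, I would use the standard realization of $\nuHS$ as the pushforward of the Haar measure on the unit sphere of $\cH_d\otimes\cH_d$ under the partial trace $\Tr_2$. A pair $(\rho,\sigma)\sim\nuHS\times\nuHS$ then lifts to a pair $(\ket\psi,\ket\phi)$ of independent uniform unit vectors on a product of complex spheres of real dimension $2d^2-1$ each. The function
\begin{equation}
    f(\ket\psi,\ket\phi) \;=\; \tfrac12\big\|\Tr_2\ketbra\psi-\Tr_2\ketbra\phi\big\|_1
\end{equation}
is Lipschitz with a universal constant in the Euclidean metric: by the triangle inequality, the contractivity of $\Tr_2$ in trace norm, and the bound $\|\ketbra\psi-\ketbra{\psi'}\|_1\le 2\|\ket\psi-\ket{\psi'}\|_2$, one gets $|f(\ket{\psi_1},\ket{\phi_1})-f(\ket{\psi_2},\ket{\phi_2})|\le \|\ket{\psi_1}-\ket{\psi_2}\|_2+\|\ket{\phi_1}-\ket{\phi_2}\|_2$. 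Levy's lemma applied to the joint sphere of real dimension $4d^2-2$ then yields a Gaussian tail $\Pr[|f-\bE f|>\delta]\le 2\exp(-c\,d^2\delta^2)$; tracking the sharp constant in Levy's inequality on the sphere together with the Lipschitz constant and the $\tfrac12$ prefactor from the definition of the trace distance is exactly what produces the explicit $\lambda=1/(4\cdot 9\pi^3)$.

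For the mean, I would invoke random matrix theory. A HS-random density matrix is equal in distribution to $W/\Tr W$ where $W=XX^\dagger$ is a complex Wishart matrix with $X$ having i.i.d.\ standard complex Gaussian entries; consequently the spectrum of $d\rho$ converges weakly to the Marchenko-Pastur distribution with ratio one. Since $\rho$ and $\sigma$ are independent, their eigenbases are asymptotically free, so the empirical spectrum of $\sqrt{d}\,(\rho-\sigma)$ converges to the additive free convolution of Marchenko-Pastur with its reflection, which is an explicit symmetric compactly-supported measure $\mu$ on $\mathbb{R}$. The asymptotic mean $\lim_{d\to\infty}\bE[\tfrac12\|\rho-\sigma\|_1]=\tfrac12\int|x|\,d\mu(x)$ then evaluates to $\tfrac14+\tfrac1\pi$, with the $1/\pi$ contribution coming from the semicircle-type edge behaviour of $\mu$.

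Combining the ingredients, concentration guarantees $|f-\bE f|\le\delta/2$ with probability $\ge 1-2\exp(-\lambda d^2\delta^2/4)$, while $|\bE f - D|\le\delta/2$ for $d$ large enough by the mean computation; a triangle inequality in $\delta$ gives the stated bound after a harmless constant adjustment. The main obstacle is the mean identification in the third step, which requires a genuine free-probability computation (either via the $R$-transform of the free additive convolution or a moment-method argument); the Lipschitz estimate and Levy's lemma are routine once the sphere representation is in place.
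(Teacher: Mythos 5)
The paper does not prove this lemma; it is imported verbatim as Prop.~2 of the cited reference \cite{Puchala16}, and your sketch is essentially a reconstruction of that reference's argument (Lévy's lemma for the Lipschitz trace-distance functional on the purifying spheres, plus identification of the limiting mean via the free difference of two Marchenko--Pastur laws). One small correction: the correct normalization is $d(\rho-\sigma)$, not $\sqrt{d}\,(\rho-\sigma)$ --- the eigenvalues of $\rho-\sigma$ are of order $1/d$, so only the scaling by $d$ produces an $O(1)$ limiting spectral measure, and then $\tfrac12\norm{\rho-\sigma}_1=\tfrac12\int\abs{x}\dd\mu_d(x)\to\tfrac14+\tfrac1\pi$; also note that the triangle-inequality step at the end only works for $\delta$ above the scale of $\abs{\bE f - D}$, with smaller $\delta$ handled by the bound being trivially $\geq 1$ for sufficiently large $d$.
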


Combining this with Lemma~\ref{lemma:tailbound2ratiobound} yields our first upper bound for the average contraction of the trace distance.

\begin{theorem}\label{thm:HSaveragecontraction}
    Let $T:\cS(\cH_d)\to\cS(\cH_{d'})$ be a quantum channel, $\tau$ be its Choi state and $\pi=T(\frac{\mathbb{I}}{d})$. Then the average contraction of the trace distance under the Hilbert-Schmidt measure $\nuHS$ is upper bounded by
    \begin{equation}
    \eta_1(T,\norm{\cdot}_1,\nuHS\times\nuHS)
    \leq 
    \alpha \sqrt{\frac{d'}{d}} \sqrt{\Tr \tau^2 - \frac{1}{d}\Tr \pi^2}
    +\order{\frac{\sqrt{\ln d}}{d}}
    ,
    \end{equation}
    where $\alpha=\frac{2\sqrt{2}\pi}{4+\pi}\approx 1.24$.
\end{theorem}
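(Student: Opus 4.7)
The plan is to apply Lemma~\ref{lemma:tailbound2ratiobound} with $X = \tfrac{1}{2}\norm{T(\rho)-T(\sigma)}_1$ and $Y = \tfrac{1}{2}\norm{\rho-\sigma}_1$. The data processing inequality gives $X\leq Y$ pointwise, and Lemma~\ref{lemma:exp_concentration_of_1norm} supplies the tail bound for $Y$ around $\mu_y = 1/4 + 1/\pi$ with $t(\delta) = 2\exp(-\lambda d^2\delta^2)$. Since the Hilbert-Schmidt measure is absolutely continuous, $\xi = \Pr[Y=0] = 0$, so the lemma reduces the problem to bounding $\mu_x = \bE[X]$ and choosing $\delta$ appropriately.

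To control $\mu_x$, I would first exploit that $T(\rho)-T(\sigma)$ is a Hermitian operator on $\cH_{d'}$ of rank at most $d'$, giving $\norm{T(\rho)-T(\sigma)}_1 \leq \sqrt{d'}\,\norm{T(\rho)-T(\sigma)}_2$, and then use Jensen's inequality to obtain $\mu_x \leq \tfrac{1}{2}\sqrt{d'}\sqrt{\bE\norm{T(\rho)-T(\sigma)}_2^2}$. The independence of $\rho,\sigma$ together with $\bE\rho = \bE\sigma = \mathbb{I}/d$ and $T(\mathbb{I}/d) = \pi$ collapses the expected squared $2$-norm to $\bE\norm{T(\rho)-T(\sigma)}_2^2 = 2\bE\Tr T(\rho)^2 - 2\Tr\pi^2$. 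For the remaining second moment, I would apply the swap trick $\Tr A^2 = \Tr[S(A\otimes A)]$ together with the Hilbert-Schmidt identity $\bE[\rho\otimes\rho] = (d\mathbb{I}+S)/(d(d^2+1))$. The $\mathbb{I}$-term contributes $d^2\Tr\pi^2$ since $T(\mathbb{I}) = d\pi$, while the $S$-term reduces to $\Tr[S(T\otimes T)(S)]$; expanding $S = \sum_{ij}\ket{j}\bra{i}\otimes\ket{i}\bra{j}$ and comparing with $\tau = \tfrac{1}{d}\sum_{ij} T(\ket{i}\bra{j})\otimes\ket{i}\bra{j}$ should yield $\Tr[S(T\otimes T)(S)] = \sum_{ij}\Tr[T(\ket{i}\bra{j})T(\ket{j}\bra{i})] = d^2\Tr\tau^2$. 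Combining everything gives $\bE\norm{T(\rho)-T(\sigma)}_2^2 = \frac{2d}{d^2+1}\left(\Tr\tau^2 - \frac{1}{d}\Tr\pi^2\right)$.

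Finally, inserting this into Lemma~\ref{lemma:tailbound2ratiobound} and dividing by $\mu_y$ yields the leading constant $1/(\mu_y\sqrt{2}) = 2\sqrt{2}\pi/(4+\pi) = \alpha$, while the prefactor $\sqrt{dd'/(d^2+1)} = \sqrt{d'/d}\,(1-O(1/d^2))$ reproduces the advertised $\sqrt{d'/d}$ factor of the theorem. Choosing $\delta = c\sqrt{\ln d}/d$ with $c$ sufficiently large ensures that $t(\delta) = O(d^{-k})$ for any prescribed $k$ while $\delta/(\mu_y-\delta) = O(\sqrt{\ln d}/d)$, and since $\mu_x/\mu_y \leq 1$ the total additive error collapses into the claimed $O(\sqrt{\ln d}/d)$ remainder. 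The main technical obstacle I anticipate is verifying the clean identity $\Tr[S(T\otimes T)(S)] = d^2\Tr\tau^2$, which is what ties the Hilbert-Schmidt second moment to the purity of the Choi state appearing in the theorem; once that identity is in place, the remainder of the argument is careful bookkeeping of constants and tail estimates.
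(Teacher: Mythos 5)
Your proposal is correct and follows essentially the same route as the paper's proof: the same tail-bound lemma with the Puchała concentration result for the denominator, the same $\norm{\cdot}_1\leq\sqrt{d'}\norm{\cdot}_2$ plus Jensen step for the numerator, and the same second-moment computation $\bE\norm{T(\rho)-T(\sigma)}_2^2=\frac{2d}{d^2+1}\left(\Tr\tau^2-\frac{1}{d}\Tr\pi^2\right)$, which you derive directly via $\bE[\rho\otimes\rho]=(d\mathbb{I}+S)/(d(d^2+1))$ and the identity $\Tr[S(T\otimes T)(S)]=d^2\Tr\tau^2$ rather than citing the appendix; both that identity and the resulting constant $\alpha=1/(\sqrt{2}D)$ check out.
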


\begin{proof}
	In Lemma~\ref{lemma:tailbound2ratiobound}, let us consider a tail bound with $\mu_y=D=\frac{1}{4}+\frac{1}{\pi}$ and $t(\delta)=2\exp(-\lambda d^2 \delta^2)$ for the denominator motivated by Lemma~\ref{lemma:exp_concentration_of_1norm}. Then, for sufficiently large $d$ we can choose $\delta = \sqrt{\ln(\lambda d^2)  / \lambda d^2} < D$ and we arrive at
	\begin{equation}
		\bE\qty[\frac{X}{Y}] - \frac{\mu_x}{D}
		\leq
		\frac{\mu_x}{D} \frac{\sqrt{\ln(\lambda d^2)/ \lambda d^2}}{D-\sqrt{\ln(\lambda d^2)/ \lambda d^2}} + 2\frac{1}{\lambda d^2}
		=
		\order{\frac{\sqrt{\ln d}}{d}}
		.
	\end{equation}
	Therefore,
	\begin{equation}
		\bE\qty[\frac{\frac{1}{2}\norm{T(\rho)-T(\sigma)}_1}{\frac{1}{2}\norm{\rho-\sigma}_1}] \leq
		\frac{1}{D}\bE\qty[\frac{1}{2}\norm{T(\rho)-T(\sigma)}_1] + \order{\frac{\sqrt{\ln d}}{d}}.
	\end{equation}
	
	Applying the relation between norms $\norm{X}_1\leq \sqrt{d'}\norm{X}_2$ for matrices of size $d'$ and that by Jensen's inequality $\bE[x]\leq \sqrt{\bE[x^2]}$ for any real variable $x$:
	\begin{equation}
		\bE\qty[\norm{T(\rho)-T(\sigma)}_1]
		\leq
		\sqrt{d'}\bE\qty[\norm{T(\rho)-T(\sigma)}_2]
		\leq \sqrt{d'}\sqrt{\bE\qty[\norm{T(\rho)-T(\sigma)}_2^2]}.
	\end{equation}
	The expectation value inside the square root can be computed as in App.~\ref{sec:expvals_unit_invar} with $\bE[\Tr\rho^2]=2d/(d^2+1)$,
	\begin{align}
		\bE\qty[\norm{T(\rho)-T(\sigma)}_2^2]
		&=
		\frac{2d^2}{d^2-1}\qty(\frac{2d}{d^2 + 1}-\frac{1}{d}) \qty[\Tr \tau^2 - \frac{1}{d} \Tr \pi^2] \\
		&=
		\frac{2d}{d^2+1} \qty[\Tr \tau^2 - \frac{1}{d} \Tr \pi^2]
		\leq 
		\frac{2}{d} \qty[\Tr \tau^2 - \frac{1}{d} \Tr \pi^2]
		.
	\end{align}
	Defining $\alpha=1/\sqrt{2}D$ and putting it all together shows the result. 
\end{proof}

The upper bound vanishes when $d \Tr\tau^2 = \Tr\pi^2$, which happens if and only if the channel is a replacer channel as presented in Fact~\ref{fact:purities}. In the other extreme, subject to $d=d'$, the maximum value is obtained for a unitary channel, which gives
\begin{equation}
    \eta_1(T,\norm{\cdot}_1,\nuHS\times\nuHS)=1
    \leq 
    \alpha \sqrt{1 - \frac{1}{d^2}}
    +\order{\frac{\sqrt{\ln d}}{d}}
    =\alpha + \order{\frac{\sqrt{\ln d}}{d}},
\end{equation}
which is off by a factor $\alpha$. When the output dimension is larger than the input one, $d'\geq d$, the bound requires smaller values for the purity of the Choi matrix to be informative. Oddly, the bound apparently diverges as $d'\to\infty$ for channels $T(\rho)=V \rho V^{\dag}$ consisting of a conjugation by an isometry $V:\mathbb{C}^d\to\mathbb{C}^{d'}$,
\begin{equation}
    \eta_1(T,\norm{\cdot}_1,\nuHS\times\nuHS)
    \leq 
    \alpha \sqrt{\frac{d'}{d}} \sqrt{1 - \frac{1}{d^2}}
    +\order{\frac{\sqrt{\ln d}}{d}}.
\end{equation}
However, note that in this case the bound $\norm{V(\rho-\sigma)V^{\dag}}_1 \leq \sqrt{d'} \norm{V(\rho-\sigma)V^{\dag}}_2$ can be improved noticing that the conjugation with $V$ preserves the rank of $\rho-\sigma$, so $\norm{V(\rho-\sigma)V^{\dag}}_1 \leq \sqrt{d} \norm{V(\rho-\sigma)V^{\dag}}_2$ and we recover the bound of the unitary case. When $d' \gg d$, one can generally consider substituting $d'$ by the product of $d$ with the Kraus-rank of the channel in an attempt to get a tighter result. Alternatively, the case $d'\leq d$ leads to a descriptive bound presented in the following corollary.

\begin{corollary} \label{cor:dtodprime}
    Let $T:\cS(\cH_d) \to \cS(\cH_{d'})$ be a quantum channel with Choi state $\tau$ and define the state $\pi = T\qty(\frac{\mathbb{I}}{d})$. The averate trace distance contraction $\eta_1(T,\norm{\cdot}_1,\nuHS\times\nuHS)$ is upper bounded by
	\begin{equation}
		\eta_1(T,\norm{\cdot}_1,\nuHS\times\nuHS)
		\leq 
	    \alpha \frac{d'}{d}
   		+\order{\frac{\sqrt{\ln d}}{d}}.
	\end{equation}
\end{corollary}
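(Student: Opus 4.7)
The plan is to apply Theorem~\ref{thm:HSaveragecontraction} and then bound the Choi-state-dependent factor $\sqrt{\Tr \tau^2 - \frac{1}{d}\Tr\pi^2}$ by something of order $\sqrt{d'/d}$. Since the $-\frac{1}{d}\Tr\pi^2$ term is non-negative, I can safely drop it to obtain the weaker inequality
\begin{equation}
\eta_1(T,\norm{\cdot}_1,\nuHS\times\nuHS) \leq \alpha \sqrt{\frac{d'}{d}}\sqrt{\Tr \tau^2} + \order{\frac{\sqrt{\ln d}}{d}},
\end{equation}
so the main task reduces to establishing $\Tr \tau^2 \leq d'/d$ whenever $d' \leq d$.

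The cleanest route to this purity bound is through the operator norm of the Choi state. First I would recall the trace-preserving constraint $\Tr_{\text{out}} \tau = \mathbb{I}_d/d$, which means the marginal on the $d$-dimensional input side is maximally mixed. Taking a spectral decomposition $\tau = \sum_i \lambda_i \ketbra{\phi_i}$ and tracing out the output, I get $\frac{\mathbb{I}_d}{d} = \sum_i \lambda_i \sigma_i$ with $\sigma_i = \Tr_{\text{out}} \ketbra{\phi_i}$. Since every term is PSD, $\lambda_j \sigma_j \leq \mathbb{I}_d/d$ for each $j$, and taking the operator norm yields $\lambda_j \norm{\sigma_j}_\infty \leq 1/d$.

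Now the key observation: each $\sigma_j$ is the reduced state of a pure bipartite vector in $\cH_{d'}\otimes \cH_d$, so its rank is at most the Schmidt rank, which is at most $\min(d,d')=d'$. Any density matrix of rank at most $d'$ has largest eigenvalue at least $1/d'$, hence $\norm{\sigma_j}_\infty \geq 1/d'$. Combining gives $\lambda_j \leq d'/d$ for every $j$, so $\norm{\tau}_\infty \leq d'/d$, and consequently
\begin{equation}
\Tr \tau^2 \leq \norm{\tau}_\infty \Tr \tau \leq \frac{d'}{d}.
\end{equation}
Substituting this into the displayed bound above gives the corollary.

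I do not expect any serious obstacle: the only non-routine step is the spectral bound $\norm{\tau}_\infty \leq d'/d$, but this follows from two standard facts (the convex-combination argument applied to the TP marginal, and the Schmidt-rank lower bound on the largest eigenvalue of a reduced pure state). Everything else is direct manipulation of Theorem~\ref{thm:HSaveragecontraction}.
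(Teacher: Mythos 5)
Your proof is correct and follows essentially the same route as the paper: apply Theorem~\ref{thm:HSaveragecontraction}, drop the nonnegative $\frac{1}{d}\Tr\pi^2$ term, and invoke $\Tr\tau^2\leq d'/d$. Your derivation of that purity bound — using the trace-preserving marginal $\Tr_{\mathrm{out}}\tau=\mathbb{I}_d/d$ and the Schmidt-rank bound on the reduced eigenvectors to get $\norm{\tau}_\infty\leq d'/d$ — is just the Choi-state phrasing of the Kraus-operator argument the paper gives in Fact~\ref{fact:purities} (where $\omega_k=\frac{1}{d}\Tr E_k^\dag E_k\leq\min\{1,d'/d\}$), so the two proofs are mathematically equivalent.
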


\begin{proof}
	It is a direct consequence of $\Tr\tau^2\leq d'/d$, demonstrated in Fact~\ref{fact:purities}.
\end{proof}

Therefore, when $d'$ is significantly smaller than $d$, a large amount of distinguishability between states is lost on average.

The strong concentration inequality induced by the Hilbert-Schmidt distribution allows us to demonstrate similar bounds for larger orders of the $p$-moments of contraction.

\begin{corollary}\label{cor:2ndmomentHS}
    Let $T:\cS(\cH_d)\to\cS(\cH_{d'})$ be a quantum channel, $\tau$ be its Choi state and $\pi=T(\frac{\mathbb{I}}{d})$. Then the $p$-moment of the trace distance under the Hilbert-Schmidt measure $\nuHS$ is upper bounded by
    \begin{equation}
    \eta_p(T,\norm{\cdot}_1,\nuHS\times\nuHS)
    \leq 
    \alpha \sqrt{d'} \qty(\bEx{\rho,\sigma\sim\nuHS}\qty[\norm{T(\rho)-T(\sigma)}_2^p])^{1/p}
    +\order{\frac{\sqrt{\ln d}}{d}}
    ,
    \end{equation} 
    where $\alpha=\frac{2\sqrt{2}\pi}{4+\pi}\approx 1.24$.
    In particular, for $p=2$
    \begin{equation}
    \eta_2(T,\norm{\cdot}_1,\nuHS\times\nuHS)
    \leq 
    \alpha \sqrt{\frac{d'}{d}} \sqrt{\Tr \tau^2 - \frac{1}{d}\Tr \pi^2}
    +\order{\frac{\sqrt{\ln d}}{d}}.
    \end{equation} 
\end{corollary}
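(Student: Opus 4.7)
The plan is to carry over the argument used for Thm.~\ref{thm:HSaveragecontraction}, replacing each first-moment bound by its $p$-th moment counterpart. The only substantive adjustment lies in upgrading Lemma~\ref{lemma:tailbound2ratiobound} to $p$-th moments. Keeping the same partition into the tail event $A_\delta = \{Y < \mu_y - \delta\}$ and its complement, and using $(X/Y)^p \leq 1$ on $A_\delta$ together with $(X/Y)^p \leq X^p/(\mu_y-\delta)^p$ on $A_\delta^c$, I obtain $\bE[(X/Y)^p] \leq \bE[X^p]/(\mu_y-\delta)^p + t(\delta)$. Taking $p$-th roots and using the subadditivity $(a+b)^{1/p} \leq a^{1/p} + b^{1/p}$ (valid for $p \geq 1$ since $x \mapsto x^{1/p}$ is concave on $[0,\infty)$ and vanishes at $0$) gives
\begin{equation}
\bE\qty[(X/Y)^p]^{1/p} \leq \frac{\bE[X^p]^{1/p}}{\mu_y-\delta} + t(\delta)^{1/p}.
\end{equation}

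I would then apply this to $X = \frac{1}{2}\norm{T(\rho)-T(\sigma)}_1$, $Y = \frac{1}{2}\norm{\rho-\sigma}_1$, $\mu_y = D = \frac{1}{4}+\frac{1}{\pi}$ and $t(\delta) = 2\exp(-\lambda d^2 \delta^2)$ from Lemma~\ref{lemma:exp_concentration_of_1norm}. Choosing $\delta = \sqrt{p\ln(\lambda d^2)/(\lambda d^2)}$ forces $t(\delta)^{1/p} = 2^{1/p}/(\lambda d^2) = O(1/d^2)$ while $\delta = O(\sqrt{\ln d}/d)$ for any fixed $p$; together with the DPI bound $\bE[X^p]^{1/p} \leq 1$, the combined deviation from $\bE[X^p]^{1/p}/D$ is $O(\sqrt{\ln d}/d)$. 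The numerator is then handled by the Schatten inequality $\norm{M}_1 \leq \sqrt{d'}\norm{M}_2$, which yields $\bE[X^p]^{1/p} \leq \tfrac{1}{2}\sqrt{d'}\,\bE[\norm{T(\rho)-T(\sigma)}_2^p]^{1/p}$. Dividing by $D$ and absorbing the prefactor $1/(2D) = \alpha/\sqrt{2}$ into $\alpha$ (a loose but harmless overestimate by a factor $\sqrt{2}$) delivers the general $p$-moment bound.

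The specialization to $p=2$ is automatic: Jensen's inequality from the proof of Thm.~\ref{thm:HSaveragecontraction} now becomes equality, and $\bE[\norm{T(\rho)-T(\sigma)}_2^2]$ is computed in closed form via the 2-design identities of App.~\ref{sec:expvals_unit_invar}, reproducing the upper bound $\frac{2}{d}[\Tr\tau^2 - \frac{1}{d}\Tr\pi^2]$ already used for the first moment. The $\sqrt{2/d}$ coming from the square root now combines with the $1/\sqrt{2}$ hidden in $1/(2D)$ to recover the tight prefactor $\alpha\sqrt{d'/d}$. The one place where I expect a real tradeoff is the $p$-dependence concealed in $O(\sqrt{\ln d}/d)$: since $\delta \propto \sqrt{p}$, the implicit constant grows with $p$, so the bound degrades as $p$ increases and in particular fails to recover the essential-supremum limit of Prop.~\ref{prop:basic_prop} as $p \to \infty$. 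Pushing the argument into that regime would require a genuinely different route, but for any fixed $p \geq 1$ the three ingredients above suffice.
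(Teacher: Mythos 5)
Your proof is correct, but it reaches the corollary by a genuinely different route from the paper. The paper keeps Lemma~\ref{lemma:tailbound2ratiobound} untouched and instead transfers the concentration from $Y=\frac{1}{2}\norm{\rho-\sigma}_1$ to $Y^p$: it proves the Lipschitz-type inequality $\abs{x^p-D^p}\leq \frac{1-D^p}{1-D}\abs{x-D}$ on $[0,1]$ (via a monotonicity argument reduced to Young's inequality), converts Lemma~\ref{lemma:exp_concentration_of_1norm} into an exponential tail bound for $(\frac{1}{2}\norm{\rho-\sigma}_1)^p$ around $D^p$, and then applies Lemma~\ref{lemma:tailbound2ratiobound} to the pair $(X^p,Y^p)$. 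You instead re-prove a $p$-th-moment version of Lemma~\ref{lemma:tailbound2ratiobound} directly, splitting on the tail event of $Y$ itself and using $(X/Y)^p\leq 1$ on the tail and $(X/Y)^p\leq X^p/(\mu_y-\delta)^p$ on its complement; this bypasses the auxiliary inequality and Young's inequality entirely and is arguably leaner. The costs are comparable: both arguments carry an implicit $p$-dependence in the error term (yours through $\delta\propto\sqrt{p}$, the paper's through the factor $\frac{1-D^p}{1-D}$ entering the concentration constant), and both correctly yield the prefactor $1/(2D)=\alpha/\sqrt{2}$, whose hidden $1/\sqrt{2}$ is exactly what combines with the $\sqrt{2/d}$ from the appendix computation to give the stated $p=2$ bound — a bookkeeping point you handle explicitly and the paper leaves implicit. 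Your observation that the $\Pr[Y=0]$ conditioning is vacuous for the continuous Hilbert–Schmidt measure is also consistent with how the paper treats Thm.~\ref{thm:HSaveragecontraction}.
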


\begin{proof}
	To apply Lemma~\ref{lemma:tailbound2ratiobound} we need a tail bound on $(\frac{1}{2}\norm{\rho-\sigma}_1)^p$. With this aim, we find that for $0\leq x \leq 1$
	\begin{equation}\label{eq:ineq_for_moments}
		\abs{x^p-D^p}\leq \frac{1-D^p}{1-D}\abs{x-D},
	\end{equation}
	which is trivially true for $x=D$. To see this, define the following function in the restricted domain $x\in [0,D)\cup (D,1]$,
	\begin{equation}
		F_p(x) = \frac{x^p-D^p}{x-D}.
	\end{equation}
	Then, Eq.~\eqref{eq:ineq_for_moments} is equivalent to $F_p(x)\leq F_p(1)$, which is verified observing that $F_p$ is monotonically increasing for $x > 0$. Consider its derivative 
	\begin{equation}
		(F_p)'(x) = \frac{(p-1) x^p - p x^{p-1} D + D^p}{(x-D)^2},
	\end{equation}
	and introduce the inequality $(F_p)'(x)\geq 0$ arranging the numerator in the following form
	\begin{equation}
		\frac{p-1}{p} + \frac{1}{p} \qty(\frac{D}{x})^p \geq \frac{D}{x}.
	\end{equation}
	Inspecting the latter, we identify Young's inequality \cite{Hardy1934}, which claims that for $a,b >0$ and $p,q>1$ with $\frac{1}{p}+\frac{1}{q}=1$ it holds that
	\begin{equation}
		\frac{a^p}{p}+\frac{b^q}{q} \geq ab,
	\end{equation}
	by setting $a=D/x$ and $b=1$.
	
	A direct consequence of Eq.~\eqref{eq:ineq_for_moments} is
    \begin{equation}
          \Pr[\abs{\qty(\frac{1}{2}\norm{\rho-\sigma}_1)^p-D^p} > \varepsilon]
          \leq
          \Pr[\frac{1-D^p}{1-D}\abs{\frac{1}{2}\norm{\rho-\sigma}_1-D} > \varepsilon]
          \leq
          2 \exp(-\lambda d^2 \qty(\frac{1-D}{1-D^p})^2 \varepsilon^2),
    \end{equation}
    with $\lambda = 1/(4 \cdot 9\pi^3)$. Thus,$\qty(\frac{1}{2}\norm{\rho-\sigma}_1)^p$ is exponentially concentrated around $D^p$ and, as before, we obtain
    \begin{equation}
		\bE\qty[\qty(\frac{\frac{1}{2}\norm{T(\rho)-T(\sigma)}_1}{\frac{1}{2}\norm{\rho-\sigma}_1})^p]
		\leq
		\frac{1}{D^p}\bE\qty[\frac{1}{2^p}\norm{T(\rho)-T(\sigma)}_1^p] + \order{\frac{\sqrt{\ln d}}{d}}.
	\end{equation}
	We arrive at the desired result using $\norm{X}_1 \leq \sqrt{d'} \norm{X}_2$. The particular case $p=2$ can be explicitly computed following App.~\ref{sec:expvals_unit_invar}.
\end{proof}

To explore the tightness of Thm.~\ref{thm:HSaveragecontraction}, let us consider the $n$-qubit global depolarizing channel 
\begin{equation}
    \cD_p(\rho) =  (1-p) \rho + p \frac{\mathbb{I}^{\otimes n}}{2^n}.
\end{equation}
In terms of trace distance contraction this channel is trivial as $\norm{\cD_p(\rho)-\cD_p(\sigma)}_1=\abs{1-p}\norm{\rho-\sigma}_1$ for any input states $\rho,\sigma\in\cS(\cH_2^{\otimes n})$, which sets a convenient reference value. We can easily calculate the Choi state and its purity,
\begin{equation}
    \tau = (1-p)\ketbra{\Omega} + p \frac{\mathbb{I}^{\otimes 2n}}{2^{2n}}
    \quad \text{and} \quad
    \Tr\tau^2 =(1-p)^2+\frac{1-(1-p)^2}{2^{2n}}=\qty(1-2^{-2n})(1-p)^2+ 2^{-2n},
\end{equation}
which leads to the upper bound,
\begin{align}
    \eta_1(\cD_p,\norm{\cdot}_1,\nuHS\times\nuHS)
    =\abs{1-p}
    &\leq
    \alpha \sqrt{\qty(1-2^{-2n})(1-p)^2+\frac{1}{2^{2n}} - \frac{1}{2^{2n}}} +\order{n^{1/2} 2^{-n}} \\
    &=
    \alpha \abs{1-p} +\order{n^{1/2} 2^{-n}}.
\end{align}
We see that our bound is nearly tight for the case of the global depolarizing channel, as it only differs by a small constant multiplicative factor $\alpha\approx 1.24$ and an exponentially vanishing correction $\order{n^{1/2} 2^{-n}}$. Therefore, while the bound is not exactly optimal, any potential improvements would be constrained to refining these minor factors rather than achieving a qualitatively sharper scaling.

The bound becomes particularly illustrative when evaluated for product channels, showing exponential convergence:

\begin{corollary}\label{cor:travcontract_nproduct}
    Let $T = \Phi^{\otimes n}$ be a product channel, with local channel $\Phi:\cS(\cH_d)\to\cS(\cH_{d})$, and let $\tau$ be the Choi state of the local channel and $\pi = \Phi\qty(\frac{\mathbb{I}}{d})$ be the average output state. Then, the average trace distance contraction is bounded as follows:
    \begin{equation}\label{eq:travcontract_nproduct}
        \eta_1(T,\norm{\cdot}_1,\nuHS\times\nuHS)
        \leq
        \alpha \sqrt{(\Tr \tau^2)^n - (\frac{1}{d}\Tr \pi^2)^n}
        + \order{\frac{\sqrt{n\ln d}}{d^n}},
    \end{equation}
     where $\alpha=\frac{2\sqrt{2}\pi}{4+\pi}\approx 1.24$.
    In particular,
    \begin{equation}\label{eq:travcontract_nproduct_simplified}
        \eta_1(T,\norm{\cdot}_1,\nuHS\times\nuHS)
        \leq
        \alpha (\Tr \tau^2)^{n/2} +
        \order{\frac{\sqrt{n\ln d}}{d^n}}.
    \end{equation}
\end{corollary}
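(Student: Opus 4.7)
The plan is to apply Theorem~\ref{thm:HSaveragecontraction} directly to the product channel $T=\Phi^{\otimes n}$, and then use multiplicativity under tensor products to express everything in terms of the local data $\tau$ and $\pi$. First I would observe that, since $\Phi:\cS(\cH_d)\to\cS(\cH_d)$, the full channel $T$ has equal input and output dimension $d^n$, so the prefactor $\sqrt{d'/d}$ appearing in Theorem~\ref{thm:HSaveragecontraction} becomes $1$. The error term $\order{\sqrt{\ln d}/d}$ from that theorem, when specialized to ambient dimension $d^n$, turns into $\order{\sqrt{\ln d^n}/d^n} = \order{\sqrt{n\ln d}/d^n}$, which is exactly the remainder stated in the corollary.

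The core computation is then to identify the Choi state and the maximally-mixed image of $T$. The maximally entangled state on $\cH_{d^n}\otimes\cH_{d^n}$ factorizes (up to a permutation of the $2n$ tensor factors) as $n$ copies of the local maximally entangled state on $\cH_d\otimes\cH_d$, and the channel acts on each copy independently, so the Choi state of $T$ is unitarily equivalent to $\tau^{\otimes n}$. Consequently $\Tr\tau(T)^2=(\Tr\tau^2)^n$. Likewise, $T(\mathbb{I}^{\otimes n}/d^n)=\pi^{\otimes n}$, with purity $(\Tr\pi^2)^n$. Substituting these two identities into Theorem~\ref{thm:HSaveragecontraction} and rewriting $(\Tr\pi^2)^n/d^n=\bigl(\tfrac{1}{d}\Tr\pi^2\bigr)^n$ yields Eq.~\eqref{eq:travcontract_nproduct}. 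The simplified form Eq.~\eqref{eq:travcontract_nproduct_simplified} then follows by simply dropping the nonnegative subtracted term inside the square root, leaving $\sqrt{(\Tr\tau^2)^n}=(\Tr\tau^2)^{n/2}$.

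I do not foresee a genuine obstacle: the corollary is a direct bookkeeping consequence of Theorem~\ref{thm:HSaveragecontraction} together with the multiplicativity of Hilbert--Schmidt inner products under tensor products. The only subtlety worth double-checking is that the concentration inequality behind the error term, Lemma~\ref{lemma:exp_concentration_of_1norm}, is stated uniformly for sufficiently large dimension and therefore remains valid when the ambient dimension is taken to be $d^n$ rather than $d$; this is automatic, and simply propagates through the choice of $\delta$ made in the proof of Theorem~\ref{thm:HSaveragecontraction}.
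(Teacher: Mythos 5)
Your proposal is correct and follows essentially the same route as the paper: apply Theorem~\ref{thm:HSaveragecontraction} in ambient dimension $d^n$ and use multiplicativity of the Choi state and of $\pi$ under tensor products, so that $\Tr\tau(T)^2=(\Tr\tau^2)^n$ and $\Tr\pi(T)^2=(\Tr\pi^2)^n$, with the error term becoming $\order{\sqrt{n\ln d}/d^n}$. Your derivation of the simplified bound by dropping the nonnegative term $(\tfrac{1}{d}\Tr\pi^2)^n$ inside the square root is a slightly more direct (and equally valid) shortcut than the paper's use of $\Tr\pi^2\geq 1/d$ together with $\tfrac{d^{2n}x-1}{d^{2n}-1}\leq x$.
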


\begin{proof}
    The result follows from the multiplicativity of the Choi state, that is the Choi state for a product channel $T_1 \otimes T_2$ is the tensor product of the individual Choi states $\tau_1 \otimes \tau_2$, up to the ordering of subsystems.

    For the simplified bounds, we use $\Tr \pi^2 
    \geq \frac{1}{d}$ and that 
    \begin{equation}
        \frac{d^{2n} x - 1}{d^{2n} - 1} \leq x \quad \text{for} \quad 0 \leq x \leq 1.
    \end{equation}
\end{proof}

In this section, we have illustrated the proof strategy, which consists of leveraging the concentration of the trace distance to substitute the denominator with an effective lower bound and controlling the numerator with alternative techniques. We have found that when $\rho$ and $\sigma$ are independently distributed according to the Hilbert–Schmidt distribution, the $p$-moments of the trace distance can be controlled using those of the 2-norm, which are often more tractable. In particular, for the average contraction, we have identified two significant scenarios where it is strongly suppressed, namely, when the output dimension is much smaller than the input dimension, and for product channels. Indeed, for the latter, we established that the average contraction vanishes exponentially whenever the channel is non-unitary, i.e. when $\Tr \tau^2 < 1$. In the following section, we attempt a similar analysis for other distributions.

\subsubsection{Uniform 2-designs over pure states}\label{sec:upperbound_2designs}
Although the results of the last section, like Cor.~\ref{cor:travcontract_nproduct}, show that the average contraction will be exponential on average, one might rightfully argue that the ensemble we chose was already concentrated on highly mixed states. For instance, the average purity of the ensemble is already exponentially small before we apply the channel. In this section, we will study the average contraction w.r.t. pure states stemming from a 2-design \cite{Dankert2009}, establishing qualitatively similar bounds. Although we give the general definition for $t$-designs, in this Section we resort to uniform distributions for simplicity.
 
\begin{definition}[Unitary and state $t$-designs]\label{def:t_designs}
	Let $\nu$ be a probability distribution over the unitary group $U(d)$. Then $\nu$ is called a \emph{unitary $t$-design} if for every polynomial $P_{t,t}$ of degree at most $t$ in $U$ and $U^\dag$
	\begin{equation}
		\bEx{U \sim \nu} \qty[ P_{t,t}(U) ] = \bEx{U \sim \mu_H} \qty[ P_{t,t}(U) ]
	\end{equation}
	where $\mu_H$ denotes the Haar measure on $U(d)$.
	
	We say that a distribution $\tilde{\nu}$ over states $\cH_d$ is a state $t$-design if there is a reference state $\ket{\phi_0}$ and a unitary $t$-design $\nu$ such that $\tilde{\nu}$ is equivalent to the distribution of $U\ket{\phi_0}$ with $U$ distributed according to $\nu$.
\end{definition}

Directly defining the probability distribution over pairs of states $(\rho, \sigma) \sim \nu \times \nu$ by taking them independently and identically distributed comes with the risk of a non-zero collision probability $\xi = \Pr[\rho = \sigma] > 0$. To prevent this, we define the distribution $\nu^{(2)}$ as the product distribution $\nu \times \nu$ conditioned on the states being different; that is, the probability of a set $S \subseteq \cS(\cH_d) \times \cS(\cH_d)$ is given by
\begin{equation}
	\nu^{(2)}(S) = \frac{1}{1-\xi} (\nu\times\nu)(\qty{ (\rho, \sigma) \in S \mid \rho \neq \sigma }).
\end{equation}
In any case, the effect of collisions decreases rapidly with system size, as $\Pr[\rho=\sigma]=\order{d^{-4}}$ when $\rho$ and $\sigma$ have independent 2-design distributions (see Lemma~\ref{lemma:2design_collision}).

It is a standard fact that states pairs $\rho,\sigma$ distributed independently according to a $t$-design distribution satisfy a tail bound of the form required in Lemma~\ref{lemma:tailbound2ratiobound} with $t(\delta) = 1/(d \delta)$ (see Lemma~\ref{lemma:tailbound2ratiobound}). This leads to the following.

\begin{theorem}\label{thm:avc_from_operator_root_bound}
    Let $T:\cS(\cH_d)\to\cS(\cH_{d'})$ be a quantum channel, $\tau$ be its Choi state and $\pi=T(\frac{\mathbb{I}}{d})$. Let $\nu$ be a 2-design over pure states in $\cS(\cH_d)$ and denote by $\nu^{(2)}$ the distribution for $(\rho,\sigma)\sim\nu\times\nu$ conditioned on $\rho\neq \sigma$.
    Then, the average contraction of the trace distance is upper bounded by
    \begin{equation}
    \eta_1(T,\norm{\cdot}_1,\nu^{(2)})
        \leq
        \frac{1}{2}
        \Tr (
            \frac{2}{d+1}\qty[d\Tr_2 \tau^2 - \pi^2]
        )^{1/2} + \order{\frac{1}{\sqrt{d}}}.
    \end{equation}
\end{theorem}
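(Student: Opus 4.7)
The plan is to follow the two-step strategy of Section~\ref{sec:upperbound_hs}, now adapted to a 2-design over pure states: first invoke Lemma~\ref{lemma:tailbound2ratiobound} to replace $\mathbb{E}[X/Y]$, with $X = \tfrac{1}{2}\|T(\rho)-T(\sigma)\|_1$ and $Y = \tfrac{1}{2}\|\rho-\sigma\|_1$, by $\mathbb{E}[X]$ up to a small additive error, and then bound $\mathbb{E}[X]$ using an operator Jensen inequality.

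For the denominator I would exploit that $\rho,\sigma$ are pure states from a 2-design, so $Y = \sqrt{1-|\langle\psi|\phi\rangle|^2}$ with $\mathbb{E}[|\langle\psi|\phi\rangle|^2] = 1/d$. Markov's inequality then gives a tail bound of the form $\Pr[Y < 1-\delta] = O(1/(d\delta))$, so Lemma~\ref{lemma:tailbound2ratiobound} applies with $\mu_y = 1$ and $t(\delta) = O(1/(d\delta))$. The collision probability $\xi = O(d^{-4})$ from Lemma~\ref{lemma:2design_collision} is negligible, and balancing the two error terms with $\delta \sim 1/\sqrt{d}$ produces the $O(1/\sqrt{d})$ correction appearing in the statement; it thus suffices to upper bound $\mathbb{E}[X]$.

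For the numerator, writing $Z = T(\rho)-T(\sigma)$, which is Hermitian, one has $\|Z\|_1 = \Tr\sqrt{Z^2}$, and by operator concavity of $A \mapsto \Tr\sqrt{A}$ the operator Jensen inequality yields $\mathbb{E}[\Tr\sqrt{Z^2}] \leq \Tr\sqrt{\mathbb{E}[Z^2]}$. It then remains to compute $\mathbb{E}[Z^2]$. By independence and the 1-design identity $\mathbb{E}[\rho] = \mathbb{I}/d$, the cross terms contribute $\mathbb{E}[T(\rho)T(\sigma)] = \pi^2$, so $\mathbb{E}[Z^2] = 2(\mathbb{E}[T(\rho)^2] - \pi^2)$. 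The remaining moment I would evaluate via the swap trick $AB = \Tr_2[(A\otimes B) S]$ together with the 2-design identity $\mathbb{E}[\rho\otimes\rho] = \frac{1}{d(d+1)}(\mathbb{I}+S)$, which reduces the task to computing $\Tr_2[(T\otimes T)(\mathbb{I})\, S] = d^2 \pi^2$ and $\Tr_2[(T\otimes T)(S)\, S] = d^2 \Tr_2 \tau^2$; the latter is the Kraus-operator identity $\sum_{k,l}\Tr(E_k^\dagger E_l)\, E_k E_l^\dagger = d^2 \Tr_2 \tau^2$, which follows directly from the definition of the Choi state. Putting the pieces together yields $\mathbb{E}[Z^2] = \frac{2}{d+1}\bigl(d \Tr_2 \tau^2 - \pi^2\bigr)$, and combining with the first step produces the claimed bound.

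The main obstacle I anticipate is the denominator step: unlike the Hilbert-Schmidt case where Lemma~\ref{lemma:exp_concentration_of_1norm} provides exponential concentration, for a 2-design only polynomial concentration of $Y$ around $1$ is immediately available, so the additive correction is only $O(1/\sqrt{d})$ rather than exponentially small. The algebraic second-moment computation is a routine 2-design manipulation and should present no serious difficulty.
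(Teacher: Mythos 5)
Your proposal is correct and follows essentially the same route as the paper: Lemma~\ref{lemma:tailbound2ratiobound} with the Markov-type tail bound $t(\delta)=1/(d\delta)$ for the denominator, the operator Jensen inequality $\bE[\Tr\sqrt{Z^2}]\leq\Tr\sqrt{\bE[Z^2]}$ for the numerator, and the 2-design second-moment identity yielding $\bE[Z^2]=\frac{2}{d+1}(d\Tr_2\tau^2-\pi^2)$. Your explicit swap-trick evaluation of $\Tr_2[(T\otimes T)(S)S]=d^2\Tr_2\tau^2$ is just an unpacked version of the computation the paper delegates to App.~\ref{sec:expvals_unit_invar}, and all steps check out.
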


\begin{proof}
    We use the fact that the distribution $\nu$ is a 1-design too, thus, in Lemma~\ref{lemma:tailbound2ratiobound}, we can use the tail bound on the trace distance in Lemma~\ref{lemma:concentrationoftrdist} to set $\mu_y=1$ and $t(\delta)=1/d\delta$ with collision probability $\xi=\order{d^{-4}}$ by Lemma~\ref{lemma:2design_collision}. This leads to
    \begin{align}
        \bEx{(\rho,\sigma)\sim\nu^{(2)}}\qty[\frac{\frac{1}{2}\norm{T(\rho)-T(\sigma)}_1}{\frac{1}{2}\norm{\rho-\sigma}_1}]
        \leq
        \frac{1}{1-\xi}\qty[\bEx{\rho,\sigma\sim\nu\times\nu}\qty[\frac{1}{2}\norm{T(\rho)-T(\sigma)}_1]\qty(1+\frac{\delta}{1-\delta}) + \frac{1}{d \delta} ]      
        .
    \end{align}
    To proceed, we find the optimal value for $\delta$. Given $A>0$ and imposing $0<\delta<1$
    \begin{equation}
    	\dv{\delta}\qty[A \qty(1+\frac{\delta}{1-\delta}) + \frac{1}{d \delta} ] (\delta_*) = 
    	A \frac{1}{(1-\delta_*)^2} - \frac{1}{d \delta_*^2} = 0,
    	\quad \delta_* = \frac{1}{1+\sqrt{d A}},
    \end{equation}
    which gives the optimal value
    \begin{equation}
    	A \frac{1}{1-\delta_*} + \frac{1}{d \delta_*}
    	=
    	A \frac{1+\sqrt{dA}}{\sqrt{dA}} + \frac{1+\sqrt{dA}}{d}
    	= (1+\sqrt{d A})\qty(\sqrt{\frac{A}{d}} + \frac{1}{d})
		= A + 2\sqrt{\frac{A}{d}} + \frac{1}{d}
    	.
    \end{equation}
    Introducing this result,    
    \begin{align}
        \bEx{(\rho,\sigma)\sim\nu^{(2)}}\qty[\frac{\frac{1}{2}\norm{T(\rho)-T(\sigma)}_1}{\frac{1}{2}\norm{\rho-\sigma}_1}]
        \leq
        \frac{1}{1-\xi}\qty(\bEx{\rho,\sigma\sim\nu\times\nu}\qty[\frac{1}{2}\norm{T(\rho)-T(\sigma)}_1]
        + 2\sqrt{\frac{1}{d}} + \frac{1}{d})    
        ,
    \end{align}
    where we substituted $A = \frac{1}{2}\norm{T(\rho)-T(\sigma)}_1\leq 1$.

    Now, using
    \begin{equation}
        \mathbb{E}[\norm{X}_1] = 
        \mathbb{E}[\Tr \sqrt{X X^{\dag})}] = 
        \Tr \mathbb{E}[ \sqrt{X X^{\dag}}] \leq 
        \Tr \sqrt{\mathbb{E}[X X^{\dag}]},
    \end{equation}
    where the last inequality follows from Jensen's inequality for the operator concave function $f(x)=\sqrt{x}$ \cite{Ahlswede2001}. Therefore, by the results in App.~\ref{sec:expvals_unit_invar} for states $\rho$ and $\sigma$ with independent 2-design distributions we have
    \begin{equation}\label{eq:operator_root_bound}
        \mathbb{E}[(T(\rho)-T(\sigma))^2]
        =
            \frac{2}{d+1}\qty[d\Tr_2 \tau^2 - \pi^2]
        .
    \end{equation}
    The final inequality follows from introducing this bound, gathering the error terms, and neglecting the correction due to collisions $1/(1-\xi)=1+\order{d^{-4}}$.
\end{proof}

\begin{remark}
    A pathway to generalizing the result is to observe that the key quantities involved are $\bE[T(\rho)]$, $\bE[T^2(\rho)]$, and a tail bound on the trace distance. Consequently, any distribution for which these quantities can be controlled admits a similar upper bound of the form $\frac{1}{2}\Tr\sqrt{\bE[T^2(\rho)] - (\bE[T(\rho)])^2}$, with an error term that depends on the strength of the tail bound.
\end{remark}

Our motivation is to study the asymptotic regime for this bound, especially for product channels.

\begin{corollary}\label{cor:travcontract_nproduct_2designs}
    Let $T = \Phi^{\otimes n}$ be a product channel, with local channel $\Phi:\cS(\cH_d)\to\cS(\cH_{d})$, and let $\tau$ be the Choi state of the local channel and $\pi = \Phi\qty(\frac{\mathbb{I}}{d})$ be the average output state. Let $\nu$ be a 2-design over pure states in $\cS(\cH_d^{\otimes n})$ and denote by $\nu^{(2)}$ the distribution for $(\rho,\sigma)\sim\nu\times\nu$ conditioned on $\rho\neq \sigma$.
    Then, the average trace distance contraction is bounded as follows:
    \begin{equation}\label{eq:travcontract_nproduct_2designs}
        \eta_1(\Phi^{\otimes n},\norm{\cdot}_1,\nu^{(2)})
        \leq
        \frac{1}{2}
        \Tr (
            \frac{2}{d^n+1}\qty[d^n (\Tr_2 \tau^2)^{\otimes n} - (\pi^2)^{\otimes n}]
        )^{1/2} + \order{\frac{1}{\sqrt{d^n}}}.
    \end{equation}
    In particular,
    \begin{equation}\label{eq:travcontract_nproduct_2designs_simple}
        \eta_1(\Phi^{\otimes n},\norm{\cdot}_1,\nu^{(2)})
        \leq
        \frac{1}{\sqrt{2}}
        \qty(\Tr \sqrt{\Tr_2 \tau^2})^n + \order{\frac{1}{\sqrt{d^n}}}.
    \end{equation}
\end{corollary}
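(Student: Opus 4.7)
The plan is to apply Theorem~\ref{thm:avc_from_operator_root_bound} directly to the channel $\Phi^{\otimes n}\colon\cS(\cH_d^{\otimes n})\to\cS(\cH_d^{\otimes n})$, viewed as a single channel of input/output dimension $d^n$. The heart of the argument is therefore just to identify the relevant Choi-level quantities for a tensor-product channel, and then to massage the resulting expression into the simpler form~\eqref{eq:travcontract_nproduct_2designs_simple}.

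First I would pin down the multiplicativity of the objects appearing in Thm.~\ref{thm:avc_from_operator_root_bound}. Since the maximally entangled state on $\cH_{d^n}\otimes\cH_{d^n}$ factors, up to a permutation of subsystems, as the tensor product of $n$ local maximally entangled states, one has $\tau(\Phi^{\otimes n})=\tau^{\otimes n}$ (after reordering), and hence $\Tr_2\,\tau(\Phi^{\otimes n})^2=(\Tr_2\tau^2)^{\otimes n}$. Similarly $\pi(\Phi^{\otimes n})=\Phi^{\otimes n}(\mathbb{I}/d^n)=\pi^{\otimes n}$, so $\pi(\Phi^{\otimes n})^2=(\pi^2)^{\otimes n}$. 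Plugging these identities into Thm.~\ref{thm:avc_from_operator_root_bound} with $d\mapsto d^n$ gives~\eqref{eq:travcontract_nproduct_2designs} directly, with the $\order{1/\sqrt{d^n}}$ error just the translated version of the $\order{1/\sqrt{d}}$ correction from the parent theorem.

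To pass from~\eqref{eq:travcontract_nproduct_2designs} to~\eqref{eq:travcontract_nproduct_2designs_simple}, I would drop the subtracted term $(\pi^2)^{\otimes n}\geq 0$ using operator monotonicity of the square root (if $0\leq A\leq B$ then $\sqrt{A}\leq\sqrt{B}$), which gives
\begin{equation*}
\frac{1}{2}\Tr\!\left(\frac{2}{d^n+1}\bigl[d^n (\Tr_2\tau^2)^{\otimes n}-(\pi^2)^{\otimes n}\bigr]\right)^{1/2}
\leq \frac{1}{2}\sqrt{\frac{2d^n}{d^n+1}}\,\Tr\!\left((\Tr_2\tau^2)^{\otimes n}\right)^{1/2}.
\end{equation*}
The multiplicativity $\sqrt{A^{\otimes n}}=(\sqrt{A})^{\otimes n}$ and $\Tr(A^{\otimes n})=(\Tr A)^n$ then turn the trace into $(\Tr\sqrt{\Tr_2\tau^2})^n$, and bounding $d^n/(d^n+1)\leq 1$ yields the claimed prefactor $1/\sqrt{2}$. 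The small gap $1-\sqrt{d^n/(d^n+1)}$ is of order $d^{-n}$ and can be absorbed into the $\order{d^{-n/2}}$ error term (using that $\Tr\sqrt{\Tr_2\tau^2}\leq\sqrt{d\,\Tr\Tr_2\tau^2}\leq\sqrt{d}$ by Cauchy--Schwarz, so the raised-to-$n$ factor stays bounded by a polynomial in $d^{n/2}$).

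The main thing to watch is really bookkeeping: the subsystem ordering in the identification $\tau(\Phi^{\otimes n})\cong\tau^{\otimes n}$ and the fact that $\Tr_2$ in the ambient Hilbert space traces over the input copy of $\cH_{d^n}$, which on each tensor factor coincides with the local $\Tr_2$ applied to $\tau^2$. Once that is checked, both inequalities follow mechanically from Thm.~\ref{thm:avc_from_operator_root_bound} and elementary manipulations of tensor powers; there is no genuinely new technical obstacle beyond what was already done for a single copy.
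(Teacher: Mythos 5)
Your proposal is correct and follows essentially the same route as the paper: the first bound is obtained by instantiating Thm.~\ref{thm:avc_from_operator_root_bound} with $d\mapsto d^n$ together with the multiplicativity $\tau(\Phi^{\otimes n})=\tau^{\otimes n}$ and $\pi(\Phi^{\otimes n})=\pi^{\otimes n}$, and the simplified bound follows by dropping $(\pi^2)^{\otimes n}$ via operator monotonicity of the square root, factorizing the trace over tensor powers, and controlling the $\sqrt{d^n/(d^n+1)}$ prefactor with $\Tr\sqrt{\Tr_2\tau^2}\leq\sqrt{d}$. No gaps.
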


\begin{proof}
    The first inequality is a direct consequence of Thm.~\ref{thm:avc_from_operator_root_bound}, noting that
    \begin{equation}
        \tau(T)=\tau(\Phi)^{\otimes n}, \quad \text{and} \quad
        \pi(T)=\Phi^{\otimes n}(\frac{\mathbb{I}}{d^n})=\pi(\Phi)^{\otimes n}.
    \end{equation}
    The second inequality requires noting that as the square root is operator monotone, $A\leq B$ implies $\sqrt{A}\leq\sqrt{B}$~\cite[Chapter 5]{MWolf2012Guidedtour}, thus
    \begin{equation}\label{eq:neglecting_pi2}
        \qty(
            \frac{2}{d^n+1}\qty[d^n (\Tr_2 \tau^2)^{\otimes n} - (\pi^2)^{\otimes n}]
        )^{1/2}
        \leq
        \qty(
            \frac{2}{d^n+1}\qty[d^n (\Tr_2 \tau^2)^{\otimes n}]
        )^{1/2}.
    \end{equation}
    Taking constants out of the trace and using $\Tr A^{\otimes n}=(\Tr A)^n$,
    \begin{align}
        \frac{1}{2}
        \Tr (
            \frac{2}{d^n+1}\qty[d^n (\Tr_2 \tau^2)^{\otimes n} - (\pi^2)^{\otimes n}]
        )^{1/2}
        &\leq 
        \frac{1}{2}
        \Tr (
            \frac{2}{d^n+1}\qty[d^n (\Tr_2 \tau^2)^{\otimes n}]
        )^{1/2} \\
        &\leq 
        \sqrt{\frac{d^n}{2(d^n+1)}}
        \qty(\Tr (
            \sqrt{\Tr_2 \tau^2}
        ))^{n}. \\
    \end{align}
    Finally, use that $\sqrt{d^n/(d^n+1)}=1-\frac{d^{-n}}{2}+\order{d^{-2n}}$ and that $\Tr \sqrt{\Tr_2 \tau^2}\leq \sqrt{d}\sqrt{\Tr\tau^2}\leq \sqrt{d}$, again because $\sqrt{x}$ is operator concave:
    \begin{align}
        \sqrt{\frac{d^n}{2(d^n+1)}}
        \qty(\Tr (
            \sqrt{\Tr_2 \tau^2}
        ))^{n}
        &\leq
        \frac{1}{\sqrt{2}}
        \qty(\Tr (
            \sqrt{\Tr_2 \tau^2}
        ))^{n}+\order{d^{-2n} (\Tr \sqrt{\Tr_2 \tau^2})^n}\\
        &\leq
        \frac{1}{\sqrt{2}}
        \qty(\Tr (
            \sqrt{\Tr_2 \tau^2}
        ))^{n}+\order{d^{-2n} d^{n/2}}.
    \end{align}
\end{proof}

\begin{remark}\label{remark:avc2design_nproduct_about_simplificaiton}
	The simplification in Eq.~\eqref{eq:travcontract_nproduct_2designs_simple} is in general weaker than the complete version in Eq.~\eqref{eq:travcontract_nproduct_2designs}. However, when $\Tr_2\tau^2$ and $\pi^2$ commute, the upper bound in Eq.~\eqref{eq:travcontract_nproduct_2designs} vanishes if and only if the bound in Eq.~\eqref{eq:travcontract_nproduct_2designs_simple} does. In these cases, verifying whether $\Tr\sqrt{\Tr_2\tau^2} < 1$ is enough to guarantee that the upper bound is vanishing.
	
	The key observation is that for $A$ and $B$ commuting matrices with $A\geq B \geq 0$ it holds that $\sqrt{A-B}\geq \sqrt{A} - \sqrt{B}$, therefore
	\begin{equation}
		\Tr\sqrt{(\Tr_2\tau^2)^{\otimes n} - \frac{1}{d^n} (\pi^2)^{\otimes n}}
		\geq
		\Tr\sqrt{(\Tr_2\tau^2)^{\otimes n}} - \frac{1}{d^{n/2}} \Tr\sqrt{(\pi^2)^{\otimes n}}
		=(\Tr\sqrt{\Tr_2\tau^2})^n - \frac{1}{d^{n/2}},
	\end{equation}
	which shows that the complete bound becomes trivial whenever the simplified bound does.
\end{remark}

So far, we have been fundamentally guided by the interest in characterizing the contraction of the distance between i.i.d. states, which has led us to find similar bounds for 2-design distributions as for the Hilbert-Schmidt distribution. However, in many practical settings, interest centers on how close a state is to the maximally mixed state~\cite{Quek2024}, which serves as a benchmark for complete loss of information. It is thus pertinent to establish a similar bound in this context:

\begin{corollary}\label{cor:avc_2design_mix}
    Let $T:\cS(\cH_d)\to\cS(\cH_{d'})$ be a quantum channel, $\tau$ be its Choi state and $\pi=T(\frac{\mathbb{I}}{d})$. Let $\nu$ be a 2-design over pure states in $\cS(\cH_d)$.
    Then, the average contraction of the trace distance is upper bounded by
    \begin{equation}
    \eta_1(T,\norm{\cdot}_1,\nu\times\delta_{\mathbb{I}/d})
        \leq
        \frac{1}{2(1-1/d)}
        \Tr (
            \frac{1}{d+1}\qty[d\Tr_2 \tau^2 - \pi^2]
        )^{1/2}.
    \end{equation}
\end{corollary}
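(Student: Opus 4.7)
The plan is to observe that the setup is significantly simpler than Theorem~\ref{thm:avc_from_operator_root_bound} because one of the states is deterministic and the other is pure, so the denominator of the contraction ratio is constant and no tail bound on the 1-norm is needed. Concretely, for any pure state $\rho=\ketbra{\psi}\in\cS(\cH_d)$ one checks directly from the spectrum of $\rho-\mathbb{I}/d$ that
\begin{equation}
\tfrac{1}{2}\norm{\rho-\mathbb{I}/d}_1 = 1-\tfrac{1}{d},
\end{equation}
so this factor can be pulled outside the expectation and the only task left is to bound $\bE_{\rho\sim\nu}[\tfrac{1}{2}\norm{T(\rho)-\pi}_1]$.

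To handle the numerator I would use exactly the operator Jensen step from the proof of Theorem~\ref{thm:avc_from_operator_root_bound}: writing $\norm{T(\rho)-\pi}_1=\Tr\sqrt{(T(\rho)-\pi)^2}$ and invoking operator concavity of $\sqrt{\cdot}$ gives
\begin{equation}
\bE_\rho\!\left[\norm{T(\rho)-\pi}_1\right] \leq \Tr\sqrt{\bE_\rho\!\left[(T(\rho)-\pi)^2\right]}.
\end{equation}
The inner expectation is the operator variance of $T(\rho)$. Since $\nu$ is a $1$-design, $\bE_\rho[T(\rho)]=T(\mathbb{I}/d)=\pi$, and the cross terms cancel, leaving $\bE_\rho[(T(\rho)-\pi)^2]=\bE_\rho[T(\rho)^2]-\pi^2$.

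The key computation is then $\bE_\rho[T(\rho)^2]-\pi^2$, which I would not redo from scratch: it already appears (multiplied by $2$) in the proof of Theorem~\ref{thm:avc_from_operator_root_bound}. Indeed, for independent copies $\rho,\sigma$ from the $2$-design, independence together with the $1$-design property gives $\bE[T(\rho)T(\sigma)]=\pi^2$, so
\begin{equation}
\bE_{\rho,\sigma}\!\left[(T(\rho)-T(\sigma))^2\right] = 2\left(\bE_\rho[T(\rho)^2]-\pi^2\right),
\end{equation}
and Eq.~\eqref{eq:operator_root_bound} immediately yields $\bE_\rho[T(\rho)^2]-\pi^2=\tfrac{1}{d+1}[d\Tr_2\tau^2-\pi^2]$. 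Combining the three ingredients, i.e.\ the exact value of the denominator, the operator Jensen bound, and this $2$-design moment, produces the claimed inequality.

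The one mild subtlety, rather than a genuine obstacle, is making sure the factor of $2$ between Theorem~\ref{thm:avc_from_operator_root_bound} and Corollary~\ref{cor:avc_2design_mix} is correctly accounted for: in the two-state case the variance is doubled by independence, whereas here only one state fluctuates, so the prefactor inside the square root becomes $\tfrac{1}{d+1}$ instead of $\tfrac{2}{d+1}$. No additive error terms appear because the denominator is deterministic, so Lemma~\ref{lemma:tailbound2ratiobound} is not needed at all and the bound is an exact (rather than asymptotic) inequality.
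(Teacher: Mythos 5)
Your proposal is correct and follows essentially the same route as the paper: the denominator is the constant $1-1/d$, the numerator is bounded via operator Jensen for $\sqrt{\cdot}$, and the operator variance $\bE[T(\rho)^2]-\pi^2=\frac{1}{d+1}[d\Tr_2\tau^2-\pi^2]$ is exactly the appendix computation the paper invokes (your derivation of it from Eq.~\eqref{eq:operator_root_bound} by halving the two-state variance is just an equivalent rearrangement). The accounting of the factor of $2$ is also correct.
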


\begin{proof}
    In this case, the denominator takes a constant value:
    \begin{align}
        \frac{1}{2}\norm{\rho -\frac{\mathbb{I}}{d}}_1 
        = \frac{1}{2} \qty(1-\frac{1}{d} + \frac{1}{d}(d-1)) = 1 - \frac{1}{d}.
    \end{align}
    We use $\bE[\norm{X}_1]\leq \Tr \sqrt{\bE[X X^{\dag}]}$ and the results in App.~\ref{sec:expvals_unit_invar}
    \begin{equation}
        \bE[(T(\rho)-T(\mathbb{I}/d))^2]
        =
           \frac{1}{d+1}\qty[d\Tr_2 \tau^2 - \pi^2]
        .
    \end{equation}
\end{proof}
\begin{remark}
	By a similar argument to the proof of Cor.~\ref{cor:travcontract_nproduct_2designs}, the upper bound can be upper bounded by $\frac{1}{2}(\Tr\sqrt{\Tr_2 \tau^2})^n + \order{d^{-n/2}}$ if we consider $T=\Phi^{\otimes n}$ and $\nu$ a 2-design distribution on $\cS(\cH_d^{\otimes n})$.
\end{remark}

Theorem~\ref{thm:avc_from_operator_root_bound}, Cor.~\ref{cor:travcontract_nproduct_2designs}, and Cor.~\ref{cor:avc_2design_mix}, all showcase bounds with a structure that qualitatively corresponds to the bound in Thm.~\ref{thm:HSaveragecontraction} in the previous section. Indeed, they all have a positive term that depends on the square of the Choi matrix and a negative term depending on the square of the image of the maximally mixed, namely $\Tr_2\tau^2$ corresponding to $\Tr\tau^2$ and $\pi^2$ to $\Tr \pi^2$. In contrast to the previous section, where exponentially vanishing average contraction was established for all non-unitary channels, we see that Cor.~\ref{cor:travcontract_nproduct_2designs} only demonstrates it with respect to 2-designs if the channel is sufficiently noisy, in the sense of $\Tr\sqrt{\Tr_2\tau^2} < 1$. In fact, the bound becomes trivial when this condition fails, leaving open the question of whether all non-unitary channels exhibit exponentially decaying average contraction. In the following section, we provide lower bounds that demonstrate a negative answer.

\subsection{Lower bounds on the contraction coefficient for the trace distance}\label{sec:lower_bounds_trace}

The bounds presented so far for the average contraction of product channels have relied heavily on the mixedness of the Choi state associated with the channel. Notably, when the Choi state is sufficiently pure, these bounds become trivial and fail to provide meaningful insights. This raises the question: Is this limitation merely a byproduct of the proof technique, or does there exist a critical purity level of the channel beyond which exponentially vanishing average contraction is fundamentally unattainable? In this section, we address this question and establish that the latter is indeed the case, at least for 1-design distributions on pure states. Specifically, we derive lower bounds on average contraction which demonstrate that, for sufficiently low entropy of the Choi state, the average contraction approaches unity. This result implies that, on average, pure states retain their essentially perfect distinguishability under such channels.

To illustrate this phenomenon, we begin by analyzing a simple yet insightful family of channels whose asymptotic average contraction can be fully characterized. Consider an $N$-qubit system and the partial trace channel $\mathcal{E}_M(\rho) = \Tr_{1\dots M}[\rho]$, defined by discarding the first $M$ qubits. Here, we examine the asymptotic average contraction of $\mathcal{E}_M^{\otimes n}$ over Haar-random pure states as a function of the fraction of discarded qubits, $M/N$. Note that the choice of the discarded qubits does not change the contraction due to the permutation invariance of the Haar distribution.

\begin{proposition}[Average contraction of partial trace in the asymptotic limit]\label{prop:avc_partial_trace_asymp_lim}
    Let $\mathcal{E}_M(\rho) = \Tr_{1\dots M}[\rho]$ denote the $N$-to-$(N-M)$-qubit channel obtained by discarding the first $M$ qubits. Let $\mu_{2^{nN}}$ represent the Haar distribution over pure states on $nN$ qubits. Then the average contraction satisfies: 
    \begin{align}
        \lim_{n \to \infty}
        \eta_1(\mathcal{E}_M^{\otimes n},\norm{\cdot}_1,\mu_{2^{nN}}\times\mu_{2^{nN}})
        \begin{cases}
            1 & \text{if} \quad M < N/2, \\
            \frac{1}{4} + \frac{1}{\pi} & \text{if} \quad M = N/2, \\
            0 & \text{if} \quad M > N/2.    
        \end{cases}
    \end{align}
\end{proposition}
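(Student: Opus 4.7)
The plan is to analyze the numerator and denominator of the random ratio separately, splitting the numerator analysis by the sign of $M-N/2$ and combining via Lemma~\ref{lemma:tailbound2ratiobound}. Writing $\rho=\ketbra{\psi}$, $\sigma=\ketbra{\phi}$ with $\ket{\psi},\ket{\phi}$ i.i.d.\ Haar on $d=2^{nN}$ dimensions, the denominator equals $\sqrt{1-|\braket{\psi}{\phi}|^2}$; conditionally on $\ket{\psi}$, $|\braket{\psi}{\phi}|^2$ is $\operatorname{Beta}(1,d-1)$-distributed and hence has an exponentially decaying tail at $0$, so the denominator concentrates at $1$ with exactly the tail behaviour needed by Lemma~\ref{lemma:tailbound2ratiobound} (with $\mu_y\to 1$). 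Up to $o(1)$ corrections, the ratio's expectation therefore matches that of the numerator, and the three cases amount to bounding the expected numerator from above or below accordingly.

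For $M>N/2$, so $d_{\text{anc}}=2^{nM}>d_{\text{out}}=2^{n(N-M)}$, the Haar-moment formulas in App.~\ref{sec:expvals_unit_invar} give $\bE\qty[\Tr T(\rho)^2]-\tfrac{1}{d_{\text{out}}}=\order{1/d_{\text{anc}}}$; combining with $\norm{\cdot}_1\leq\sqrt{d_{\text{out}}}\norm{\cdot}_2$ (both sides acting on $\cH_{d_{\text{out}}}$), Jensen's inequality, and the triangle inequality yields $\bE\qty[\tfrac{1}{2}\norm{T(\rho)-T(\sigma)}_1]=\order{2^{-n(2M-N)/2}}\to 0$, whence $\eta_1\to 0$ by Lemma~\ref{lemma:tailbound2ratiobound}.

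For $M<N/2$, the outputs have rank at most $d_{\text{anc}}<d_{\text{out}}$. Let $P_\rho$ be the projector onto $\operatorname{supp}(T(\rho))$; the variational formula for the trace distance together with $\Tr(P_\rho T(\rho))=1$ gives
\begin{equation}
\tfrac{1}{2}\norm{T(\rho)-T(\sigma)}_1\;\geq\;\Tr\qty(P_\rho T(\rho))-\Tr\qty(P_\rho T(\sigma))\;=\;1-\Tr\qty(P_\rho T(\sigma)).
\end{equation}
Independence of $\sigma$ from $\rho$, together with $\bE_\sigma\qty[T(\sigma)]=\mathbb{I}/d_{\text{out}}$, yields $\bE\qty[\Tr(P_\rho T(\sigma))]\leq d_{\text{anc}}/d_{\text{out}}=2^{-n(N-2M)}\to 0$. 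Since the ratio is at most $1$ by the DPI and at least the numerator (because the denominator is at most $1$ on pairs of pure states), we conclude $\eta_1\to 1$.

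Finally, when $M=N/2$ we have $d_{\text{anc}}=d_{\text{out}}=2^{nN/2}$, and the partial trace of a maximally-bipartite Haar-random pure state is \emph{exactly} Hilbert--Schmidt distributed (the induced measure with $r=d$); thus $T(\rho)$ and $T(\sigma)$ are two independent Hilbert--Schmidt random states on $\cH_{2^{nN/2}}$, and Lemma~\ref{lemma:exp_concentration_of_1norm} gives exponential concentration of the numerator at $\tfrac14+\tfrac1\pi$. Combined with the denominator's concentration at $1$, dominated convergence (or two one-sided applications of Lemma~\ref{lemma:tailbound2ratiobound}) gives $\eta_1\to\tfrac14+\tfrac1\pi$. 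The main obstacle I anticipate is a uniform-in-$M$ packaging of the $o(1)$ error terms so that the one-sided estimates in the $M<N/2$ and $M>N/2$ regimes translate into genuine two-sided limits; the cleanest route is to split on the event $\{\text{denom}<1-\delta\}$ versus its complement for a carefully chosen $\delta\to 0$, using the Beta tail of the denominator to make the former exponentially rare.
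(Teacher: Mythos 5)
Your proposal is correct and follows essentially the same route as the paper's proof: a typical-support projector argument for $M<N/2$, a $2$-norm/Choi-purity bound for $M>N/2$ (the paper packages this as Thm.~\ref{thm:avc_from_operator_root_bound}, but the computation is the same), identification of the output distribution with the Hilbert--Schmidt measure at $M=N/2$, and concentration of the denominator at $1$ to control the ratio. The only cosmetic differences are that you use the exact $\operatorname{Beta}(1,d-1)$ tail of the overlap for the denominator (strictly stronger than the paper's Markov-type bound in Lemma~\ref{lemma:concentrationoftrdist}) and compute the second moment of the $2$-norm directly rather than via the operator-Jensen bound.
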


\begin{proof}
    As shorthand, we use $d=2^N$ for the dimension of the whole system and $d'=2^{N-M}$ for the dimension of the output system.
    
    For the $M<N/2$ regime, we can find a lower bound that converges to one. We start by bounding the denominator with $\norm{\rho-\sigma}_1\leq 2$, and focusing on
    \begin{align}
        \bEx{\rho,\sigma\sim \mu_{d^n}}\qty[\frac{\norm{\mathcal{E}_M^{\otimes n} (\rho) - \mathcal{E}_M^{\otimes n}(\sigma)}_1}{\norm{\rho-\sigma}_1}]
        &\geq 
        \bEx{\rho,\sigma\sim \mu_{d^n}}\qty[\frac{1}{2}\norm{\mathcal{E}_M^{\otimes n} (\rho) - \mathcal{E}_M^{\otimes n}(\sigma)}_1]\\
        &\geq 
        \bEx{\rho,\sigma\sim \mu_{d^n}}\qty[\frac{1}{2}\norm{\mathcal{E}_{nM}(\rho) - \mathcal{E}_{nM}(\sigma)}_1]
        ,
    \end{align}
    where we used the permutation invariance of the Haar distribution and, with a little abuse of notation, denoted by $\mathcal{E}_{nM}$ the $nN$-to-$n(N-M)$ channel obtained by discarding the first $nM$ qubits of an $nN$-qubit system. The trace distance can be lower bounded by considering $P$ the orthogonal projector onto the support of $\mathcal{E}_{nM}(\rho)$,
    \begin{equation}
        \frac{1}{2}\norm{\mathcal{E}_{nM}(\rho) - \mathcal{E}_{nM}(\sigma)}_1 \geq \Tr[P(\mathcal{E}_{nM}(\rho) - \mathcal{E}_{nM}(\sigma))] = 1 - \Tr[P \mathcal{E}_{nM}(\sigma)].
    \end{equation}
    The rank of the projector equals the state's rank, bounded by the discarded subsystem's dimension $\rank(\mathcal{E}_{nM}(\rho))\leq (d/d')^n$, hence $\Tr P \leq (d/d')^n$ and
    \begin{equation}
        \bEx{\rho,\sigma\sim \mu_{d^n}}[\Tr[P \mathcal{E}_{nM}(\sigma)]] =
        \bEx{\rho\sim \mu_{d^n}}\qty[\Tr[P\; \bEx{\sigma\sim \mu_{d^n}}[\mathcal{E}_{nM}(\sigma)]]] =
        \bEx{\rho\sim \mu_{d^n}}\qty[\Tr[P \frac{\mathbb{I}_{N-M}^{\otimes n}}{d'^n}]] \leq \qty(\frac{(d/d')}{d'})^n.
    \end{equation}
    Consequently,
    \begin{equation}
        \bEx{\rho,\sigma\sim \mu_{d^n}}\qty[\frac{\norm{\mathcal{E}_M^{\otimes n} (\rho) - \mathcal{E}_M^{\otimes n}(\sigma)}_1}{\norm{\rho-\sigma}_1}]
        \geq
        1 - \qty(\frac{d}{d'^2})^n.
    \end{equation}
    Hence, if $d < d'^2$ the average contraction approaches one as $n\rightarrow\infty$. Equivalently, in terms of qubits, $M < N-M$ and $M < N/2$.

    For the $M>N/2$ regime we apply the upper bound in Thm.~\ref{thm:avc_from_operator_root_bound}, which shows that
    \begin{equation}\label{eq:ptr_upperbound}
        \eta_1(\mathcal{E}_M^{\otimes n},\norm{\cdot}_1,\mu_{2^{nN}}\times\mu_{2^{nN}})
        \leq
        \frac{1}{2}
        \Tr (
            \frac{2}{d^n+1}\qty[d^n(\Tr_2 \tau^2)^{\otimes n} - (\pi^2)^{\otimes n}]
        )^{1/2} + \order{\sqrt{\frac{n\ln d}{d^n}}}.
    \end{equation}
    In this case, the image of the maximally mixed state is
    \begin{equation}
        \pi = \mathcal{E}_M\qty(\frac{\mathbb{I}}{d}) = \frac{\mathbb{I}_{N-M}}{d'}.
    \end{equation}
    and the Choi state
    \begin{equation}
        \tau = (\mathcal{E}_M \otimes \operatorname{id}) \qty(\ketbra{\Omega}) = \ketbra{\Omega_{N-M}} \otimes \frac{\mathbb{I}_{M}}{(d/d')},
    \end{equation}
    where $\ketbra{\Omega_{N-M}}$ denotes the maximally entangled state between the last $N-M$ qubits of the main system and the dilated system and $\mathbb{I}_{M} / (d/d')$ is the maximally mixed state in the first $M$ qubits of the dilated system, recalling that $\Tr_2$ denotes discarding the dilation system
    \begin{equation}
        \Tr_2 \tau^2 =\Tr_2\qty[ \ketbra{\Omega_{N-M}} \otimes \frac{\mathbb{I}_{M}}{(d/d')^2}] = \frac{1}{(d/d')} \frac{\mathbb{I}_{N-M}}{d'}.
    \end{equation}
    Introducing this in Eq.~\eqref{eq:ptr_upperbound},
    \begin{align}
        \eta_1(\mathcal{E}_M^{\otimes n},\norm{\cdot}_1,\mu_{2^{nN}}\times\mu_{2^{nN}})
        &\leq
        \frac{1}{2}
        \Tr (
            \frac{2}{d^{n}+1}\qty[d^n \frac{1}{(d/d')^n} \frac{\mathbb{I}_{N-M}^{\otimes n}}{d'^n} - \frac{\mathbb{I}_{N-M}^{\otimes n}}{d'^{2n}}]
        )^{1/2} \\
        &=
        \frac{1}{2} d'^n
        \qty(
            \frac{2}{d^{n}+1}\qty[d^n \frac{1}{(d/d')^n} \frac{1}{d'^n} - \frac{1}{d'^{2n}}]
        )^{1/2} \\
        &=
        \frac{1}{2} d'^n
        \qty(
            \frac{2}{d^{n}+1}\qty[1 - \frac{1}{d'^{2n}}]
        )^{1/2} \\
        &=
        \frac{1}{2}
        \qty(
            \frac{2 }{d^{n}+1}\qty[d'^{2n} - 1]
        )^{1/2} \\
        & \leq 
        \frac{1}{2}
        \qty(
            \frac{2 d'^{n}}{d^{n}}
        )^{1/2} = \frac{1}{\sqrt{2}} \qty(\frac{d'}{\sqrt{d}})^n.
    \end{align}
    Hence, if $d' < \sqrt{d}$ the numerator is exponentially suppressed. In terms of discarded qubits $2^{N-M} < 2^{N/2}$ which is equivalent to $N/2 < M$.

    For the case $N=M$, note that the average distance for output states is equal to the average distance for mixed states over the Hilbert-Schmidt distribution $\nu_{d^{n/2}}$, where $d^{n/2}=2^{n(N-M)}=2^{nN/2}$ corresponds to the dimension of the remaining space. Precisely, let $\ket{\psi}\in\cH_{d^n}\cong\cH_{d^{n/2}}\otimes\cH_{d^{n/2}}$ be Haar distributed $\ket{\psi}\sim\mu_{d^n}$ and consider the state resulting from the partial trace $\rho=\Tr_2\ketbra{\psi} \in \cS(\cH_{d^{n/2}})$, then the induced probability distribution is the Hilbert-Schmidt distribution $\rho\sim\nu_{d^{n/2}}$~\cite{Puchala16}. The expected value for this distance in the asymptotic limit is known, for instance, as implied by Prop. 2 in Ref.~\cite{Puchala16}:
    \begin{equation}\label{eq:av_trdist_hs}
        \lim_{n\to\infty} \bEx{\rho,\sigma\sim\nu_{d^{n/2}}}\qty[\frac{1}{2}\norm{\rho-\sigma}_1] = \frac{1}{4}+\frac{1}{\pi},
    \end{equation}
    which allows us to find tight upper and lower bounds for the average contraction. For the lower bound,
    \begin{align}
        \lim_{n\to\infty}
        \eta_1(\mathcal{E}_M^{\otimes n},\norm{\cdot}_1,\mu_{2^{nN}}\times\mu_{2^{nN}})        
        &\geq
        \lim_{n\to\infty} \bEx{\rho,\sigma\sim \mu_{d^{n}}}\qty[\frac{1}{2} \norm{\mathcal{E}_M^{\otimes n} (\rho) - \mathcal{E}_M^{\otimes n}(\sigma)}_1]\\
        &=\lim_{n\to\infty} \bEx{\rho,\sigma\sim \nu_{d^{n/2}}}\qty[\frac{1}{2} \norm{\rho - \sigma}_1]\\
        &=\frac{1}{4}+\frac{1}{\pi},
    \end{align}
    where in the first inequality we used that $\norm{\rho-\sigma}_1\leq 2$ for any two states, the second is implied by our first observation and the last one by Eq.~\eqref{eq:av_trdist_hs}. The upper bound can be obtained as in Thm.~\ref{thm:avc_from_operator_root_bound}, relying on a combination of Lemma~\ref{lemma:tailbound2ratiobound}, which bounds the expected value of a fraction when the denominator satisfies a tail bound, and Lemma~\ref{lemma:concentrationoftrdist}, which claims that the trace distance satisfies the required concentration. It leads to
    \begin{align}
        \lim_{n\to\infty}
        \eta_1(\mathcal{E}_M^{\otimes n},\norm{\cdot}_1,\mu_{2^{nN}}\times\mu_{2^{nN}})
        &\leq
        \lim_{n\to\infty}\qty[
        \frac{1}{2}  \bEx{\rho,\sigma\sim \mu_{d^{n}}}\qty[\norm{\mathcal{E}_M^{\otimes n} (\rho) - \mathcal{E}_M^{\otimes n}(\sigma)}_1]
        + \frac{1}{d^n} + 2\sqrt{\frac{1}{d^n}}
        ] \\
        &=\lim_{n\to\infty} \bEx{\rho,\sigma\sim \nu_{d^{n/2}}}\qty[\frac{1}{2} \norm{\rho - \sigma}_1]\\
        &=\frac{1}{4}+\frac{1}{\pi},
    \end{align}
    where the first bound follows from the two lemmas referred above and $\frac{1}{2}\norm{\mathcal{E}_M^{\otimes n} (\rho) - \mathcal{E}_M^{\otimes n}(\sigma)}_1\leq 1$, the second one includes our first observation for the distribution together with neglecting $\order{d^{-n/2}}$ terms, and the last one is Eq.~\eqref{eq:av_trdist_hs}.
\end{proof}

The intuition behind this result is rooted in the typical properties of Haar-random states. These states are generally highly entangled, meaning that discarding more than half of the system leaves the remaining part in a highly mixed state almost surely. However, the dimension of the smallest subsystem limits the entanglement between the subsystems. Specifically, a system of dimension $(d/d')$ can only be entangled with a subspace of equal dimension. Thus, if $(d/d') < d'$, there exists a subspace of dimension approximately $d'^2/d$ that remains unentangled with the discarded subsystem. Alternatively, the $M$ discarded qubits can only be entangled with at most $M$ qubits, leaving the remaining $N - 2M$ qubits unaffected.

This example demonstrates that, for sufficiently low noise levels, the average contraction of a quantum channel can approach unity. The following result extends this observation to a more general setting.

\begin{theorem} \label{thm:avc_lower_bound}
    Let $T: \cS(\cH_d) \to \cS(\cH_{d'})$ be a quantum channel with Choi state $\tau$. For any $\varepsilon \in (0,1)$ and $\delta > 0$, there exists a sufficiently large $n$ such that for any continuous 1-design distribution $\mu$ on pure states $\cH_{d^n}$,
    \begin{equation}
        \eta_1(T^{\otimes n},\norm{\cdot}_1,\mu \times\mu)
        \geq 1 - \varepsilon - \qty( 2^{S(\tau) + \delta} \norm{T \qty(\frac{\mathbb{I}}{d})}_{\infty} )^n,
    \end{equation}
    where $S(\tau)=-\Tr[\tau\log\tau]$ is the von Neumann entropy of the Choi state.
\end{theorem}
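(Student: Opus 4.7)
Following the lower-bound part of Prop.~\ref{prop:avc_partial_trace_asymp_lim}, my plan is to construct, for each pure input $\rho=\ketbra{\psi}$, a low-rank projector $P_\rho$ that absorbs essentially all of the mass of $T^{\otimes n}(\rho)$, and then exploit the 1-design property of $\mu$ to argue that an independent draw $\sigma$ places negligible mass in its range. Because pure states satisfy $\tfrac12\norm{\rho-\sigma}_1\leq 1$, the variational formula for the trace distance will give
\begin{equation}
\eta_1(T^{\otimes n},\norm{\cdot}_1,\mu\times\mu)\geq \bE_\rho[\Tr[P_\rho T^{\otimes n}(\rho)]]-\bE_{\rho,\sigma}[\Tr[P_\rho T^{\otimes n}(\sigma)]]
\end{equation}
for any projector $P_\rho$ depending only on $\rho$, reducing the proof to bounding these two expectations separately.

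The choice of $P_\rho$ is dictated by the Choi--Jamiolkowski identity tensored $n$ times, $T^{\otimes n}(\ketbra{\psi})=d^n\,\bra{\psi^*}\tau^{\otimes n}\ket{\psi^*}$, where $\ket{\psi^*}\in\cH_d^{\otimes n}$ is contracted only against the input register. Since the eigenvalues of $\tau^{\otimes n}$ are products of those of $\tau$, the classical asymptotic equipartition property will furnish, for any $\delta>0$, $\varepsilon\in(0,1)$ and $n$ large enough, a spectral projector $\Pi_\delta$ of $\tau^{\otimes n}$ with $\Tr\Pi_\delta\leq 2^{n(S(\tau)+\delta)}$ and $\Tr[\Pi_\delta\tau^{\otimes n}]\geq 1-\varepsilon$; crucially $[\Pi_\delta,\tau^{\otimes n}]=0$. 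I will then take $P_\rho$ to be the support projector of the ``typical part'' $\tilde T^\psi:=d^n\bra{\psi^*}\Pi_\delta\tau^{\otimes n}\Pi_\delta\ket{\psi^*}$, whose rank is bounded by $\Tr\Pi_\delta$.

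For the first expectation, commutativity gives $\tau^{\otimes n}-\Pi_\delta\tau^{\otimes n}\Pi_\delta=(I-\Pi_\delta)\tau^{\otimes n}(I-\Pi_\delta)\geq 0$, which passes through the contraction with $\ket{\psi^*}$ to yield $T^{\otimes n}(\rho)\geq \tilde T^\psi$ and hence $\Tr[P_\rho T^{\otimes n}(\rho)]\geq\Tr\tilde T^\psi$. The 1-design identity $\bE_\psi[\ketbra{\psi^*}]=\mathbb{I}/d^n$ will then collapse $\bE_\psi[\Tr\tilde T^\psi]$ to $\Tr[\Pi_\delta\tau^{\otimes n}]\geq 1-\varepsilon$. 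For the second expectation, $P_\rho$ is independent of $\sigma$ and the 1-design gives $\bE_\sigma[T^{\otimes n}(\sigma)]=\pi^{\otimes n}$ with $\pi=T(\mathbb{I}/d)$, so $\bE_{\rho,\sigma}[\Tr[P_\rho T^{\otimes n}(\sigma)]]\leq \norm{\pi}_\infty^n\Tr P_\rho\leq (2^{S(\tau)+\delta}\norm{T(\mathbb{I}/d)}_\infty)^n$. Combining both estimates will prove the theorem.

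The main technical subtlety will be the positivity $T^{\otimes n}(\rho)\geq\tilde T^\psi$, which is what allows replacing the actual output by its typical approximation with no loss. It hinges on $[\Pi_\delta,\tau^{\otimes n}]=0$, which is automatic because $\Pi_\delta$ is a spectral projector; had we chosen an arbitrary low-rank projector instead, a gentle-measurement detour with extra $\order{\sqrt{\varepsilon}}$ corrections would have been needed.
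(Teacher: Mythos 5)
Your proposal is correct and follows essentially the same route as the paper's proof: via the canonical Kraus decomposition $\sqrt{\omega_k}\ket{\psi_k}=(A_k\otimes\mathbb{I})\ket{\Omega}$, your spectral typical projector $\Pi_\delta$ of $\tau^{\otimes n}$ is exactly the paper's classical typical set of Kraus-index sequences, and your operator $d^n\bra{\psi^*}\Pi_\delta\tau^{\otimes n}\Pi_\delta\ket{\psi^*}$ coincides with $\sum_{x^n\in T_{\delta}^{X^n}}E_{x^n}\ketbra{\psi}E_{x^n}^{\dag}$, whose support defines the paper's projector $Q_\psi$. The two-term estimate — the 1-design collapsing the first expectation to $\Tr[\Pi_\delta\tau^{\otimes n}]\geq 1-\varepsilon$ and the rank times operator-norm bound controlling the cross term — is identical.
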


\begin{proof}
    As above, the key idea is to construct a projector that we call $Q_{\psi}$ that has a high overlap with the first state and a low overlap with the second one and employ
     \begin{equation}
        \bEx{\psi, \phi \sim \mu} \qty[ \frac{\norm{T^{\otimes n}(\ketbra{\psi}) - T^{\otimes n}(\ketbra{\phi})}_1}{\norm{\ketbra{\psi} - \ketbra{\phi}}_1} ]
        \geq
        \bEx{\psi, \phi \sim \mu} \qty[ \Tr[Q_{\psi} (T^{\otimes n}(\ketbra{\psi}) - T^{\otimes n}(\ketbra{\phi})) ]].
    \end{equation}
    
    Consider the canonical Kraus decomposition
    \begin{equation}
        T(\rho)=\sum_{x\in\cX} E_x \rho E_x^{\dag}, \quad \text{with} \quad \Tr E_x^{\dag} E_y = \delta_{xy} \Tr E_x^{\dag} E_x,
    \end{equation}
    with the minimum number of terms, and define the random variable $X\in \cX$ with probability distribution $p(x)=\frac{1}{d}\Tr E_x^{\dag} E_x$. Then, define the $\delta$-typical set of sequences $x_1\cdots x_n \in \cX^n$ of i.i.d. random variables to be:
    \begin{equation}
        T_{\delta}^{X^n}=\qty{x^n\in \cX^n \mid \abs{-\frac{1}{n}\log p(x^n) - H(X)} < \delta},
    \end{equation}
    where $p(x^n)=p(x_1)\cdots p(x_n)$ and $H(X)=-\sum_{x\in\cX} p(x) \log p(x)$ is the entropy of the distribution, note that for this choice of the Kraus operators, this is equal to the entropy of the Choi state, $H(X)=S(\tau)=-\Tr[\tau \log \tau]$.

    $T_{\delta}^{X^n}$ is a widely studied object in elementary Shannon information theory \cite{Wilde2017}, and satisfies that for any $\varepsilon\in (0,1)$, $\delta > 0$ and there is a large enough $n$ such that the following properties hold:
    \begin{align}
        \text{Unit probability:} \quad & \quad \Pr[x \in T_{\delta}^n] > 1- \varepsilon , \\
        \text{Cardinality:} \quad & \quad (1-\varepsilon) 2^{n(H(X)-\delta)} \leq \abs{T_{\delta}^n} \leq  2^{n(H(X)+\delta)}, \\
        \text{Equipartition:} \quad & \quad 2^{-n(H(X)+\delta)} \leq p(x) \leq  2^{-n(H(X)-\delta)} \quad \text{for} \quad x\in T_{\delta}^n.
    \end{align}
    
    For a given state $\ket{\psi} \in (\mathbb{C}^d)^{\otimes n}$ define the operator $Q_{\psi}$ as the orthogonal projector onto the support of 
    \begin{equation}
        \sum_{x^n\in T_{\delta}^{X^n}} E_{x^n} \ketbra{\psi} E_{x^n}^{\dag},
    \end{equation}
    where we defined $E_{x^n}=E_{x_1}\otimes \cdots \otimes E_{x_n}$. This operator plays the role of a projector highly overlapping $T^{\otimes n}(\ketbra{\psi})$. Indeed,
    \begin{align}
        \bEx{\psi\sim\mu} \Tr[ Q_{\psi} T^{\otimes n}(\ketbra{\psi})]
        &=
        \sum_{x^n\in\cX^n}
        \bEx{\psi\sim\mu} \Tr[Q_{\psi} E_{x^n} \ketbra{\psi} E_{x^n}^{\dag} ] \\
        &\geq 
        \sum_{x^n\in T_{\delta}^{X^n}}
        \bEx{\psi\sim\mu} \Tr[E_{x^n} \ketbra{\psi} E_{x^n}^{\dag} ] \\
        &=
        \sum_{x^n\in T_{\delta}^{X^n}}
        \frac{1}{d^n} \Tr[E_{x^n} E_{x^n}^{\dag} ] = 
        \sum_{x^n\in T_{\delta}^{X^n}}
        p(x^n) \\
        &\geq 1-\varepsilon ,
    \end{align}
    where the first inequality comes from discarding positive terms corresponding to $x\notin T_{\delta}^{X^n}$ that may overlap $Q_{\psi}$, the next step uses that $\mu$ is a 1-design, and the last one follows from the unit probability of the $\delta$-typical set for sufficiently large $n$.

    Conversely, the operator $Q_{\psi}$ has a small overlap with the second state if the entropy is low enough,
    \begin{align}
        \bEx{\psi,\phi\sim\mu} \Tr[ Q_{\psi} T^{\otimes n}(\ketbra{\phi})]
        &=
        \bEx{\psi \sim\mu} \Tr[Q_{\psi}  \qty(T\qty(\frac{\mathbb{I}}{d}))^{\otimes n}] \\
        &\leq 
        \bEx{\psi \sim\mu} \Tr[Q_{\psi}] \cdot \norm{\qty(T\qty(\frac{\mathbb{I}}{d}))^{\otimes n}] }_{\infty} \\
        &\leq \qty(
        2^{H(X)+\delta}\cdot
        \norm{T\qty(\frac{\mathbb{I}}{d})}_{\infty}
        )^n,
    \end{align}
    where we used that $\Tr Q_{\psi} = \rank(Q_{\psi}) \leq \abs{T_{\delta}^{X^n}}\leq 2^{n(H(X)+\delta)}$.

    We arrive at the result by combining both bounds and introducing $H(X)=S(\tau)$.
\end{proof}

\begin{remark}
    A direct consequence of the lower bound is that for a channel satisfying $S(\tau) < \log \qty(\norm{T\qty(\frac{\mathbb{I}}{d})}_{\infty}^{-1})$, the average contraction in the asymptotic limit approaches unity.
\end{remark}

\begin{remark}
    The result can be applied to discrete distributions conditioning the expectation value on $\rho\neq\sigma$ and including a correction that depends on the collision probability. The deviation is often exponentially small in $n$ due to the large number of states where $\mu$ is supported. This is the case for arbitrary product distributions and for 2-design distributions (see Sec.~\ref{sec:upperbound_2designs}).
\end{remark}

\begin{remark}
    Note that in the proof, we can drop the condition $\Tr E_x^{\dag} E_y\propto \delta_{xy}$ and instead work with the probability distribution $p(x)= \Tr E_x^{\dag} E_x / d$, which satisfies $p(x)\geq 0$ and $\sum_x p(x)=1$. With this choice, the relevant entropy for the bound is $H(X)=-\sum_x p(x)\log p(x)$, rather than the entropy of the Choi state $S(\tau)$. Often, this approach is more convenient than determining the canonical Kraus operators, but Thm.~\ref{thm:avc_lower_bound} establishes the sharp bound as $S(\tau)\leq H(X)$ (see Fact \ref{fact:entropy_bound}).
\end{remark}

Theorem~\ref{thm:avc_lower_bound} extends the lower bound for the low-noise regime given in Prop.~\ref{prop:avc_partial_trace_asymp_lim} for the partial trace. The proof's physical intuition can be understood in terms of the paths generated by different products of Kraus operators applied to random states. The result shows that if the entropy of the Choi state is sufficiently low, and state 1-designs generate these paths, then their overlap is typically vanishingly small. Consequently, the output states preserve their distinguishability with high probability. Notably, for a qubit unital channel, this condition simplifies to $S(\tau)<1$, which is not particularly restrictive given that the entropy of the Choi state for a single-qubit channel lies in the range $0\leq S(\tau) \leq 2$.

In product channels, it is not surprising that certain 1-design distributions preserve distinguishability on average. For example, consider a 1-design distribution supported on i.i.d. product states. It is then not too difficult to see that the average trace distance between outputs will converge to $1$ by resorting to the quantum Stein lemma~\cite{Hiai1991}. The key insight here is that the lower bound holds for any 1-design distribution, including those concentrated on highly entangled states. Conversely, the fact that the upper bound in Thm.~\ref{thm:avc_from_operator_root_bound} requires at least a 2-design distribution suggests that concentration on highly entangled states is fundamental to achieving upper bounds implying asymptotically vanishing average contraction.

A natural question is whether the intuitive argument about the overlap of typical paths can be reversed. Specifically, could we find an upper bound as a function of the entropy of the Choi state, implying that the average distinguishability vanishes? One might assume that if the paths do not have vanishing overlap, they should mix, spoiling distinguishability. However, a simple counterexample reveals gaps in this intuition. Consider a dephasing channel on an $n$-qubit system and a uniform distribution over the computational basis. Although this distribution is a 1-design and the entropy of the Choi state can be as large as $n$ (see Eq.~\eqref{eq:choi_rep_deph}), these states are invariant under dephasing, so they do not suffer any contraction. 

In the context of product channels, the best upper bound complementing the lower bound in Thm.~\ref{thm:avc_lower_bound} is given by Cor.~\ref{cor:travcontract_nproduct_2designs}. In the case of the partial trace example from Prop.~\ref{prop:avc_partial_trace_asymp_lim}, these bounds are nearly complementary. However, they do not generally yield the same threshold. To illustrate the emerging picture of the asymptotic behavior of average contraction over $t$-designs, we consider the $n$-qubit local depolarizing channel.

\begin{proposition}[Average contraction of local depolarizing in the asymptotic limit]\label{prop:avc_local_depol}
    Let $T_{\mathrm{depol}}(\rho) = (1-p) \; \rho + p \; \frac{\mathbb{I}}{2}$ denote the single-qubit depolarizing channel. Let $\mu$ be a 2-design distribution over pure states on $n$ qubits. Then, the asymptotic average contraction satisfies: 
    \begin{align}
        \lim_{n \to \infty}
        \eta_1(T_{\mathrm{depol}}^{\otimes n},\norm{\cdot}_1,\mu \times\mu)
        &= 
        \begin{cases}
            1 & \text{if} \quad p < p_1 \approx 0.25, \\
            0 & \text{if} \quad p > p_2 \approx 0.42,
        \end{cases}
    \end{align}
    where $p_1$ is the solution to the equation $\left(\left(1-\frac{3}{4}p\right)\log\left(1-\frac{3}{4}p\right)+\frac{3p}{4}\log\left(\frac{p}{4}\right)\right) = -1$ and $p_2=1-\frac{1}{\sqrt{3}}$.
\end{proposition}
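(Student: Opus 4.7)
The plan is to specialize the two master bounds of the previous sections (Thm.~\ref{thm:avc_lower_bound} for the low--noise side and Cor.~\ref{cor:travcontract_nproduct_2designs} for the high--noise side) to the single--qubit depolarizing channel $T_{\mathrm{depol}}$. Since both bounds are already proved, the task reduces to computing three elementary spectral quantities of $T_{\mathrm{depol}}$: the entropy $S(\tau)$ of its Choi state, the operator norm $\|T_{\mathrm{depol}}(\mathbb{I}/2)\|_\infty$, and $\operatorname{Tr}\sqrt{\operatorname{Tr}_2 \tau^2}$.

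For the low--noise regime, first compute the Choi state: writing $\tau=(1-p)\ketbra{\Omega}+(p/4)\mathbb{I}$, one reads off the spectrum $\{1-3p/4,\,p/4,\,p/4,\,p/4\}$, and hence
\begin{equation}
    S(\tau)=-\left(1-\tfrac{3p}{4}\right)\log\!\left(1-\tfrac{3p}{4}\right)-\tfrac{3p}{4}\log\!\left(\tfrac{p}{4}\right).
\end{equation}
Since $T_{\mathrm{depol}}$ is unital, $T_{\mathrm{depol}}(\mathbb{I}/2)=\mathbb{I}/2$, so $\|T_{\mathrm{depol}}(\mathbb{I}/2)\|_\infty=1/2$ and $\log\|T_{\mathrm{depol}}(\mathbb{I}/2)\|_\infty^{-1}=1$. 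Plugging these into Thm.~\ref{thm:avc_lower_bound} (with any sufficiently small $\delta>0$ and the 2-design $\mu$, which is in particular a 1-design), the additive correction $(2^{S(\tau)+\delta}\|T_{\mathrm{depol}}(\mathbb{I}/2)\|_\infty)^n=2^{n(S(\tau)+\delta-1)}$ tends to $0$ as $n\to\infty$ whenever $S(\tau)<1$. Taking $\delta\to 0$ and $\varepsilon\to 0$ afterward shows the average contraction tends to $1$. The threshold $S(\tau)=1$ is precisely the transcendental equation defining $p_1$.

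For the high--noise regime, use Cor.~\ref{cor:travcontract_nproduct_2designs}, which bounds $\eta_1$ by $\tfrac{1}{\sqrt 2}(\operatorname{Tr}\sqrt{\operatorname{Tr}_2\tau^2})^n+\mathcal{O}(2^{-n/2})$. Squaring $\tau=(1-p)\ketbra{\Omega}+(p/4)\mathbb{I}$ using $\ketbra{\Omega}^2=\ketbra{\Omega}$ yields
\begin{equation}
    \tau^2=(1-p)\!\left(1-\tfrac{p}{2}\right)\ketbra{\Omega}+\tfrac{p^2}{16}\,\mathbb{I}.
\end{equation}
Taking the partial trace on the ancilla ($\operatorname{Tr}_2\ketbra{\Omega}=\mathbb{I}/2$, $\operatorname{Tr}_2\mathbb{I}_{AB}=2\mathbb{I}_A$) gives
\begin{equation}
    \operatorname{Tr}_2\tau^2=\left(\tfrac{1}{2}-\tfrac{3p}{4}+\tfrac{3p^2}{8}\right)\mathbb{I},
\end{equation}
so $\operatorname{Tr}\sqrt{\operatorname{Tr}_2\tau^2}=2\sqrt{\tfrac{1}{2}-\tfrac{3p}{4}+\tfrac{3p^2}{8}}$. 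The bound tends to $0$ exponentially in $n$ as soon as this quantity is strictly less than $1$, i.e.\ when $3p^2-6p+2<0$. Solving the quadratic gives $p>1-\tfrac{1}{\sqrt 3}=p_2$. Combined with the upper bound $\eta_1\geq 0$ coming from nonnegativity of the trace distance, this proves the $p>p_2$ half of the claim.

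There is no conceptual obstacle beyond the two theorems already in hand; the only work is the explicit spectral calculation above. The mildly delicate point is to verify that in Cor.~\ref{cor:travcontract_nproduct_2designs} one may indeed use the simplified bound $\tfrac{1}{\sqrt 2}(\operatorname{Tr}\sqrt{\operatorname{Tr}_2\tau^2})^n+o(1)$ and not the tighter one containing $(\pi^2)^{\otimes n}$: here $\pi^2=\mathbb{I}/4$ commutes with $\operatorname{Tr}_2\tau^2$ (both are multiples of the identity), so by Remark~\ref{remark:avc2design_nproduct_about_simplificaiton} the simplified form is equivalent for the purpose of establishing that the bound vanishes.
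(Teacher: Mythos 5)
Your proposal is correct and follows essentially the same route as the paper: Thm.~\ref{thm:avc_lower_bound} with $S(\tau)=-(1-\tfrac{3p}{4})\log(1-\tfrac{3p}{4})-\tfrac{3p}{4}\log\tfrac{p}{4}$ and $\|\pi\|_\infty=1/2$ for the $p<p_1$ side, and Cor.~\ref{cor:travcontract_nproduct_2designs} with $\Tr_2\tau^2=(\tfrac12-\tfrac{3p}{4}+\tfrac{3p^2}{8})\mathbb{I}$ for the $p>p_2$ side, with all spectral computations matching the paper's. The only nitpick is the phrase ``upper bound $\eta_1\geq 0$,'' which should read ``lower bound''; your added observation that Remark~\ref{remark:avc2design_nproduct_about_simplificaiton} applies because $\pi^2$ and $\Tr_2\tau^2$ are both multiples of the identity is a correct and welcome extra justification.
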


\begin{proof}
    For the small $p$ regime, we invoke Thm.~\ref{thm:avc_lower_bound}. The Choi state is
    \begin{equation}
        \tau_{\mathrm{depol}} = (1-p)\ketbra{\Omega}+p\frac{\mathbb{I}^{\otimes 2}}{4},
    \end{equation}
    and its entropy
    \begin{equation}
        S(\tau_{\mathrm{depol}}) = -\qty[(1-p+\frac{p}{4})\log(1-p+\frac{p}{4})+ 3\cdot \frac{p}{4}\log\frac{p}{4}].
    \end{equation}
    As the channel is unital, the condition for the average contraction of the product channel to converge to unity reads $S(\tau_{\mathrm{depol}})< 1$, which holds for $p<p_1$.

    For the large $p$ regime, we can resort to Cor.~\ref{cor:travcontract_nproduct_2designs} introducing
    \begin{align}
        \Tr_2 \tau^2_{\mathrm{depol}} &= \Tr_2\qty(
            (1-p)^2\ketbra{\Omega} + p^2\frac{\mathbb{I}^{\otimes 2}}{16}
            +2\frac{(1-p)p}{4} \ketbra{\Omega}
        ) \\
        &=   (1-p)^2 \frac{\mathbb{I}}{2}
             + p^2\frac{\mathbb{I}}{8}
            +2\frac{(1-p)p}{4} \frac{\mathbb{I}}{2}.
    \end{align}
    Discarding $\order{2^{-n/2}}$ terms, the upper bound takes the simplified form
    \begin{equation}
        \frac{1}{\sqrt{2}} \qty(\Tr
            \sqrt{(2 (1-p)^2 + \frac{p^2}{2} + (1-p)p) \frac{\mathbb{I}}{4}}
            )^n
            =
        \frac{1}{\sqrt{2}} \qty(2 (1-p)^2 + \frac{p^2}{2} + (1-p)p)^{n/2},
    \end{equation}
    the polynomial in the parentheses can be expressed as $(1+3(1-p)^2) / 2$, which is strictly smaller than one if $\abs{1-p}< 1/\sqrt{3}$.
\end{proof}

Thus, by combining the results of Sec.~\ref{sec:upper_bounds_tr} and Sec.~\ref{sec:lower_bounds_trace}, we conclude that for certain local noise strengths, the average distinguishability either goes to $1$ exponentially fast or $0$, i.e. it is maximal or minimal. The only example where we have a complete picture, namely the one where we discard certain subsystems in Prop.~\ref{prop:avc_partial_trace_asymp_lim} suggests there are intermediary regimes where we converge to a constant, but only at a critical point.
It would be interesting to investigate to what extent there are other possible intermediary regimes, i.e. open sets where we converge to a constant or the convergence can be polynomially fast.

The phase transitions observed here bear a superficial resemblance to effects studied in the context of quantum Darwinism \cite{BlumeWojciech06}, which describes the emergence of classicality in a quantum system coupled to a multipartite environment. In this framework, classicality arises from the redundancy of information about the system encoded in fragments of the environment, making it objectively accessible to multiple observers. For models involving random states, a transition in accessible information occurs when environment fragments reach about half the system size, the same threshold we observe in Prop.~\ref{prop:avc_partial_trace_asymp_lim}.

\subsection{Average contraction of random circuits under unital noise}\label{sec:quantum_circuits}
Another class of quantum channels that is of interest in quantum information and computation that goes beyond the models considered so far is that of noisy quantum circuits. 
In a growing body of literature~\cite{StilckFrana2021,Quek2024,Gonzalez-Garcia2022,Preskill2018}, several authors try to answer the question of to what extent quantum circuits are susceptible to noise and/or capable of performing interesting quantum computations even in the absence of fault-tolerance.

We will now leverage the tools developed in the last sections to establish that for shallow enough (constant) depths, noisy quantum circuits have an average contraction coefficient of the trace distance that converges to $1$ in the limit of many qubits under local unital noise of low enough, but constant strength. Interestingly, this happens for \emph{any} ensemble of random unitaries, even arbitrarily non-local ones, as long as they form 1-designs. The same holds for the ensemble of input states: as long as they form a 1-design, we observe this phenomenon. However, at this stage we will only state results for unital noise channels and leave generalizations to non-unital models to future work.

More precisely, our models consider the case where we have an $n$-qubit circuit $\cC$ implemented as a sequence of unitaries $U_i$ such that $U_{\cC}=U_D\cdots U_1$, and an $n$-qubit quantum channel $\Phi$ with Kraus decomposition $\qty{E_x}_{x\in\cX}$. Assume $\cC'$, the noisy implementation of $\cC$ is given by intercalating $\Phi$ in each unitary layer resulting in the quantum channel
\begin{equation}\label{equ:circuit_model}
    \cE_{\cC'} = \Phi \circ \cU_D \circ \Phi \circ \cdots \circ \Phi \circ \cU_2 \circ \Phi \circ \cU_1,
\end{equation}
where $\cU(\cdot) = U \cdot U^{\dag}$ denotes unitary conjugation. Let $\ket{\psi}$ and $\ket{\phi}$ be two independent random states drawn from a 1-design and let the unitary layers be drawn from a 1-design. Note again that we do not make any assumptions on the locality of the unitaries or the states. Let us assume the channel is unital, $\Phi(\mathbb{I})=\mathbb{I}$, which also implies $\cE_{\cC'}(\mathbb{I})=\mathbb{I}$. We then have: 

\begin{theorem}[Average contraction for random circuits]\label{thm:avgcirc}
Let $\cC'$ be a noisy random unitary as in Eq.~\eqref{equ:circuit_model}, where $\phi,\psi$ and $U_i$ are independently drawn from 1-designs and $\Phi$ is unital.
For any $\varepsilon\in (0, 1)$ and $\delta > 0$, there is a large enough $D$ (but independent of $n$) such that
\begin{equation}
    \bE_{\psi,\phi, \cC}\qty[
    \frac{\norm{\cE_{\cC'}(\ketbra{\psi}) - \cE_{\cC'}(\ketbra{\phi})}_1}{\norm{\ketbra{\psi} - \ketbra{\phi}}_1}
    ]
    \geq 1 - \varepsilon - 
      2^{D (S(\tau)+\delta)-n}
    ,
\end{equation}
where $S(\tau)$ is the entropy of the Choi state of $\Phi$.    
\end{theorem}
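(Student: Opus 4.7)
The plan is to mimic closely the strategy from the proof of Theorem~\ref{thm:avc_lower_bound}, treating the whole circuit $\cE_{\cC'}$ as a single channel whose Kraus operators are the circuit paths $M_x(\cC) = E_{x_D} U_D \cdots E_{x_1} U_1$ indexed by $x = (x_1,\ldots,x_D) \in \cX^D$. For each pure input $\ket\psi$ and each realization of the circuit, I would define the projector $Q_{\psi,\cC}$ onto the support of $\sum_{x \in T_\delta} M_x(\cC) \ketbra{\psi} M_x(\cC)^\dagger$, where $T_\delta \subset \cX^D$ is the $\delta$-typical set for a probability distribution on paths that I will describe below. Then I would use the elementary inequality
\begin{equation}
\frac{\norm{\cE_{\cC'}(\ketbra{\psi}) - \cE_{\cC'}(\ketbra{\phi})}_1}{\norm{\ketbra{\psi} - \ketbra{\phi}}_1}
\;\geq\; \Tr\!\big[Q_{\psi,\cC}\,\big(\cE_{\cC'}(\ketbra{\psi}) - \cE_{\cC'}(\ketbra{\phi})\big)\big],
\end{equation}
exactly as in the proof of Theorem~\ref{thm:avc_lower_bound}, and then bound the high-overlap and low-overlap terms separately after taking expectations.

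The crucial new ingredient, and what I expect to be the main obstacle, is to show that after averaging over the random circuit, the induced distribution on paths is a product distribution. For fixed $\cC$, the Kraus path carries weight $p_\cC(x) = \tfrac{1}{2^n}\Tr[M_x(\cC)^\dagger M_x(\cC)]$, which does not factorize pointwise. However, applying the 1-design identity $\bE_{U_k}[U_k A U_k^\dagger] = (\Tr A)\,\mathbb{I}/2^n$ one layer at a time (starting from the innermost unitary), I would show by induction that
\begin{equation}
\bE_{\cC}\!\left[\tfrac{1}{2^n}\Tr\big[M_x(\cC) M_x(\cC)^\dagger\big]\right] = \prod_{k=1}^D p_1(x_k), \qquad p_1(x) = \tfrac{1}{2^n}\Tr[E_x^\dagger E_x].
\end{equation}
Hence $\bE_\cC[p_\cC(\cdot)]$ is the $D$-fold product of $p_1$, whose entropy is $D \cdot H(X)$ with $H(X) \geq S(\tau)$ achieved with equality on the canonical Kraus decomposition (invoking the same remark as after Theorem~\ref{thm:avc_lower_bound}). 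Standard typicality then guarantees that for any $\varepsilon,\delta$ and all sufficiently large $D$ (independent of $n$), there exists $T_\delta$ with $\bE_{\cC}[\sum_{x \in T_\delta} p_\cC(x)] \geq 1-\varepsilon$ and $|T_\delta| \leq 2^{D(S(\tau)+\delta)}$.

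With this factorization in hand, the high-overlap bound follows from the 1-design property of $\ket\psi$:
\begin{equation}
\bE_{\psi,\cC}\!\big[\Tr[Q_{\psi,\cC}\,\cE_{\cC'}(\ketbra{\psi})]\big] \;\geq\; \sum_{x \in T_\delta} \bE_\cC\!\left[\tfrac{1}{2^n}\Tr[M_x(\cC) M_x(\cC)^\dagger]\right] \;\geq\; 1-\varepsilon,
\end{equation}
where the first inequality uses that $M_x(\cC)\ket\psi$ lies in the range of $Q_{\psi,\cC}$ for $x \in T_\delta$. For the low-overlap term, I would invoke unitality of $\Phi$ (and hence of $\cE_{\cC'}$), together with the 1-design property of $\ket\phi$, to get
\begin{equation}
\bE_{\psi,\phi,\cC}\!\big[\Tr[Q_{\psi,\cC}\,\cE_{\cC'}(\ketbra{\phi})]\big] = \bE_{\psi,\cC}\!\big[\Tr[Q_{\psi,\cC}]\big]/2^n \;\leq\; |T_\delta|/2^n \;\leq\; 2^{D(S(\tau)+\delta)-n}.
\end{equation}
Combining the two estimates yields the claimed lower bound. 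The only real subtlety is the layer-by-layer derivation of the product structure for $\bE_\cC[p_\cC]$; once that is in place, the rest of the argument is a direct port of the tensor-power proof, with the single-site entropy $S(\tau)$ playing the role of the per-symbol entropy and the depth $D$ playing the role of the number of sites.
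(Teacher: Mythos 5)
Your proposal is correct and follows essentially the same route as the paper's proof in App.~\ref{sec:random_circuits}: the same typical-set projector built from the circuit paths, the same layer-by-layer application of the 1-design identity to factorize the expected path weights into $\prod_k p_1(x_k)$, and the same use of unitality plus the rank bound $\Tr Q_{\psi,\cC}\leq 2^{D(S(\tau)+\delta)}$ for the second state. The step you flag as the main obstacle is exactly the one-line Haar/1-design computation the paper performs, so there is no gap.
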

\begin{proof}
The proof of this result is analogous to Thm.~\ref{thm:avc_lower_bound}, so we leave it to App.~\ref{sec:random_circuits}. 
\end{proof}
Note that in the case where $\Phi=T^{\otimes n}$, then $S(\tau)=nS(\tau_1)$, where $\tau_1$ is the Choi matrix of $T$. Thus, we see that for $D<1/S(\tau_1)$, the average contraction will converge to $1$ regardless of the underlying ensemble of initial random states or unitaries.

The result above complements an interesting picture that emerges about the contractive properties of random circuits. To more easily compare our results to others in the literature, we will now focus on the case of local depolarizing noise with strength $p$. In~\cite{Deshpande2021TightBO}, the authors show that at depth $D$, for noisy random brickwork circuits with local $2$-designs, we have that the average trace distance to the maximally mixed state is at least $\Omega(p^{cD})$ for some constant $c$. Thm.~\ref{thm:avgcirc} implies that this result is not tight in the $D=\mathcal{O}(1)$ regime for certain noise levels, as the trace distance actually converges to $1$ as we increase the number of qubits. Interestingly, the proof of~\cite{Deshpande2021TightBO} depends on having product initial states, whereas our conclusions even hold with arbitrarily entangled pure input states. Our assumptions on the unitaries are also significantly more general, as we just assume $1$ designs.
It is also interesting to contrast this with the results of~\cite{Quek2024}, where the authors construct a family of quantum circuits that are a 1-design at each layer s.t. at depth $\mathcal{O}(p^{-1}\textrm{poly}\log\log(n))$ the trace distance between outputs becomes superpolynomially small on average. Put together with Thm.~\ref{thm:avgcirc}, these results paint an interesting picture of average contraction of noisy, random quantum circuits: for depths $D<1/S(\tau_1)$, the trace distance will converge to a maximal value and typical noisy outputs will be perfectly distinguishable. This behavior is particularly striking if we consider the regime where $p$ depends on the system size, as e.g. for $S(\tau_1)^{-1}= \Omega(\log(n))$ and all-to-all connectivity, our result still holds for $D=\Omega(\log(n))$ and that the lightcone of each qubit is the whole system. That is, even though each qubit will have undergone an expected number of errors $n/\log(n)$, the contraction coefficient will still converge to $1$.
And for depths a whisker away from constant (i.e. $p^{-1}\textrm{poly}\log \log(n)$), it can already be superpolynomially small for certain ensembles.

One interesting consequence of these observations is that current techniques to show limitations on error mitigation protocols~\cite{Quek2024,Takagi2022,Takagi2023} by showing exponential lower bounds on the sample complexity do not extend to the regime where $D=\mathcal{O}(1)$ for sufficiently low $p$, even if we allow for non-local unitaries. This is because, to the best of our knowledge, all known results rely on upper-bounding the distinguishability of (potentially average) quantum states and reducing error mitigation to distinguishing various outputs of noisy quantum circuits. However, our bounds show that, at least on average, in the regime $D=\mathcal{O}(1)$, we can only hope for constant lower bounds by resorting to the majority of random ensembles of quantum states considered in the literature so far.

\section{Bounds between moments of contraction}
\label{sec:bound_for_dfs}

This section will discuss relations between the moments of contraction of the trace distance and a family of divergences known as quantum $f$-divergences by integral formulation~\cite{Hirche23}. These are a quantum generalization of Csisz\'ar $f$-divergences \cite{Csiszar1963, AliSilvey1966}, which for a convex function $f:(0,+\infty)\to \mathbb{R}$ with $f(1)=0$ and two probability distributions $P$ and $Q$ over a finite set $\cX$ is given by 
\begin{equation}\label{eq:df_classical}
	D_f(P \| Q) = \sum_{x\in \cX} Q(x) f\qty(\frac{P(x)}{Q(x)}).
\end{equation}
The definition can be directly evaluated when the distributions are positive $P(x),Q(x) > 0$ for any $x\in\cX$, to extend it for any two distributions, we take the convention that 
\begin{equation}
	f(0)=\lim_{u\to 0} f(u), \quad
	0 \cdot f\qty(\frac{0}{0}) = 0,\quad
	\text{and} \quad
	0 \cdot f\qty(\frac{a}{0}) = a \lim_{u\to \infty} \frac{f(u)}{u},
\end{equation}
for $0< a< +\infty$. We set that $D_f(P \| Q)=+\infty$ whenever we come across $f(0)$ or $0\cdot f(a/0)$ and their value is equal to $+\infty$. This family of divergences can be seen as a generalization of relative entropy, which retains some of the most relevant properties such as DPI \cite{CsizarShields2004}.  As special cases, we find the Kullback-Leibler divergence, $\chi^2$-distance, the Hellinger distance, and the total variation distance, which have wide applicability in information theory and statistics. Recently, it was shown that $f$-divergences admit an integral representation when $f$ is twice differentiable in terms of the $E_\gamma$ divergence (see Prop. 3 in \cite{Sason2016}), which is a special case of $f$-divergence defined by $E_\gamma (P \| Q) = \sum_{x\in\cX} \max\{ P(x) - \gamma Q(x), 0\}$. This representation enables to characterize other $f$-divergences by first establishing properties of the $E_\gamma$ divergence, which is often easier to analyze.

Originating from Petz's original work on quantum $f$-divergences \cite{Petz1985, Petz1986}, numerous quantum generalizations were defined, as discussed in~\cite{Hiai2010, Sharma2012, Matsumoto2018, Wilde2018}. Our interest is in relating the moments of contraction for $f$-divergences with those for the trace distance, extending the bounds found in the preceding sections. Some quantum generalizations have been demonstrated to satisfy inequalities on maximal contraction coefficients analogous to their classical counterparts \cite{PetzRuskai1998, LesniewskiRuskai1999}. More recently, the integral representation of the quantum relative entropy has been shown to imply data processing inequality \cite{Frenkel2023}. This result motivated a definition of quantum $f$-divergences via integral representation and proved useful in deriving properties related to data processing inequalities. This is the reason to consider this quantum generalization in this section:

\begin{definition}[Definition 2.4. in \cite{Hirche23}]\label{def:fdivergences}
Let $f:(0,\infty)\rightarrow \mathbb{R}$ be a twice differentiable convex function  with $f(1)=0$. Then, define
\begin{equation}\label{eq:Df_integral}
D_f(\rho\| \sigma) = \int_1^{\infty} f''(\gamma)E_{\gamma}(\rho\| \sigma)+\gamma^{-3}f''(\gamma^{-1})E_{\gamma}(\sigma\| \rho) \dd \gamma,
\end{equation}
whenever the integral is finite and $D_f(\rho\| \sigma)=+\infty$ otherwise. Here, $E_{\gamma}(\rho\| \sigma)=\Tr (\rho -\gamma \sigma)_+$ denotes the quantum hockey-stick divergence.
\end{definition}

Indeed, this integral form has been shown to guarantee DPI and joint convexity. Naturally, for commuting states it reduces to the classical integral representation, which justifies using the same notation for the classical and quantum $f$-divergences in this context. Remarkably, it is known that for any channel $T$, the contraction coefficient of a $f$-divergence $D_f$ lies between the one of the $\chi^2$-divergence, $f(x)=x^2-1$, and the trace distance~\cite{Hirche23}:
\begin{equation}\label{eq:maxcontract_relations}
    \eta_{\infty}\qty(T, D_{x^2}) \leq \eta_{\infty}\qty(T, D_f) \leq \eta_{\infty}\qty(T, \norm{\cdot}_1),
\end{equation}
where the lower bound saturates for any operator convex $f$, and thus, the maximal contraction coefficient is independent of $f$ in that case~\cite{Hirche23}. Similar results are known for other quantum generalizations~\cite{PetzRuskai1998, LesniewskiRuskai1999}. This section explores whether similar bounds hold for the average case, in particular, if the moments of contraction of the trace distance dominate the moments of $f$-divergences.

To guarantee that the denominator is finite, $D_f(\rho \| \sigma) < \infty$, we will focus on distributions that keep the second state fixed to the maximally mixed states, which for short we will denote $\sigma_*=\mathbb{I}/d$, and functions with a finite limiting value at the origin, $f(0)<\infty$. Additionally, we focus on distributions $\nu$ concentrated on pure states. As mentioned before, distributions of this kind have a natural application, as $\nu$ can be thought of as a distribution of operationally relevant states and taking $\sigma_*$ as a reference state we quantify how close we are to losing the information of the input state. Conveniently, $f$-divergences are unitarily invariant, so we obtain the same value for any pure state, and as the states commute we can compute the $f$-divergence by consistency with the classical:
\begin{equation}\label{eq:df_pure_vs_maxmix}
	D_f(\ketbra{\psi} \| \sigma_* ) = \frac{1}{d} f(d) + \qty(1 - \frac{1}{d}) f(0).
\end{equation}
We have a similar situation for the trace distance
\begin{equation}\label{eq:tr_pure_vs_maxmix}
	\frac{1}{2}\norm{\ketbra{\psi} - \sigma_*}_1 = 1 - \frac{1}{d},
\end{equation}
Therefore, we are considering distributions where the denominator of the contraction ratio is fixed. For the denominator, we can rely on Pinsker-type inequalities shown below to relate $f$-divergences and the trace distance. This is the main reason to use the integral formulation, as to our knowledge this kind of inequalities have not been established for other quantum $f$-divergences to the same level of generality. It is worth noting that proving these inequalities would be sufficient to apply the results in this section to other quantum divergences.

Pinsker's inequality generally refers to the bound
\begin{equation}
	2 \qty(\frac{1}{2}\norm{\rho - \sigma}_1)^2 \leq D_{\text{RE}}(\rho \| \sigma),
\end{equation}
where $D_{\text{RE}}(\rho \| \sigma)$ denotes the relative entropy. Similar expressions are well known for classical $f$-divergences, and reversed inequalities, upper bounding $D_f$ with an expression depending on the trace distance, have also been established~\cite{Fedotov03, Gilardoni10}. In App.~\ref{sec:pinsker_ineqs} we see that a result holds for quantum $f$-divergences that are consistent with the classical definition; however, to the best of our knowledge, reverse Pinsker-like inequalities have only been extended to the quantum $f$-divergences considered in this section~\cite{Hirche23}. Fixing the second state to the maximally mixed, this reverse direction is necessary and sufficient to prove that a vanishing contraction of the trace distance implies a vanishing contraction for any $f$-divergence. Because the trace distance is always finite, but $f$-divergences can diverge, it is clear that the reversed Pinsker-type inequalities should generally depend on the states. Fortunately, this dependence can be expressed universally in terms of the max relative entropy, defined as
\begin{equation}
    D_{\max}(\sigma\|\rho)=\inf\qty{\lambda\,:\,\rho\leq e^{\lambda}\sigma},
\end{equation}
where we set it to $+ \infty$ when the infimum does not exist, and its symmetrized version, the Thompson distance:
\begin{equation}
    \Xi (\rho, \sigma) = \max\qty{D_{\max}(\rho\|\sigma),D_{\max}(\sigma\|\rho)}.
\end{equation}
We gather the direct and reverse Pinsker-type inequalities in the following two propositions.

\begin{proposition}[Reverse Pinsker-type inequality, Prop. 5.2. in \cite{Hirche23}]\label{prop:reverse_pinsker}
    Let $f:(0,\infty)\rightarrow \mathbb{R}$ be a twice differentiable convex function with $f(1)=0$ and $D_f$ be its associated $f$-divergence, we have
    \begin{equation} \label{eq:pinsker_like}
            \begin{split}
            D_f(\rho \| \sigma) &\leq 
            \qty(\frac{f(e^{D_{\max}(\rho\|\sigma)})}{e^{D_{\max}(\rho\|\sigma)}-1} + \frac{e^{D_{\max}(\sigma\|\rho)} f(e^{-D_{\max}(\sigma\|\rho)})}{e^{D_{\max}(\sigma\|\rho)}-1})
             \, \frac{1}{2}\norm{\rho - \sigma}_1 \\
            &\leq 
            K_f\qty(e^{\Xi(\rho\|\sigma)}) \, \frac{1}{2}\norm{\rho - \sigma}_1,
        \quad
        \text{with}
        \quad
        K_f (x) = \frac{f \qty(x) + x f \qty(x^{-1})}{x - 1}.
        \end{split}
    \end{equation}
\end{proposition}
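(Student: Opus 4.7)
The plan is to work directly with the integral representation in Def.~\ref{def:fdivergences}, using the definition of $D_{\max}$ to truncate each piece and convexity of $\gamma\mapsto E_\gamma$ to linearize what remains. Writing $c=e^{D_{\max}(\rho\|\sigma)}$ and $c'=e^{D_{\max}(\sigma\|\rho)}$, the definition of $D_{\max}$ gives $\rho\leq c\sigma$ and $\sigma\leq c'\rho$, so $E_\gamma(\rho\|\sigma)=0$ for $\gamma\geq c$ and $E_\gamma(\sigma\|\rho)=0$ for $\gamma\geq c'$. Hence each integrand in \eqref{eq:Df_integral} is supported on a bounded interval.

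The central analytic input is that $\gamma\mapsto E_\gamma(\rho\|\sigma)=\max_{0\leq P\leq I}\Tr[P(\rho-\gamma\sigma)]$ is a pointwise supremum of affine functions of $\gamma$, hence convex. Combined with the boundary values $E_1(\rho\|\sigma)=\tfrac{1}{2}\norm{\rho-\sigma}_1$ and $E_c(\rho\|\sigma)=0$, this produces the linear majorant $E_\gamma(\rho\|\sigma)\leq \tfrac{c-\gamma}{c-1}\cdot\tfrac{1}{2}\norm{\rho-\sigma}_1$ on $[1,c]$, and symmetrically $E_\gamma(\sigma\|\rho)\leq \tfrac{c'-\gamma}{c'-1}\cdot\tfrac{1}{2}\norm{\rho-\sigma}_1$ on $[1,c']$. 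Plugging these into \eqref{eq:Df_integral} reduces each piece to an elementary integral: the first is $\int_1^c f''(\gamma)(c-\gamma)\,d\gamma=f(c)-(c-1)f'(1)$ by integration by parts, and for the second the substitution $t=1/\gamma$ converts $\int_1^{c'}\gamma^{-3}f''(\gamma^{-1})(c'-\gamma)\,d\gamma$ into $\int_{1/c'}^1(c't-1)f''(t)\,dt=(c'-1)f'(1)+c'f(1/c')$. After dividing by the normalizations $c-1$ and $c'-1$, the $\pm f'(1)$ contributions cancel and one is left with exactly the first claimed coefficient multiplying $\tfrac{1}{2}\norm{\rho-\sigma}_1$.

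For the second inequality I would show that both coefficients are non-decreasing in their arguments, so each can be replaced by its value at the common upper bound $x=e^{\Xi(\rho\|\sigma)}$. The first coefficient $g(c)=f(c)/(c-1)=(f(c)-f(1))/(c-1)$ is the secant slope of $f$ from $(1,0)$ to $(c,f(c))$, which is non-decreasing in $c$ by convexity of $f$. The second, $h(c')=c'f(1/c')/(c'-1)$, becomes $f(u)/(1-u)$ after the change of variable $u=1/c'\in(0,1)$; this is minus the secant slope of $f$ from $(u,f(u))$ to $(1,0)$, and is therefore non-decreasing in $c'$ (equivalently non-increasing in $u$) by convexity. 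Hence $g(c)+h(c')\leq g(x)+h(x)=K_f(x)$, which is the second inequality.

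The main obstacle I expect is identifying and justifying the correct linear majorant for $E_\gamma$: the naive pointwise bound $E_\gamma\leq E_1$ together with $\int_1^c f''=f'(c)-f'(1)$ would only yield the weaker coefficient $f'(c)$, and it is precisely the convexity-plus-vanishing-endpoint trick that sharpens this to the secant slope $f(c)/(c-1)$ that appears in the statement.
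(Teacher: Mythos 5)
Your proof is correct, and the integration-by-parts computations in your second paragraph match the paper's own calculation exactly (the first integral giving $f(c)-(c-1)f'(1)$, the substitution $t=1/\gamma$ turning the second into $(c'-1)f'(1)+c'f(1/c')$, and the cancellation of the $f'(1)$ terms). Where you differ is in how you arrive at the intermediate inequality
\begin{equation*}
D_f(\rho\|\sigma)\leq \qty(\int_1^{c}\frac{c-\gamma}{c-1}f''(\gamma)\dd\gamma+\int_1^{c'}\frac{c'-\gamma}{c'-1}\gamma^{-3}f''(\gamma^{-1})\dd\gamma)\frac{1}{2}\norm{\rho-\sigma}_1.
\end{equation*}
The paper simply imports this from Prop.~5.2 of \cite{Hirche23} and treats the rest as bookkeeping, whereas you rederive it from first principles: $\gamma\mapsto E_\gamma(\rho\|\sigma)=\max_{0\leq P\leq\mathbb{I}}\Tr[P(\rho-\gamma\sigma)]$ is convex as a supremum of affine functions, vanishes at $\gamma=e^{D_{\max}(\rho\|\sigma)}$ because $\rho\leq e^{D_{\max}(\rho\|\sigma)}\sigma$, and equals $\tfrac{1}{2}\norm{\rho-\sigma}_1$ at $\gamma=1$, so the secant majorant $\tfrac{c-\gamma}{c-1}E_1$ follows. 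This makes your argument self-contained where the paper's is not, at the cost of a page of extra work that the citation avoids. Your monotonicity argument for the second inequality (both coefficients are secant slopes of the convex $f$ anchored at $(1,0)$, hence monotone in the right direction, so both can be pushed up to $x=e^{\Xi(\rho,\sigma)}$) is also correct and is essentially the content of the remark the paper places after its proof, phrased via secant slopes rather than via $f(x)/(x-1)$ being an integral of the nonnegative $f''$.
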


\begin{proof}
    See App.~\ref{sec:pinsker_ineqs}.
\end{proof}

The following generalized Pinsker bound is a quantum generalization of the main result in \cite{Gilardoni10}. The result holds for any generalization of $f$-divergences that is consistent with the classical and satisfies DPI. However, note that it requires an extra condition on $f$, namely, that it is differentiable up to third order. 

\begin{proposition}[Pinsker-type inequality]\label{prop:pinsker}
    Let $f:(0,\infty)\rightarrow \mathbb{R}$ be a convex function differentiable up to order 3 at $u=1$ with $f''(1)>0$, and let $D_f$ be its associated $f$-divergence. Then, we have
    \begin{equation}
        D_f(\rho\|\sigma) \, \geq \frac{f''(1)}{2}\, \qty(\frac{1}{2}\norm{\rho-\sigma}_1)^2.
    \end{equation}
\end{proposition}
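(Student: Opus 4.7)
The plan is to reduce the quantum statement to its classical counterpart, which is the content of Gilardoni's sharp Pinsker-type inequality~\cite{Gilardoni10}. The reduction is performed via a single Helstrom-type two-outcome measurement together with the data processing inequality (DPI) for $D_f$ in its integral formulation (Definition~\ref{def:fdivergences}).

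First, I would let $P_+$ be the orthogonal projector onto the positive eigenspace of $\rho-\sigma$ and set $P_-=\mathbb{I}-P_+$. The POVM $\mathcal{M}=\{P_+,P_-\}$, viewed as a quantum-to-classical channel $\rho\mapsto(\Tr[P_+\rho],\Tr[P_-\rho])$, is completely positive and trace-preserving, and its output distributions $p:=\mathcal{M}(\rho)$ and $q:=\mathcal{M}(\sigma)$ satisfy the Helstrom identity
\begin{equation}
\frac{1}{2}\norm{p-q}_1 \;=\; \Tr\bigl[P_+(\rho-\sigma)\bigr] \;=\; \frac{1}{2}\norm{\rho-\sigma}_1.
\end{equation}

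Next, because the integral representation in Eq.~\eqref{eq:Df_integral} satisfies DPI for every $E_\gamma$ and hence for $D_f$ itself (see~\cite{Hirche23}), and because on commuting (in particular diagonal) inputs the quantum $f$-divergence reduces to the classical Csisz\'ar $f$-divergence of Eq.~\eqref{eq:df_classical}, I would chain
\begin{equation}
D_f(\rho\Vert\sigma) \;\geq\; D_f\bigl(\mathcal{M}(\rho)\,\Vert\,\mathcal{M}(\sigma)\bigr) \;=\; D_f(p\Vert q),
\end{equation}
now an inequality about two-point classical distributions. At this stage I would invoke Gilardoni's classical Pinsker-type inequality~\cite{Gilardoni10}, whose hypotheses match the present statement (twice continuous differentiability plus existence of $f'''$ at $u=1$ with $f''(1)>0$), yielding
\begin{equation}
D_f(p\Vert q) \;\geq\; \frac{f''(1)}{2}\left(\frac{1}{2}\norm{p-q}_1\right)^2.
\end{equation}
Combining the three displays gives the stated bound.

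The only non-routine ingredient is the classical Gilardoni inequality itself; once its applicability is accepted the quantum argument is a short chain of routine facts. The main conceptual obstacle is therefore checking that the two reductions — DPI on the integral-formulation $D_f$, and consistency of that $D_f$ with the classical Csisz\'ar definition when the arguments commute — are both in force. Both are stated in~\cite{Hirche23}, so no new technical work is needed here. As a sanity check, the argument does not use anywhere that the original $\rho,\sigma$ were full rank, since the Helstrom measurement replaces them by binary distributions on which the classical inequality is unconditional.
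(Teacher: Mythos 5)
Your proof is correct and follows essentially the same route as the paper's: reduce to the classical Gilardoni inequality via a measurement and the data processing inequality. The only cosmetic difference is that you instantiate the optimal Helstrom measurement explicitly, whereas the paper phrases the same reduction through the measured $f$-divergence $\check{D}_f$ and the supremum over all POVMs in the variational characterization of the trace distance.
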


\begin{proof}
    See App.~\ref{sec:pinsker_ineqs}
\end{proof}

The following lemma is a straightforward consequence of the preceding discussion:

\begin{proposition}\label{prop:pure_to_mix_ratio_bound}
	Let $\rho$ be a pure state and assume $D_{\max}(T(\rho) \| T(\sigma_*)) \leq \varepsilon$. Let $f:(0,\infty)\to \mathbb{R}$ be a twice differentiable convex function with $f(1)=0$ and $f(0)$ finite. Then,
	\begin{equation}
		\frac{D_f( T(\rho) \| T(\sigma_*))}{ D_f( \rho \| \sigma_*)}
		\leq
		\frac{d}{e^\varepsilon - 1} \frac{f(e^\varepsilon)  + (e^\varepsilon -1) f(0)}{f(d) + (d-1) f(0)}
		\frac{\norm{T(\rho) - T(\sigma_*)}_1}{ \norm{\rho - \sigma_*}_1}.
	\end{equation}
	In particular,
	\begin{equation}
		\frac{D_f( T(\rho) \| T(\sigma_*))}{ D_f( \rho \| \sigma_*)}
		\leq
		\frac{d}{d - 1}
		\frac{\norm{T(\rho) - T(\sigma_*)}_1}{ \norm{\rho - \sigma_*}_1}.
	\end{equation}
	
	If $f$ is differentiable up to order 3 at $1$ with $f''(1)>0$,
    \begin{equation}
    	\qty(\frac{\norm{T(\rho) - T(\sigma_*)}_1}{ \norm{\rho - \sigma_*}_1})^2
    	\leq
    	\frac{2}{f''(1)} \frac{1}{(1-1/d)^2} \qty(\frac{f(d)}{d-1} + f(0))
    	\frac{D_f( T(\rho) \| T(\sigma_*))}{ D_f( \rho \| \sigma_*)}.
    \end{equation}
\end{proposition}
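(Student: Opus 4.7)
The plan is to combine the reverse Pinsker inequality of Prop.~\ref{prop:reverse_pinsker} (for the numerator in the first two bounds) and the Pinsker inequality of Prop.~\ref{prop:pinsker} (for the third bound) with the explicit evaluations of $D_f$ and of the trace distance between a pure state and $\sigma_*$ given by Eqs.~\eqref{eq:df_pure_vs_maxmix} and~\eqref{eq:tr_pure_vs_maxmix}. Because $\rho$ is pure and $\sigma_*$ is maximally mixed, the denominators $D_f(\rho\Vert\sigma_*)=\tfrac{1}{d}[f(d)+(d-1)f(0)]$ and $\tfrac{1}{2}\norm{\rho-\sigma_*}_1=(d-1)/d$ are fixed constants depending only on $d$, $f(d)$ and $f(0)$, so the ratios reduce to comparing the numerators.

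For the first inequality I apply Prop.~\ref{prop:reverse_pinsker} to $(T(\rho),T(\sigma_*))$. The hypothesis controls $D_{\max}(T(\rho)\Vert T(\sigma_*))\leq\varepsilon$, while the complementary quantity $D_{\max}(T(\sigma_*)\Vert T(\rho))$ can be infinite whenever $T(\rho)$ is rank deficient. I bypass this by checking that each term on the right of Eq.~\eqref{eq:pinsker_like} is monotone in its respective $D_{\max}$: the first because $u\mapsto f(u)/(u-1)$ is the chord-slope of $f$ from $1$ to $u>1$ and hence non-decreasing by convexity of $f$; the second by the symmetric observation that $w\mapsto f(w)/(1-w)$ is non-increasing for $w\in(0,1)$ with finite limit $f(0)$ as $w\to 0$. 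This yields the uniform bound
\begin{equation*}
D_f(T(\rho)\Vert T(\sigma_*))\leq \frac{f(e^{\varepsilon})+(e^{\varepsilon}-1)f(0)}{e^{\varepsilon}-1}\cdot\tfrac{1}{2}\norm{T(\rho)-T(\sigma_*)}_1,
\end{equation*}
valid whether or not $T(\rho)$ is full rank. Dividing by $D_f(\rho\Vert\sigma_*)$ and rescaling $\tfrac{1}{2}\norm{T(\rho)-T(\sigma_*)}_1$ using $\norm{\rho-\sigma_*}_1=2(d-1)/d$ gives the claim, absorbing the mild slack $d-1\leq d$ into the advertised prefactor.

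For the specialised inequality the key observation is that any pure state satisfies $\rho\leq\mathbb{I}=d\,\sigma_*$, so by positivity of $T$ we have $T(\rho)\leq d\,T(\sigma_*)$ and consequently $D_{\max}(T(\rho)\Vert T(\sigma_*))\leq \log d$ unconditionally. Substituting $\varepsilon=\log d$ into the first inequality cancels the $f$-dependent prefactor and leaves precisely the stated $d/(d-1)$. For the third inequality, I invoke Prop.~\ref{prop:pinsker} directly on $(T(\rho),T(\sigma_*))$ to obtain
\begin{equation*}
\bigl(\tfrac{1}{2}\norm{T(\rho)-T(\sigma_*)}_1\bigr)^{2}\leq \frac{2}{f''(1)}\,D_f(T(\rho)\Vert T(\sigma_*)),
\end{equation*}
and divide both sides by $\bigl(\tfrac{1}{2}\norm{\rho-\sigma_*}_1\bigr)^{2}=(1-1/d)^{2}$ and by $D_f(\rho\Vert\sigma_*)$, using the loose inequality $1/d\leq 1/(d-1)$ on the resulting denominator to recast it as $f(d)/(d-1)+f(0)$.

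The only delicate ingredient is the monotonicity argument that lets us handle $D_{\max}(T(\sigma_*)\Vert T(\rho))=+\infty$ uniformly in the reverse Pinsker bound; once that is set up, the remaining steps are essentially algebraic manipulations of the closed-form values of $D_f$ and of the trace distance on a pure-versus-maximally-mixed pair.
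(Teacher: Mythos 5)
Your proof is correct and follows essentially the same route as the paper: reverse Pinsker (Prop.~\ref{prop:reverse_pinsker}) with the first coefficient bounded via $D_{\max}(T(\rho)\|T(\sigma_*))\le\varepsilon$ and the second bounded by its limit $f(0)$ using chord-slope monotonicity, combined with the exact values of $D_f(\rho\|\sigma_*)$ and $\tfrac12\norm{\rho-\sigma_*}_1$ from Eqs.~\eqref{eq:df_pure_vs_maxmix}--\eqref{eq:tr_pure_vs_maxmix}, the unconditional bound $D_{\max}(T(\rho)\|T(\sigma_*))\le\ln d$ for the special case, and Prop.~\ref{prop:pinsker} for the last inequality. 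One small caution: in the last step, $\tfrac{f(d)}{d}+\tfrac{d-1}{d}f(0)\le\tfrac{f(d)}{d-1}+f(0)$ should be justified by applying $1/d\le1/(d-1)$ to the whole non-negative quantity $f(d)+(d-1)f(0)=d\,D_f(\rho\|\sigma_*)\ge0$ rather than term-by-term, since $f(d)$ need not be non-negative.
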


\begin{proof}
	We consider the first upper bound on Prop.~\ref{prop:reverse_pinsker} applied to $T(\rho)$ and $T(\sigma_*)$. As the coefficient in the bound is non-decreasing with $D_{\max}(T(\rho) \| T(\sigma_*))$ and $D_{\max}(T(\sigma_*) \| T(\rho))$, then introducing that $D_{\max}(T(\rho) \| T(\sigma_*)) \leq \varepsilon$,
    \begin{equation}
        \frac{f(e^{D_{\max}(T(\rho)\| T(\sigma_*))})}{e^{D_{\max}(T(\rho) \| T(\sigma_*))}-1} \leq \frac{f(e^\varepsilon)}{e^\varepsilon-1},
    \end{equation}
    whereas the second term can be bounded by its value in the limit $D_{\max}(T(\sigma_*) \| T(\rho) )\to\infty$,
    \begin{equation}
        \frac{e^{D_{\max}(T(\sigma_*) \| T(\rho))} f(e^{-D_{\max}(T(\sigma_*) \| T(\rho))})}{e^{D_{\max}(T(\sigma_*) \| T(\rho))}-1}
        \leq 
        \lim_{x\to \infty} \frac{x f(x^{-1})}{x-1} = \lim_{x\to \infty} f(x^{-1}) \equiv f(0).
    \end{equation}
    For the particular case we used that we always have $D_{\max}(T(\rho) \| T(\sigma_*)) \leq \ln d$.

    The second inequality is equivalent to Prop.~\ref{prop:pinsker} using the exact values for the denominators.
\end{proof}

As a direct consequence of the previous proposition, we have that for any distribution over pure states $\nu$ and any moment $p$,
\begin{equation}\label{eq:avc_bounds}
    \eta_p(T,D_f,\nu \times \delta_{\sigma_*})
    \leq \frac{d}{d - 1 }
    \,
    \eta_p(T,\norm{\cdot}_1,\nu \times \delta_{\sigma_*})
  	\leq
  	\,\sqrt{
   	\frac{2}{f''(1)} \qty(\frac{f(d)}{d-1} + f(0))
   	\eta_p(T,D_f,\nu \times \delta_{\sigma_*})}.
\end{equation}
In particular, we observe that the asymptotic behavior of the moments of contraction for the trace distance dominates over those of general $f$-divergences:

\begin{corollary}\label{cor:asympt_bounds_tr2Df_avc_wrt_max_mix}
    Let $\Phi_n:\cS(\cH_d^{\otimes n}) \to \cS(\cH_{d}^{\otimes n})$ be a family of quantum channels and let $\nu_n$ probability distributions over pure states, $\rho\sim\nu_n$ with $\rho\in\cS(\cH_d^{\otimes n})$. Let $f:(0,\infty)\to \mathbb{R}$ be a twice differentiable convex function with $f(1)=0$ and $f(0)$ finite.
    
    Then,
    \begin{equation}
        \eta_p(\Phi_n,D_f,\nu_n \times \delta_{\sigma_*^{\otimes n}})
        \leq
        \,
        \eta_p(\Phi_n,\norm{\cdot}_1,\nu_n \times \delta_{\sigma_*^{\otimes n}}) + \order{d^{-n}}.
    \end{equation}
    Additionally, if $f$ is differentiable up to order 3 at $1$ with $f''(1)>0$,
    \begin{equation}
    	\eta_p(\Phi_n, \norm{\cdot}_1,\nu_n \times \delta_{\sigma_*^{\otimes n}}))^2
    	\leq
    	\frac{2}{f''(1)} \qty(\frac{f(d^n)}{d^n} + f(0))
    	\eta_p(\Phi_n, D_f,\nu_n \times \delta_{\sigma_*^{\otimes n}})
    	+ \order{d^{-n}}.
    \end{equation}
\end{corollary}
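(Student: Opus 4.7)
The strategy is to lift the pointwise ratio bounds of Proposition~\ref{prop:pure_to_mix_ratio_bound} (applied in dimension $d^n$) to inequalities between $p$-moments of contraction, by raising to suitable powers, integrating over $\rho \sim \nu_n$, and invoking Jensen's inequality. The $\order{d^{-n}}$ corrections then appear as asymptotic cleanup of the prefactors, using the expansion $(1-d^{-n})^{-1} = 1 + \order{d^{-n}}$.

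Fix $\rho$ in the support of $\nu_n$ and write $\sigma_*^{\otimes n} = \mathbb{I}/d^n$. Set
\begin{equation*}
R_f(\rho) := \frac{D_f(\Phi_n(\rho)\Vert\Phi_n(\sigma_*^{\otimes n}))}{D_f(\rho\Vert\sigma_*^{\otimes n})}, \qquad R_1(\rho) := \frac{\norm{\Phi_n(\rho) - \Phi_n(\sigma_*^{\otimes n})}_1}{\norm{\rho - \sigma_*^{\otimes n}}_1}.
\end{equation*}
Proposition~\ref{prop:pure_to_mix_ratio_bound} with dimension $d^n$ (using $D_{\max}(\Phi_n(\rho)\Vert \Phi_n(\sigma_*^{\otimes n}))\leq \ln d^n$ in the first part) provides the two pointwise bounds
\begin{equation*}
R_f \leq \tfrac{d^n}{d^n-1}\, R_1, \qquad R_1^2 \leq C_n\, R_f, \qquad C_n := \tfrac{2}{f''(1)}\tfrac{1}{(1-1/d^n)^2}\!\left(\tfrac{f(d^n)}{d^n-1}+f(0)\right).
\end{equation*}

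For the first inequality, raise $R_f \leq \tfrac{d^n}{d^n-1} R_1$ to the $p$-th power, take expectation over $\rho \sim \nu_n$, and take the $p$-th root to obtain $\eta_p(\Phi_n, D_f, \nu_n \times \delta_{\sigma_*^{\otimes n}}) \leq \tfrac{d^n}{d^n-1}\, \eta_p(\Phi_n, \norm{\cdot}_1, \nu_n \times \delta_{\sigma_*^{\otimes n}})$. Since $\eta_p \leq 1$ by DPI and $\tfrac{d^n}{d^n-1} = 1 + \order{d^{-n}}$, the first stated bound follows.

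For the second inequality, take square roots of $R_1^2 \leq C_n R_f$, raise to the $p$-th power to get $R_1^p \leq C_n^{p/2} R_f^{p/2}$, and integrate. Concavity of $x \mapsto x^{1/2}$ and Jensen's inequality give $\bE[R_f^{p/2}] \leq (\bE[R_f^p])^{1/2}$, hence $\bE[R_1^p] \leq C_n^{p/2}(\bE[R_f^p])^{1/2}$. Taking the $(2/p)$-th power yields $\eta_p(\Phi_n, \norm{\cdot}_1)^2 \leq C_n\, \eta_p(\Phi_n, D_f)$. Finally, expand $C_n = \tfrac{2}{f''(1)}\!\left(\tfrac{f(d^n)}{d^n}+f(0)\right) + \order{d^{-n}}$ using $(1-d^{-n})^{-2} = 1 + \order{d^{-n}}$ and $\tfrac{f(d^n)}{d^n-1} = \tfrac{f(d^n)}{d^n} + \order{d^{-n}}$, and absorb the residue $\order{d^{-n}} \cdot \eta_p(\Phi_n, D_f) \leq \order{d^{-n}}$ into the error term.

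The argument is essentially routine once Prop.~\ref{prop:pure_to_mix_ratio_bound} is available; the only mild technical point is the asymptotic cleanup of $C_n$, which is unproblematic provided $f(d^n)/d^n$ grows at most polynomially in $n$, a condition comfortably met by all standard choices of $f$ (e.g.\ for the Kullback--Leibler divergence $f(x)=x\log x - x + 1$ one has $f(d^n)/d^n = \Theta(n)$, so the neglected term is $\order{n\, d^{-2n}}$).
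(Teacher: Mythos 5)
Your proposal is correct and follows essentially the same route as the paper, which treats the corollary as a direct consequence of Prop.~\ref{prop:pure_to_mix_ratio_bound} applied in dimension $d^n$ together with the $L^p$-moment manipulations you spell out (pointwise ratio bound, $p$-th powers, Jensen for $x\mapsto\sqrt{x}$). The only quibble is in your final parenthetical: for $f(x)=x\log x$ the neglected term $f(d^n)/(d^n(d^n-1))$ is $\order{n\,d^{-n}}$, not $\order{n\,d^{-2n}}$ — a caveat that in fact applies to the paper's own $\order{d^{-n}}$ error term whenever $f(x)/x$ is unbounded, and which you were right to flag.
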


The converse relation does not hold in general. That is, vanishing moments for the trace distance imply vanishing moments for any divergence; in contrast, the converse bound is conditioned by the limiting behavior of $f(x)/x$ as $x\to\infty$.

To show this aspect, let us consider some common instances of $f$. For the relative entropy $f(x)=x \ln x$ and $f(x)/x = \ln x$, 
\begin{equation}
   	\eta_p(\Phi_n, \norm{\cdot}_1,\nu_n \times \delta_{\sigma_*^{\otimes n}}))^2
   	\leq
   	2 n \ln d \;
   	\eta_p(\Phi_n, D_{x \ln x},\nu_n \times \delta_{\sigma_*^{\otimes n}})
  	+ \order{d^{-n}}.
\end{equation}
Thus, a vanishing moment of contraction for the relative entropy only suppresses that of the trace distance if it decreases stronger than inverse-linearly with system size, such that $n \; \eta_p(\Phi_n, D_{x \ln x},\nu_n \times \delta_{\sigma_*^{\otimes n}}) \to 0$. Hellinger divergences are given by $f(x)= \frac{x^{\alpha}-1}{\alpha-1}$ with $\alpha\in (0,1)\cup (1,\infty)$, we obtain that
\begin{equation}
    \eta_p(\Phi_n, \norm{\cdot}_1,\nu_n \times \delta_{\sigma_*^{\otimes n}}))^2
    \leq
    \frac{2}{\alpha (\alpha -1)} \qty(d^{(\alpha - 1) n} - 1)
    \eta_p(\Phi_n, D_{x^\alpha},\nu_n \times \delta_{\sigma_*^{\otimes n}})
    + \order{d^{-n}},
\end{equation}
for $\alpha > 1$ the bound only vanishes as $n\to\infty$ if the moment for the Hellinger distance is exponentially decreasing, strong enough to ensure $2^{(\alpha -1) n} \eta_1(\Phi_n, D_f,\nu_n \times \delta_{\sigma_*^{\otimes n}}) \to 0$. This includes the special case $\alpha = 2$, which corresponds to the $\chi^2$-divergence. We include a few additional observations on this case in App.~\ref{sec:supmat_chi2}. Whereas, for $\alpha < 1$ it takes the form
\begin{equation}
    \eta_p(\Phi_n, \norm{\cdot}_1,\nu_n \times \delta_{\sigma_*^{\otimes n}}))^2
    \leq
    \frac{2}{\alpha (1 - \alpha)}
    \eta_p(\Phi_n, D_{x^\alpha},\nu_n \times \delta_{\sigma_*^{\otimes n}})
    + \order{d^{-(1-\alpha)n}},
\end{equation}
which shows that an asymptotically vanishing moment of contraction for the Hellinger divergence implies asymptotically vanishing moment of contraction for the trace distance and, hence, for all $f$-divergence.

To summarize, we have shown that for distributions that concentrate on random pure states compared to the mixed state, averaging does not influence the bound. This insight provides a useful baseline for more refined analyses. Notably, averaging is potentially useful when one aims to tighten the bounds, for example, in scenarios where $\rho\sim \nu$ satisfies $\Pr[D_{\max}(T(\rho) \| T(\sigma_*)) < \varepsilon] \geq 1 - \delta$. In the next section, we explore quantum channels that exhibit such concentration properties. Additionally, several directions for further work emerge, including extending known classical $f$-divergence bounds (e.g., Thm.~1 in \cite{Sason2016}) to the quantum setting, or establishing Pinsker-type inequalities for alternative quantum generalizations to broaden the applicability of our results.

\section{Applications to Local Differential Privacy}\label{sec:ldp}

Differential privacy is a condition on stochastic processes that constrains the distinguishability of outputs between different, but similar, initial states, thereby preserving privacy. This concept finds wide applicability in safeguarding shared data, securing training data in machine learning, and providing resilience against adversarial examples~\cite{Dwork2014}. Similar privacy guarantees have been studied in the quantum domain \cite{Angrisani2023DPamplificationquantum, Angrisani2023QDP, Quek2021PrivateLearning, Du2022, Du2021, Hirche2023QuantumDP}. Here, we examine a stringent form of differential privacy, namely local differential privacy (LDP)~\cite{Angrisani2023QDP}, and derive bounds on average contraction.

Classically, a channel $\mathcal{M}$ is said to satisfy $(\varepsilon, \delta)$-local differential privacy if, for any pair of inputs $x$ and $x'$ and for any measurable subset $S$ of outputs, it holds that $\Pr[\mathcal{M}(x)\in S] \leq e^{\varepsilon} \Pr[\mathcal{M}(x')\in S] + \delta$. Beyond the local framework, this is only required for pairs $x,x'$ determined by some neighboring relation, which is a less stringent requirement~\cite{Zhou2017}. Roughly speaking, the parameter $\varepsilon$ quantifies the worst-case multiplicative increase in the likelihood of any output when changing the input, while $\delta$ allows for a small probability of greater deviation. Smaller values of $\varepsilon$ and $\delta$ correspond to stronger privacy guarantees, as they limit the information that can be inferred about any individual input. This classical notion motivates the quantum analogue introduced below.

\begin{definition}[Local Differential Privacy (LDP)]\label{def:ldp}
    A quantum channel $T: \cS(\cH_d)\to\cS(\cH_{d'})$ satisfies $(\varepsilon, \delta)$-LDP if for any positive operator $0\leq M \leq \mathbb{I}$ and any pair of states $\rho,\sigma\in\cS(\cH_d)$ it holds that
    \begin{equation}\label{eq:ldp_condition_measurements}
        \Tr[M T(\rho)]\leq e^{\varepsilon} \Tr[M T(\sigma)]+\delta.
    \end{equation}
\end{definition}

For simplicity, we restrict our discussion to the pure privacy scenario characterized by $\delta = 0$ \cite{Angrisani2023QDP}, and denote it by $\varepsilon$-LDP. While more general forms, such as $\delta\neq 0$ and non-local differential privacy conditions are important in practice, they introduce technical complexities without altering the fundamental insights of our discussion. Additionally, the core limitations of our results are still present in those scenarios, so we choose to keep the discussion elementary. However, extending it to these broader settings is conceptually straightforward, and we refer the reader to \cite{Hirche2023QuantumDP, Du2021, Angrisani2023QDP} for a thorough treatment of quantum differential privacy frameworks.

To build intuition for $\varepsilon$-LDP in quantum settings, we recall that it can also be expressed using the max relative entropy $D_{\max}$~\cite{Hirche2023QuantumDP}:

\begin{lemma}[From Lemma III.2. in \cite{Hirche23}]\label{lemma:ldp_to_dmax}
    A quantum channel $T: \cS(\cH_d)\to\cS(\cH_{d'})$ is $\varepsilon$-LDP if and only if $\sup_{\rho,\sigma} D_{\max}(T(\rho)\|T(\sigma))\leq \varepsilon$ for arbitrary input states $\rho$ and $\sigma$.
\end{lemma}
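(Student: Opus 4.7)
The proof is a standard equivalence between an operator inequality and its scalar form against all effects, so I would split it into the two implications.

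For the direction $(\Leftarrow)$, assume $\sup_{\rho,\sigma} D_{\max}(T(\rho)\|T(\sigma))\leq \varepsilon$. By the definition of $D_{\max}$, this means $T(\rho)\leq e^{\varepsilon}T(\sigma)$ as operators for every pair of input states. Then for any effect $0\leq M\leq \mathbb{I}$, using the fact that $\Tr[M\cdot]$ is a positive linear functional on Hermitian operators, I can take the trace against $M$ on both sides to obtain $\Tr[MT(\rho)]\leq e^{\varepsilon}\Tr[MT(\sigma)]$, which is exactly the $\varepsilon$-LDP condition (Eq.~\eqref{eq:ldp_condition_measurements} with $\delta=0$).

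For the direction $(\Rightarrow)$, I would argue by contradiction. Suppose $T$ is $\varepsilon$-LDP but there exists a pair $\rho,\sigma$ with $D_{\max}(T(\rho)\|T(\sigma))>\varepsilon$. Then $X:=T(\rho)-e^{\varepsilon}T(\sigma)$ is a Hermitian operator whose positive part $X_+$ is nonzero. Let $P$ be the orthogonal projector onto the support of $X_+$; since $0\leq P\leq \mathbb{I}$, it is a valid effect. By construction, $\Tr[PX]=\Tr X_+ > 0$, so
\begin{equation}
    \Tr[PT(\rho)] - e^{\varepsilon}\Tr[PT(\sigma)] > 0,
\end{equation}
contradicting Eq.~\eqref{eq:ldp_condition_measurements} applied with $M=P$. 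Hence $X\leq 0$, i.e.\ $T(\rho)\leq e^{\varepsilon}T(\sigma)$, so $D_{\max}(T(\rho)\|T(\sigma))\leq \varepsilon$; taking the supremum over input pairs yields the claim.

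Neither direction has a serious obstacle: the main subtlety is just recognizing that an operator inequality $A\leq B$ between Hermitian operators is equivalent to $\Tr[MA]\leq \Tr[MB]$ for all effects $M$, which follows because the set of effects contains all spectral projectors and hence certifies nonnegativity on any eigenvector. This Jordan-decomposition style argument is the only nontrivial ingredient, and it is entirely standard.
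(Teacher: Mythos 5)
Your proof is correct. The paper does not actually prove this lemma --- it is imported verbatim as Lemma III.2 of the cited reference --- so there is no in-paper argument to compare against, but your two implications are exactly the standard ones: positivity of $M\mapsto\Tr[M\,\cdot\,]$ on PSD operators for $(\Leftarrow)$, and the projector onto the support of the positive part of $T(\rho)-e^{\varepsilon}T(\sigma)$ as the violating effect for $(\Rightarrow)$. Two small points worth being aware of: in $(\Leftarrow)$, passing from $D_{\max}(T(\rho)\|T(\sigma))\leq\varepsilon$ to the operator inequality $T(\rho)\leq e^{\varepsilon}T(\sigma)$ uses that the feasible set $\{\lambda : T(\rho)\leq e^{\lambda}T(\sigma)\}$ is closed so the infimum is attained (immediate, since $\lambda\mapsto e^{\lambda}T(\sigma)-T(\rho)$ is continuous and the PSD cone is closed), and if the infimum is $+\infty$ the hypothesis is violated anyway; and the paper's displayed definition of $D_{\max}$ has its arguments in an unusual order relative to the operator inequality, but since the lemma takes a supremum over all pairs $(\rho,\sigma)$ this has no effect on your argument.
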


A direct consequence of this characterization is that for any input state $\rho$, the privacy constraint also applies when comparing $T(\rho)$ to the image of the maximally mixed state, yielding $D_{\max}(T(\rho)| T(\frac{\mathbb{I}}{d}))\leq \varepsilon$. In this case, we can use the tighter version of Prop.~\ref{prop:pure_to_mix_ratio_bound}, which makes $\varepsilon$-LDP a natural setting to apply the results from Sec.~\ref{sec:bound_for_dfs}.

Certainly, the LDP condition inherently imposes a closeness relation between output states, suggesting an influence on the contraction properties of the quantum channel. This connection has been explored before \cite{Nuradha25}, with one of the most direct consequences being that their contraction coefficient is upper bounded by $(e^\varepsilon-1)/(e^\varepsilon+1)$. In this section, we explore an analogous bound for the average contraction coefficient. Our bounds will show that many $\varepsilon$-LDP channels have average contraction that vanishes exponentially fast. This questions the utility of the outputs of these channels for computational tasks if the noise is not desgined carefully, although our results come with caveats we discuss below. 
As a warm-up, we first examine the depolarizing channel, considering its form as a product of local operations and a global action. This example illustrates our general approach, demonstrating how the LDP condition enables us to derive bounds through careful choices of the operators involved in Eq.~\eqref{eq:ldp_condition_measurements}.

\begin{example}[Global and local depolarizer]
    For both $n$-qubit local and global depolarizing channels, satisfying $\varepsilon$-LDP requires that the noise parameter $p$ approaches $1$ as $n$ increases, which implies exponentially vanishing average contraction. Here we consider input states $\rho,\sigma\sim\nu$ independently and identically distributed with $\nu$ a 2-design distribution over pure states.
    
    In Eq.~\eqref{eq:ldp_condition_measurements}, set $\rho=\ketbra{\psi}^{\otimes n}$, $\sigma=\ketbra{\phi}^{\otimes n}$ and $M=\ketbra{\psi}^{\otimes n}$, where $\ket{\psi}$ and $\ket{\phi}$ are single-qubit orthogonal states, $\braket{\phi}{\psi} = 0$. In the local and global cases, the mappings are
    \begin{align}
        \ketbra{\psi}^{\otimes n} \quad &\mapsto \quad \qty((1-p_l)\ketbra{\psi}+p_l\frac{\mathbb{I}}{2})^{\otimes n}, \\
        \ketbra{\psi}^{\otimes n} \quad &\mapsto \quad  (1-p_g) \ketbra{\psi}^{\otimes n} + p_g \frac{\mathbb{I}^{\otimes n}}{2^n},
    \end{align}
    where $p_l$ and $p_g$ are the noise parameter for each channel and $\mathbb{I}$ denotes the single-qubit identity matrix. With this choice, Eq.~\eqref{eq:ldp_condition_measurements} takes the form of a bound on the noise parameter, for each case:
    \begin{equation}
        \qty(1-\frac{p_l}{2})^n \leq e^{\varepsilon} \qty(\frac{p_l}{2})^n,
        \quad \text{and} \quad
        1-p_g+\frac{p_g}{2^n} \leq e^{\varepsilon}\frac{p_g}{2^n}.
    \end{equation}
    Hence,
    \begin{align}
        p_l &\geq \frac{2}{1+e^{\varepsilon /n}} = 
         1-(e^{\varepsilon /n}-1)/2 + \order{(e^{\varepsilon /n}-1)^2)}
         = 1 - \frac{\varepsilon}{2 n} + \order{n^{-2}}
        \quad \text{and} \quad \\
        p_g &\geq \frac{1}{1+2^{-n}(e^{\varepsilon}-1)} = 1-2^{-n} (e^{\varepsilon}-1)+\order{2^{-2n}},
    \end{align}
    which shows that $p\rightarrow 1$ as $n\rightarrow \infty$.

    For the global depolarizing channel, the contraction of the trace distance is exactly $(1-p_g)\leq 2^{-n} (e^{\varepsilon}-1)+\order{2^{-2n}}$, visibly exponentially decreasing. For the local depolarizing channel, we can resort to Prop.~\ref{prop:avc_local_depol} and note that, as for large enough $n$ we have $p_l\geq 1-1/\sqrt{3}$, the average contraction vanishes as $n\to\infty$. More precisely, following the analysis on Prop.~\ref{prop:avc_local_depol},
    \begin{equation}
        \bEx{\rho,\sigma\sim \nu}\qty[\frac{\norm{\mathcal{D}^{\otimes n}(\rho) - \mathcal{D}^{\otimes n}(\rho)}_1}{\norm{\rho-\sigma}_1}]
        \leq 
        \frac{1}{\sqrt{2}} \qty(\frac{3(1-p_l)^2+1}{2})^n+\order{2^{-n/2}}
        \leq
        \frac{1}{\sqrt{2}} \qty(\frac{3(\frac{\varepsilon}{2 n})^2+1}{2})^n+\order{2^{-n/2}},
    \end{equation}
    which is exponentially suppressed.
\end{example}

More generally, for an arbitrary $\varepsilon$-LDP channel, the following is a direct consequence of Thm.~\ref{thm:HSaveragecontraction}.

\begin{corollary}\label{cor:ldp_on_avc}
    Let us consider $T:\cS(\cH_d)\to\cS(\cH_{d'})$ an $\varepsilon$-LDP channel and $\nuHS$ the Hilbert-Schmidt distribution on $\cS(\cH_d)$. Then,
    \begin{equation}
        \eta_1(T,\norm{\cdot}_1,\nuHS\times \nuHS)
        \leq
        \alpha \sqrt{\frac{d'}{d-1}\qty(e^{\varepsilon} - 1)\; \Tr \pi^2}
    +\order{\sqrt{\frac{d' }{d^2}\sqrt{\ln d}}}
    ,
    \end{equation}
    where $\alpha=\frac{2\sqrt{2}\pi}{4+\pi}\approx 1.24$.
\end{corollary}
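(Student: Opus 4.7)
The strategy is to apply Thm.~\ref{thm:HSaveragecontraction} to the $\varepsilon$-LDP channel $T$, with the main task being to use the LDP hypothesis to control the Choi-state variance quantity $\Tr\tau^2-\tfrac{1}{d}\Tr\pi^2$ appearing on the right-hand side.

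First, I would translate the LDP hypothesis into operator form via Lemma~\ref{lemma:ldp_to_dmax}: taking the second argument to be $\mathbb{I}/d$ yields the operator inequality $T(\rho)\leq e^{\varepsilon}\pi$ for every state $\rho$. Writing $T(\rho)^2=T(\rho)^{1/2}T(\rho)T(\rho)^{1/2}$ and taking the trace gives the scalar inequality $\Tr T(\rho)^2\leq e^{\varepsilon}\Tr[T(\rho)\pi]$. Averaging this over $\rho\sim\nuHS$ and using $\bEx{\rho\sim\nuHS}[T(\rho)]=\pi$ (since the Hilbert-Schmidt mean of $\rho$ is $\mathbb{I}/d$ and $T$ is linear) produces the variance estimate $\bE[\Tr(T(\rho)-\pi)^2]\leq (e^{\varepsilon}-1)\Tr\pi^2$.

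Next, I would turn this variance estimate into a bound on $\Tr\tau^2-\tfrac{1}{d}\Tr\pi^2$ using the Hilbert-Schmidt moment identities that already appear in the proof of Thm.~\ref{thm:HSaveragecontraction} (and are collected in App.~\ref{sec:expvals_unit_invar}). On the one hand, independence of $\rho$ and $\sigma$ together with $\bE[T(\rho)]=\bE[T(\sigma)]=\pi$ give $\bE[\norm{T(\rho)-T(\sigma)}_2^2]=2\bE[\Tr(T(\rho)-\pi)^2]$; on the other, the explicit Hilbert-Schmidt moment computation gives $\bE[\norm{T(\rho)-T(\sigma)}_2^2]=\tfrac{2d}{d^2+1}(\Tr\tau^2-\tfrac{1}{d}\Tr\pi^2)$. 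Equating these two expressions and inserting the LDP variance bound yields an explicit upper bound on $\Tr\tau^2-\tfrac{1}{d}\Tr\pi^2$ proportional to $(e^{\varepsilon}-1)\Tr\pi^2$, with an explicit $d$-dependent prefactor.

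Finally, substituting this estimate into Thm.~\ref{thm:HSaveragecontraction} and collecting constants produces the advertised inequality; the $\order{\sqrt{d'\sqrt{\ln d}/d^2}}$ residual arises by propagating the $\order{\sqrt{\ln d}/d}$ concentration-of-measure error from Thm.~\ref{thm:HSaveragecontraction} through the $\sqrt{d'}$ factor coming from the norm conversion $\norm{\cdot}_1\leq \sqrt{d'}\norm{\cdot}_2$ used in its proof. Beyond the LDP-to-operator-inequality step, the argument is purely algebraic; the only real care needed is in tracking the $d$-dependent prefactors from the Hilbert-Schmidt moments to reach the precise form displayed in the statement. No new concentration or random-matrix input is required beyond what already enters Thm.~\ref{thm:HSaveragecontraction}.
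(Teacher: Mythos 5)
There is a genuine quantitative gap in your argument: the choice to average the LDP inequality over the Hilbert--Schmidt distribution loses a factor of roughly $d$ in the purity estimate, and hence a factor of roughly $\sqrt{d}$ in the final bound. Concretely, your chain gives $\bE_{\rho\sim\nuHS}[\Tr(T(\rho)-\pi)^2]\leq (e^{\varepsilon}-1)\Tr\pi^2$, and the Hilbert--Schmidt moment identity reads $\bE_{\rho\sim\nuHS}[\Tr(T(\rho)-\pi)^2]=\tfrac{d}{d^2+1}\bigl(\Tr\tau^2-\tfrac{1}{d}\Tr\pi^2\bigr)$. Solving for the Choi-state quantity yields $\Tr\tau^2-\tfrac{1}{d}\Tr\pi^2\leq \tfrac{d^2+1}{d}(e^{\varepsilon}-1)\Tr\pi^2$, and substituting into Thm.~\ref{thm:HSaveragecontraction} produces $\eta_1\leq \alpha\sqrt{\tfrac{d'(d^2+1)}{d^2}(e^{\varepsilon}-1)\Tr\pi^2}+\order{\cdot}\approx \alpha\sqrt{d'(e^{\varepsilon}-1)\Tr\pi^2}$, which is weaker than the stated $\alpha\sqrt{\tfrac{d'}{d-1}(e^{\varepsilon}-1)\Tr\pi^2}$ by a factor of about $\sqrt{d-1}$. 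The root cause is that Hilbert--Schmidt random states are nearly maximally mixed, so $\bE_{\nuHS}[\Tr T(\rho)^2]$ contains $\Tr\tau^2$ only with coefficient $\sim 1/d$; the LDP constraint therefore barely constrains the Choi purity when tested against this ensemble.

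The fix, which is what the paper does, is to exploit the freedom in the pointwise LDP inequality: since $\Tr[T(\rho)^2]\leq e^{\varepsilon}\Tr[T(\rho)T(\sigma)]$ holds for \emph{every} pair of states, you may average it over independent Haar-random \emph{pure} states (or any 2-design on pure states) rather than over $\nuHS$. For that ensemble, $\bE[\Tr T(\rho)^2]=\tfrac{d}{d+1}\bigl(\Tr\pi^2+\Tr\tau^2\bigr)$ and $\bE[\Tr T(\rho)T(\sigma)]=\Tr\pi^2$, which gives the much stronger purity bound $\Tr\tau^2\leq \bigl((1+d^{-1})e^{\varepsilon}-1\bigr)\Tr\pi^2$, i.e. $d\Tr\tau^2-\Tr\pi^2\leq (d+1)(e^{\varepsilon}-1)\Tr\pi^2$. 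Feeding this into Thm.~\ref{thm:HSaveragecontraction} (whose average is still taken over $\nuHS\times\nuHS$ --- the pure-state ensemble is used only to extract the purity inequality) recovers the prefactor $\tfrac{d'}{d-1}$ in the statement. The rest of your outline --- the reduction of LDP to the operator inequality via Lemma~\ref{lemma:ldp_to_dmax} and the bookkeeping of the concentration error --- is fine.
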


\begin{proof}
    This is a consequence of Thm.~\ref{thm:HSaveragecontraction} together with the inequality for the purity of the Choi state of an $\varepsilon$-LDP channel obtained below,
    \begin{equation}\label{eq:ldp_choi_purity}
        \Tr \tau^2 \leq ((1+d^{-1})e^{\varepsilon} - 1) \; \Tr \pi^2.
    \end{equation}
    Then,
    \begin{equation}
        \frac{d'}{d^2-1}\qty[d \Tr\tau^2 - \Tr \pi^2]
        \leq 
        \frac{d'}{d^2-1}\qty[(d+1)e^{\varepsilon} - d - 1]\; \Tr \pi^2
        \leq 
        \frac{d'}{d-1}\qty(e^{\varepsilon} - 1)\; \Tr \pi^2.
    \end{equation}

    To prove Eq.~\eqref{eq:ldp_choi_purity}, let us take inspiration from the previous example and set $M=T(\rho)$ in the LDP condition, such that
    \begin{equation}
        \Tr[T(\rho)^2]\leq e^{\varepsilon} \Tr[T(\rho) T(\sigma)].
    \end{equation}
    As the inequality holds for every state, we can choose $\rho, \sigma\sim\mu$ to be independently distributed with Haar distribution (any 2-design would work), and observe that the inequality must hold for the expectation values of both sides of the inequality:
    \begin{equation}
        \frac{d}{d +1}\qty[\Tr \pi^2 + \Tr \tau^2]
        \leq e^{\varepsilon} \Tr\pi^2,
    \end{equation}
    where the expectation values are computed as in Sec.~\ref{sec:expvals_unit_invar}, rearranging these terms leads to Eq.~\eqref{eq:ldp_choi_purity}.
\end{proof}

A compelling and illustrative example is an $n$-qubit unital channel $T:\cS(\cH_2^{\otimes n})\to\cS(\cH_2^{\otimes n})$. Unitality immediately implies $\pi=\mathbb{I}/2^n$ and hence that the average contraction with $\rho,\sigma\sim\nuHS$ is exponentially decreasing, as the upper bound becomes
\begin{equation}
    \alpha \sqrt{\frac{e^{\varepsilon} - 1}{2^n}}+\order{n^{1/4} 2^{-n}}.
\end{equation}
This demonstrates that for a fixed privacy parameter $\varepsilon$, increasing the system size leads to an exponentially decreasing average contraction of a unital channel under the Hilbert–Schmidt distribution. This seems to suggest that the channel essentially approaches a replacer channel, but the conclusion strongly depends on the random distribution. Alternatively, operationally motivated distributions may exhibit different average behavior and further investigation is needed to understand to what extent this result can be generalized. However, we will now argue that the average contraction alone might not be a good proxy for the performance of a classifier.

In particular, let us reflect on the context of differential privacy in quantum machine learning, which is a paradigmatic application \cite{Du2021, Watkins2023}. More precisely, we focus on quantum classification with quantum data, meaning that the objects to be classified are encoded as quantum states \cite{Gambs2008QuantumClassification, LaRose2020, Grant2018, Farhi2018}. Quantum classifiers can be structured in three steps \cite{Grant2018, Schuld2018SupervisedLearningQuantumComputers, Farhi2018, Du2021, Schuld2020}: data encoding, quantum processing and measurement, and a classical postprocessing. Assuming that our input data is quantum allows us to focus on the quantum processing and measurement. Additionally, as our interest is to find fundamental limitations imposed by LDP, we do not consider the complexities related to learning from a sample dataset \cite{Gambs2008QuantumClassification}. Therefore, we base our analysis on the following definition for the classification problem.

\begin{definition}
    A \emph{classification task} is defined by $K$ disjoint classes of quantum states $\cC_x\subset \cS(\cH_d)$ labeled by $x\in\cX$ and the goal of implementing a mapping from classifiable states to their corresponding labels.
\end{definition}

The requirement $\cC_x\cap\cC_y = \emptyset$ for $x\neq y$ is necessary for the task to be well-defined. However, the classes do not need to cover all possible states; in general, the inclusion $\bigcup_{x\in \cX} \cC_x \subset \cS(\cH_d)$ is strict. This is the reason for referring to \emph{classifiable states}, which are states $\rho$ that belong to some class, that is, $\rho\in\cC_x$ for some $x\in\cX$.

The central step in a quantum classifier is normally implemented with parametrized quantum circuits and measurements in the computational basis \cite{Schuld2020}. The map from quantum state $\rho$ to the probability distribution of the measurements can naturally be interpreted as the probability for the classifier to assign each of the labels in $\cX$ \cite{Grant2018, PerezSalinas2020, Schuld2018SupervisedLearningQuantumComputers}. With this perspective, the classical postprocessing is a selection protocol based on those probabilities, such as choosing the label with the highest probability. However, including decision rules would obscure our point, as it requires analyzing finite sample statistics to draw the impact of LDP \cite{Du2021}. With this framework, we define the probabilistic classifiers as the quantum channels implementing that central processing task:

\begin{definition}
    A \emph{probabilistic classifier} is a quantum-to-classical channel $\cK:\cS(\cH_d)\to \cS(\cH_{K})$ that maps quantum states to probability distributions over classes in $\cX$. It can be associated with a POVM $\{P_x\}_{x\in\cX}$ such that
    \begin{equation}
        \cK(\rho) = \sum_{x\in\cX} \Tr[P_x \rho] \ketbra{x}.
    \end{equation}

    $\cK$ is an \emph{ideal classifier} if for any label $x$ and classifiable state $\rho\in\cC_y$ we have that $\expval{\cK(\rho)}{x}=\delta_{x,y}$.
\end{definition}

There are various ways to guarantee $\varepsilon$-LDP for a quantum classifier \cite{Watkins2023, Du2021}. Here we consider the probabilistic classifier $\cK$ has been perturbed to be $\varepsilon$-LDP satisfying Eq.~\eqref{eq:ldp_condition_measurements}. As a direct application of Cor.~\ref{cor:ldp_on_avc}, consider $\rho,\sigma$ independent states with Hilbert-Schmidt distribution, then the average contraction obeys
\begin{equation}
    \eta_1(\cK, \norm{\cdot}_1, \nuHS\times\nuHS)
    \leq
    \alpha \sqrt{\frac{K}{d-1}\qty(e^{\varepsilon} - 1)\; \Tr \cK\qty(\frac{\mathbb{I}}{d})^2}
+\order{\sqrt{\frac{K }{d^2}\sqrt{\ln d}}}
,
\end{equation}
which is visibly much smaller than one as the number of labels is expected to be small compared to the input dimension $K\ll d$.

We can understand this result as an implication of Cor.~\ref{cor:dtodprime}, which establishes that channels mapping high-dimensional states to lower-dimensional ones exhibit strong average contraction under the Hilbert-Schmidt distribution. This reveals that a strong average contraction is not necessarily indicative of a poor classifier, as the goal of any classifier is to ``reduce'' states to their class. Therefore, even an ideal classifier discards information that helps distinguish input states. A meaningful bound would need to demonstrate a significantly stronger contraction in a noisy classifier compared to an ideal one.

Another shortcoming of this bound is that Hilbert-Schmidt random states do not represent typical inputs for a classifier; instead, we expect states that are close to classifiable ones. Consequently, a more relevant analysis would focus on distributions over approximately classifiable states rather than arbitrary ones. This approach aligns with operational motivations and suggests the construction of distributions concentrated on state pairs that remain distinguishable under an ideal classifier while becoming closer under a noisy one, effectively isolating the effect of noise-induced contraction.

However, this analysis introduces challenges and subtleties. The main results of this work rely on distributions with some degree of uniformity, such as the Hilbert-Schmidt distribution for mixed states or 1- and 2-designs for pure states. These are well-established in quantum information and have well-understood properties \cite{Aubrun2017}. While using such distributions simplifies the task of proving strong contraction for noisy classifiers, as a side effect, it implies strong contraction for an ideal classifier. Conversely, distributions that maximize the average contraction of an ideal classifier make it difficult to find strong average contraction bounds solely using the $\varepsilon$-LDP condition and our current knowledge of average contraction. Despite these challenges, we believe this is a promising application of the moments of contraction. We present partial progress in this direction, focusing on expectation values of operationally relevant magnitudes, and leave the development of a consistent framework for future research.

There are many standard metrics to evaluate the performance of a classifier in solving a classification task \cite{Duda2000, Murphy2012, Gareth2014}. These metrics usually correspond to statistics computed on a test dataset, which in our case would be formed by instances of classifiable states $\{\rho_m\}_m$. For our discussion, we assume the states of the dataset are independent and identically distributed $\rho_m\sim\nu$ and compute the metrics as expectation values.

A common practice is to form a confusion matrix $C$ \cite{Ting2010a}, where the elements $C_{x,y}$ are the number of instances known to be in category $\cC_y$ that are labeled $x$ by the classifier $\cK$. We use the following analogous definition in terms of a probability distribution:

\begin{definition}[Confusion matrix]\label{def:confusion_matrix}
    Given a classification task with state classes $\{\cC_x\}_{x\in\cX}$ and a probability distribution over classifiable states $\nu$, the \emph{confusion matrix} associated with the probabilistic classifier $\cK$ is defined by the probability of a random state $\rho\sim\nu$ belonging to class $\cC_y$ and being assigned the label $x$,
    \begin{equation}
        C_{x,y} := \int_{\cC_y} \expval{\cK(\rho)}{x} \dd \nu.
    \end{equation}
\end{definition}

Ideally, all states are correctly labeled, and the confusion matrix is diagonal. However, we will show that if $\cK$ is a $\varepsilon$-LDP classifier, then there are necessarily non-diagonal confusion terms that obstruct perfect classification.

To give further insight, we introduce the precision, recall, and accuracy, which are standard performance metrics for classifiers \cite{Ting2010b, SammutWebb2010}. Precision for the label $x$ is the fraction of states that were correctly labeled $x$ out of all states that were labeled $x$. Conversely, recall for the label $x$ is the fraction of states correctly labeled out of all states belonging to class $\cC_x$. The three metrics can be described in terms of the confusion matrix:
\begin{equation}
    x\text{-precision} = \frac{C_{x,x}}{\sum_{y\in\cX} C_{x,y}},
    \quad \text{and} \quad
    x\text{-recall} = \frac{C_{x,x}}{\sum_{y\in\cX} C_{y,x}}.
\end{equation}
The accuracy is the fraction of correctly labeled states out of all states
\begin{equation}
    \text{accuracy} = \frac{\sum_{x\in\cX} C_{x,x}}{\sum_{x,y\in\cX} C_{x,y}}.
\end{equation}
These metrics can be expressed in terms of the confusion matrix, so we can directly adapt these definitions to the metrics depending on the probability distribution using Def.~\ref{def:confusion_matrix}.

\begin{proposition}\label{prop:ldp_on_classifier_metrics}
    Let us consider a classification task with state classes $\{\cC_x\}_{x\in\cX}$ and a probability distribution $\nu$ over classifiable states. Let $C$ be the confusion matrix of a probabilistic classifier $\cK$ that is $\varepsilon$-LDP, and define $p_x = \sum_{y\in\cX} C_{y,x}$ and $q_x=\sum_{y\in\cX} C_{x,y}$. Then, $C$ satisfies the constraints:
    \begin{equation}\label{eq:ldp_on_confusion_matrix}
        p_y C_{x,x} \leq e^{\varepsilon} p_x C_{x,y},
    \end{equation}
    as a consequence,
    \begin{align}
        x\text{-precision} &= \frac{C_{x,x}}{\sum_{y\in\cX} C_{x,y}} \leq \frac{e^{\varepsilon} p_x }{1-p_x + e^{\varepsilon} p_x},\\
        x\text{-recall} &= \frac{C_{x,x}}{\sum_{y\in\cX} C_{y,x}} \leq \frac{e^{\varepsilon} q_x}{1-p_x + e^{\varepsilon} p_x}, \\
        \text{accuracy} &= \frac{\sum_{x\in\cX} C_{x,x}}{\sum_{x,y\in\cX} C_{x,y}} \leq 
        e^{\varepsilon} \sum_{x\in\cX} p_x q_x.
    \end{align}
\end{proposition}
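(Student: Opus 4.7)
The plan is to first establish the entry-wise inequality~\eqref{eq:ldp_on_confusion_matrix}, and then derive the precision, recall, and accuracy bounds as essentially bookkeeping consequences of it together with the normalization identities $\sum_{x}p_x = 1$, $\sum_{y}C_{y,x}=p_x$, and $\sum_{y}C_{x,y}=q_x$. The key observation is that LDP is a \emph{pointwise} inequality between output states that can be integrated against $\nu$ restricted to two different classes to turn it into a relation between entries of $C$.

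For~\eqref{eq:ldp_on_confusion_matrix}, I would invoke Def.~\ref{def:ldp} with $M=\ketbra{x}$, so that $\expval{\cK(\rho)}{x} \leq e^{\varepsilon}\expval{\cK(\sigma)}{x}$ holds for every pair of (classifiable) states $\rho,\sigma$. Fixing $\rho\in\cC_x$ and integrating $\sigma$ over $\cC_y$ against $\nu$ yields $p_y \expval{\cK(\rho)}{x} \leq e^{\varepsilon} C_{x,y}$; integrating $\rho$ over $\cC_x$ in turn gives the desired $p_y C_{x,x}\leq e^{\varepsilon} p_x C_{x,y}$.

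To bound $x$-precision, I would lower-bound the denominator by splitting off the diagonal: $\sum_y C_{x,y} = C_{x,x}+\sum_{y\neq x}C_{x,y}$, apply~\eqref{eq:ldp_on_confusion_matrix} to each off-diagonal term to get $C_{x,y}\geq p_y C_{x,x}/(e^{\varepsilon}p_x)$, and sum using $\sum_{y\neq x}p_y=1-p_x$ to obtain $\sum_y C_{x,y}\geq C_{x,x}(e^{\varepsilon}p_x+1-p_x)/(e^{\varepsilon}p_x)$; taking the reciprocal gives the precision inequality. For $x$-recall, note that $\sum_y C_{y,x}=p_x$ by definition, and rearrange the analogous computation (now an upper bound on $C_{x,x}$): from $q_x = C_{x,x}+\sum_{y\neq x} C_{x,y}\geq C_{x,x}(e^{\varepsilon}p_x+1-p_x)/(e^{\varepsilon}p_x)$ deduce $C_{x,x}\leq e^{\varepsilon}p_x q_x/(1-p_x+e^{\varepsilon}p_x)$, and divide by $p_x$.

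For accuracy, first observe that $\sum_{x,y}C_{x,y}=1$, since $\cK(\rho)$ is a probability distribution for every classifiable $\rho$ and $\nu$ has mass one on classifiable states; hence accuracy reduces to $\sum_x C_{x,x}$. Then insert the bound on $C_{x,x}$ derived in the recall step, use $1-p_x+e^{\varepsilon}p_x\geq 1$ to simplify to $C_{x,x}\leq e^{\varepsilon}p_x q_x$, and sum over $x$. There is no significant technical obstacle; the whole proof is routine once one notices that the LDP condition can be integrated twice to produce a constraint on confusion-matrix entries. The only mildly delicate point is keeping the roles of $p_x$ and $q_x$ straight (row sums of $C$ indexed by the true class versus column sums indexed by the predicted label) and being careful that the identity $\sum_x p_x=1$ relies on $\nu$ being supported on the union of the classes.
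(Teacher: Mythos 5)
Your proof is correct and follows essentially the same route as the paper: the paper obtains the entrywise inequality by applying the LDP condition to the class-averaged states $\sigma_x,\sigma_y$ and using linearity of $\cK$, which is equivalent to your double integration of the pointwise inequality, and the precision/recall algebra is identical. Your accuracy step (bounding each $C_{x,x}$ and dropping the denominator via $1-p_x+e^{\varepsilon}p_x\geq 1$) differs cosmetically from the paper's direct summation of $p_yC_{x,x}\leq e^{\varepsilon}p_xC_{x,y}$ over all $x,y$, but both yield the same bound.
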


\begin{proof}
    Consider $\rho$ is a random state with distribution $\nu$, we will see that $p_x$ coincides with the probability that $\rho$ belongs to class $\cC_x$,
    \begin{equation}
        p_x = \Pr[\rho\in\cC_x].
    \end{equation}
    As $\nu$ is distributed over classifiable states, we have $\sum_{x\in\cX} p_x = 1$. If $p_x>0$ we can define $\sigma_x$ as the average state conditioned on states in class $\cC_x$,
    \begin{equation}
        \sigma_x = \frac{1}{p_x} \; \int_{\cC_x} \rho \; \dd \nu.
    \end{equation}
    Note that in general $\sigma_x\notin \cC_x$, but this is not necessary for the proof. With this, the confusion matrix takes the form
    \begin{equation}
        C_{x,y} =
        \int_{\cC_y} \expval{\cK(\rho)}{x} \dd \nu
        =  \expval{\cK\qty(\int_{\cC_y}\; \rho\; \dd \nu)}{x}
        = p_y \expval{\cK(\sigma_y)}{x},
    \end{equation}
    that is, it is the probability to measure label $x$ in $\cK(\sigma_y)$ multiplied by the probability of picking a state in class $\cC_y$. If $p_y=0$ we take $C_{x,y}=0$. This shows that the definition of $p_x$ is consistent with the interpretation, and shows that $q_x$ can be regarded as the probability that a state $\rho\sim\nu$ is labeled $x$:
    \begin{equation}
        q_x = \sum_{y\in\cX} C_{x,y} =  \expval{\cK\qty(\int \; \rho\; \dd \nu)}{x}.
    \end{equation}

    As $\cK$ is $\varepsilon$-LDP we have that
    \begin{equation}
        \expval{\cK(\sigma_x)}{x} \leq e^{\varepsilon} \expval{\cK(\sigma_y)}{x},
    \end{equation}
    which implies
    \begin{equation}
        p_y C_{x,x} \leq e^{\varepsilon} p_x C_{x,y}
        \quad\Rightarrow\quad
        (1-p_x) C_{x,x} \leq e^{\varepsilon} p_x (q_x-C_{x,x})
        \quad\Rightarrow\quad
        C_{x,x} \leq \frac{e^{\varepsilon} p_x q_x}{1-p_x + e^{\varepsilon} p_x}
        ,
    \end{equation}
    where in the first step we summed over all $y\neq x$. The bounds for the precision and recall are obtained by combining the last inequality with the definitions of $p_x$ and $q_x$. The accuracy is obtained summing over all $x$ and $y$ in both sides of Eq.~\eqref{eq:ldp_on_confusion_matrix} and using $\sum_{x,y\in\cX} C_{x,y}=1$.
\end{proof}

Proposition~\ref{prop:ldp_on_classifier_metrics} is less informative when the $\varepsilon$-LDP condition is weakened, i.e. for $\varepsilon\to\infty$. On the contrary, in the regime where the condition is tightened, $\varepsilon\approx 0$, the performance metrics are strongly limited:
\begin{align}
    x\text{-precision} & \leq p_x ( 1 + (1-p_x) \varepsilon ), \\
    x\text{-recall} &\leq q_x (1 + (1-p_x) \varepsilon ), \\
    \text{accuracy} &\leq (1+\varepsilon) \sum_{x\in\cX} p_x q_x.
\end{align}
As expected, the probabilistic classifier $\cK$ approaches a random classifier, one that randomly assigns label $x$ with probability $q_x$ for any input state, which corresponds to a confusion matrix $C_{x,y}=q_x p_y$.

\section{Examples and numerical results}\label{sec:numerics}

In this section, we analyze $n$-fold products of various families of single-qubit channels. We examine the asymptotic regime $n \to \infty$ using the bounds established in previous sections, and the finite-size regime through numerical simulations. For each channel family, we consider four divergences: the trace distance, the relative entropy, the $\chi^2$-divergence, and the max relative entropy. In all cases, the first state is sampled from the Haar measure, $\rho \sim \mu_{2^n}$, while the second state is fixed to the maximally mixed state $\sigma_* = \mathbb{I}^{\otimes n} / 2^n$. For the three single-parameter families, we swept across the full parameter range. For each system size and parameter value, we sampled 2100, 600, and 100 Haar-random states for $n\in\{1,2,3\}$, $n\in\{4,5,6\}$, and $n=7$, respectively; for larger dimensions, fewer samples were sufficient due to the increasing concentration properties of the random variable.

To predict the asymptotic behavior, we invoke Cor.~\ref{cor:avc_2design_mix}, which states that if $\Tr\sqrt{\Tr_2 \tau^2} < 1$, then the upper bound for the average contraction vanishes as $n \to \infty$. As noted in Remark~\ref{remark:avc2design_nproduct_about_simplificaiton}, this condition fully characterizes our asymptotic bound because, in all considered cases, $\Tr_2 \tau^2$ and $\pi^2$ commute. We complement this result with Thm.~\ref{thm:avc_lower_bound}, which shows that the average contraction approaches $1$ as $n \to \infty$ whenever $S(\tau) < \log(\|\pi\|_\infty^{-1})$. The full calculations are presented in App.~\ref{sec:supmat_numerics}.

\paragraph{Depolarizing.}
By the results in Prop.~\ref{prop:avc_local_depol}, we have that the average trace distance admits an asymptotically vanishing upper bound for $p \gtrsim 0.42$, and a lower bound that tends to one for $p \lesssim 0.25$. These limiting values are shown as solid black lines in Fig.~\ref{fig:depolarizing_param_tr}. For small values of $n$, numerical simulations reveal an initial decrease in the average contraction for all $p$, followed by a reversal of this trend at low values of $p$. That is, for weak noise and sufficiently large $n$, the average contraction increases, consistent with its asymptotic convergence to one. Conversely, for higher noise levels (larger $p$), the average contraction consistently decreases as the system size grows. In Fig.~\ref{fig:depolarizing_param_tr}, the transition between these regimes occurs between $p \approx 0.25$ and $p \approx 0.42$, in agreement with the asymptotic predictions.

\begin{figure}[htb]
	\subcaptionsetup[figure]{position=top,slc=off, margin={0pt,0pt}, skip=-10pt}
		\subcaptionbox{\label{fig:depolarizing_param_tr}}
		{\includegraphics[width=.45\textwidth]{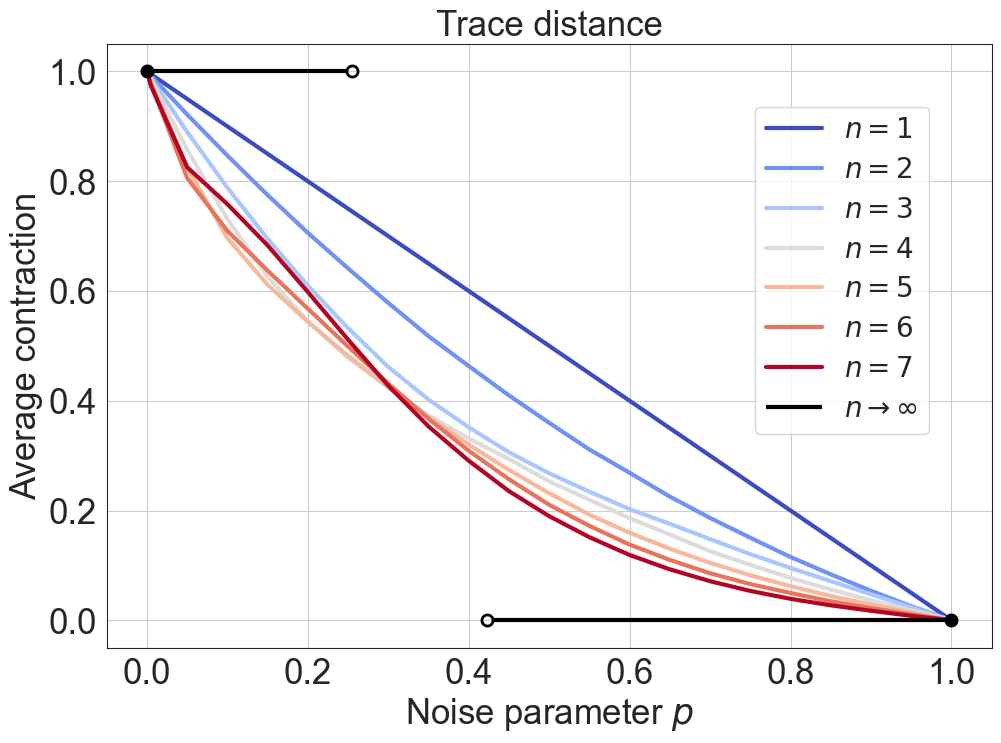}}
	\hfill
		\subcaptionbox{\label{fig:depolarizing_param_me}}
		{\includegraphics[width=.45\textwidth]{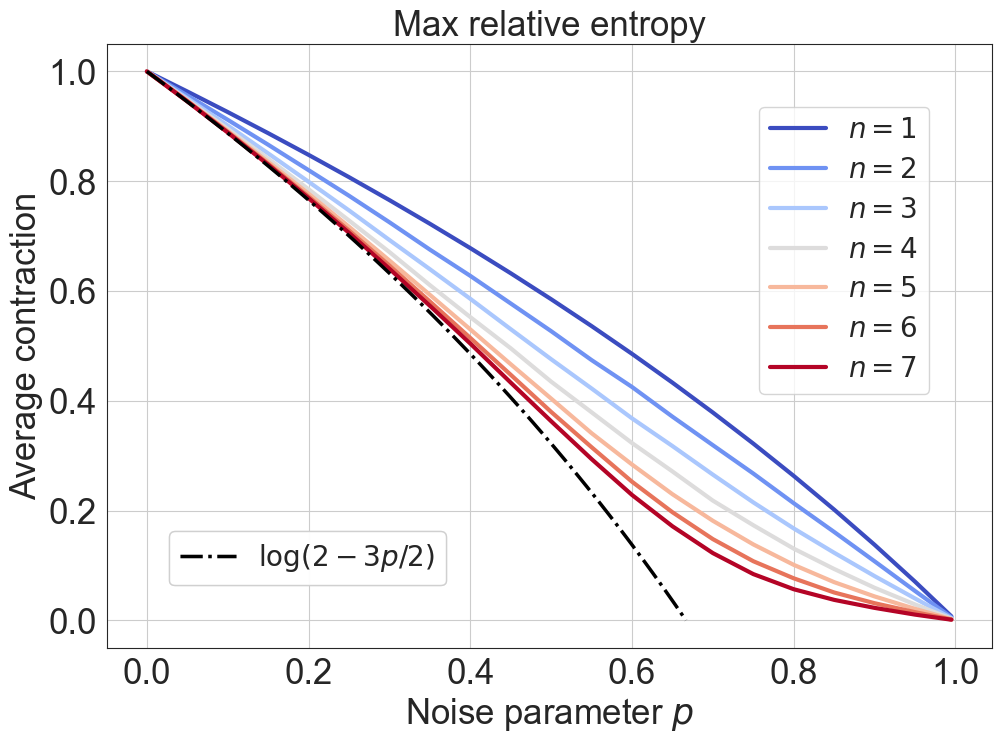}}
		\subcaptionbox{\label{fig:depolarizing_param_re}}
		{\includegraphics[width=.45\textwidth]{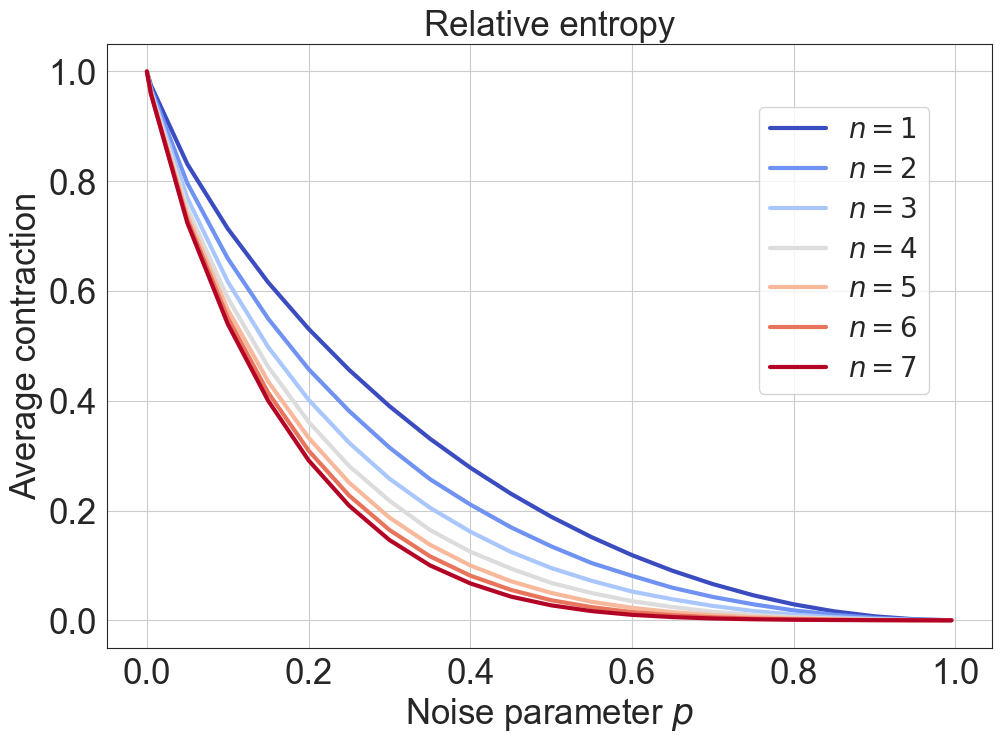}}
	\hfill
		\subcaptionbox{\label{fig:depolarizing_param_c2}}
		{\includegraphics[width=.45\textwidth]{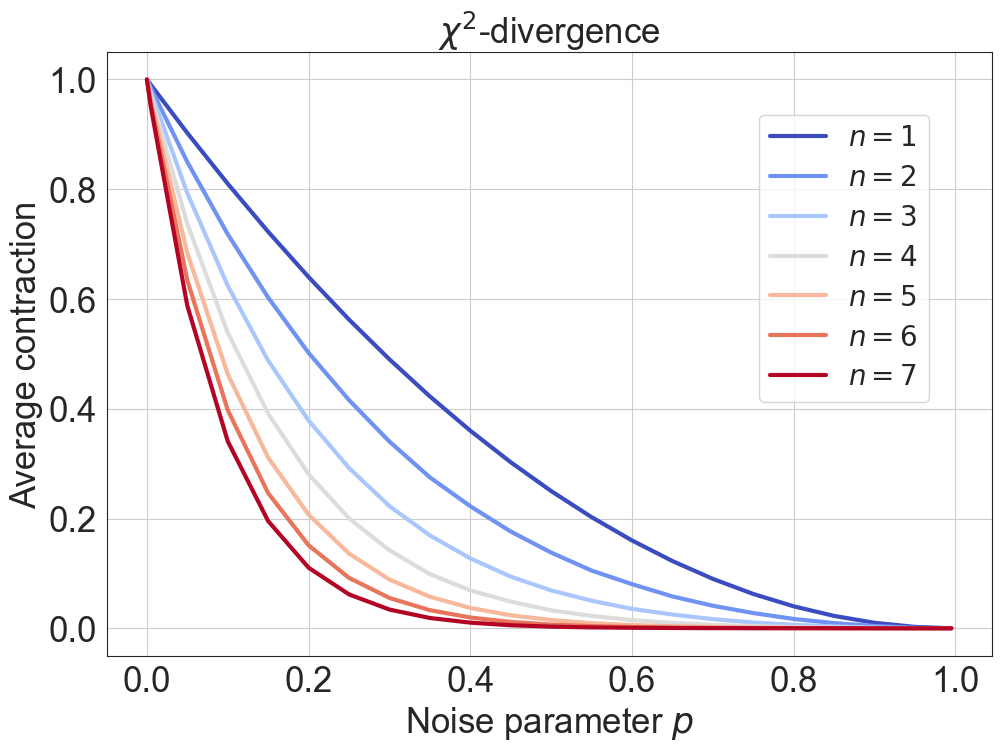}}
	\caption{
	Average contraction as a function of the depolarizing noise parameter $p$ for different system sizes $n$, across four divergences. (a) Trace distance: the average contraction initially decreases for all $p$, but increases at low noise levels ($p\lesssim 0.25$) as $n$ grows, consistent with the asymptotic lower bound (solid black line) and vanishing upper bound (black dot) for $p\gtrsim 0.42$. (b) Max relative entropy: numerical values show a $p$-dependent trend; a lower bound based on $\log(2-3p/2)$ is plotted as a dashed line. (c) Relative entropy and (d) $\chi^2$-divergence: both display a monotonic decrease in average contraction, dominated by the trace distance trend and consistent with Cor.~\ref{cor:asympt_bounds_tr2Df_avc_wrt_max_mix}.}
	\label{fig:depolarizing_param}
\end{figure}

In contrast, we observe monotonic in $n$ behavior for the relative entropy and the $\chi^2$-divergence, as shown in Figs.~\ref{fig:depolarizing_param_re} and \ref{fig:depolarizing_param_c2}, respectively. In both cases, the average contraction decreases rapidly and is largely dominated by the trend of the trace distance, following Cor.~\ref{cor:asympt_bounds_tr2Df_avc_wrt_max_mix}. This additionally agrees with the prediction that the purity of the Choi state effectively gives the average contraction of the $\chi^2$-divergence.

The numerical results for the max relative entropy, shown in Fig.~\ref{fig:depolarizing_param_me}, exhibit a seemingly $p$-dependent trend: for large $p$, the average contraction decreases, while for small $p$, the values appear to collapse onto a single curve. Proving this asymptotic behavior lies beyond the scope of the bounds developed in earlier sections. Nevertheless, in App.~\ref{sec:supmat_numerics}, we derive a lower bound that closely matches the numerical data. This bound is represented by a dashed line in Fig.~\ref{fig:depolarizing_param_me}, it corresponds to
\begin{equation}
	D_{\max}(T_{\text{depol}}^{\otimes n}(\rho) \| \mathbb{I}/2^n) \geq n \cdot \ln(2 - \frac{3}{2} p),
\end{equation}
which remains informative as long as $p < 2/3$.

\paragraph{Dephasing.} Consider the single-qubit dephasing channel with dephasing strength $0\leq\gamma \leq 1$, expressed as a phase flip:
\begin{equation}
	T_{\text{deph}}(\rho) = (1 - \frac{\gamma}{2}) \rho + \frac{\gamma}{2} Z \rho Z,
	\quad Z = \mqty(1 & 0 \\ 0 & -1).
\end{equation}
In App.~\ref{sec:supmat_numerics}, we show that $\Tr\sqrt{\Tr_2 \tau_{\text{deph}}^2}\geq 1$ for all $\gamma$, while $S(\tau_{\text{deph}}) < 1$ for any $\gamma < 1$. Therefore, the average contraction converges to one as $n\to\infty$ for all $\gamma < 1$, except for the extremal case $\gamma=1$. The latter case corresponds to an absolute dephasing channel and can be related to known results to prove that as $n\to\infty$ the average contraction tends to $e^{-1}$, which is shown as a dark dot in Fig.~\ref{fig:dephasing_param_tr}.

\begin{figure}[h!]
	\subcaptionsetup[figure]{position=top,slc=off, margin={0pt,0pt}, skip=-10pt}
		\subcaptionbox{\label{fig:dephasing_param_tr}}
		{\includegraphics[width=.45\textwidth]{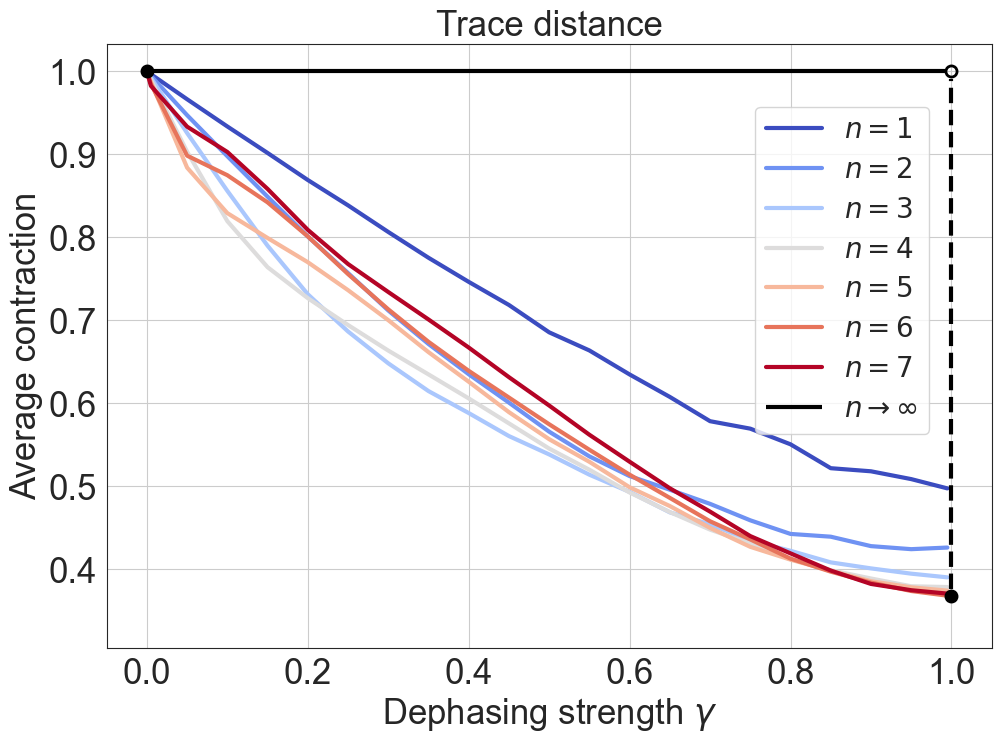}}
	\hfill
		\subcaptionbox{\label{fig:dephasing_param_me}}
		{\includegraphics[width=.45\textwidth]{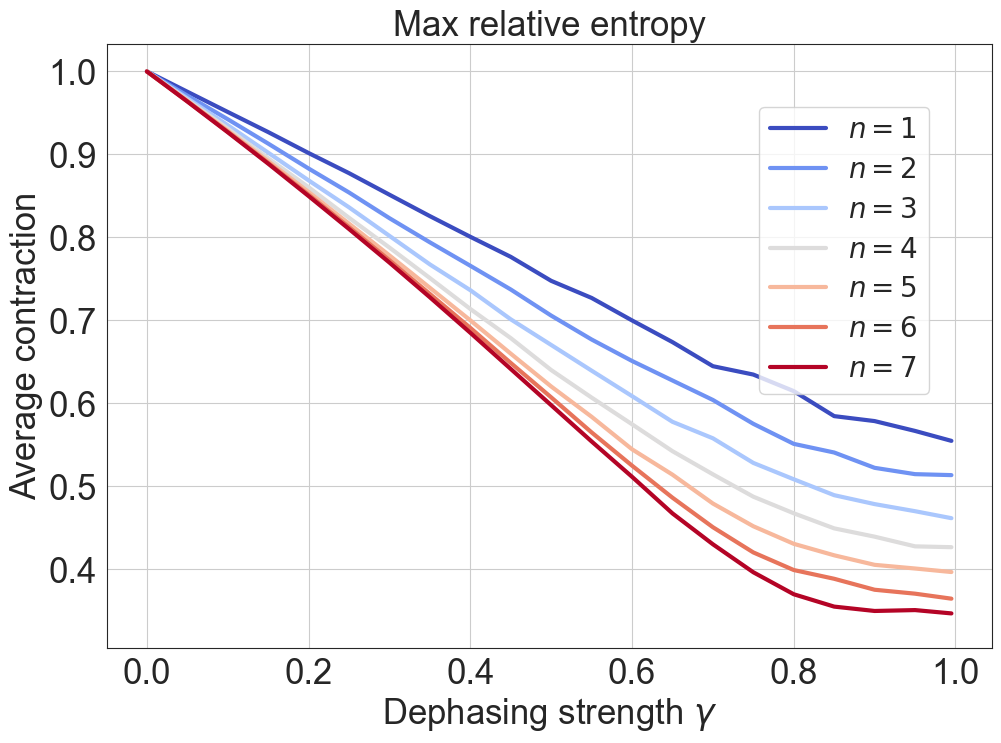}}
		\subcaptionbox{\label{fig:dephasing_param_re}}
		{\includegraphics[width=.45\textwidth]{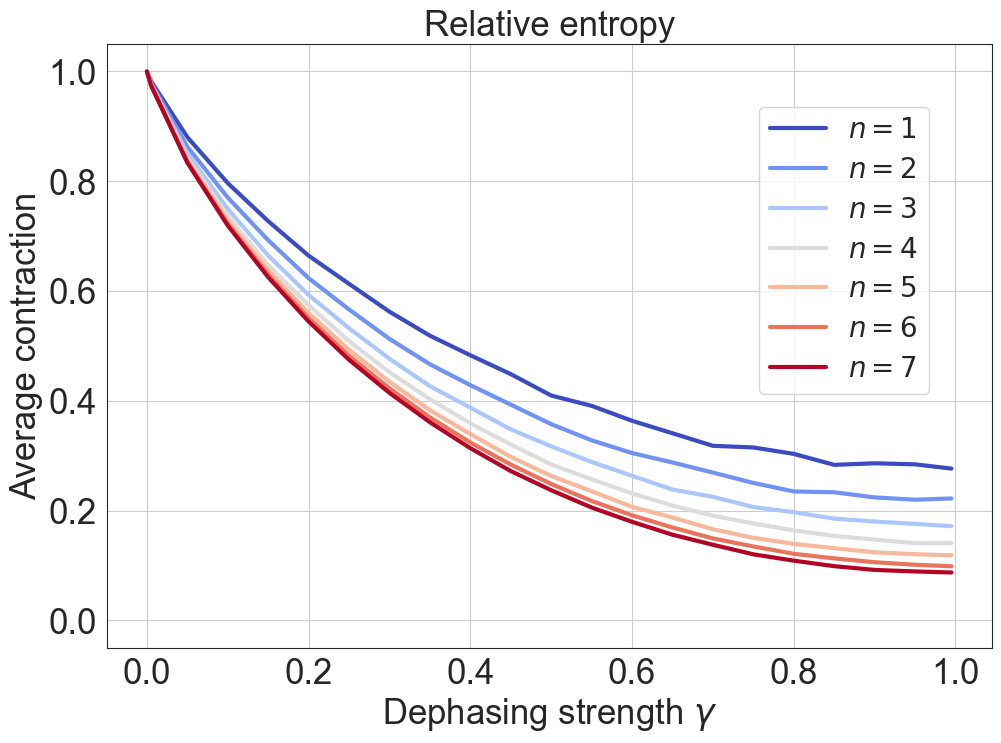}}
	\hfill
		\subcaptionbox{\label{fig:dephasing_param_c2}}
		{\includegraphics[width=.45\textwidth]{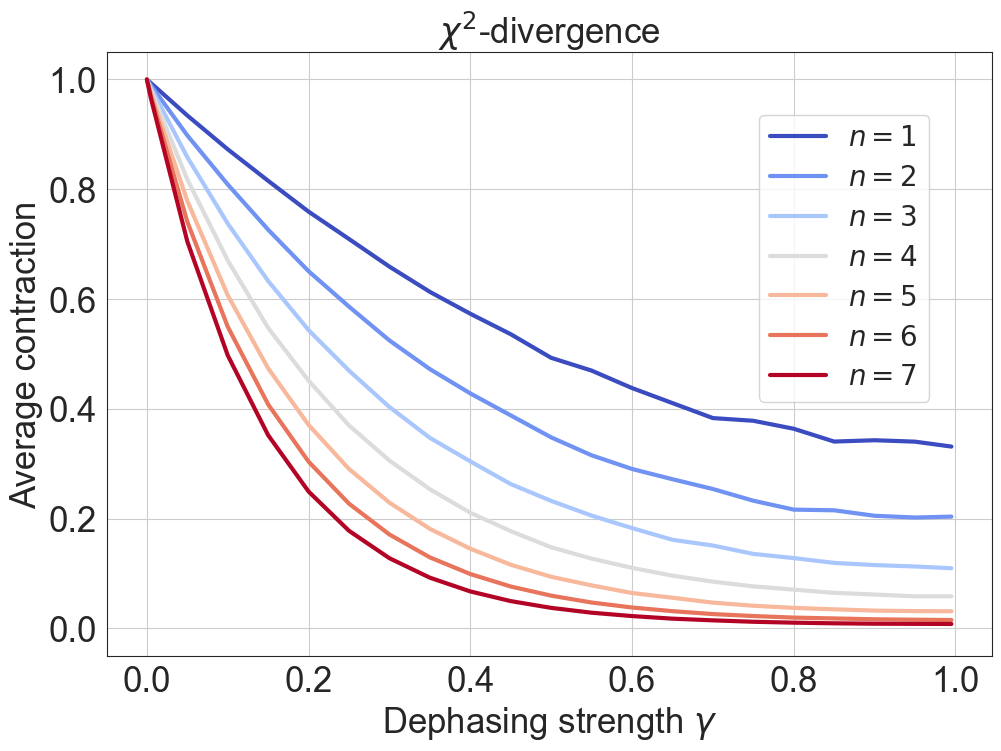}}
	\caption{
	Average contraction as a function of the dephasing strength $\gamma$ for products of the single-qubit dephasing channel, shown for different system sizes $n$ across four divergences. (a) Trace distance: the average contraction converges to $1$ as $n\to\infty$, except at $\gamma=1$, where it saturates to a known value $e^{-1}$. (b) Max relative entropy: exhibits a similar trend to the depolarizing case. (c) Relative entropy and (d) $\chi^2$-divergence: both show monotonic decay with increasing system size; while the $\chi^2$-divergence appears to vanish as $n\to\infty$, the relative entropy seems to converge to a nonzero value. Note that for $n\in\{1,2\}$ we sometimes observe a small non-monotone behavior of the average contraction in terms of the noise parameter. This is an effect of shot noise and the fact that the average contraction exhibits weaker concentration properties in small dimensions.
	}
	\label{fig:dephasing_param}
\end{figure}

As for the other divergences depicted in Fig.~\ref{fig:dephasing_param}, the max relative entropy displays behavior similar to that of the depolarizing case, although we could not find a simple matching lower bound. The $\chi^2$-divergence appears to decay monotonically for all nonzero values of $\gamma$. In contrast, the relative entropy seems to saturate at a positive value for $\gamma \approx 1$.

\paragraph{Amplitude damping.}
The single-qubit amplitude damping channel, with damping strength $0\leq \lambda \leq 1$, is defined as
\begin{equation}
	T_{\text{damp}}(\rho) = K \rho K + \lambda \ketbra{0}{1} \rho \ketbra{1}{0},
	\quad K = \mqty(1 & 0 \\ 0 & \sqrt{1-\lambda}).
\end{equation}
The quantity $\Tr\sqrt{\Tr_2 \tau_{\text{damp}}^2}$ falls below $1$ for $\lambda \lesssim 0.46$, which implies asymptotically vanishing upper bounds on average contraction for those values. Since the channel is non-unital, we need to compute $\norm{\pi}_{\infty} = (1+\lambda)/2$. The condition $S(\tau_{\text{damp}}) < \log \norm{\pi}_{\infty}^{-1} = -\log[ (1+\lambda) /2]$ holds for $\lambda \lesssim 0.20$, indicating that the lower bound in Thm.~\ref{thm:avc_lower_bound} predicts asymptotic convergence to $1$ in that regime.

\begin{figure}[h!]
	\subcaptionsetup[figure]{position=top,slc=off, skip=-10pt, margin={0pt,0pt}}
		\subcaptionbox{\label{fig:amplitude_damping_param_tr}}
		{\includegraphics[width=.45\textwidth]{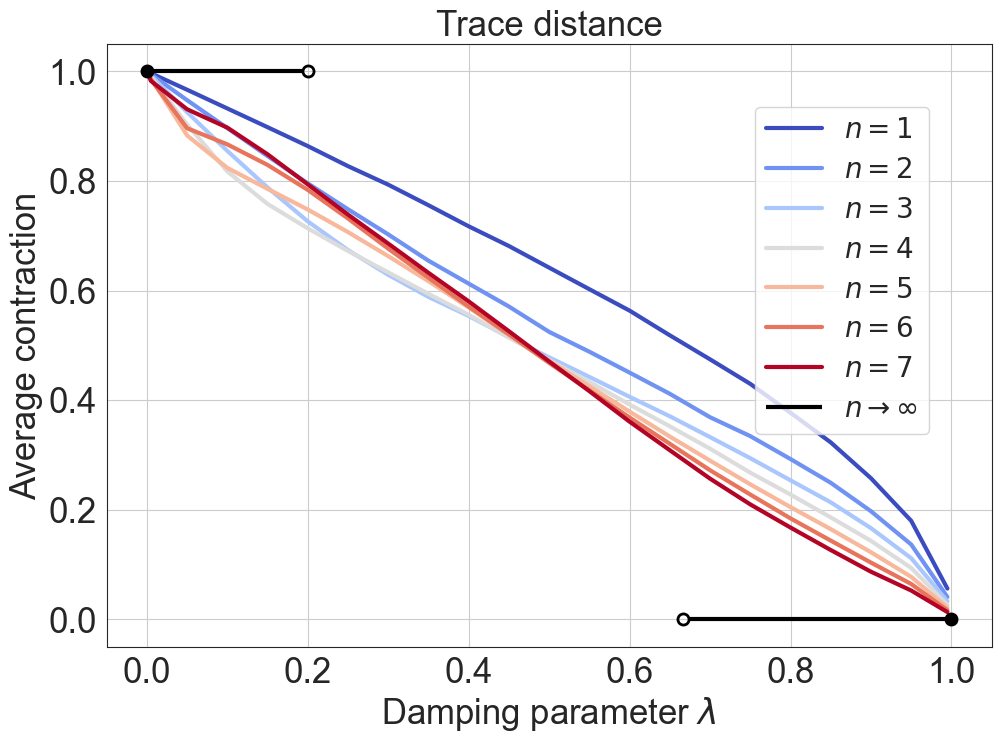}}
	\hfill
		\subcaptionbox{\label{fig:amplitude_damping_param_me}}
		{\includegraphics[width=.45\textwidth]{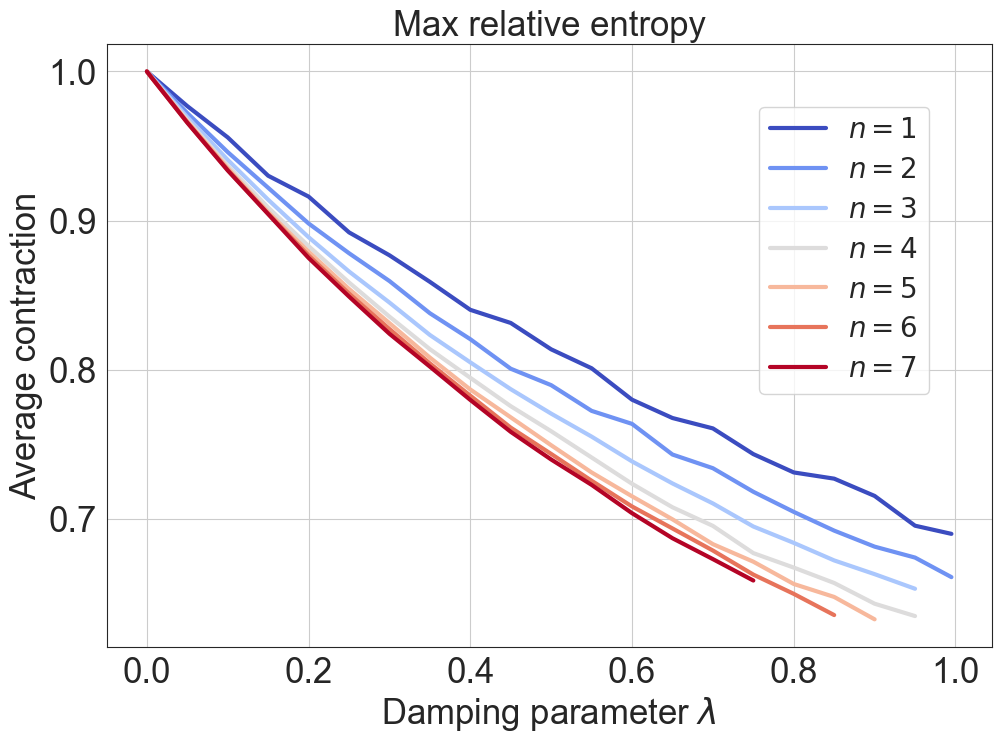}}
		\subcaptionbox{\label{fig:amplitude_damping_param_re}}
		{\includegraphics[width=.45\textwidth]{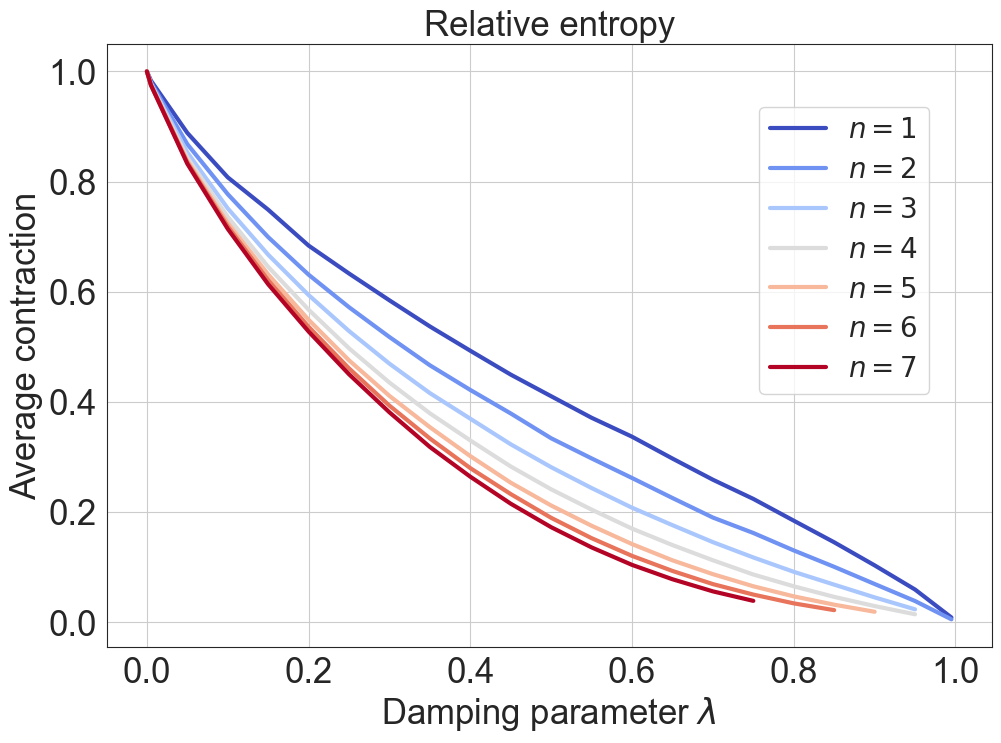}}
	\hfill
		\subcaptionbox{\label{fig:amplitude_damping_param_c2}}
		{\includegraphics[width=.45\textwidth]{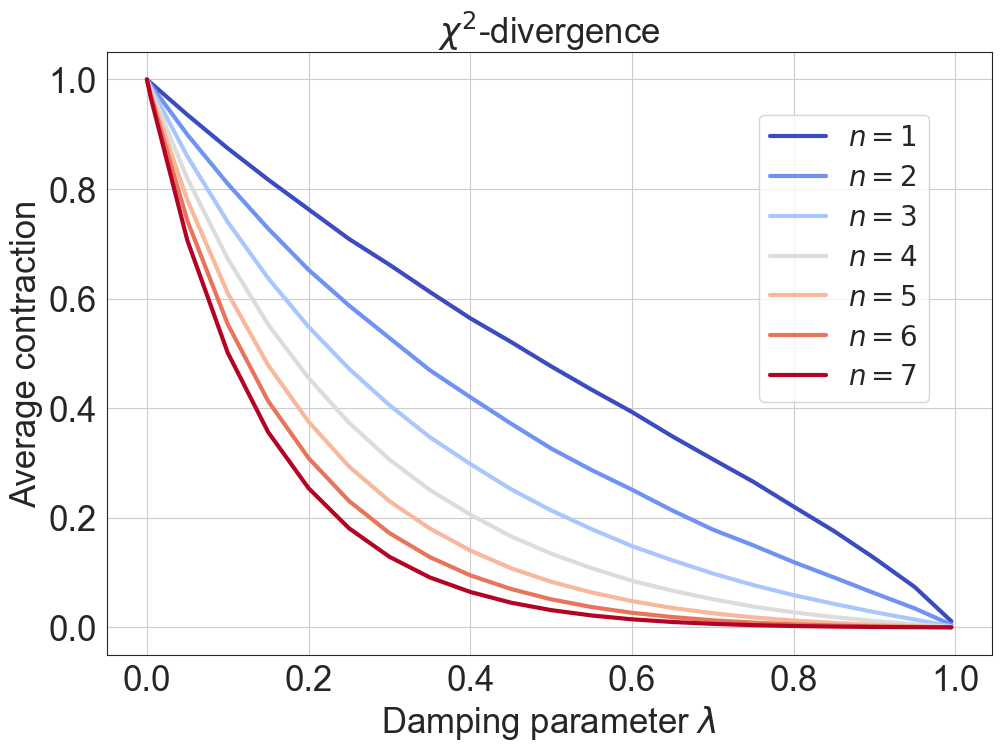}}
	\caption{
	Average contraction as a function of the damping strength $\lambda$ for products of the single-qubit amplitude damping channel, shown for different system sizes $n$ across four divergences. (a) Trace distance: the average contraction respectively increases or decreases with system size for both small and large values of $\lambda$, consistent with the asymptotic bounds predicting convergence to $1$ for $\lambda\lesssim 0.20$ and vanishing for $\lambda \gtrsim 0.46$. (b) Max relative entropy: displays a non-vanishing trend for all values of $\lambda$, remarkably far from $0$ even for large values. (c) Relative entropy and (d) $\chi^2$-divergence: both decrease monotonically with increasing $\lambda$, following a trend consistent with vanishing as $n\to\infty$.	For (b) and (c), we omitted the points where the state $T_{\text{damp}}(\sigma_*)$ was not invertible due to lack of numerical precision. Note that for $n\in\{1,2\}$ we sometimes observe a small non-monotone behavior of the average contraction in terms of the noise parameter, particularly for the max-relative entropy. This is an effect of shot noise and the fact that the average contraction exhibits weaker concentration properties in small dimensions.
	}
	\label{fig:amplitude_damping_param}
\end{figure}

These observations are illustrated in Fig.~\ref{fig:amplitude_damping_param_tr}. Even for small system sizes, the trace distance shows a trend compatible with the asymptotic predictions, with the average contraction increasing with $n$ for small values of $\lambda$ and decreasing for larger values. In contrast, the max relative entropy exhibits a markedly different behavior, with a non-vanishing trend visible in Fig.~\ref{fig:amplitude_damping_param_me}. In the plots, we omitted certain points where the state $T_{\text{damp}}(\sigma_*)$ was not invertible due to insufficient numerical precision. The numerics suggest that this sensitivity is sufficient to produce a non-vanishing average contraction even for values of $\lambda$ close to $1$. Both the relative entropy and the $\chi^2$-divergence (Fig.~\ref{fig:amplitude_damping_param_re} and \ref{fig:amplitude_damping_param_c2}) decrease monotonically and follow a trend consistent with vanishing in the limit $n\to\infty$.

\paragraph{Mixture of orthogonal unitaries.} Given a set of unitaries $\{U_i\}_{i=1}^M$ with $\Tr[U_i^\dag U_j] = \delta_{ij} d$, and a probability distribution $\mathbf{p}$, we consider the  quantum channel defined by applying unitary $U_i$ with probability $p_i$, that is $T_{\text{MU}}(\rho) = \sum_{i=1}^M p_i U_i \rho U_i^{\dag}$. Then, the asymptotic behavior of the average contraction of the trace distance:
\begin{equation}
	\lim_{n\to\infty}\eta_1(T_{\text{MU}}^{\otimes n}, \norm{\cdot}_1, \mu_{d^{n}}\times \delta_{\sigma_*})
	=
	\begin{cases}\label{equ:cases_orthogonal}
		1 \quad &\text{if } H(\mathbf{p}) < \log d, \\
		0 \quad &\text{if } \norm{\mathbf{p}}_2 < d^{-1/2}.
	\end{cases}
\end{equation}
Note that if $M \leq d$, the first condition is always satisfied except for the uniform mixture and the second is always violated. Also note that the cases in~\eqref{equ:cases_orthogonal} do not cover all parameter space.

As a concrete example, consider a single-qubit Pauli channel with three components,
\begin{equation}\label{eq:simple_pauli_channel}
	\rho \mapsto (1-p-q) \rho + p X \rho X + q Z \rho Z,
\end{equation}
here $U_i\in\{\mathbb{I}, X, Z\}$ and $\mathbf{p} = (1-p-q, p, q)$. Fig.~\ref{fig:mix_unit_regimes} shows the different regions in the $(p, q)$ plane where our bounds predict definite asymptotic behavior.  Within the red triangular frame the channel is equivalent to a dephasing channel, thus all points converge to one, except for the special points $(q,p)\in\{(0, 1/2), (1/2, 0), (1/2, 1/2)\}$, which correspond to absolute dephasing, and the average contraction tends to $e^{-1}$ as established above. For points on the boundary of the blue ellipse, on the inner boundary of the red region, and within the three unshaded sectors, the asymptotic behavior remains undetermined by our current results.

\section{Conclusion and open questions}
In this work, we introduced the concept of the moments of contraction of a quantum channel under a divergence as a way to obtain a more fine-grained understanding of ``how noisy" a quantum channel is on an ensemble of inputs. We showed some structural properties of this quantity and derived upper and lower bounds for certain divergences and classes of channels, paying particular attention to product channels and the trace distance.
We uncovered that the average contraction coefficient exhibits interesting properties, such as sharp phase transitions in terms of the strength of the noise and exponential separations from the maximal contraction coefficient. We applied these findings to study noisy quantum circuits and quantum differential privacy, showing how the average input state perspective is useful and important for such applications of contraction coefficients.

Our work leaves open various questions to develop a better understanding of when moments of contraction for one divergence dominate the moments of contraction for another divergence. Various results in this direction are known for the worst-case contraction~\cite{Hirche23}. In addition, it is intriguing to explore to what extent the phase transitions we uncovered for the trace distance can happen for other divergences and investigate if there are intermediate regimes. That is, we found examples where the average contraction coefficient either converges exponentially fast to $1$ or $0$, or converges to a constant $0<c<1$ for a set of noise parameters of measure $0$. It would be interesting to investigate if there are examples where it converges to a nontrivial constant for sets of positive measure, or regimes where it converges at a rate that is not exponential. As can be seen in Fig.~\ref{fig:mix_unit_regimes}, even for simple models like Pauli channels, there are parameter regions for which we do not know the behavior of the average contraction.

\begin{figure}[h!]
	\centering
	\includegraphics[width=0.45\textwidth]{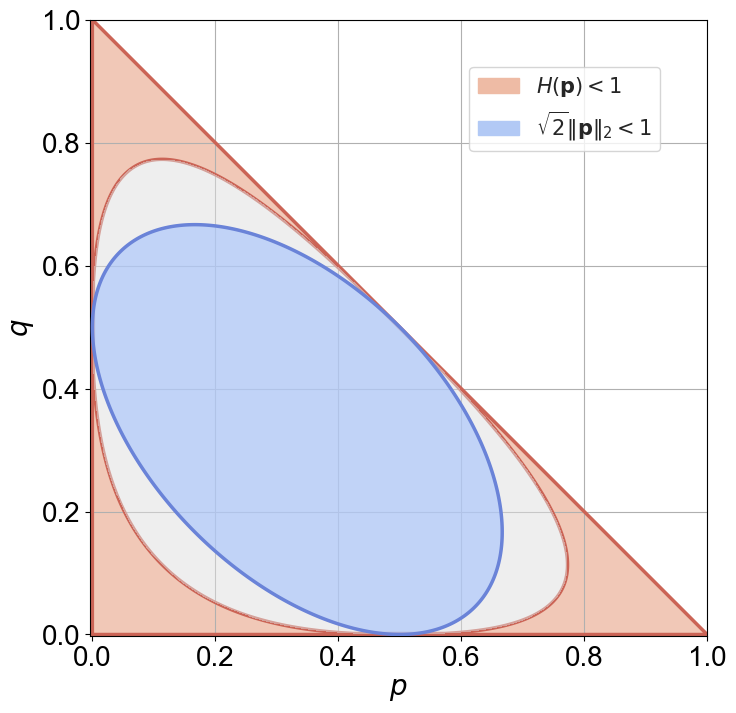}
	\caption{
	Asymptotic average contraction for $n$-fold products of the Pauli channel in Eq.~\eqref{eq:simple_pauli_channel}. In the red-shaded region where $H(\mathbf{p}) < 1$, the average contraction converges to one as $n\to\infty$. In the blue-shaded ellipse where $\sqrt{2}\norm{\mathbf{p}}_2 < 1$, it vanishes asymptotically. These bounds guarantee the limiting behavior only within the shaded areas; the boundaries may exhibit distinct behavior.
	}
		\label{fig:mix_unit_regimes}
\end{figure}

Finally, it would be interesting to understand if the framework of average-case contraction can be applied to the study of mixing times of quantum Markov chains to improve or simplify existing results and to generalize the framework of reverse Pinsker inequalities for general $f$-divergences to simplify and generalize our results for $f$-divergences.

\section{Acknowledgments}
We thank Mikel Sanz, David P\'erez-Garc\'ia, Mario Berta, and Christoph Hirche for helpful discussions. 
R.I. acknowledges the support of the Basque Government Ph.D. Grant No. PRE 2021-1-0102. R.I. also acknowledges support from HORIZON-CL4-2022-QUANTUM01-SGA project 101113946 OpenSuperQ-Plus100 of the EU Flagship on Quantum Technologies, the Spanish Ram\'on y Cajal Grant RYC-2020-030503-I, and the “Generaci\'on de Conocimiento” project Grant No. PID2021-125823NA-I00 funded by MICIU/\allowbreak AEI/\allowbreak 10.13039/\allowbreak 501100011033, by “ERDF Invest in your Future” and by FEDER EU. R.I. also acknowledges support from the Basque Government through Grants No. IT1470-22, the Elkartek project KUBIT KK-2024/00105, and from the IKUR Strategy under the collaboration agreement between Ikerbasque Foundation and BCAM on behalf of the Department of Education of the Basque Government. This work has also been partially supported by the Ministry for Digital Transformation and the Civil Service of the Spanish Government through the QUANTUM ENIA project call – Quantum Spain project, and by the European Union through the Recovery, Transformation and Resilience Plan – NextGenerationEU within the framework of the Digital Spain 2026 Agenda.
D.S.F. acknowledges financial support from the Novo Nordisk Foundation (Grant No. NNF20OC0059939 Quantum for Life) and by the ERC grant GIFNEQ 101163938.

This work is funded by the European Union. Views and opinions expressed are however those of the author(s) only and do not necessarily reflect those of the European Union or the European Research Council Executive Agency. Neither the European Union nor the granting authority can be held responsible for them.
\bibliographystyle{alphaurl}
\bibliography{avcontract_updated,library_updated}

\pagebreak
\appendix

\section{Expectation values for \texorpdfstring{$t$}{t}-invariant distributions}\label{sec:expvals_unit_invar}

A distribution $\nu$ over Hermitian matrices is unitarily invariant if it satisfies $U \Lambda U^{\dag} \sim \nu$ for $\Lambda \sim \nu$ and any unitary $U$. When such a distribution is defined over quantum states, it reflects the absence of a preferred basis and ensures statistical invariance under unitary transformations. Any random variable $\Lambda \sim \nu$ with this property can be written as $\Lambda = U \Sigma U^{\dagger}$, where $U$ is Haar-distributed and $\Sigma$ follows an independent matrix-valued distribution $\Delta$ \cite{Collins2006}. In the study of random quantum states, it is common to assume that $\Sigma$ is distributed over diagonal matrices~\cite{Zyczkowski1998, Zyczkowski2001}, leading to independent distributions for eigenvalues and eigenvectors.  

However, exact unitary invariance is often a strong condition, and in many practical scenarios, one encounters distributions that are only approximately invariant. Analogously to $t$-designs, which serve as approximate substitutes for Haar-random unitaries, we define $t$-invariant distributions, which satisfy a weaker form of unitary invariance. Specifically, a distribution $\nu$ is $t$-invariant if, for any polynomial $f$ of degree at most $t$ and any unitary $U$,  
\begin{equation}
    \bEx{\Lambda\sim\nu}\qty[f(\Lambda)] =
    \bEx{\Lambda\sim\nu}\qty[f(U \Lambda U^{\dag})],
\end{equation} 
in particular, this implies that we can compute
\begin{equation}
    \bEx{\Lambda\sim\nu}\qty[f(\Lambda)] =
    \bEx{\Lambda\sim\nu}\qty[\bEx{U\sim\mu_H}\qty[f(U \Lambda U^{\dag})]].
\end{equation}  
For pure states, $t$-invariant distributions correspond to those induced by unitary $t$-designs. Moreover, if $\nu_1$ and $\nu_2$ are $t$-invariant distributions over quantum states, then the induced distribution $\nu_3$ of their difference $\Lambda = \rho - \sigma$, where $\rho \sim \nu_1$ and $\sigma \sim \nu_2$, is also $t$-invariant.

In this section, using standard Haar integration techniques \cite{Mele2024}, we compute expectation values of the form $\bE[T(\Lambda)]$, $\bE[T(\Lambda)^2]$, $\operatorname{Var}[T(\Lambda)] = \bE[T(\Lambda)^2] - \bE[T(\Lambda)]^2$, and $\bE[\norm{T(\Lambda)}_2^2]$, where the result for the first quantity is valid for $\Lambda$ with a 1-invariant distribution and the others for $\Lambda$ with a 2-invariant distribution.

For $T(\Lambda)$, computing first the expectation value for the unitary part we obtain:
\begin{equation}
    \mathbb{E}_{\Lambda\sim\nu}[\mathbb{E}_{U\sim\mu_H}[T(U\Lambda U^{\dag})]] 
    = \mathbb{E}_{\Lambda\sim\nu}[\Tr\Lambda]\; T\qty(\frac{\mathbb{I}}{d}) 
    = \mathbb{E}_{\Lambda\sim\nu}[\Tr\Lambda]\; \pi,
\end{equation}
where we relied on linearity of $T$ and $\mathbb{E}_{U\sim\mu_H}[U \Lambda U^{\dag}] = \Tr \Lambda \; \frac{\mathbb{I}}{d}$, and named $\pi = T\qty(\frac{\mathbb{I}}{d})$.

To compute the expectation value of $T(\Lambda)^2$ we use the swap trick $T(\Lambda)^2 = \Tr_2 (T(\Lambda)\otimes T(\Lambda) \mathbb{F})$, where $\mathbb{F}$ is the flip operator between both terms of the tensor product and $\Tr_2$ is the partial trace over the second term. Thus, we first obtain
\begin{align}
\mathbb{E}_{U\sim \mu_H} \qty[T(U\Lambda U^{\dag})\otimes T(U\Lambda U^{\dag})]
&= 
(T\otimes T) \qty(\mathbb{E}_{U\sim \mu_H} \qty[(U\otimes U) \Lambda \otimes \Lambda (U^{\dag}\otimes U^{\dag})]) \\
&= (T\otimes T) \qty( 
\frac{1}{d^2-1}\qty{
\qty[(\Tr\Lambda)^2 - d^{-1} \Tr\Lambda^2] \mathbb{I}
+\qty[\Tr\Lambda^2 - d^{-1} (\Tr\Lambda)^2] \mathbb{F}
}
) \\
&=
\frac{d^2}{d^2-1}\left\lbrace
\qty[(\Tr\Lambda)^2 - d^{-1} \Tr\Lambda^2] \; \pi\otimes\pi \right.\\
&\left.+\qty[\Tr\Lambda^2 - d^{-1} (\Tr\Lambda)^2] (T\otimes T)\qty(\frac{\mathbb{F}}{d^2})
\right\rbrace,
\end{align}
where we used the linearity of $T\otimes T$ and the second moments of the Haar distribution
\begin{equation}
\mathbb{E}_{U\sim \mu_H} \qty[ U\otimes U A  U^{\dag}\otimes U^{\dag}] = c_{\mathbb{I}_{d},A} \mathbb{I}_{d}+ c_{\mathbb{F}_{d},A} \mathbb{F}_{d},
\end{equation}
with
\begin{align}
c_{\mathbb{I}_{d},A} &= \frac{1}{d^2-1}\qty[\Tr A - d^{-1} \Tr A\mathbb{F}_{d}],\\
c_{\mathbb{F}_{d},A} &= \frac{1}{d^2-1}\qty[\Tr A\mathbb{F}_{d} - d^{-1} \Tr A].
\end{align}
Calculating $\Tr_2[\pi\otimes\pi \mathbb{F}] = \pi^2$ and $\Tr_2[(T\otimes T)\qty(\frac{\mathbb{F}}{d^2}) \mathbb{F}] = \Tr_2 \tau^2$, we eventually obtain
\begin{equation}
    \mathbb{E}_{U\sim\mu_H} [T(U \Lambda U^{\dag})^2] = 
\frac{d^2}{d^2-1}\qty(
\qty[(\Tr\Lambda)^2 - d^{-1} \Tr\Lambda^2] \; \pi^2
+\qty[\Tr\Lambda^2 - d^{-1} (\Tr\Lambda)^2] \Tr_2 \tau^2
).
\end{equation}

Lastly, as $\norm{A}_2^2= \Tr A^2$ and the trace is a linear mapping, we have
\begin{equation}
    \mathbb{E}[\norm{T(\Lambda)}_2^2] = \Tr \mathbb{E}[T(\Lambda)^2].
\end{equation}

Let us consider specific distributions for the variable $\Lambda$. Below, we assume that the states $\rho$ and $\sigma$ have 2-invariant distributions on $\cS(\cH)$. Additionally, we assume $\mathbb{E}[\Tr\rho^2] = \frac{d + r}{dr + 1}$, which can be used to compute the terms in $\mathbb{E}[\Tr\Lambda^2]$. The integer parameter $r$ characterizes the rank of the states. Specifically, $r = 1$ means the distributions are concentrated on pure states, while $r = d$ corresponds to full-rank states. This choice for the average purity provides a convenient parametrization of mixedness, where $r$ controls the transition from pure to full-rank states. Moreover, this choice makes the moments computed below coincide with those of distributions induced by taking Haar distributions over $\cH_d \otimes \cH_r$ and discarding the second subsystem. In particular, when $r = d$, the distribution coincides with the Hilbert-Schmidt distribution.

We now apply the general formula to three common cases considered throughout this work, noting that the results for the second and third cases can be directly expressed in terms of the first.

First, let $\Lambda = \rho$, where $\rho$ follows a 2-invariant distribution with an average purity of $\mathbb{E}[\Tr\rho^2] = \omega$. In this case, we obtain the following expectation values:
\begin{align}
    \mathbb{E}[T(\rho)] &= \pi, \\
    \mathbb{E}[T(\rho)^2] &= \frac{d^2}{d^2-1}\qty[\qty(1-\frac{\omega}{d}) \pi^2 + \qty(\omega-\frac{1}{d})\Tr_2 \tau^2], \\
    \operatorname{Var}[T(\rho)] &= \frac{d^2}{d^2-1}\qty(\omega-\frac{1}{d})\qty[\Tr_2 \tau^2 - \frac{1}{d} \pi^2],\\
    \mathbb{E}[\norm{T(\rho)}_2^2] 
    =  \mathbb{E}[\Tr T(\rho)^2]
    &= \frac{d^2}{d^2-1}\qty[\qty(1-\frac{\omega}{d}) \Tr \pi^2 + \qty(\omega-\frac{1}{d})\Tr \tau^2].
\end{align}

Next, consider the case $\Lambda = \rho - \sigma$, where $\rho$ and $\sigma$ are independent, potentially different, 2-invariant distributions with the same average purity $\mathbb{E}[\Tr\rho^2] = \mathbb{E}[\Tr\sigma^2] = \omega$. Using the independence of $\rho$ and $\sigma$, which allows us to use $\operatorname{Var}[A+B] = \operatorname{Var}[A] + \operatorname{Var}[B]$ for independent variables, and the fact that the first two moments depend only on the average purity and trace, we find:
\begin{align}
    \mathbb{E}[T(\rho)-T(\sigma)] &= 0, \\
    \mathbb{E}[(T(\rho)-T(\sigma))^2] = \operatorname{Var}[T(\rho)-T(\sigma)] &= 
    \operatorname{Var}[T(\rho)]+ \operatorname{Var}[T(\sigma)] = 2 \operatorname{Var}[T(\rho)], \\
    \mathbb{E}[\norm{T(\rho)-T(\sigma)}_2^2] &= 2\Tr[\operatorname{Var}[T(\rho)]]. \label{eq:expval_HS_dist2}
\end{align}

Finally, consider the case $\Lambda = \rho - \frac{\mathbb{I}}{d}$, where $\rho$ follows a 2-invariant distribution with $\mathbb{E}[\Tr\rho^2] = \omega$. Using the same properties of the variance as above, we obtain:
\begin{align}
    \mathbb{E}[T(\rho)-\pi] &= 0, \\
    \mathbb{E}[(T(\rho)-\pi)^2] = 
    \operatorname{Var}[T(\rho)-\pi] &= \operatorname{Var}[T(\rho)], \\
    \mathbb{E}[\norm{T(\rho)-\pi}_2^2] &= \Tr[\operatorname{Var}[T(\rho)]].
\end{align}

As an example of how useful these calculations can be, we exactly compute the second moment of the 2-norm contraction. Note that although the 2-norm does not generally satisfy DPI \cite{PerezGarcia2006}, the following upper bound is indeed lower than one by $\Tr\pi^2 \geq \frac{1}{d} \Tr \tau^2$ (Fact~\ref{fact:purities}).

\begin{lemma}\label{lemma:2normcontract}  
Let $T:\cS(\cH_d)\to\cS(\cH_{d'})$ be a quantum channel, and let $\rho, \sigma \in \cS(\cH_d)$ be two density matrices distributed according to the measure $\nudr$. Then,  
\begin{equation}  
\bEx{\rho,\sigma\sim\nudr} \qty[ \frac{\norm{T(\rho)-T(\sigma)}_2^2}{\norm{\rho-\sigma}_2^2} ]  
= \frac{d}{d^2-1} \qty[d \Tr \tau^2 - \Tr \pi^2],  
\end{equation}  
where $\tau$ is the Choi state of $T$, and $\pi = T\qty(\frac{\mathbb{I}}{d})$ is the average output state.  
\end{lemma}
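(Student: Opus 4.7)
The strategy is to exploit the unitary invariance of the measure to achieve an exact cancellation between numerator and denominator, reducing the ratio inside the expectation to a deterministic constant. Let $\Lambda = \rho - \sigma$ and note that the product measure $\nudr \times \nudr$ is invariant under the diagonal conjugation $(\rho,\sigma) \mapsto (U\rho U^\dagger, U\sigma U^\dagger)$ for any unitary $U$, because each marginal $\nudr$ is itself unitarily invariant. Consequently the law of $\Lambda$ is $2$-invariant, and I may write $\Lambda = V \Sigma V^\dagger$ with $V$ Haar-distributed on $U(d)$ and $\Sigma$ an independent traceless Hermitian matrix. The denominator is $\norm{\rho-\sigma}_2^2 = \Tr \Lambda^2 = \Tr \Sigma^2$, which depends only on $\Sigma$, whereas the numerator is $\Tr[T(V \Sigma V^\dagger)^2]$. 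This structure invites conditioning on $\Sigma$ and performing the Haar average over $V$ first.

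The Haar average over $V$ is exactly the identity already derived in App.~\ref{sec:expvals_unit_invar} applied to $\Lambda = V \Sigma V^\dagger$:
\begin{equation*}
\bEx{V\sim\mu_H}\qty[T(V \Sigma V^\dagger)^2] = \frac{d^2}{d^2-1}\qty(\qty[(\Tr\Sigma)^2 - d^{-1}\Tr\Sigma^2]\,\pi^2 + \qty[\Tr\Sigma^2 - d^{-1}(\Tr\Sigma)^2]\,\Tr_2 \tau^2).
\end{equation*}
Since $\Tr \Sigma = \Tr \Lambda = \Tr \rho - \Tr \sigma = 0$, only the terms proportional to $\Tr \Sigma^2$ survive. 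Taking the trace on both sides collapses the expression to
\begin{equation*}
\bEx{V\sim\mu_H}\qty[\Tr T(V \Sigma V^\dagger)^2] = \frac{d\, \Tr \Sigma^2}{d^2-1}\,\qty[d \Tr \tau^2 - \Tr \pi^2].
\end{equation*}

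The decisive observation is that this conditional expectation is linearly proportional to $\Tr \Sigma^2$, which is precisely the denominator. Dividing, the $\Tr \Sigma^2$ factor cancels identically, so the conditional expectation of the ratio is the deterministic quantity $\tfrac{d}{d^2-1}\qty[d \Tr \tau^2 - \Tr \pi^2]$, independent of $\Sigma$. The outer expectation over $\Sigma$ is therefore trivial and yields the claim. There is essentially no technical obstacle here: the only subtlety worth checking is that the tracelessness of $\Lambda$ truly kills the $\pi^2$ term, which it does thanks to $(\Tr \Lambda)^2 = 0$, leaving a clean proportionality between numerator and denominator that is the source of the exact (rather than merely asymptotic) equality.
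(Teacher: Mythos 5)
Your proof is correct and follows essentially the same route as the paper: decompose $\Lambda=\rho-\sigma$ via unitary invariance as a Haar unitary conjugating an independent traceless spectral part, average over the unitary first using the formula from App.~\ref{sec:expvals_unit_invar}, and observe that tracelessness makes the conditional expectation exactly proportional to $\Tr\Sigma^2$, cancelling the denominator. One small prose slip: tracelessness does not ``kill the $\pi^2$ term'' outright --- it only removes the $(\Tr\Lambda)^2\pi^2$ contribution, leaving $-d^{-1}\Tr\Sigma^2\,\pi^2$, which your displayed computation nonetheless handles correctly since the $-\Tr\pi^2$ survives in the final answer.
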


\begin{proof}
We use results from this Appendix together with $X=\rho-\sigma$ and the states having distribution $\nudr$. $X$ has an induced distribution $\nu$ that is unitarily invariant and can be regarded as a product of two independent distributions $\nu=\mu_H\times\Delta$, where $\mu_H$ is the Haar distribution on unitary matrices of size $d$ and $\Delta$ is a distribution on traceless Hermitian matrices of size $d$~\cite[Chapter 10]{Aubrun2017}. Here, the underlying spaces are identified through the spectral decomposition $X=U \Lambda U^{\dag}$, implying that $X\sim\nu$ is equivalent to $U\sim\mu_H$ and $\Lambda\sim \Delta$. Then we can compute the expectation value in two steps:
\begin{align}
    \bEx{\rho,\sigma\sim\nudr}\qty[ \frac{\norm{T(\rho)-T(\sigma)}_2^2}{\norm{\rho-\sigma}_2^2}]
    &=
    \bEx{X\sim\nu}\qty[ \frac{\norm{T(X)}_2^2}{\norm{X}_2^2}] \\
    &=\label{eq:exp_val_spectral_dec}
    \bEx{\Lambda\sim\Delta}\qty[ \frac{1}{{\Tr\Lambda^2}}\bEx{U\sim\mu_H} \qty[\norm{T(U\Lambda U^{\dag})}_2^2]],
\end{align}
where we used the invariance of the $2$-norm under unitary conjugations to take the denominator out of the first expectation value and that ${\norm{\Lambda}_2^2}=\Tr\Lambda^2$.

Calculating the expectation value of the unitary part,
\begin{equation}
    \bEx{U\sim\mu_H} \qty[\norm{T(U\Lambda U^{\dag})}_2^2] = \frac{d}{d^2-1}\qty[d \Tr \tau^2 - \Tr \pi^2] \; \Tr\Lambda^2.
\end{equation}
The distribution $(\rho,\sigma)\sim\nudr\times\nudr$ is concentrated in $\rho\neq \sigma$ which implies that $X\neq 0$ almost surely and consequently $\Tr\Lambda^2 > 0$ almost surely. Finally, we can introduce the latter result in the expression of Eq.~\eqref{eq:exp_val_spectral_dec}, which completes the proof.
\end{proof}

In the particular case where the distributions are concentrated on pure states, we recover state $t$-designs (Definition~\ref{def:t_designs}), and we can additionally state the following well-known facts.

\begin{lemma}\label{lemma:2design_collision}
	Let $\nu$ be a finite uniform 2-design distribution on $\cH_d$ and let $\rho,\sigma\sim\nu$ be two independent and identically distributed states. Then,
	\begin{equation}
		\Pr[\rho = \sigma] \leq \frac{1}{d^4-2d^2+2} = \order{d^{-4}}.
	\end{equation}
\end{lemma}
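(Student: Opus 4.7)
The plan is to realize $\nu$ as the image of a uniform unitary $2$-design and then invoke a Welch-type lower bound on its cardinality.

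First, using Def.~\ref{def:t_designs} I write $\nu$ as the pushforward of the uniform distribution on a set $\{U_i\}_{i=1}^{N}$ of unitaries forming a unitary $2$-design, via $U_i\mapsto U_i\ket{\phi_0}\bra{\phi_0}U_i^{\dag}$. When the orbit points are distinct this gives $\Pr[\rho=\sigma]=1/N$, so the task reduces to proving $N\ge d^4-2d^2+2$.

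Next I compute two moments of $T_{ij}:=\Tr[U_i^{\dag}U_j]$. The $1$-design property together with the standard Haar integral $\int|\Tr W|^2\,\dd\mu_H(W)=1$ gives $\sum_{i,j}|T_{ij}|^2=N^2$, and the $2$-design property together with $\int|\Tr W|^4\,\dd\mu_H(W)=2$ (valid for $d\ge 2$) gives $\sum_{i,j}|T_{ij}|^4=2N^2$. Peeling off the diagonal contributions $Nd^2$ and $Nd^4$ and applying Cauchy--Schwarz to the $N(N-1)$ off-diagonal terms leads to
\begin{equation*}
\bigl(N(N-d^2)\bigr)^{2}\;\le\;N(N-1)\cdot N(2N-d^4),
\end{equation*}
which after expanding rearranges to $N(N-d^4+2d^2-2)\ge 0$, i.e. $N\ge d^4-2d^2+2$. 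Combining with $\Pr[\rho=\sigma]=1/N$ yields the stated bound.

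The step I expect to be most delicate is the reduction from the state-collision probability in the lemma to the Welch bound for unitaries: if several $U_i$'s share an orbit point, one has $\Pr[\rho=\sigma]=\sum_k n_k^2/N^2$ rather than $1/N$, and the naive bound $1/N$ needs justification. The cleanest remedy is to work directly with the uniform distribution on unitaries (whose collision probability equals exactly $1/N$) and argue that it upper-bounds the state collision probability, or to reduce to a minimal-multiplicity representation by quotienting by the stabilizer of $\ket{\phi_0}$ while preserving the $2$-design property. Verifying that this reduction can be performed without violating the uniformity assumption is the single point that requires care; everything else is a routine combination of Cauchy--Schwarz and the Haar moments recorded above.
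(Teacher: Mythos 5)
Your proof follows essentially the same route as the paper's: reduce the state-collision probability to the cardinality $N$ of the underlying unitary $2$-design and then invoke the bound $N\geq d^4-2d^2+2$. The one genuine addition is that you derive this cardinality bound rather than cite it, and your derivation is correct: the frame-potential identities $\sum_{i,j}\abs{\Tr U_i^{\dag}U_j}^2=N^2$ and $\sum_{i,j}\abs{\Tr U_i^{\dag}U_j}^4=2N^2$ follow from the $1$- and $2$-design properties together with $\int\abs{\Tr W}^{2t}\dd\mu_H=t!$ for $d\geq t$, and peeling off the diagonal and applying Cauchy--Schwarz to the $N(N-1)$ off-diagonal terms does rearrange to $N(N-d^4+2d^2-2)\geq 0$. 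The paper instead simply cites this bound from the literature.

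The ``delicate step'' you flag is real, and you should know that the paper's own proof does not resolve it either: it asserts that distinct unitaries $U_i\neq U_j$ force $U_i\ket{\phi_0}\neq U_j\ket{\phi_0}$, which is false whenever $U_j^{\dag}U_i$ stabilizes $\ket{\phi_0}$ up to phase (e.g.\ the $24$-element single-qubit Clifford group, a unitary $2$-design, maps $\ket{0}$ onto only the $6$ stabilizer states, each with multiplicity $4$). Two comments on your proposed remedies. First, the direction of your first remedy is wrong: a unitary collision implies a state collision, so $\Pr[U_i=U_j]=1/N$ is a \emph{lower} bound on $\Pr[\rho=\sigma]$, not an upper bound; if the orbit has $K<N$ distinct points with multiplicities $n_k$, then $\Pr[\rho=\sigma]=\sum_k n_k^2/N^2\geq 1/K>1/N$. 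Second, quotienting by the stabilizer of $\ket{\phi_0}$ reduces the multiset to $K$ distinct states but there is no reason the resulting set of $K$ coset representatives is still a unitary $2$-design, so the Welch bound no longer applies to $K$. In short, your computation of the unitary-design cardinality bound is sound and matches the paper's intended argument, but the reduction from state collisions to $1/N$ is a gap shared by both proofs, and neither of your suggested patches closes it as stated.
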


\begin{proof}
	If the distribution is continuous and uniform, then $\rho \neq \sigma$ a.s. and the inequality is trivially true. 
	
	Let $\{U_i\}_{i=1}^N$ be the set of different unitaries forming a 2-design which together with some fixed state $\ket{\phi_0}$ generate the states $\ket{\phi_i}=U_i\ket{\phi_0}$ which have a uniform probability weight $1/N$. As the unitaries are different, the collision $\ket{\phi_i}=\ket{\phi_j}$ can only happen if $i=j$, hence
	\begin{equation}
		\Pr[\rho = \sigma] = \frac{1}{N}.
	\end{equation}
	The lemma then follows from the well-known fact that any discrete 2-design must have at least $N\geq d^4-2d^2+2$ elements \cite{Gross2007, Roy2009}.
\end{proof}

\begin{lemma}[Tail bound trace distance in 1-designs]\label{lemma:concentrationoftrdist}
    Let $\nu$ be a 1-design over pure states in $\cS(\cH_d)$, and let $\rho$ and $\sigma$ be two states independently distributed according to $\nu$. Then,
    \begin{equation}
        \Pr[\frac{1}{2}\norm{\rho-\sigma}_1 \leq 1 - \delta] \leq \frac{1}{d \, \delta}
        \quad
        \text{for all} \quad 0< \delta < 1.
    \end{equation}
\end{lemma}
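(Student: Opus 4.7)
The plan is to reduce the tail bound on the trace distance to a bound on the squared overlap $|\langle\psi|\phi\rangle|^2$, and then apply Markov's inequality together with the 1-design property. I will exploit the fact that for two pure states the trace distance admits the closed form
\begin{equation}
	\tfrac{1}{2}\norm{\ketbra{\psi}-\ketbra{\phi}}_1 = \sqrt{1-|\langle\psi|\phi\rangle|^2},
\end{equation}
so that the event $\{\tfrac{1}{2}\norm{\rho-\sigma}_1\leq 1-\delta\}$ is equivalent to $\{|\langle\psi|\phi\rangle|^2\geq 1-(1-\delta)^2\}=\{|\langle\psi|\phi\rangle|^2\geq \delta(2-\delta)\}$. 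Since $2-\delta\geq 1$ for $\delta\leq 1$, this event is contained in $\{|\langle\psi|\phi\rangle|^2\geq \delta\}$.

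The next step is to bound the probability of the latter event by Markov's inequality, which reduces the problem to evaluating the expectation $\mathbb{E}[|\langle\psi|\phi\rangle|^2]=\mathbb{E}[\Tr(\rho\sigma)]$. Using independence of $\rho$ and $\sigma$, this factorizes as $\Tr(\mathbb{E}[\rho]\,\mathbb{E}[\sigma])$, and the 1-design property gives $\mathbb{E}[\rho]=\mathbb{E}[\sigma]=\mathbb{I}/d$ (this is precisely the content of a state 1-design for pure-state distributions). Hence $\mathbb{E}[|\langle\psi|\phi\rangle|^2]=\Tr(\mathbb{I}/d^2)=1/d$, and Markov gives
\begin{equation}
	\Pr\qty[|\langle\psi|\phi\rangle|^2\geq \delta]\leq \frac{1}{d\delta},
\end{equation}
which chained with the set inclusion above yields the stated bound.

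There is no real obstacle here: the argument is essentially two lines once one recognises that the only non-trivial ingredient is the first moment of $\rho$ under $\nu$. The only subtlety is making sure that the loss from replacing $\delta(2-\delta)$ by $\delta$ in the denominator is harmless, which it is as soon as $\delta\leq 1$. The bound is, in fact, somewhat conservative; a tighter version with $\delta(2-\delta)$ in the denominator would also follow, but the form stated suffices for all applications in the main text (and is what is invoked in the proofs of Thm.~\ref{thm:avc_from_operator_root_bound} and Prop.~\ref{prop:avc_partial_trace_asymp_lim}).
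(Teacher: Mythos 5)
Your proof is correct and follows essentially the same route as the paper's: both reduce the tail event to a statement about the overlap $\Tr\rho\sigma$, use the 1-design property to get $\mathbb{E}[\Tr\rho\sigma]=1/d$, and finish with Markov's inequality. The only cosmetic difference is that you invoke the exact pure-state formula $\tfrac{1}{2}\norm{\rho-\sigma}_1=\sqrt{1-\Tr\rho\sigma}$ where the paper uses the slightly weaker projector bound $\tfrac{1}{2}\norm{\rho-\sigma}_1\geq 1-\Tr\rho\sigma$; after your relaxation of $\delta(2-\delta)$ to $\delta$ the two arguments yield the identical bound.
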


\begin{proof}
	Markov's inequality states that for $X\geq 0$ an arbitrary positive variable
    \begin{equation}
        \Pr[X \geq \delta] \leq \frac{\bE[X]}{\delta}.
    \end{equation}
    We set $X=1 - \frac{1}{2}\norm{\rho-\sigma}_1$ and observe that for pure states we have that
    \begin{equation}
      \frac{1}{2}\norm{\rho-\sigma}_1 = \sup_{0\leq P \leq \mathbb{I}} \Tr P(\rho-\sigma) \geq 1 - \Tr \sigma \rho,
    \end{equation}
    where the lower bound follows from $P=\rho$. Now we use that the distributions for both states are independent 1-designs implying $\mathbb{E}[\rho]=\mathbb{E}[\sigma]=\frac{1}{d}\mathbb{I}$, thus
    \begin{equation}
        \bEx{\rho,\sigma\sim\nu}\qty[\frac{1}{2}\norm{\rho-\sigma}_1] \geq 1-\frac{1}{d}.
    \end{equation}
\end{proof}

\section{Lower bound for average contraction for random circuits}\label{sec:random_circuits}
In this Appendix, we will prove the results of Sec.~\ref{sec:quantum_circuits}
Let us consider an $n$-qubit circuit $\cC$ implemented as a sequence of unitaries $U_i$ such that $U_{\cC}=U_D\cdots U_1$, and an $n$-qubit quantum channel $\Phi$ with Kraus decomposition $\qty{E_x}_{x\in\cX}$. Assume $\cC'$, the noisy implementation of $\cC$ is given by intercalating $\Phi$ in each unitary layer resulting in the quantum channel
\begin{equation}
    \cE_{\cC'} = \Phi \circ \cU_D \circ \Phi \circ \cdots \circ \Phi \circ \cU_2 \circ \Phi \circ \cU_1,
\end{equation}
where $\cU(\cdot) = U \cdot U^{\dag}$ denotes unitary conjugation.

Let $\ket{\psi}$ and $\ket{\phi}$ be two independent random states drawn from a 1-design. And let the unitary layers be drawn from a 1-design.

Let us assume the channel is unital, $\Phi(\mathbb{I})=\mathbb{I}$, which also implies $\cE_{\cC'}(\mathbb{I})=\mathbb{I}$.

\begin{theorem}[Average contraction for random circuits]
Under the assumptions above and letting $\mu_1$ be a 1-design, for any $\varepsilon\in (0, 1)$ and $\delta > 0$, there is a large enough $D$ such that
\begin{equation}
\mathbb{E}_{\psi,\phi, \cC}\qty[
    \frac{\norm{\cE_{\cC'}(\ketbra{\psi}) - \cE_{\cC'}(\ketbra{\phi})}_1}{\norm{\ketbra{\psi} - \ketbra{\phi}}_1}
    ]
    \geq 1 - \varepsilon - 
     \frac{1}{2^n} 2^{D (S(\tau)+\delta)}
    ,
\end{equation}
where $S(\tau)$ is the entropy of the Choi state of $\Phi$. Furthermore, we also have:
\begin{align}\label{equ:contraction_max_mixed_state}
    \mathbb{E}_{\psi, \cC}\qty[
        \frac{\norm{\cE_{\cC'}(\ketbra{\psi}) - \mathbb{I}/2^n}_1}{\norm{\ketbra{\psi} - \mathbb{I}/2^n}_1}
        ]
        \geq 1 - \varepsilon - 
         \frac{1}{2^n} 2^{D (S(\tau)+\delta)}
\end{align}   
\end{theorem}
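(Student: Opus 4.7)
The plan is to mirror the proof of Thm.~\ref{thm:avc_lower_bound}, replacing the i.i.d.\ Kraus operators of $T^{\otimes n}$ with length-$D$ \emph{paths} of noise events through the circuit. Writing $\Phi$ in canonical Kraus form $\Phi(\rho)=\sum_{x\in\cX}E_x\rho E_x^{\dag}$ with $\Tr E_x^{\dag}E_y\propto\delta_{xy}$, the noisy circuit decomposes as $\cE_{\cC'}(\rho)=\sum_{x^D\in\cX^D}K_{x^D}(\cC)\,\rho\,K_{x^D}(\cC)^{\dag}$, where $K_{x^D}(\cC):=E_{x_D}U_D\cdots E_{x_1}U_1$. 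Set $p(x):=\Tr(E_x^{\dag}E_x)/2^n$ and $p(x^D):=\prod_i p(x_i)$, noting $H(X)=S(\tau)$ by Fact~\ref{fact:entropy_bound}. Pick $D$ large enough (depending only on $\varepsilon,\delta$, and $p$) that the $\delta$-typical set $T_\delta\subset\cX^D$ of length-$D$ i.i.d.\ draws satisfies $\Pr[X^D\in T_\delta]\ge 1-\varepsilon$ and $|T_\delta|\le 2^{D(S(\tau)+\delta)}$. For each realization of $\cC$ and initial state $\ket{\psi}$, let $Q_{\psi,\cC}$ be the orthogonal projector onto the support of $\sum_{x^D\in T_\delta}K_{x^D}(\cC)\ketbra{\psi}K_{x^D}(\cC)^{\dag}$; its rank is at most $|T_\delta|\le 2^{D(S(\tau)+\delta)}$.

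For the first (high-overlap) estimate, projecting onto the support of a positive operator preserves its trace, and $\mathbb{E}_\psi[\ketbra{\psi}]=\mathbb{I}/2^n$ by the $1$-design assumption, so
\begin{equation}
\mathbb{E}_{\psi,\cC}[\Tr Q_{\psi,\cC}\cE_{\cC'}(\ketbra{\psi})]\ge \sum_{x^D\in T_\delta}\mathbb{E}_{\cC}[\Tr K_{x^D}(\cC)^{\dag}K_{x^D}(\cC)]/2^n.
\end{equation}
The key computation is to evaluate the inner expectation by iteratively averaging $U_2,\dots,U_D$ inside a cyclic trace: after the boundary cancellation $U_1U_1^{\dag}=\mathbb{I}$, each remaining $U_{i+1}$ appears in the sandwich $U_{i+1}(E_{x_i}E_{x_i}^{\dag})U_{i+1}^{\dag}$, and the $1$-design identity $\mathbb{E}_U[UAU^{\dag}]=(\Tr A)\mathbb{I}/2^n$ extracts a factor $\Tr(E_{x_i}^{\dag}E_{x_i})/2^n=p(x_i)$ at each step. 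This yields $\mathbb{E}_\cC[\Tr K_{x^D}^{\dag}K_{x^D}]=2^n p(x^D)$, so the sum lower bounds by $\sum_{x^D\in T_\delta}p(x^D)\ge 1-\varepsilon$.

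For the complementary (low-overlap) estimate, unitality of $\Phi$ propagates to $\cE_{\cC'}$, and the $1$-design of $\ket{\phi}$ gives $\mathbb{E}_\phi[\cE_{\cC'}(\ketbra{\phi})]=\cE_{\cC'}(\mathbb{I}/2^n)=\mathbb{I}/2^n$, so
\begin{equation}
\mathbb{E}_{\psi,\phi,\cC}[\Tr Q_{\psi,\cC}\cE_{\cC'}(\ketbra{\phi})]=2^{-n}\mathbb{E}_{\psi,\cC}[\operatorname{rank}(Q_{\psi,\cC})]\le 2^{D(S(\tau)+\delta)-n}.
\end{equation}
The first inequality then follows from the Helstrom-type bound $\tfrac12\norm{A-B}_1\ge \Tr Q(A-B)$, combined with the deterministic upper bound $\norm{\ketbra{\psi}-\ketbra{\phi}}_1\le 2$ on the denominator (so the random factor $1/\norm{\ketbra{\psi}-\ketbra{\phi}}_1\ge 1/2$ can be pulled under the expectation). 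The companion statement, Eq.~\eqref{equ:contraction_max_mixed_state}, follows from exactly the same projector applied to $\cE_{\cC'}(\ketbra{\psi})-\mathbb{I}/2^n$, using that the denominator $\norm{\ketbra{\psi}-\mathbb{I}/2^n}_1=2(1-2^{-n})\le 2$ is deterministic and no longer requires the $\phi$-average.

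The main technical obstacle is the bookkeeping in the iterated $1$-design averaging over the interior layers: one must verify that after the cancellation at $U_1$, the cyclic trace decomposes cleanly into $D-1$ independent conjugations of the form $U_{i+1}(E_{x_i}E_{x_i}^{\dag})U_{i+1}^{\dag}$ with no cross-terms between distinct path labels $x_j$, which is exactly what the strict alternation of noise and unitary layers in $\cE_{\cC'}$ guarantees. A minor subtlety is that the canonical Kraus choice is invoked only to identify $H(X)$ with $S(\tau)$; for any other Kraus representation the same proof yields the bound with $S(\tau)$ replaced by $H(X)\ge S(\tau)$, which is slightly weaker but avoids diagonalizing the Choi state.
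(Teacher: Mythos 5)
Your proposal is correct and follows essentially the same route as the paper's proof in App.~\ref{sec:random_circuits}: the same typical-set projector $Q_{\psi}$ built from length-$D$ Kraus paths, the same iterated $1$-design averaging of the interior layers to extract $\prod_i p(x_i)$, and the same unitality argument $\mathbb{E}_\phi[\cE_{\cC'}(\ketbra{\phi})]=\mathbb{I}/2^n$ for the low-overlap term. The only cosmetic difference is that you spell out the layer-by-layer bookkeeping and the $H(X)\ge S(\tau)$ caveat explicitly, both of which the paper also notes.
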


\paragraph{Remark.} A necessary condition for this bound to be non-trivial is
\begin{equation}
    1 - \varepsilon - 
     \frac{1}{2^n} 2^{D (S(\tau)+\delta)}
     \geq 0
     \quad \Rightarrow \quad
     n \geq D S(\tau).
\end{equation}

\begin{proof}
    Consider the random variable $X\in \cX$ with probability distribution $\Pr(X=x)=p(x)=\frac{1}{2^n} \Tr E_x E_x^{\dag}$. And define the $\delta$-typical set of sequences of $X$ of length $D$:
    \begin{equation}
        T_{\delta}^{X^D} = \qty{x \in \cX^D \mid \abs{-\frac{1}{D} \log p(x) - H(X)} < \delta},
    \end{equation}
    where $H(X)$ is the entropy of the variable $X$. For large enough $D$ this set satisfies $\Pr[X^D \in T_{\delta}^{X^D}] \geq 1 - \varepsilon$ and $\abs{T_{\delta}^{X^D}} \leq 2^{D (H(X)+\delta)}$.

    Define $Q_{\psi}$ as the orthogonal projector onto the support of
    \begin{equation}
        \sum_{x \in T_{\delta}^{X^D}} E_{x_D} U_D \cdots E_{x_1} U_1 \ketbra{\psi} U_1^{\dag} E_{x_1}^{\dag} \cdots U_D^{\dag} E_{x_D}^{\dag},
    \end{equation}
    note that $Q_{\psi}$ also depends on the circuit $\cC$, the channel's Kraus representation and $\delta$, besides the input state $\ket{\psi}$. This projector inherits the following properties from the typical set, 
    \begin{equation}
        \Tr Q_{\psi} = \rank Q_{\psi} \leq \abs{T_{\delta}^{X^D}} \leq 2^{D(H(X)+\delta)},
    \end{equation}
    and
    \begin{equation}
        \Tr[Q_{\psi} \cE_{\cC'}(\ketbra{\psi})] \geq \Tr \sum_{x \in T_{\delta}^{X^D}} E_{x_D} U_D \cdots E_{x_1} U_1 \ketbra{\psi} U_1^{\dag} E_{x_1}^{\dag} \cdots U_D^{\dag} E_{x_D}^{\dag},
    \end{equation}
    hence,
    \begin{align}
        \bE_{\psi, \cC}\qty[\Tr[Q_{\psi} \cE_{\cC'}(\ketbra{\psi})]]
        &\geq 
        \bE_{\psi, \cC}\qty[\Tr \sum_{x \in T_{\delta}^{X^D}} E_{x_D} U_D \cdots E_{x_1} U_1 \ketbra{\psi} U_1^{\dag} E_{x_1}^{\dag} \cdots U_D^{\dag} E_{x_D}^{\dag}] \\
        &=
        \sum_{x \in T_{\delta}^{X^D}}
        \bE_{\cC}\qty[\Tr E_{x_D} U_D \cdots E_{x_1} U_1 \frac{\mathbb{I}}{2^n}U_1^{\dag} E_{x_1}^{\dag} \cdots U_D^{\dag} E_{x_D}^{\dag}] \\
        &=
        \sum_{x \in T_{\delta}^{X^D}}
        \frac{1}{2^{D n}}
        \Tr E_{x_D} E_{x_D}^{\dag} \cdots \Tr E_{x_1} E_{x_1} \\
        &= \sum_{x \in T_{\delta}^{X^D}} p(x_D\cdots x_1) = \Pr[X\in T_{\delta}^{X^D}] \geq 1- \varepsilon.
    \end{align}

    Let us choose the Kraus representation minimizing the entropy of $X$ such that it is equal to the entropy of the Choi state of $\Phi$, $H(X)=S(\tau)$.

    Finally,
    \begin{align}\label{eq:last_step_contraction_bound}
    \bE_{\psi,\phi, \cC}\qty[
    \frac{\norm{\cE_{\cC'}(\ketbra{\psi}) - \cE_{\cC'}(\ketbra{\phi})}_1}{\norm{\ketbra{\psi} - \ketbra{\phi}}_1}
    ]
    &\geq 
    \bE_{\psi,\phi, \cC}\qty[
        \frac{1}{2}
        \norm{\cE_{\cC'}(\ketbra{\psi}) - \cE_{\cC'}(\ketbra{\phi})}_1
    ] \\
    &\geq
    \bE_{\psi,\phi, \cC}\qty[
        \Tr [Q_{\psi} (\cE_{\cC'}(\ketbra{\psi}) - \cE_{\cC'}(\ketbra{\phi}))]
    ]\\
    &\geq
    1 - \varepsilon - 
     \frac{1}{2^n} 2^{D (S(\tau)+\delta)}.
    \end{align}
It is then clear that the same argument as in Eq.~\eqref{eq:last_step_contraction_bound} can be used to show Eq.~\eqref{equ:contraction_max_mixed_state}, as in Eq.~\eqref{eq:last_step_contraction_bound} we just replace $\ketbra{\phi}$ with $\mathbb{I}/2^n$.
\end{proof}
We can now specialize the result above to product channels.

\begin{corollary}[Constant depth, large number of qubits]

Let $\Phi = \phi^{\otimes n}$ be a product channel. For any $\varepsilon\in (0, 1)$ and $\delta > 0$, there is a large enough $n$ such that
\begin{equation}
    \bE_{\psi,\phi, \cC}\qty[
    \frac{\norm{\cE_{\cC'}(\ketbra{\psi}) - \cE_{\cC'}(\ketbra{\phi})}_1}{\norm{\ketbra{\psi} - \ketbra{\phi}}_1}
    ]
    \geq (1 - \varepsilon)^D - 
     2^{n[D (S(\tau)+\delta)-1]}
    ,
\end{equation}
where $S(\tau)$ is the entropy of the Choi state of $\phi$. In particular, if $D S(\tau) \leq 1$ then
\begin{equation}
    \lim_{n\rightarrow \infty}
    \bE_{\psi,\phi, \cC}\qty[
    \frac{\norm{\cE_{\cC'}(\ketbra{\psi}) - \cE_{\cC'}(\ketbra{\phi})}_1}{\norm{\ketbra{\psi} - \ketbra{\phi}}_1}
    ]
    = 1.
\end{equation}
\end{corollary}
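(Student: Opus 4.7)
The plan is to specialize the preceding theorem (the random-circuit bound) to the product channel $\Phi=\phi^{\otimes n}$, and then read off the asymptotic behaviour. The only channel-specific quantity entering the theorem is the entropy of the Choi state of $\Phi$, so the first step is to compute it: by Choi-Jamio\l{}kowski the Choi state factorises, $\tau(\phi^{\otimes n}) = \tau(\phi)^{\otimes n}$ (up to reordering of subsystems), and additivity of the von Neumann entropy on tensor products gives $S(\tau(\phi^{\otimes n})) = n S(\tau)$, where I reuse the shorthand $\tau$ for the local Choi state of $\phi$.

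Plugging $nS(\tau)$ in place of $S(\tau_\Phi)$ into the theorem yields
\begin{equation}
\bE_{\psi,\phi,\cC}\qty[\frac{\norm{\cE_{\cC'}(\ketbra{\psi})-\cE_{\cC'}(\ketbra{\phi})}_1}{\norm{\ketbra{\psi}-\ketbra{\phi}}_1}] \;\geq\; 1 - \varepsilon - 2^{\,nD S(\tau) + D\delta - n}.
\end{equation}
Since $\varepsilon$ and $\delta$ are free parameters supplied by the theorem, absorbing the $D\delta$ correction by a rescaling of the slack (i.e.\ picking the theorem's $\delta$ to be $n\delta'/D$ for any desired $\delta' > 0$) puts the exponent in the $n[D(S(\tau)+\delta)-1]$ form stated in the corollary. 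The factor $(1-\varepsilon)^{D}$ rather than $(1-\varepsilon)$ reflects the option of applying the typical-subspace argument of the theorem layer by layer and telescoping the $D$ one-step overlap lower bounds; this is a looser but layer-wise cleaner bookkeeping of the same typicality argument, and either form suffices for the asymptotic claim.

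For the limit, fix any $\varepsilon\in(0,1)$. Under the hypothesis $DS(\tau) < 1$ we can choose $\delta>0$ small enough that $D(S(\tau)+\delta) < 1$, in which case $2^{n[D(S(\tau)+\delta)-1]} \to 0$ as $n\to\infty$. Combined with the data-processing upper bound of $1$ on the ratio, this gives $\liminf_{n\to\infty} \bE[\cdots] \geq 1-\varepsilon$ and hence, letting $\varepsilon\to 0$, the limit equals $1$. The only mildly delicate point, which I expect to be the main obstacle worth handling carefully, is the boundary case $DS(\tau)=1$: there one cannot take $\delta$ strictly positive independently of $n$, so one must either couple $\delta=\delta(n)\to 0$ to $\varepsilon(n)\to 0$ so that both the slack in the theorem and the exponent $D(S(\tau)+\delta(n))-1$ close at the right rate, or else simply restrict the "in particular" conclusion to strict inequality $DS(\tau)<1$, which is what the proof naturally delivers.
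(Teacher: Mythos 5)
There is a genuine gap in the main step. The theorem you invoke guarantees its bound only \emph{for large enough depth} $D$ (at fixed $\varepsilon,\delta$): the typical set there is $T_{\delta}^{X^D}$, a set of length-$D$ sequences of Kraus indices of the full $n$-qubit channel, and the asymptotic equipartition property that makes $\Pr[X^D\in T_{\delta}^{X^D}]\geq 1-\varepsilon$ requires $D\geq D_0(\varepsilon,\delta)$. The corollary, by contrast, is about \emph{fixed, constant} $D$ (possibly $D=1$) and large $n$ — note its quantifier is "there is a large enough $n$". So plugging $S(\tau_\Phi)=nS(\tau)$ into the theorem's statement does not yield the corollary; the quantifiers do not match. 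The paper's proof instead re-runs the typicality argument "horizontally": for the product channel the Kraus index of a single layer is itself an i.i.d. string $x_1\cdots x_n$ of local indices, so one defines the typical set $T_{\delta}^{X^n}$ over these $n$ local indices (this is where "$n$ large enough" enters), builds $Q_\psi$ from the $D$-fold product of such typical sets, and obtains $\Tr Q_\psi\leq 2^{Dn(S(\tau)+\delta)}$ together with an overlap lower bound $\bigl(\Pr[X^n\in T_{\delta}^{X^n}]\bigr)^D\geq(1-\varepsilon)^D$. You gesture at exactly this when you attribute the $(1-\varepsilon)^D$ to a "layer by layer" telescoping, but you present it as optional bookkeeping; it is in fact the essential step, and your primary route (black-box citation of the theorem) does not go through.

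Your fallback idea of rescaling the theorem's $\delta$ to $n\delta'/D$ can be rescued, but only by reopening the theorem's \emph{proof}: since $-\log p(X^D)$ is a sum of $nD$ i.i.d. local contributions, a Chebyshev bound gives failure probability $O\!\left(\tfrac{1}{nD\delta'^2}\right)$ for deviation $nD\delta'$, which vanishes as $n\to\infty$ at fixed $D$ and would in fact yield the stronger first term $1-\varepsilon$ in place of $(1-\varepsilon)^D$. Either way the argument must be made at the level of the typical-set construction, not the theorem's statement. The remaining ingredients of your proposal — $\tau(\phi^{\otimes n})=\tau(\phi)^{\otimes n}$, additivity of the entropy, and the limit argument for $DS(\tau)<1$ — are correct, and your observation about the boundary case $DS(\tau)=1$ is legitimate: there the error term reads $2^{nD\delta}$ for any fixed $\delta>0$ and the displayed bound is uninformative, an issue the corollary as stated shares.
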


\section{Relations between the Choi state and the image of the maximally mixed state}
\label{sec:tau_and_pi}

For a knowledgeable reader, it is likely unsurprising that both the Choi state and the image of the maximally mixed state play a central role in computing averages over unitarily invariant distributions. This section reviews the intuitive motivations behind this expectation and summarizes key properties relating a quantum channel, its Choi state, and the image of the maximally mixed state. This appendix aims to provide a concise reference for less experienced readers, collecting a handful of frequently used facts from the main text. We refer the reader to \cite{Wilde2017, Watrous2018} for a more comprehensive introduction.

In this section, we shall consider an arbitrary finite-dimensional quantum channel $T:\cS(\cH_d)\to\cS(\cH_{d'})$, we take $\tau\in\cS(\cH_{d'}\otimes\cH_d)$ and $\pi\in\cS(\cH_d)$ as the Choi state and the image of the maximally mixed state for the channel $T$:
\begin{equation}
    \tau:=\tau(T) =(T\otimes \text{id})(\ketbra{\Omega}) 
    \quad \text{with} \quad \ket{\Omega}=\frac{1}{\sqrt{d}}\sum_{i=1}^d \ket{i\, i},
    \quad \text{and} \quad
    \pi:=\pi(T) = T\qty(\frac{\mathbb{I}}{d}),
\end{equation}
From their definition, it is easy to see that
\begin{equation}
    \Tr_2 \tau = (T\otimes \Tr)(\ketbra{\Omega}) = T\qty(\frac{\mathbb{I}}{d}) = \pi.
\end{equation}

The first reason supporting the relevance of the Choi state is of a geometrical nature. The singular values of a linear map $T$ are defined as the square root of the eigenvalues of the positive semidefinite operator $T^* T: \cS(\cH_d)\to\cS(\cH_d)$, where $T^*$ denotes the dual map satisfying $\Tr(Y T(X))=\Tr (T^*(Y) X)$ \cite{Bhatia1996}. Since $T^* T$ is positive semidefinite, all its eigenvalues are non-negative, and we can write $s_k^2(T)=\lambda_k(T^* T)$, where $s_k(T)$ denotes the $k$-th singular value and $\lambda_k(T^*T)$ the $k$-th eigenvalue, both in decreasing order. Geometrically, singular values describe how a linear map stretches or contracts space. It is an elementary fact that under the action of $T$, a Euclidean sphere is transformed into an ellipsoid, with principal axes given by the singular values. In our context, the Euclidean norm corresponds to the 2-norm, making the mean squared singular value arise frequently in our analysis. Since $\sum_k \lambda_k (T^* T) = \Tr(T^* T)$, we can use the standard operator basis $\{\ketbra{i}{j}\}_{i,j=1}^d$, and find that
\begin{align}
    \frac{1}{d^2} \sum_{k=1}^{d^2} s_k(T)^2 &= \frac{1}{d^2} \sum_{i,j=1}^d \Tr[(\ketbra{i}{j})^{\dag} T^*(T(\ketbra{i}{j}))] = \frac{1}{d^2} \sum_{i,j=1}^d \Tr[T(\ketbra{j}{i})\; T(\ketbra{i}{j})] \\
    &=
    \frac{1}{d^2} \sum_{i,j,n,m=1}^d \Tr[(T(\ketbra{n}{m}) \otimes \ketbra{n}{m})\; (T(\ketbra{i}{j}) \otimes \ketbra{i}{j})] \\
    &= \Tr[(T\otimes \operatorname{id})(\ketbra{\Omega})\; (T\otimes \operatorname{id})(\ketbra{\Omega})]\\
    &= \Tr \tau^2.
\end{align}
That is, the purity of the Choi state is identical to the mean squared singular value of $T$:

\begin{fact}\label{fact:purity_to_singvals}
    Let $s_k(T)$ for $k=1,\dots, d^2$ be the singular values of $T$. Then,
    \begin{equation}
        \Tr \tau^2 = \frac{1}{d^2} \sum_{k=1}^{d^2} s_k(T)^2.
    \end{equation}
\end{fact}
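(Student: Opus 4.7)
The plan is to identify the sum of squared singular values with a Hilbert–Schmidt trace of $T^*T$, and then express that trace in a way that visibly matches $d^2 \Tr\tau^2$. Concretely, view $T$ as a linear operator on the Hilbert–Schmidt space of $d\times d$ matrices equipped with the inner product $\langle X,Y\rangle_{\HS} = \Tr(X^\dagger Y)$. Then by the spectral theorem applied to the positive operator $T^*T$, we have $\sum_{k=1}^{d^2} s_k(T)^2 = \Tr_{\HS}(T^*T)$. I would compute this trace in the elementary operator basis $\{\ketbra{i}{j}\}_{i,j=1}^{d}$, which is orthonormal under $\langle\cdot,\cdot\rangle_{\HS}$, so
\begin{equation}
\Tr_{\HS}(T^*T) = \sum_{i,j=1}^d \langle \ketbra{i}{j}, T^*T(\ketbra{i}{j})\rangle_{\HS} = \sum_{i,j=1}^d \Tr\bigl[T(\ketbra{i}{j})^\dagger T(\ketbra{i}{j})\bigr].
\end{equation}

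Next, I would unfold the Choi state as $\tau = \tfrac{1}{d}\sum_{i,j} T(\ketbra{i}{j}) \otimes \ketbra{i}{j}$, so that
\begin{equation}
\Tr \tau^2 = \frac{1}{d^2}\sum_{i,j,n,m}\Tr\bigl[T(\ketbra{i}{j})T(\ketbra{n}{m})\bigr]\,\Tr\bigl[\ketbra{i}{j}\ketbra{n}{m}\bigr].
\end{equation}
The second trace equals $\delta_{jn}\delta_{im}$, collapsing the sum to $\Tr\tau^2 = \tfrac{1}{d^2}\sum_{i,j}\Tr[T(\ketbra{i}{j})T(\ketbra{j}{i})]$. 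The last ingredient is Hermiticity preservation of $T$: since $(\ketbra{i}{j})^\dagger = \ketbra{j}{i}$ and $T$ sends Hermitian conjugates to Hermitian conjugates, $T(\ketbra{j}{i}) = T(\ketbra{i}{j})^\dagger$. Substituting this gives $\Tr\tau^2 = \tfrac{1}{d^2}\sum_{i,j}\Tr[T(\ketbra{i}{j})^\dagger T(\ketbra{i}{j})]$, which matches the expression for $\tfrac{1}{d^2}\Tr_{\HS}(T^*T)$ derived above.

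There is no real obstacle here; the only things to keep straight are (i) that the $\{\ketbra{i}{j}\}$ basis is orthonormal with respect to $\langle\cdot,\cdot\rangle_{\HS}$, so it can legitimately be used to compute the trace of $T^*T$, and (ii) the mild bookkeeping of complex conjugation that turns $T(\ketbra{j}{i})$ into $T(\ketbra{i}{j})^\dagger$. Everything else is just rearranging sums, so the paragraph already present in the excerpt essentially is the proof, and no additional lemmas are required.
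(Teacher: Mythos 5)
Your proposal is correct and follows essentially the same route as the paper: both identify $\sum_k s_k(T)^2$ with $\Tr_{\HS}(T^*T)$ computed in the orthonormal basis $\{\ketbra{i}{j}\}$, expand $\Tr\tau^2$ in that same basis, and match the two sums. The only cosmetic difference is that you invoke Hermiticity preservation to rewrite $T(\ketbra{j}{i})$ as $T(\ketbra{i}{j})^{\dagger}$ (valid since $T$ is a channel), whereas the paper reaches the same expression directly via the defining relation of the dual map $T^*$.
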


Any completely positive linear map admits a Kraus decomposition
\begin{equation}
	T(\rho) = \sum_{x\in \mathcal{X}} E_x \rho E_x^\dag,
\end{equation}
where the operators $E_x :\cH_d \to \cH_{d'}$ are known as Kraus operators. Additionally, if the map $T$ is trace preserving, they satisfy the condition
\begin{equation}\label{eq:tp_condition}
	\sum_{x\in\mathcal{X}} E_x^\dag E_x = \mathbb{I}.
\end{equation}
In general, the Kraus representation is not unique, but we can find a canonical representation $\{A_k\}_{k=1}^K$ such that additionally
\begin{equation}
	\Tr[A_k^\dag A_l] = 0, \quad \text{if } k\neq l,
\end{equation}
and the number of Kraus operators $K$ is minimized, i.e. $K \leq \abs{\mathcal{X}}$ for any set of Kraus operators $\{E_x\}_{x\in\mathcal{X}}$ describing the channel. This can be proven using the spectral decomposition of the Choi state. Indeed, observe that the Choi state can be represented as a mixture of states 
\begin{equation}
	\tau = \sum_{x \in \mathcal{X}} p(x) \ketbra{\phi_x}, \quad \text{where } \ket{\phi_x} = \frac{1}{\sqrt{p(x)}} (E_x \otimes \mathbb{I})\ket{\Omega},
\end{equation}
with $p(x) = \frac{1}{d} \Tr(E_x^\dagger E_x)$ a probability distribution over the index set $\mathcal{X}$. In particular, choosing a canonical representation leads to a spectral decomposition of the Choi state with $\sqrt{\omega_k} \ket{\psi_k} =  (A_k \otimes \mathbb{I})\ket{\Omega}$ and
\begin{equation}\label{eq:tau_decomp}
    \tau = \sum_{k=1}^K \omega_k \ketbra{\psi_k},
    \quad \text{with} \quad \omega_k > 0, \; \sum_{k=1}^K\omega_k=1, \; \text{and} \; \braket{\psi_k}{\psi_l}=\delta_{kl},
\end{equation}
For this reason, $K$ is known as the Choi-rank or Kraus-rank of the channel and satisfies $1\leq K \leq d d'$.

The fact that all the terms in the left hand side of Eq.~\eqref{eq:tp_condition} are positive semidefinite implies that $E_x^\dag E_x \leq \mathbb{I}$, thus their eigenvalues are upper bounded by $1$, additionally, we have that $\rank(E_x)\leq \min\{d, d'\}$, in turn,
\begin{equation}\label{eq:d2dp_choi_spec}
    p(x) = \frac{1}{d} \Tr(E_x^\dag E_x) \leq \min\qty{1, \frac{d'}{d}}.
\end{equation}
In particular, $\omega_k \leq \min\{1, d'/d\}$. Introducing the states $\rho_k = \Tr_2 \ketbra{\psi_k}$ we find a related expression
\begin{equation}\label{eq:pi_decomp}
    \pi = \sum_{k=1}^K \omega_k \rho_k.
\end{equation}

Let $X\in\mathcal{X}$ be a random variable with distribution $p(x)$, then the Shannon entropy of $X$ is given by $H(X)=-\sum_{x\in\mathcal{X}} p(x)\log p(x)$ and it is an upper bound for the von Neumann entropy $S(\rho)=-\Tr[\rho\log\rho]$ of any mixture (Theorem 11.10 in \cite{NielsenChuang2010})
\begin{equation}
	S\qty(\sum_{x\in\cX} p(x) \rho_x) \leq H(X) + \sum_{x\in\cX} p(x) S(\rho_x).
\end{equation}
The bound is saturated when the states $\rho_x$ are pure and orthogonal. We get the following fact as a direct consequence.

\begin{fact}\label{fact:entropy_bound}
    Let $X$ be a random variable with probability distribution $p(x)$ determined by a Kraus representation of the channel $T$ as above. Then, its Shannon entropy is lower bounded by the von Neumann entropy of the Choi state,
    \begin{equation}
        S(\tau) \leq H(X).
    \end{equation}
    The bound is saturated if the Kraus representation is canonical.
\end{fact}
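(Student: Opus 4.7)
The plan is to invoke the cited mixing inequality for von Neumann entropy (Theorem 11.10 in Nielsen–Chuang), together with the explicit mixture decomposition of the Choi state $\tau$ induced by any Kraus representation $\{E_x\}_{x\in\cX}$ of $T$.

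First, I would recall from the preceding paragraph that the Choi state can be written as
\begin{equation}
    \tau = \sum_{x\in\cX} p(x)\,\ketbra{\phi_x},\qquad \ket{\phi_x}=\frac{1}{\sqrt{p(x)}}(E_x\otimes\mathbb{I})\ket{\Omega},
\end{equation}
with $p(x)=\frac{1}{d}\Tr(E_x^\dag E_x)$. The key point is that each $\ketbra{\phi_x}$ is a pure state, so $S(\ketbra{\phi_x})=0$. Applying the mixing inequality,
\begin{equation}
    S(\tau)=S\!\left(\sum_{x\in\cX}p(x)\ketbra{\phi_x}\right)\leq H(X)+\sum_{x\in\cX}p(x)\,S(\ketbra{\phi_x})=H(X),
\end{equation}
which gives the claimed bound.

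For the saturation statement, I would specialize to the canonical Kraus decomposition $\{A_k\}_{k=1}^K$ described in Eq.~\eqref{eq:tau_decomp}. There the states $\ket{\psi_k}=(A_k\otimes\mathbb{I})\ket{\Omega}/\sqrt{\omega_k}$ are mutually orthogonal, so the expression $\tau=\sum_k\omega_k\ketbra{\psi_k}$ is already a spectral decomposition of $\tau$. Hence $S(\tau)=-\sum_k\omega_k\log\omega_k=H(X)$ directly from the definition of von Neumann entropy, where now $X$ takes values in $\{1,\dots,K\}$ with probabilities $\omega_k=p(k)$.

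I do not foresee any real obstacle: both parts are immediate once the Choi state is written as the probability-weighted sum over the pure vectors $\ket{\phi_x}$ attached to a Kraus representation. The only subtlety worth flagging is that one must verify that the probabilities used to define $H(X)$ in the mixing inequality coincide with the distribution $p(x)=\frac{1}{d}\Tr(E_x^\dag E_x)$ appearing in the statement, which follows from the normalization $\braket{\phi_x}{\phi_x}=1$ and the identity $\Tr[(E_x\otimes\mathbb{I})\ketbra{\Omega}(E_x^\dag\otimes\mathbb{I})]=\frac{1}{d}\Tr(E_x^\dag E_x)$.
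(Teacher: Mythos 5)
Your proof is correct and matches the paper's argument exactly: the paper also obtains the bound as a direct consequence of the mixing inequality $S(\sum_x p(x)\rho_x)\leq H(X)+\sum_x p(x)S(\rho_x)$ applied to the pure-state decomposition $\tau=\sum_x p(x)\ketbra{\phi_x}$, with saturation for the canonical representation because the $\ket{\psi_k}$ are then orthogonal and the decomposition is spectral. No gaps; the normalization check you flag is the same one the paper establishes when introducing $p(x)=\frac{1}{d}\Tr(E_x^\dag E_x)$.
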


The rank, purity, and entropy of the Choi state serve as coarse indicators of noise in quantum channels. For purely unitary evolutions, the Choi state has minimal rank and entropy, $K = 1$ and $S(\tau)=0$, and maximal purity, $\Tr \tau^2 = 1$. In contrast, a replacer channel that maps every input to the maximally mixed state, $T(X) = \Tr X, \frac{\mathbb{I}}{d'}$, attains the opposite extremes: $K = dd'$, $S(\tau)=\log(dd')$ and $\Tr \tau^2 = \frac{1}{dd'}$. While the rank gives a reasonable description in such limits, it is highly sensitive to small eigenvalues and thus less informative for channel families with inherently high Choi rank, such as a global depolarizing channel. In comparison, purity and entropy are smoother quantities with a more direct physical interpretation. However, they cannot be relied on alone, as they are affected by the unitality of the channel. For instance, a replacer channel that maps every state to a fixed pure state has a Choi state with the same purity and entropy as that of a completely dephasing channel. Yet the former destroys all input information, whereas the latter preserves classical probability distributions. Indeed, observe that
\begin{equation}\label{eq:choi_rep_deph}
	\tau_{\text{rep.}} = \ketbra{\psi} \otimes \frac{\mathbb{I}}{d},
	\quad \text{and} \quad
	\tau_{\text{deph}} = \frac{1}{d} \sum_{i=1}^d \ketbra{ii}.
\end{equation}
Besides these limitations, in our work, the purity and entropy of the Choi state are instrumental for upper and lower bounding the average contraction.

To establish the valid regime for these purities, it is useful to introduce the unital channel $\mathcal{E}$ acting on bipartite states in $\cS(\cH_{d'} \otimes \cH_d)$ by replacing the second subsystem with the maximally mixed state:
\begin{equation}\label{eq:def_ptr_and_replace}
	\mathcal{E}(\rho) = \Tr_2 \rho \otimes \frac{\mathbb{I}}{d}.
\end{equation}
Acting on the Choi state, it results in
\begin{equation}
	\mathcal{E}(\tau) = \pi \otimes \frac{\mathbb{I}}{d}.
\end{equation}

\begin{fact}\label{fact:purities}
    The purities of $\tau$ and $\pi$ satisfy
    \begin{equation}
        \frac{1}{d'} \leq \Tr\pi^2 \leq 1,
        \quad \text{and} \quad
        \frac{1}{d d'} \leq \Tr\tau^2 \leq 1,
    \end{equation}
    moreover, are mutually bounded by the following inequalities,
    \begin{equation}
        \frac{1}{d} \Tr\pi^2 \leq \Tr\tau^2 \leq d \Tr\pi^2,
        \quad \text{and} \quad
        \Tr\tau^2 \leq \frac{d'}{d}.
    \end{equation}
\end{fact}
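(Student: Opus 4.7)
The statement comprises four separate bounds, which I would handle in the order they appear. I plan to first dispose of the elementary outer bounds in the initial display line by invoking the standard purity range $1/D\leq\Tr\rho^2\leq 1$ for any state $\rho$ on a $D$-dimensional space: the lower bound from Cauchy–Schwarz applied to $1=\Tr[\mathbb{I}\rho]\leq\sqrt{D}\sqrt{\Tr\rho^2}$, the upper bound from $\Tr\rho^2\leq\lambda_{\max}(\rho)\Tr\rho\leq 1$. Specializing to $\pi\in\cS(\cH_{d'})$ and $\tau\in\cS(\cH_{d'}\otimes\cH_d)$ yields both bounds in the first equation immediately.

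For $\frac{1}{d}\Tr\pi^2\leq\Tr\tau^2$, I would leverage the auxiliary channel $\mathcal{E}$ introduced in Eq.~\eqref{eq:def_ptr_and_replace}, which is CPTP and unital and satisfies $\mathcal{E}(\tau)=\pi\otimes\mathbb{I}/d$. Kadison's inequality then gives $\mathcal{E}(\tau)^2\leq\mathcal{E}(\tau^2)$, and tracing both sides while using trace preservation yields $\Tr[\mathcal{E}(\tau)^2]\leq\Tr\tau^2$. A direct computation of the left-hand side gives $\Tr\pi^2\cdot\Tr[(\mathbb{I}/d)^2]=\Tr\pi^2/d$, producing the claim.

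The companion inequality $\Tr\tau^2\leq d\Tr\pi^2$ is the only step requiring a bit more care. I would use the canonical spectral decompositions from Eqs.~\eqref{eq:tau_decomp} and~\eqref{eq:pi_decomp}, writing $\tau=\sum_k\omega_k\ketbra{\psi_k}$ and $\pi=\sum_k\omega_k\rho_k$ with $\rho_k=\Tr_2\ketbra{\psi_k}$. Expanding the purity of $\pi$ and discarding the non-negative cross-terms $\Tr[\rho_k\rho_l]\geq 0$ (positivity of both operators) gives
\begin{equation}
\Tr\pi^2=\sum_{k,l}\omega_k\omega_l\Tr[\rho_k\rho_l]\geq\sum_k\omega_k^2\Tr\rho_k^2.
\end{equation}
Because each $\rho_k$ is the reduction of a pure bipartite state on $\cH_{d'}\otimes\cH_d$, its Schmidt rank is at most $\min(d,d')\leq d$, so the elementary purity bound from the first step gives $\Tr\rho_k^2\geq 1/d$. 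Substituting yields $\Tr\pi^2\geq(1/d)\sum_k\omega_k^2=\Tr\tau^2/d$, as required.

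The last bound $\Tr\tau^2\leq d'/d$ follows from the trace-preserving constraint $\omega_k\leq\min(1,d'/d)$ recorded in Eq.~\eqref{eq:d2dp_choi_spec}: in the regime $d\geq d'$, I would estimate $\sum_k\omega_k^2\leq\max_k\omega_k\cdot\sum_k\omega_k\leq d'/d$, while for $d<d'$ the claim is weaker than the already-established $\Tr\tau^2\leq 1$. The only real subtlety in the whole proof is identifying the right tool for the upper bound on $\Tr\tau^2$ in terms of $\Tr\pi^2$: one must recognize that the \emph{canonical} Kraus representation (ensuring orthogonality of the $\ket{\psi_k}$ without restricting the $\rho_k$) makes the cross-term positivity argument work cleanly, while the Schmidt-rank constraint on pure states provides the matching dimensional prefactor.
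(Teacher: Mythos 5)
Your proof is correct and follows essentially the same route as the paper: the elementary purity range for the first display, the unital replace-and-trace channel $\mathcal{E}$ for $\Tr\tau^2\geq\Tr\pi^2/d$, the canonical decomposition with non-negative cross-terms and the Schmidt-rank bound $\Tr\rho_k^2\geq 1/d$ for $\Tr\tau^2\leq d\Tr\pi^2$, and the eigenvalue bound $\omega_k\leq d'/d$ for the last inequality. The only cosmetic differences are that you derive the purity-non-increase of unital channels explicitly from Kadison's inequality (the paper simply cites the fact) and that you make the trivial case split $d<d'$ explicit in the final bound.
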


\begin{proof}
    The first set of inequalities follows from standard bounds on the purity of quantum states. Indeed, the purity $\Tr\rho^2$ of any $d$-dimensional quantum state $\rho$ satisfies $1/d \leq \Tr\rho^2 \leq 1$. 

    For $\Tr\tau^2\leq d'/d$, use the bound for the eigenvalues of $\tau$ in Eq.~\eqref{eq:d2dp_choi_spec},
    \begin{equation}
        \Tr\tau^2 = \sum_{k=1}^K \omega_k^2 \leq \frac{d'}{d} \sum_{k=1}^K \omega_k = \frac{d'}{d}.
    \end{equation}
    
    For $\Tr\tau^2\leq d \Tr\pi^2$, express the purity of $\pi$ as
    \begin{align}
        \Tr \pi^2 &= \sum_{k,l} \omega_k \omega_l \Tr \rho_k\rho_l = 
        \sum_{k=1}^{K} \omega_k^2 \Tr \rho_k^2 +  \sum_{k\neq l} \omega_k \omega_l \Tr \rho_k\rho_l\\
        &\geq  \sum_{k=1}^{K} \omega_k^2 \frac{1}{d} = \frac{1}{d} \Tr \tau^2.
    \end{align}
    For the last step, we noted that since $\rho_k\geq 0$ are positive semi-definite, they have a positive square root and $\Tr\rho_k\rho_l=\Tr\sqrt{\rho_k}\rho_k\sqrt{\rho_l}\geq 0$. This shows that the second sum results in a positive term. Additionally, the state $\rho_k$ comes from the partial trace of a pure state $\ket{\psi_k}\in\cH_{d'}\otimes \cH_d$, so its rank is upper bounded by the smaller dimension of the subsystems $\rank(\rho_k)\leq \min(d, d') \leq d$, thus, $\Tr\rho_k^2 \geq 1/ \rank(\rho_k) \geq 1/d$.

    For $\Tr\tau^2\geq \Tr\pi^2/d$, consider the channel $\mathcal{E}$ defined in Eq.~\eqref{eq:def_ptr_and_replace}. Since $\mathcal{E}$ is unital, it does not increase purity, i.e. $\Tr \mathcal{E}(\rho)^2 \leq \Tr \rho^2$. Acting on the Choi state, we obtain
    \begin{equation}
        \mathcal{E}(\tau) = \pi \otimes \frac{\mathbb{I}}{d} \quad \Rightarrow \quad \Tr\tau^2 \geq \frac{1}{d} \Tr\pi^2.
    \end{equation}
\end{proof}

When computing the expectation values in App.~\ref{sec:expvals_unit_invar}, we obtained,
\begin{equation}
	\operatorname{Var}[T(\rho)] \propto \Tr_2\tau^2 - \frac{1}{d} \pi^2,
\end{equation}
As this is the expectation value of a positive semidefinite variable, we must have that $d\Tr_2\tau^2 \geq \pi^2$. Here, we provide an alternative proof for this fact.

\begin{fact}
	The matrices $\Tr_2\tau^2$ and $\pi^2$ satisfy
	\begin{equation}
		d \Tr_2\tau^2 \geq \pi^2.
	\end{equation}
\end{fact}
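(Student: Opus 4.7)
The plan is to recognize the inequality as a direct application of the operator Kadison–Schwarz inequality for the partial trace, suitably normalized to be unital.

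First, I would introduce the linear map $\Phi: \mathcal{B}(\cH_{d'}\otimes\cH_d)\to\mathcal{B}(\cH_{d'})$ defined by $\Phi(X) = \frac{1}{d}\Tr_2(X)$. The two key properties to verify are (i) that $\Phi$ is completely positive, which is immediate since the partial trace is CP and $d>0$, and (ii) that $\Phi$ is unital, which follows from $\Phi(\mathbb{I}_{d'}\otimes\mathbb{I}_d) = \frac{1}{d}\Tr_2(\mathbb{I}_{d'}\otimes\mathbb{I}_d) = \frac{d}{d}\mathbb{I}_{d'} = \mathbb{I}_{d'}$.

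Next, I would invoke the Kadison–Schwarz inequality: for any $2$-positive unital map $\Phi$ and any operator $X$, one has $\Phi(X^\dagger X)\geq \Phi(X)^\dagger\Phi(X)$; in particular, for Hermitian $X$, $\Phi(X^2)\geq \Phi(X)^2$. Applying this with $X=\tau$, which is positive semidefinite and hence Hermitian, yields
\begin{equation}
    \frac{1}{d}\Tr_2(\tau^2)\;=\;\Phi(\tau^2)\;\geq\;\Phi(\tau)^2\;=\;\Bigl(\frac{1}{d}\Tr_2\tau\Bigr)^{\!2}\;=\;\frac{1}{d^2}\pi^2,
\end{equation}
where I used $\Tr_2\tau=\pi$ as recorded at the beginning of the appendix. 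Multiplying both sides by $d^2$ gives $d\Tr_2\tau^2 \geq \pi^2$, which is the claim.

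The argument is essentially a one-liner once the right map is identified, so there is no real technical obstacle beyond recognizing that the rescaled partial trace $X\mapsto \Tr_2(X)/d$ is a CP unital map on matrix algebras and that the Kadison–Schwarz inequality applies to any such map. If one prefers to avoid citing Kadison–Schwarz, an equivalent route would be to use the Stinespring/Kraus description $\Phi(X)=\sum_i V_i X V_i^\dagger$ with $\sum_i V_i V_i^\dagger = \mathbb{I}_{d'}$ (coming from unitality) and apply the operator Cauchy–Schwarz inequality $\bigl|\sum_i V_i \tau V_i^\dagger\bigr|^2 \leq \bigl(\sum_i V_i V_i^\dagger\bigr)\bigl(\sum_i V_i \tau^2 V_i^\dagger\bigr)$, but this is just reproving Kadison–Schwarz in the special case at hand.
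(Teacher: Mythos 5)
Your proof is correct and follows essentially the same route as the paper: both apply the Kadison--Schwarz inequality to $\tau$ for a unital completely positive map built from the partial trace. The only (cosmetic) difference is that you use the normalized partial trace $X\mapsto\frac{1}{d}\Tr_2 X$ directly into $\mathcal{B}(\cH_{d'})$, whereas the paper uses the channel $X\mapsto\Tr_2 X\otimes\frac{\mathbb{I}}{d}$ on the bipartite space and then strips off the tensor factor; your version avoids that last step.
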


\begin{proof}
    Let $\Phi$ be a unital completely positive map, then by Kadison-Schwarz inequality (see, for instance, Prop. 3.3 in \cite{Paulsen2003}), we have that for all normal $A$
    \begin{equation}
    	\Phi(A^\dag A) \geq \Phi(A) \Phi(A^\dag).
    \end{equation}
    Applying this to the channel $\cE$ defined in Eq.~\eqref{eq:def_ptr_and_replace} with $A=\tau$:
    \begin{equation}
    	\Tr_2\tau^2 \otimes \frac{\mathbb{I}}{d}
    	\geq (\Tr_2\tau)^2 \otimes \frac{\mathbb{I}}{d^2}
    	= \pi^2 \otimes \frac{\mathbb{I}}{d^2}.
    \end{equation}
    Noticing that $A \otimes \mathbb{I} \geq 0$ implies $A \geq 0$ and rearranging the terms finishes the proof.
\end{proof}

\section{Supplementary material for Sec.~\ref{sec:bound_for_dfs}}

\subsection{Pinsker-type inequalities}\label{sec:pinsker_ineqs}

We provide the proofs for the Pinsker-type inequalities in Prop.~\ref{prop:reverse_pinsker} and \ref{prop:pinsker}.

\begin{proposition}[Reverse Pinsker-type inequality, Prop. 5.2. in \cite{Hirche23}]
    Let $f:(0,\infty)\rightarrow \mathbb{R}$ be a twice differentiable convex function with $f(1)=0$ and $D_f$ be its associated $f$-divergence, we have
    \begin{equation}
            \begin{split}
            D_f(\rho \| \sigma) &\leq 
            \qty(\frac{f(e^{D_{\max}(\rho\|\sigma)})}{e^{D_{\max}(\rho\|\sigma)}-1} + \frac{e^{D_{\max}(\sigma\|\rho)} f(e^{-D_{\max}(\sigma\|\rho)})}{e^{D_{\max}(\sigma\|\rho)}-1})
             \, \frac{1}{2}\norm{\rho - \sigma}_1 \\
            &\leq 
            K_f\qty(e^{\Xi(\rho\|\sigma)}) \, \frac{1}{2}\norm{\rho - \sigma}_1,
        \quad
        \text{with}
        \quad
        K_f (x) = \frac{f \qty(x) + x f \qty(x^{-1})}{x - 1}.
        \end{split}
    \end{equation}
\end{proposition}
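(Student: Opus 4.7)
The plan is to bound $D_f$ by replacing each hockey-stick divergence appearing in the integral representation of Definition~\ref{def:fdivergences} by a tight affine majorant obtained from two structural properties of $E_\gamma$, namely its convexity in $\gamma$ and its support. Concretely, $E_\gamma(\rho\|\sigma)=\sup_{0\leq M\leq\mathbb{I}}\Tr[M(\rho-\gamma\sigma)]$ is a supremum of affine functions of $\gamma$, hence convex, and vanishes whenever $\gamma\geq a:=e^{D_{\max}(\rho\|\sigma)}$, because $\rho\leq a\sigma$. Since $E_1(\rho\|\sigma)=\tfrac{1}{2}\|\rho-\sigma\|_1$, the chord from $(1,\tfrac{1}{2}\|\rho-\sigma\|_1)$ to $(a,0)$ yields
\begin{equation}
E_\gamma(\rho\|\sigma)\leq \frac{(a-\gamma)_+}{a-1}\,\tfrac{1}{2}\|\rho-\sigma\|_1,
\end{equation}
and analogously, writing $b:=e^{D_{\max}(\sigma\|\rho)}$, $E_\gamma(\sigma\|\rho)\leq \frac{(b-\gamma)_+}{b-1}\tfrac{1}{2}\|\rho-\sigma\|_1$.

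Next I would insert these two bounds into the integral representation of $D_f(\rho\|\sigma)$ and evaluate the two resulting integrals by integration by parts. For the first,
\begin{equation}
\int_1^a f''(\gamma)(a-\gamma)\,\dd\gamma=\bigl[(a-\gamma)f'(\gamma)\bigr]_1^a+\int_1^a f'(\gamma)\,\dd\gamma=-(a-1)f'(1)+f(a),
\end{equation}
using $f(1)=0$. For the second, the substitution $u=\gamma^{-1}$ converts it into
\begin{equation}
\int_{1/b}^{1} f''(u)(bu-1)\,\dd u=(b-1)f'(1)+b\,f(1/b),
\end{equation}
again after an integration by parts. Dividing by $a-1$ and $b-1$ respectively and summing, the two $f'(1)$ contributions cancel exactly, producing the coefficient $\tfrac{f(a)}{a-1}+\tfrac{b\,f(1/b)}{b-1}$, which after substituting back the values of $a,b$ gives the first displayed inequality.

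For the second inequality, I would use the elementary fact that for a convex $f$ with $f(1)=0$ both maps $x\mapsto\tfrac{f(x)}{x-1}$ and $x\mapsto\tfrac{x\,f(1/x)}{x-1}$ are non-decreasing on $(1,\infty)$. The first is just the slope of the secant from $(1,0)$ to $(x,f(x))$, whose monotonicity is the standard three-chord characterisation of convexity. The second reduces to the first after the change of variable $y=1/x$, observing that $\tfrac{x\,f(1/x)}{x-1}=\tfrac{f(y)}{1-y}=-\tfrac{f(y)}{y-1}$ and that $y\mapsto\tfrac{f(y)}{y-1}$ is non-decreasing on $(0,1)$. Applying each monotonicity statement at the points $a\leq e^{\Xi(\rho,\sigma)}$ and $b\leq e^{\Xi(\rho,\sigma)}$ bounds the two summands separately by their values at $x=e^{\Xi(\rho,\sigma)}$, whose sum is $K_f(e^{\Xi(\rho,\sigma)})$.

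The main obstacle is really just the bookkeeping of the integration by parts and noticing the cancellation of the two $f'(1)$ contributions, which is what makes the final bound depend only on the values of $f$ at $a$ and $1/b$ and not on the derivative at $1$. Everything else is either a direct consequence of convexity of $E_\gamma$ (for the affine majorant) or of convexity of $f$ with normalisation $f(1)=0$ (for the monotonicity used in the second inequality); no further quantum-information input is needed since the steps are purely analytic once the integral representation of $D_f$ is available.
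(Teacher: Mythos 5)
Your proposal is correct and follows essentially the same route as the paper: the paper's proof simply cites the intermediate bound $D_f(\rho\|\sigma)\leq\big(\int_1^{\alpha}\frac{\alpha-\gamma}{\alpha-1}f''(\gamma)\dd\gamma+\int_1^{\beta}\frac{\beta-\gamma}{\beta-1}\gamma^{-3}f''(\gamma^{-1})\dd\gamma\big)\frac{1}{2}\norm{\rho-\sigma}_1$ from Prop.~5.2 of \cite{Hirche23} and then performs exactly the integration by parts you describe, with the same cancellation of the $f'(1)$ terms, while the monotonicity needed for the second inequality is handled in a follow-up remark (the paper argues via positivity of the integrands, you via secant slopes of the convex $f$ — both are fine). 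The only added value of your write-up is that it also derives the cited intermediate bound from scratch, via the convexity of $\gamma\mapsto E_\gamma$ and its vanishing at $\gamma=e^{D_{\max}}$, which is a correct and welcome piece of self-containedness.
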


\begin{proof}
    Here, we solely evaluated the integrals of the first inequality in Prop. 5.2. \cite{Hirche23} explicitly. The original inequality reads
    \begin{equation}
        D_f(\rho \| \sigma) \leq \qty(\int_1^{\alpha} \frac{\alpha-\gamma}{\alpha-1} f''(\gamma) \dd \gamma + \int_1^{\beta} \frac{\beta-\gamma}{\beta-1} \gamma^{-3}f''(\gamma) \dd \gamma ) \frac{1}{2}\norm{\rho - \sigma}_1,
    \end{equation}
    with $\alpha = e^{D_{\max}(\rho\|\sigma)}$ and $\beta = e^{D_{\max}(\sigma\|\rho)}$, and we applied integration by parts
    \begin{align}
        \int_1^{\alpha} \frac{\alpha-\gamma}{\alpha-1} f''(\gamma) \dd \gamma
        &=
        \frac{f(\alpha)}{\alpha-1} - f'(1),\\
        \int_1^{\beta} \frac{\beta-\gamma}{\beta-1} \gamma^{-3}f''(\gamma) \dd \gamma
        &=
        \int_1^{\beta} \frac{\beta-\gamma}{\beta-1} \dv[2]{\gamma}(\gamma f(\gamma^{-1})) \dd \gamma\\
        &=\frac{\beta f(\beta^{-1})}{\beta - 1} + f'(1).
    \end{align}
\end{proof}

\begin{remark}
	Note that the coefficients of the bound are non-decreasing functions of $D_{\max}(\rho\|\sigma)$ or $D_{\max}(\sigma\|\rho)$, respectively. Indeed, as $f$ is convex, we have $f''\geq 0$ and both $f(x)/(x-1)$ and $x f(x)/(x-1)$ are definite integrals of positive functions with $x>1$ the upper limit of the integration, hence both expressions are non-decreasing with $x$.
\end{remark}

We prove a slightly more general form of Prop.~\ref{prop:pinsker}, relaxing the assumption that the quantum $f$-divergence admits an integral representation. We say a quantum $f$-divergence is \emph{consistent with the classical} if it agrees with the classical $f$-divergence on commuting states, via their spectral distributions, and satisfies DPI. The integral representation has these two properties \cite{Hirche23}. These two properties suffice to lower-bound the quantum $f$-divergence by the classical one evaluated on the distribution induced by any POVM. This makes the measured $f$-divergence particularly relevant \cite{Hirche23},
\begin{equation}\label{eq:measured_fdiv}
	\check{D}_f(\rho \| \sigma):= \sup_{M} D_f(P_{M,\rho} \| P_{M,\sigma}),
\end{equation}
where the supremum is taken over all discrete measurements $M$, and $P_{M,\rho}$ denotes the resulting probability distribution. Therefore, we have $\check{D}_f \leq D_f$ for any quantum generalization that is consistent with the classical, which allows us to apply classical results to obtain the following.

\begin{proposition}[Pinsker-type inequality]
    Let $f:(0,\infty)\rightarrow \mathbb{R}$ be a convex function differentiable up to order 3 at $u=1$ with $f''(1)>0$, and let $D_f$ be a quantum $f$-divergence consistent with he classical. Then, we have
    \begin{equation}
        D_f(\rho\|\sigma) \, \geq \frac{f''(1)}{2}\, \qty(\frac{1}{2}\norm{\rho-\sigma}_1)^2.
    \end{equation}
\end{proposition}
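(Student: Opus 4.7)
The plan is to reduce the quantum statement to its classical counterpart via the measured $f$-divergence $\check D_f$ defined in Eq.~\eqref{eq:measured_fdiv}. Since $D_f$ is assumed consistent with the classical definition on commuting states and to satisfy DPI under quantum-to-classical channels, one immediately gets $\check D_f(\rho\|\sigma)\leq D_f(\rho\|\sigma)$, so it is enough to lower-bound $\check D_f(\rho\|\sigma)$ by the right-hand side.

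To match the trace distance on the right-hand side, I would pick the two-outcome Helstrom projective measurement $M=\{P_+,\mathbb{I}-P_+\}$ where $P_+$ projects onto the positive part of $\rho-\sigma$. Writing $p=\Tr[P_+\rho]$ and $q=\Tr[P_+\sigma]$, one has $p-q=\Tr(\rho-\sigma)_+=\tfrac{1}{2}\|\rho-\sigma\|_1$, so the induced binary distributions $P_{M,\rho}=(p,1-p)$, $P_{M,\sigma}=(q,1-q)$ satisfy $\|P_{M,\rho}-P_{M,\sigma}\|_1=\|\rho-\sigma\|_1$, and $\check D_f(\rho\|\sigma)\geq D_f(P_{M,\rho}\|P_{M,\sigma})$.

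The remaining step is the purely classical Pinsker-type inequality, which is the main result of \cite{Gilardoni10}: for any convex $f$ with $f(1)=0$ that is $C^3$ at $u=1$ with $f''(1)>0$ and any probability distributions $P,Q$,
\[
D_f(P\|Q)\;\geq\;\frac{f''(1)}{2}\,\|P-Q\|_1^2.
\]
Applying this to the binary distributions above yields $D_f(\rho\|\sigma)\geq \tfrac{f''(1)}{2}\|\rho-\sigma\|_1^2$, which trivially implies the stated (weaker) bound with $\tfrac{1}{2}\|\rho-\sigma\|_1$ in place of $\|\rho-\sigma\|_1$. In the appendix I would either cite \cite{Gilardoni10} directly, or re-derive only the binary version needed here: write $D_f((p,1-p)\|(q,1-q))=qf(p/q)+(1-q)f((1-p)/(1-q))$, use $f(u)=\int_1^u(u-v)f''(v)\,dv$ (the first-order piece drops because $q(p/q-1)+(1-q)((1-p)/(1-q)-1)=0$), and bound the two Taylor remainders from below using the regularity of $f''$ at $1$.

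The hard part is exactly this last binary inequality. The Taylor-remainder argument gives a quadratic lower bound of the correct form locally around $p=q$ essentially for free; what is delicate is upgrading it to a $(p-q)^2$ bound valid uniformly for all $q\in(0,1)$ and all admissible $p$, since the Taylor remainders involve $f''$ evaluated far from $1$ and can in principle become small. Controlling this uniformly is exactly what the $C^3$ hypothesis at $u=1$ buys in \cite{Gilardoni10}, so that is the single nontrivial ingredient of the proof; everything else is a one-line reduction through $\check D_f$ and the Helstrom measurement.
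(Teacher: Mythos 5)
Your proposal is correct and follows essentially the same route as the paper's proof: lower-bound $D_f$ by the measured $f$-divergence $\check{D}_f$ (using consistency with the classical definition together with DPI) and then invoke Gilardoni's classical Pinsker-type inequality, which is exactly the one nontrivial ingredient the paper also cites. The only cosmetic difference is that you instantiate the optimal Helstrom measurement explicitly, whereas the paper takes the supremum over all measurements and uses the variational characterization $\frac{1}{2}\norm{\rho-\sigma}_1=\sup_M \operatorname{TV}(P_{M,\rho}\|P_{M,\sigma})$; these are equivalent.
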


\begin{proof}
    The proof follows from an analogous inequality for the classical $f$-divergence, which we extend to the quantum setting using measured $f$-divergences.     
    
    For the classical $f$-divergence as defined in Eq.~\eqref{eq:df_classical}, from Thm.~3 in~\cite{Gilardoni10}, we have the bound
    \begin{equation} \label{eq:pinsker_ineq_classical}
        D_f(P \| Q) \, \geq \frac{f''(1)}{2} \operatorname{TV}^2(P \| Q),
    \end{equation}
    where $\operatorname{TV}^2(P \| Q)$ is the total variation distance between $P$ and $Q$,
    \begin{equation}
        \operatorname{TV}(P\| Q) = \frac{1}{2} \sum_{x \in \cX} \abs{P(x) - Q(x)}.
    \end{equation}
    
    Applying Eq.~\eqref{eq:pinsker_ineq_classical} to the outcomes of measurement $M$ on $\rho$ and $\sigma$, we obtain
    \begin{equation}
        \frac{f''(1)}{2}  \operatorname{TV}^2(P_{M,\rho} \| P_{M,\sigma}) \leq \, D_f(P_{M,\rho}\| P_{M,\sigma}),
    \end{equation}
    and including the variational characterization of the trace distance,
    \begin{equation}
        \frac{1}{2}\norm{\rho - \sigma}_1 = \sup_{M} \operatorname{TV}(P_{M,\rho} \| P_{M,\sigma}),
    \end{equation}
    we obtain
    \begin{align}
        \frac{f''(1)}{2}\, \qty(\frac{1}{2}\norm{\rho - \sigma}_1)^2
        &=\frac{f''(1)}{2}\,  \sup_{M}\, \operatorname{TV}^2(P_{M,\rho} \| P_{M,\sigma}) \\
        &\leq \sup_{M} D_f(P_{M,\rho}\| P_{M,\sigma})
        = \check{D}_f(\rho \| \sigma) \\
        &\leq D_f(\rho \| \sigma).
    \end{align} 
\end{proof}

\subsection{Average contraction of the \texorpdfstring{$\chi^2$}{Chi\^{}2}-divergence}\label{sec:supmat_chi2}

An equivalent expression for the $\chi^2$-divergence by integral definition can be found in \cite{Hirche23}:
\begin{equation}
    D_{x^2}(\rho \| \sigma) = \int_0^{\infty} \Tr[(\sigma + s \mathbb{I})^{-1} \rho (\sigma + s \mathbb{I})^{-1} \rho] \dd s - 1.
\end{equation}
We can compute the integral resorting to the identity $\Tr[AB]=\Tr[A\otimes B \, \mathbb{F}]$. Firstly, let $\sigma = \sum_\lambda \lambda \Pi_\lambda$ be a spectral decomposition for $\sigma$ with eigenvalues $\lambda$ and eigenprojectors $\Pi_\lambda$,
\begin{align}
    \kappa &= \int_0^{\infty} (\sigma + s \mathbb{I})^{-1} \otimes (\sigma + s \mathbb{I})^{-1}  \\
           &= \sum_{\lambda_1, \lambda_2} \int_0^{\infty} \frac{1}{(\lambda_1 + s) (\lambda_2 + s)}  \Pi_{\lambda_1} \otimes \Pi_{\lambda_2} \\
           &= \sum_{\lambda}  \frac{1}{\lambda} \Pi_{\lambda} \otimes \Pi_{\lambda}
           + \sum_{\lambda_1 \neq \lambda_2} \frac{\ln(\lambda_1 / \lambda_2)}{\lambda_1  - \lambda_2}  \Pi_{\lambda_1} \otimes \Pi_{\lambda_2}.
\end{align}
Then, $ D_{x^2}(\rho \| \sigma)= \Tr[(\rho \otimes \rho) \kappa \mathbb{F}] - 1$, that is
\begin{equation}
    D_{x^2}(\rho \| \sigma) = \sum_{\lambda}  \frac{1}{\lambda} \Tr[(\rho \Pi_\lambda)^2]
           + \sum_{\lambda_1 \neq \lambda_2} \frac{\ln(\lambda_1 / \lambda_2)}{\lambda_1  - \lambda_2}  \Tr[\rho\Pi_{\lambda_1} \rho \Pi_{\lambda_2}] - 1.
\end{equation}
We can compute each term for an arbitrary channel $T$ and state 2-design $\nu$ 
\begin{align}
    \bEx{\rho\sim\nu}[\Tr[T(\rho) \Pi_{\lambda_1} T(\rho) \Pi_{\lambda_2}] &=
    \Tr[\frac{d}{d+1}\qty(\pi \otimes \pi + (T\otimes T)(\frac{\mathbb{F}}{d^2})) \, \Pi_{\lambda_1} \otimes \Pi_{\lambda_2} \, \mathbb{F}]
    \\ 
    &=
    \frac{d}{d+1}\left( \Tr[\pi \Pi_{\lambda_1} \pi \Pi_{\lambda_2}]
    +\frac{1}{d^2} \sum_{k,l} \Tr[E_k E_l^\dag \Pi_{\lambda_1}] \Tr[E_l E_k^\dag \Pi_{\lambda_2}] \right).
\end{align}
In App.~\ref{sec:supmat_numerics}, we evaluate $\bE[D_{x^2}(T(\rho) \| T(\sigma_*))]$ for the single-qubit amplitude damping.

The expression takes the simpler form $D_{x^2}(\rho \| \sigma) = \Tr[\sigma^{-1} \rho^2] - 1$ for commuting $\rho$ and $\sigma$. This gives $D_{x^2}(\rho \| \sigma_*)  = d  \Tr[\rho^2] - 1$, and allows to easily evaluate the average contraction for the $\chi^2$ divergence under a unital channel $T$ and a state 2-design $\nu$,
\begin{equation}
    \eta(T, D_{x^2}, \nu\times\delta_{\sigma_*}) = \frac{d\, \bE[\Tr T(\rho)^2] - 1}{d - 1} = \frac{d^2}{d^2-1}\left[ \Tr\tau^2 - \frac{1}{d^2} \right] = \Tr\tau^2 + \order{d^{-2}},
\end{equation}
which shows that the average contraction is exponentially decreasing for noisy unital channels.

Additionally, this simpler form can be used to define an alternative quantum generalization for the $\chi^2$-divergence \cite{PetzRuskai1998},
\begin{equation}
    \chi^2_{0}(\rho \| \sigma) = \Tr[\sigma^{-1} (\rho - \sigma)^2] = \Tr[\sigma^{-1} \rho^2] - 1.
\end{equation}
It is well known to be consistent with the classical $\chi^2$-divergence and to satisfy DPI \cite{PetzRuskai1998}, thus it satisfies the Pinsker-type inequality of Prop.~\ref{prop:pinsker} and we can conclude upper bounds for the average contraction of the trace distance. For a state 2-design $\nu$ and fixed $\sigma > 0$, we have that
\begin{equation}
    \bEx{\rho\sim\nu}\qty[\Tr[T(\sigma)^{-1} T(\rho)^2]] = \frac{d}{d+1}\qty(\Tr[T(\sigma)^{-1} \pi^2] + \Tr[T(\sigma)^{-1} \Tr_2\tau^2] ).
\end{equation}
Together with the Pinsker-type inequality in Prop.~\ref{prop:pure_to_mix_ratio_bound}, this directly implies the following result.

\begin{proposition}
    Let $T:\cS(\cH_d)\to\cS(\cH_{d'})$ be a quantum channel, $\tau$ be its Choi state and $\pi=T(\frac{\mathbb{I}}{d})$. Let $\nu$ be a 2-design over pure states in $\cS(\cH_d)$.
    Then, the average contraction of the trace distance is upper bounded by
    \begin{equation}
    \eta_1(T,\norm{\cdot}_1,\nu\times\delta_{\sigma_*})
        \leq
        \sqrt{
        \Tr[\pi^{-1} \Tr_2\tau^2]} +\order{d^{-1}}.
    \end{equation}
\end{proposition}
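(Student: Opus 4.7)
The plan is to chain together the alternative $\chi^2_0$ generalization, the Pinsker-type inequality from Prop.~\ref{prop:pinsker}, and the explicit $2$-design expectation value given in the paragraph immediately above the statement. Since $\chi^2_0$ is consistent with the classical $\chi^2$-divergence and satisfies DPI (as noted in App.~\ref{sec:supmat_chi2}), the hypotheses of Prop.~\ref{prop:pinsker} apply. For $f(x)=(x-1)^2$ one has $f''(1)=2$, so the inequality reads
\begin{equation*}
\qty(\tfrac{1}{2}\norm{\rho-\sigma}_1)^2 \leq \chi^2_0(\rho\|\sigma).
\end{equation*}

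Next I would apply this pointwise to the pair $(T(\rho),\pi)$, where $\pi=T(\sigma_*)$, so that
\begin{equation*}
\qty(\tfrac{1}{2}\norm{T(\rho)-\pi}_1)^2 \leq \Tr[\pi^{-1} T(\rho)^2] - 1.
\end{equation*}
Taking the expectation over $\rho\sim\nu$ and invoking the $2$-design expectation value stated just before the proposition, with the choice $\sigma=\sigma_*$ (so that $T(\sigma)=\pi$ and $\Tr[\pi^{-1}\pi^2]=1$), one gets
\begin{equation*}
\bEx{\rho\sim\nu}\qty[\qty(\tfrac{1}{2}\norm{T(\rho)-\pi}_1)^2] \;\leq\; \frac{d}{d+1}\qty(1 + \Tr[\pi^{-1}\Tr_2\tau^2]) - 1 \;=\; \frac{d\,\Tr[\pi^{-1}\Tr_2\tau^2] - 1}{d+1}.
\end{equation*}

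Now I would apply Jensen's inequality to the concave square root to move the expectation inside, $\bE[X] \leq \sqrt{\bE[X^2]}$, yielding an upper bound on $\bE[\tfrac{1}{2}\norm{T(\rho)-\pi}_1]$. Since $\nu$ is supported on pure states, the denominator is the deterministic constant $\tfrac{1}{2}\norm{\rho-\sigma_*}_1 = 1-1/d$ (Eq.~\eqref{eq:tr_pure_vs_maxmix}), so the average contraction is simply the ratio
\begin{equation*}
\eta_1(T,\norm{\cdot}_1,\nu\times\delta_{\sigma_*}) \;\leq\; \frac{1}{1-1/d}\,\sqrt{\frac{d\,\Tr[\pi^{-1}\Tr_2\tau^2] - 1}{d+1}}\;=\;\sqrt{\frac{d^3}{(d-1)^2(d+1)}}\,\sqrt{\Tr[\pi^{-1}\Tr_2\tau^2] - \tfrac{1}{d}}.
\end{equation*}
The prefactor equals $1+O(d^{-1})$, and expanding the inner square root around $\Tr[\pi^{-1}\Tr_2\tau^2]$ (assuming it is bounded away from zero) absorbs the $-1/d$ correction into another $O(d^{-1})$ term, giving the stated bound.

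The argument is essentially a bookkeeping exercise; the only delicate points are that $\pi$ must be invertible for $\chi^2_0$ to be finite (this is a standing assumption, and otherwise the right-hand side is trivially $+\infty$), and that the $\order{d^{-1}}$ absorption is loose when $\Tr[\pi^{-1}\Tr_2\tau^2]$ scales with $d$ (e.g.\ grows like $d$ for near-unitary channels, in which case the correction is really $\order{d^{-1/2}}$). The interesting regime is $\Tr[\pi^{-1}\Tr_2\tau^2] = O(1)$, where the bound is genuinely informative; no additional concentration tail bound is needed because the denominator is already a constant.
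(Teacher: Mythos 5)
Your proof is correct and uses essentially the same ingredients as the paper's: the Pinsker-type inequality for the alternative generalization $\chi^2_0$, the exact 2-design second-moment formula $\bEx{\rho\sim\nu}[\Tr[\pi^{-1}T(\rho)^2]]=\frac{d}{d+1}(1+\Tr[\pi^{-1}\Tr_2\tau^2])$, Jensen's inequality, and the fact that the denominator $1-1/d$ is deterministic. The only difference is bookkeeping — you apply the pointwise Pinsker bound directly instead of routing through the packaged ratio bound of Prop.~\ref{prop:pure_to_mix_ratio_bound} and the intermediate quantity $\eta_1(T,\chi^2_0,\nu\times\delta_{\sigma_*})$, which in fact yields a marginally tighter prefactor, and your caveat about the $\order{d^{-1}}$ term when $\Tr[\pi^{-1}\Tr_2\tau^2]$ grows with $d$ applies equally to the paper's own argument.
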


\begin{proof}
    We introduce $f(x)=x^2-1$ in the upper bound of Prop.~\ref{prop:pure_to_mix_ratio_bound},
    \begin{equation}
        \eta_p(T, \norm{\cdot}_1, \nu\times \delta_{\sigma_*})^2
        \leq
        \frac{d}{(1-1/d)^2} \eta_p(T, \chi^2_{0}, \nu \times \delta_{\sigma_*}),
    \end{equation}
    and evaluate the left-hand side at $p=1$
    \begin{align}
        \eta_1(T, \chi^2_{0}, \nu \times \delta_{\sigma_*}) 
        &= \frac{1}{d-1} \qty(\frac{d}{d+1}\qty(\Tr[\pi^{-1} \pi^2] + \Tr[\pi^{-1} \Tr_2\tau^2] ) -1) \\
        &=\frac{d}{d^2-1} \qty(1 + \Tr[\pi^{-1} \Tr_2\tau^2] - \frac{d+1}{d}) \\
        &=\frac{d}{d^2-1} \qty(\Tr[\pi^{-1} \Tr_2\tau^2] - \frac{1}{d})
    \end{align}
    Combining the terms and discarding $ \order{d^{-1}}$ terms yields the desired result.
\end{proof}

This is comparable to Cor.~\ref{cor:avc_2design_mix}. Although it takes a simpler form, we can show that it is strictly worse, at least up to $\order{d^{-1}}$. The key objects are $\Tr\sqrt{\Tr_2\tau^2}$ and $\sqrt{\Tr[\pi^{-1} \Tr_2\tau^2]}$, we use that
\begin{align}
    \Tr\sqrt{\Tr_2\tau^2} &= \Tr[\pi^{1/2} \pi^{-1/2} \sqrt{\Tr_2\tau^2}] \\
    &\leq \sqrt{\Tr[(\pi^{-1/2} \sqrt{\Tr_2\tau^2})^2]}
    = \sqrt{\Tr[(\pi^{-1/4} \sqrt{\Tr_2\tau^2} \pi^{-1/4} )^2]} \\
    &\leq \sqrt{\Tr[\pi^{-1} \Tr_2\tau^2]},
\end{align}
where the last step follows from $\Tr[(B^{1/2} A B^{1/2})^2] \leq \Tr[B A^2 B]$ for $A, B \geq 0$ (Exercise IX.2.11 in \cite{Bhatia1996}).

\section{Supplementary material for Sec.~\ref{sec:numerics}}\label{sec:supmat_numerics}

\paragraph{Depolarizing.} Our results for the asymptotic behavior for the average trace distance are already presented in Prop.~\ref{prop:avc_local_depol}, where we show that the average contraction for the trace distance vanishes for $p \gtrsim 0.42$, and converges to one for $p \lesssim 0.25$. Additionally, we can evaluate the average contraction for the $\chi^2$-divergence:
\begin{equation}
    \eta(T_{\text{depol}}^{\otimes n}, \chi^2, \mu \times \delta_{\sigma_*}) = \left(\frac{1+3(1-p)^2}{4}  \right)^n + \order{2^{-n}}.
\end{equation}
To derive the bound for the max relative entropy, observe that for any pure $n$-qubit state $\rho$, the denominator satisfies $D_{\max}(\rho \| \mathbb{I}/2^n) = n \ln 2$ which is independent of the input state, so it suffices to analyze the max relative entropy between the output states. Let $\lambda = D_{\max}(T_{\text{depol}}^{\otimes n}(\rho) \| \mathbb{I}/2^n)$, and note that, by definition,
\begin{equation}
	T_{\text{depol}}^{\otimes n}(\rho) = \sum_{A\subset [n]} p^{|A|} (1 - p)^{n-|A|} \Tr_A(\rho) \otimes \frac{\mathbb{I}_A}{2^{|A|}} \leq \lambda \frac{\mathbb{I}}{2^n},
\end{equation}
where $\Tr_A(\rho)$ denotes the partial trace over the qubits in $A$, and $\mathbb{I}_A$ is the identity on subsystem $A$. In particular, if the input state is pure $\rho =\ketbra{\psi}$ we must have that
\begin{align}
	\lambda &\geq 2^n
	\sum_{A\subset [n]} p^{|A|} (1 - p)^{n-|A|} \expval{\Tr_A(\rho) \otimes \frac{\mathbb{I}_A}{2^{|A|}}}{\psi} \\
	&\geq
	2^n \sum_{A\subset [n]} p^{|A|} (1 - p)^{n-|A|} \frac{\Tr(\Tr_A\rho)^2}{2^{|A|}} \\
	&\geq
	2^n \sum_{A\subset [n]} p^{|A|} (1 - p)^{n-|A|} \frac{1}{2^{2|A|}} \\
	&=
	\sum_{k=0}^n \qty(\frac{p}{2})^{k} (2 - 2p)^{n-k} 
	= (2  - 2p + \frac{p}{2})^n = (2 - \frac{3}{2} p)^n,
\end{align}
where we used $\Tr[\rho \Tr_A(\rho) \otimes \mathbb{I}] = \Tr(\Tr_A\rho)^2 \geq 2^{-|A|}$, with the inequality generally valid only for pure $\rho$. Therefore, we arrive at
\begin{equation}
	D_{\max}(T_{\text{depol}}^{\otimes n}(\rho) \| \mathbb{I}/2^n) \geq n \cdot \ln(2 - \frac{3}{2} p),
\end{equation}
which remains informative as long as $p < 2/3$.

\paragraph{Dephasing.} We express the single-qubit dephasing channel with dephasing strength $0\leq\gamma \leq 1$ as a phase flip:
\begin{equation}
	T_{\text{deph}}(\rho) = (1 - \frac{\gamma}{2}) \rho + \frac{\gamma}{2} Z \rho Z,
	\quad Z = \mqty(1 & 0 \\ 0 & -1),
\end{equation}
whose Choi state is
\begin{equation}
	\tau_{\text{deph}} = \frac{1}{2}
	\begin{pmatrix}
	1 & 0 & 0 & 1-\gamma \\
	0 & 0 & 0 & 0 \\
	0 & 0 & 0 & 0 \\
	1-\gamma & 0 & 0 & 1
	\end{pmatrix}.
\end{equation}
We compute
\begin{equation}
	\Tr\sqrt{\Tr_2 \tau_{\text{deph}}^2} = \sqrt{1+(1-\gamma)^2},
	\quad
	S(\tau_{\text{deph}}) = H_2\qty(\frac{\gamma}{2}), 
    \quad \text{and} \quad
    \Tr \tau^2 = \frac{1+(1-\gamma)^2}{2}.
\end{equation}
where $H_2(p) = - [p \log p + (1-p) \log (1-p)]$ is the binary entropy function. Clearly, $\Tr\sqrt{\Tr_2 \tau_{\text{deph}}^2}\geq 1$ for all $\gamma$, while $H_2\qty(\frac{\gamma}{2}) < 1$ for any $\gamma < 1$. Therefore, the average contraction converges to one as $n\to\infty$ for all $\gamma < 1$, except for the extremal case $\gamma=1$.

Before considering that case, we generalize to dephasing channels that dampen off-diagonal terms in some preferred basis:
\begin{equation}
	T_{\Gamma}(\rho) = \Gamma \star \rho, 
	\text{ with } \Gamma \geq 0, \; \Gamma_{ii} = 1,
\end{equation}
where $(\Gamma \star \rho)_{ij} = \Gamma_{ij} \rho_{ij}$ is the entrywise multiplication. Letting $\Gamma=\sum_k \gamma_k \ketbra{\phi_k}$ be the spectral decomposition, and denoting the preferred basis by $\{\ket{i}\}$, the Kraus operators can be written as $A_k = \sqrt{\frac{\gamma_k}{d}} \sum_i \braket{i}{\phi_k} \ketbra{i}$. The entropy of the corresponding Choi state is
\begin{equation}
	S(\tau_\Gamma) = - \sum_k \frac{\gamma_k}{d} \log \frac{\gamma_k}{d} = S\qty(\frac{\Gamma}{d}) \leq \log d,
\end{equation}
with equality only if $\Gamma = \mathbb{I}$. This shows that unless the off-diagonal terms are fully discarded, the average contraction of $n$-fold products of the channel $T_{\Gamma}$ converges to one as $n\to\infty$.

The case $\Gamma = \mathbb{I}$, corresponding to $\gamma=1$ in the single-qubit dephasing channel, reduces to analyzing the trace distance between the maximally mixed state and the diagonal of a Haar-random pure state. This quantity is known (see Lemma 4 in \cite{Singh2016})
\begin{equation}
	\frac{1}{1-\frac{1}{d}}\bEx{\rho\sim\mu_d}\qty[\frac{1}{2}\norm{\mathbb{I}\star \rho - \frac{\mathbb{I}}{d}}_1]
	= \frac{1}{1-\frac{1}{d}} \qty(1-\frac{1}{d})^d \to  e^{-1} \approx 0.3679
	\text{ as } d \to \infty.
\end{equation}
This completes the characterization of the average contraction for any $n$-fold product of the channel $T_{\Gamma}$ in the $n\to\infty$ limit. For completeness, we also evaluate the upper bound of Cor.~\ref{cor:avc_2design_mix} for $\Gamma=\mathbb{I}$:
\begin{align}
	&\frac{1}{2(1-1/d)}\Tr\qty(\frac{1}{d+1}\qty(d \frac{\mathbb{I}}{d^2} - \frac{\mathbb{I}}{d^2}))^{1/2} 
	=
	\frac{d}{2(1-1/d)}\qty(\frac{1}{d+1}\qty(\frac{1}{d} - \frac{1}{d^2}))^{1/2} \\
	&=
	\frac{1}{2(1-1/d)} \qty(\frac{1}{d+1}\qty(d - 1))^{1/2}
	=
	\frac{1}{2\sqrt{1-d^{-2}}} \\
	&= \frac{1}{2} + \order{d^{-2}},
\end{align}
which is strictly larger than the exact value above.

\paragraph{Amplitude damping.}
The single-qubit amplitude damping channel, with damping strength $0\leq \lambda \leq 1$, is defined as
\begin{equation}
	T_{\text{damp}}(\rho) = K \rho K + \lambda \ketbra{0}{1} \rho \ketbra{1}{0},
	\quad K = \mqty(1 & 0 \\ 0 & \sqrt{1-\lambda}),
\end{equation}
and the Choi state is given by
\begin{equation}
	\tau_{\text{damp}} = \frac{1}{2}
	\begin{pmatrix}
	1 & 0 & 0 & \sqrt{1 - \lambda} \\
	0 & \lambda & 0 & 0 \\
	0 & 0 & 0 & 0 \\
	\sqrt{1-\lambda} & 0 & 0 & 1 -\lambda
	\end{pmatrix}		
	,
\end{equation}
with associated quantities
\begin{equation}
	\Tr\sqrt{\Tr_2 \tau_{\text{damp}}^2} =  \frac{1}{2} \left[ \sqrt{2-\lambda+\lambda^2} + \sqrt{(1-\lambda)(2-\lambda)} \right],
	\quad
	S(\tau_{\text{damp}}) = H_2\qty(\frac{\lambda}{2}),
\end{equation}
where $H_2(p)$ is the binary entropy function. The first quantity falls below $1$ for $\lambda > 2/3 \approx 0.67$, which implies asymptotically vanishing upper bounds on average contraction for those values. Since the channel is not unital, we compute $\norm{\pi}_{\infty} = (1+\lambda)/2$. The condition $S(\tau_{\text{damp}}) < \log \norm{\pi}_{\infty}^{-1} = -\log[ (1+\lambda) /2]$ holds for $\lambda \lesssim 0.20$, indicating that the lower bound in Thm.~\ref{thm:avc_lower_bound} predicts asymptotic convergence to $1$ in that regime.

In this case, we can evaluate the exact expected value for the average contraction of the $\chi^2$-divergence for $n=1$. We have $\lambda_{x} = (1 +(-1)^x \lambda) /2$ and $\Pi_x = \ketbra{x}$ with $x\in\{0,1\}$; $E_0=\mqty(\mathbb{I} & 0 \\ 0 & \sqrt{1-\lambda})$ and $E_1=\sqrt{\lambda} \ketbra{0}{1}$. Then,
\begin{align}
	\bEx{\rho\sim\nu}[\Tr[T(\rho)\Pi_x T(\rho) \Pi_y]]
	&=
	\frac{2}{3} \qty(
		\Tr[\pi \Pi_x \pi \Pi_y]
		+\frac{1}{4} \sum_{z,t\in\{0,1\}} \Tr[E_z E_t^\dag \Pi_x] \Tr[E_t E_z^\dag \Pi_y]
	) \\
	&=
	\frac{2}{3} \qty(
		\delta_{xy} \frac{(1 +(-1)^x \lambda)^2}{4}
		+\frac{1}{4} \sum_{z\in\{0,1\}} \Tr[E_z E_z^\dag \Pi_x] \Tr[E_z E_z^\dag \Pi_y]
	) \\
	&=
	\frac{2}{3} \left(
		\delta_{xy} \frac{(1 +(-1)^x \lambda)^2}{4}
		+\frac{1}{4} \left(
			\Tr[E_0 E_0^\dag \Pi_x] \Tr[E_0 E_0^\dag \Pi_y] \right. \right.\\
	&\left.\left. \qquad + \delta_{x0}\delta_{y0} \Tr[E_1 E_1^\dag \Pi_0] \Tr[E_1 E_1^\dag \Pi_0]
		\right)
	\right) \\
	&=
	\frac{2}{3} \qty(
		\delta_{xy} \frac{(1 +(-1)^x \lambda)^2}{4}
		+\frac{1}{4} \qty(
			\qty(1-\lambda)^{x+y}
			+ \delta_{x0}\delta_{y0} \lambda^2
		)
	) \\
	&=
	\frac{1}{6} \qty(
		\delta_{xy} (1 + 2 (-1)^x \lambda + (1+\delta_{x0}) \lambda^2)
		+ \qty(1-\lambda)^{x+y}
	),
\end{align}
which gives $(1+\lambda+\lambda^2)/3$ for $x=y=0$, $(1-\lambda)^2/3$ for $x=y=1$, and $(1-\lambda)/6$ for $x\neq y$. Therefore, introducing that $D_{x^2}(\rho\| \sigma_*)=1$,
\begin{align}
	\bEx{\rho\sim\nu}\qty[\frac{D_{x^2}(T(\rho)\| \pi)}{D_{x^2}(\rho\| \sigma_*)}]
	&= \sum_{x\in\{0,1\}} \frac{2}{1+(-1)^x\lambda} \bEx{\rho\sim\nu}[\Tr[(T(\rho)\Pi_x)^2]]
	+ 2 \frac{\ln\frac{1+\lambda}{1-\lambda}}{\lambda} \bEx{\rho\sim\nu}[\Tr[T(\rho)\Pi_0 T(\rho) \Pi_1]] - 1
	\\
	&=
	\frac{2}{3}\frac{1+\lambda+\lambda^2}{1+\lambda} + \frac{2}{3} (1-\lambda) + \frac{1}{3}\frac{1-\lambda}{\lambda} \ln \frac{1+\lambda}{1-\lambda} - 1 \\
	&=\frac{1}{3}\frac{1-\lambda}{1+\lambda} + \frac{1}{3}\frac{1-\lambda}{\lambda} \ln \frac{1+\lambda}{1-\lambda} \\
	&=\frac{1-\lambda}{3}\qty(\frac{1}{1+\lambda} + \frac{1}{\lambda} \ln \frac{1+\lambda}{1-\lambda}),
\end{align}
which very closely matches the numerical results.

\paragraph{Mixture of (orthogonal) unitaries.} Given a set of unitaries $\{U_i\}_{i=1}^M$ and a probability distribution $\mathbf{p}$, we consider the  quantum channel defined by applying unitary $U_i$ with probability $p_i$:
\begin{equation}
	T_{\text{MU}}(\rho) = \sum_{i=1}^M p_i U_i \rho U_i^{\dag},
	\quad \text{with Choi state} \quad
	\tau_{\text{MU}} = \sum_{i=1}^M p_i U_i \otimes \mathbb{I} \ketbra{\Omega} U_i^{\dag} \otimes \mathbb{I}.
\end{equation}
We say that two unitaries are orthogonal if they are orthogonal in the Hilbert-Schmidt sense, i.e. $\Tr[U_i^\dag U_j] = \delta_{ij} d$. Under this condition, the vectors $U_j\otimes\mathbb{I}\ket{\Omega}$ form an orthonormal set, since $\expval{U_i^\dag U_j\otimes\mathbb{I}}{\Omega}=\frac{1}{d}\Tr[U_i^\dag U_j] = \delta_{ij}$. Thus, the Choi state $\tau_{\text{MU}}$ is diagonal in this basis and its entropy is simply the Shannon entropy of the probability distribution $S(\tau_{\text{MU}})=H(\mathbf{p})=-\sum_{i=1}^M p_i\log p_i$. We can also compute
\begin{equation}
	\Tr\sqrt{\Tr_2 \tau_{\text{MU}}^2} =
	\Tr\sqrt{\Tr_2 
		\sum_{i=1}^M p_i^2 U_i \otimes \mathbb{I} \ketbra{\Omega} U_i^{\dag} \otimes \mathbb{I}
	}
	=
	\Tr\sqrt{
		\sum_{i=1}^M p_i^2 U_i \frac{\mathbb{I}}{d} U_i^{\dag}
	} 
	= \sqrt{d \sum_{i=1}^M p_i^2} = \sqrt{d} \norm{\mathbf{p}}_2.
\end{equation}
From this, we conclude that if $H(\mathbf{p}) < \log d$ the average contraction tends to one, and if $\norm{\mathbf{p}}_2 < d^{-1/2}$ it is exponentially vanishing.

\end{document}